\renewcommand{\Pl}{\mathrm{Agents}}
\renewcommand{\Behfun}{\BB}
\renewcommand{\Beh}[1]{\BB_{#1}}
\theoremstyle{plain}
\begin{document}

\author[C.\ Eberhart]{Clovis Eberhart\rsuper a}

\author[T.\ Hirschowitz]{Tom Hirschowitz\rsuper b}
\author[T.\ Seiller]{Thomas Seiller\rsuper c}

\address{{\lsuper{a,b}}Univ. Savoie Mont Blanc, CNRS, LAMA, F-73000 Chambéry, France}
\address{\lsuper{c}Department of Computer
  Science, University of Copenhagen, Copenhagen, Denmark and CNRS, UMR
  7030, Laboratoire d’Informatique de Paris Nord Université Paris 13,
  Sorbonne Paris Cité, F-93430 Villetaneuse, France}

\thanks{The authors acknowledge support from French ANR
    projet blancs PiCoq ANR 2010 BLAN 0305 01 and Récré
    ANR-11-BS02-0010.}%

\title{An intensionally fully-abstract sheaf model for $\pi$ \\ {\tiny (expanded version)}}


\keywords{Programming languages; categorical semantics;presheaf
    semantics; game semantics; concurrency; process algebra}
\subjclass{MANDATORY list of acm classifications}


\begin{abstract}
  Following previous work on CCS, we propose a compositional model for
  the $\pi$-calculus in which processes are interpreted as sheaves on
  certain simple sites.  Such sheaves are a concurrent form of
  innocent strategies, in the sense of Hyland-Ong/Nickau game
  semantics.  We define an analogue of fair testing equivalence in the
  model and show that our interpretation is intensionally fully
  abstract for it. That is, the interpretation preserves and reflects
  fair testing equivalence; and furthermore, any innocent strategy is
  fair testing equivalent to the interpretation of some process. The
  central part of our work is the construction of our sites, relying
  on a combinatorial presentation of $\pi$-calculus traces in the
  spirit of string diagrams.
\end{abstract}

\maketitle

\tableofcontents

\section{Introduction}
\subsection{Causal models and beyond}
Operational semantics of programming languages standardly model the
execution of programs as paths in a certain labelled transition system
(\lts{}). Under this interpretation, different possible interleavings
of parallel actions yield different paths. Verification on \ltss{}
thus incurs a well-known state explosion problem. Similarly, causality
between various actions, visible in the syntax, is lost in the \lts{},
thus making, e.g., error diagnostics
difficult~\cite{DBLP:conf/rv/GosslerMR10}.

\emph{Causal models}, originally designed for Petri
nets~\cite{eventstructures} and Milner's
CCS~\cite{DBLP:conf/icalp/Winskel82}, intend to remedy both problems,
but have yet to be applied to full-scale programming languages.  They
have recently been extended in two different directions: (1) by Crafa
et al.~\cite{DBLP:conf/fossacs/CrafaVY12} to Milner's $\pi$-calculus,
and (2) by Melliès~\cite{DBLP:conf/lics/Mellies05} to Girard's linear
logic.  The former extension accounts for the subtle interaction of
channel creation with synchronisation in $\pi$, a significant
technical achievement, 30 years after the first causal semantics for
CCS.  The latter is the first causal model for functional languages
(inspired by Hyland-Ong's and Nickau's \emph{games} models for
PCF~\cite{DBLP:conf/lfcs/Nickau94,DBLP:journals/iandc/HylandO00}).  An
important challenge is now the search for a causal model of
full-fledged languages with both concurrent and functional features.
Winskel and collaborators are currently working in this direction,
using extensions of Melliès's
approach~\cite{RideauW,DBLP:conf/fossacs/Winskel13,PPCF}.

In previous
work~\cite{HP11,DBLP:conf/calco/Hirschowitz13,HirschoDoubleCats}, we
have proposed a causal model for CCS based on a different approach.
We here push this approach further by applying it to the
$\pi$-calculus.

\subsection{Traces and naive concurrent strategies}
In standard causal models, execution traces essentially consist of
partially ordered sets of atomic `events'. Our approach relies on a
new notion of \trace, which we now briefly sketch. There is first a
(straightforward) notion of \emph{\position}, which is
essentially a finite hypergraph whose nodes are thought of as
\emph{\agents}, and whose hyperedges between nodes $x_1, \ldots, x_n$
are thought of as communication channels shared by $x_1, \ldots, x_n$.
There is then a notion of \emph{atomic action} from one \position
to another. The collection of atomic \actions{} is thought of as a
`rule of the game'.  Examples of atomic \actions are: \anagent creates
a new, private communication channel; \anagent forks into two new
\agents connected to the same channels; \anagent sends some channel
$a$ over some channel $b$ to some other \agent.  We finally have a
notion of \trace which allows several atomic actions to occur, in a
way that only retains some minimal causality information between them.
We here mean, e.g., information such as: `such agent outputs on such
channel only after having created such other channel'.




The main purpose of our notion of \trace is to interpret
$\pi$-calculus processes as some kind of strategies over them.  Most
naively, a strategy on some \position $X$ is a prefix-closed set
of `accepted' \traces from $X$. But what should prefix mean in our
setting?  Well, we may view \traces with initial \position $X$
and final \position $Y$ as morphisms $Y \proto X$.  Sequential
composition of \traces, denoted by $\vrond$, yields an analogue of
prefix ordering, defined by $t \leq t \vrond w$.  Strategies as
prefix-closed sets of \traces{} however fail to suit our needs on
three counts. First, such naive strategies may not be stable under
isomorphism of \traces; second, they are bound to model coarse
behavioural equivalences, at least as coarse as may testing
equivalence (a.k.a.\ trace equivalence); and third, they permit
undesirable interaction between players. Let us examine these issues
in more detail.

The first, easy one is that there is an obvious notion of isomorphism
between \traces, under which strategies should be closed.  The second
problem is more serious: until now, these too naive strategies are not
concurrent enough to adequately model CCS or the $\pi$-calculus.
\begin{example}[Milner's coffee machines]\label{ex:coffee} 
  Consider the CCS processes $P = (a.b + a.c)$ and $Q = a.(b+c)$. The
  process $P$ has two ways of inputting on $a$ and then, depending on
  the chosen way, inputs on either $b$ or $c$. The process $Q$ inputs
  on $a$ and then has both possibilities of inputting on $b$ or
  $c$. They hence exhibit significantly different interactive
  behaviour. Both processes, however, accept exactly the same traces
  (in the standard sense), namely $\ens{\epsilon, a, ab, ac}$, where
  $\epsilon$ denotes the empty trace.
\end{example}
Thus, taking strategies to be prefix-closed sets of traces would
prevent us from directly modelling any reasonably fine behavioural
equivalence on processes.  Inspired by \emph{presheaf
  models}~\cite{DBLP:conf/lics/JoyalNW93}, we remedy both problems at
once by passing from prefix-closed sets of traces to presheaves (of
finite sets) on traces. Indeed, in the simple case where \traces on
$X$ form a mere poset $\T(X)$ by prefix ordering, a prefix-closed set
of traces is nothing but a contravariant functor from $\T (X)$ to the
ordinal $\two$, viewed as a category. The latter has two objects $0$
and $1$ and just one non-trivial morphism $0 \to 1$. The idea is that
a functor $S \colon \op{\T (X)} \to \two$ maps any trace to $1$ when
it is accepted, and to $0$ otherwise. Furthermore, if $t \leq t'$,
i.e., $t$ is a prefix of $t'$, then we have a morphism $t \to t'$
which should be mapped by $S$ to some morphism $S (t') \to S (t)$. If
$t'$ is accepted then $S(t') = 1$, so this has to be a morphism
$1 \to S(t)$. Because there are no morphisms $1 \to 0$, this entails
$S(t) = 1$, hence prefix-closedness of the corresponding strategy.
Now in the case where \traces{} form a proper category $\T(X)$, whose
morphisms encompass both prefix ordering and isomorphism between
\traces, considering functors $\op{\T (X)} \to \two$ retains
prefix-closedness and solves our first problem: for any $t \iso t'$,
functoriality imposes $S (t) \iso S (t')$.  Our second problem is then
solved by replacing such functors with presheaves, i.e., functors
$\op{\T (X)} \to \Set$.
\begin{example}
  In Example~\ref{ex:coffee}, the two ways that $P$ has to accept
  inputting on $a$ may be reflected by mapping the trace $a$ to some
  two-element set. More precisely, $P$ may be modelled by the presheaf
  $S$ defined on the left and pictured on the right:
  \vspace{1em}
  \begin{center}
    \begin{minipage}[c]{0.26\linewidth}
      \begin{itemize}
      \item $S (\epsilon) = \ens{\star}$,
      \item $S (a) = \ens{x,x'}$,
      \item $S (ab) = \ens{y}$,
      \item $S (ac) = \ens{y'}$,
      \end{itemize}
    \end{minipage}
    \hfil
    \begin{minipage}[c]{0.44\linewidth}
      \begin{itemize}
      \item $S$ empty otherwise,
      \item $S (\epsilon \into a) = \ens{x \mapsto \star, x' \mapsto
          \star}$,
      \item $S (a \into ab) = \ens{y \mapsto x}$,
      \item $S (a \into ac) = \ens{y' \mapsto x'}$,
      \end{itemize}
    \end{minipage}
    \hfil
    \begin{minipage}[c]{0.28\linewidth}
      \vspace*{-1.5em} \diag(.15,.8){%
        \& |(root)| \star \& \\
        |(x)| x \& \& |(x')| x' \\
        \\
        |(y)| y \& \& |(y')| y'. %
      }{%
        (root) edge[labelal={a}] (x) %
        edge[labelar={a}] (x') %
        (x) edge[labell={b}] (y) %
        (x') edge[labelr={c}] (y') %
      }
    \end{minipage}
  \end{center}
  \vspace{1em}
  Presheaves thus may `accept a trace in several ways': the trace $a$
  is here accepted in two ways, $x$ and $x'$.  The process $Q$ is of
  course modelled by equating $x$ and $x'$.
\end{example}
As it turns out, we actually only need \emph{finitely many} ways of
accepting each trace. Thus, we arrive at a first sensible notion of
strategy given by presheaves of finite sets, i.e., functors $\op{\T
  (X)} \to \set$, where $\set$ denotes the category with as objects
all finite subsets of $\Nat$, with all maps between them. We call them
\emph{(naive) strategies} in the sequel.  (Please note that $\set$ is
equivalent to the category of all finite sets.)
\begin{notation}\label{not:psh}
  For any $\CCC$, let $\FPsh{\CCC} 
  $ denote
  the category of presheaves of finite sets over $\CCC$.
\end{notation}

\subsection{Innocence as a sheaf condition} 
The third problem evoked above is that functors $\op{\T(X)} \to \set$
allow some undesirable behaviours.  Intuitively, in $\pi$ just as in
CCS, \agents should not have any control over the routing of messages.
\begin{example}\label{ex:inno}
  Consider \aposition $X$ with three agents $x,y$, and $z$
  sharing a communication channel $a$, and a strategy $S$ accepting
  (1) the \trace where $x$ outputs on $a$, (2) the \trace where $y$
  inputs on $a$, and (3) the \trace where $z$ inputs on $a$. Then, both
  synchronisations should be accepted by $S$.  However, one easily
  constructs a naive strategy in which one is refused (see
  Example~\ref{ex:noninnocent}).
\end{example}
In order to rectify this deficiency, we enrich strategies with `local'
information. The idea is that a strategy should not only accept or
refuse \traces on the whole \position $X$, but also \traces on
all possible sub\positions of $X$. Moreover, this local
information should fit together coherently. 
\begin{example}
  Consider the \position $X$ of Example~\ref{ex:inno}. Any
  strategy on $X$ should now in particular include independent
  strategies for each of the three \agents $x$, $y$, and $z$.
  Coherence means that in order for a trace to be accepted, it should
  be enough for it to be `locally accepted', i.e., at every stage in
  the trace, each agent should approve what they see of the next
  action.  E.g., if the next action is a synchronisation $x
  \shortrightarrow y$ with $x$ outputting and $y$ inputting on some
  channel $a$, then all that is required for the synchronisation to be
  accepted is that $x$ accepts to output and $y$ accepts to input.
  Consequently, if some other agent $z$ also accepts to input on $a$
  at this stage, then the synchronisation $x \shortrightarrow z$ is
  also accepted.
\end{example}
We call this putative coherence condition \emph{innocence} by analogy
with Hyland and Ong's notion~\cite{DBLP:journals/iandc/HylandO00}.  In
order to formalise it, we first extend our category of \traces $\T(X)$
on $X$ with new objects representing \traces on sub\positions of
$X$.  We also add new morphisms, which are about `locality'.  Indeed,
standardly, plays form a poset for the prefix ordering, but here we want
to enrich this, e.g., by embedding \traces on sub\positions of
$X$ into \traces on $X$.
\begin{example}
  Consider a \position $X$ with two \agents $x_1$ and
  $x_2$.  There is \atrace $t$ on $X$ in which both \agents fork.
  Consider now the sub\position $Y$ of $X$ consisting solely of
  $x_1$ and the \trace $t'$ on $Y$ in which $x_1$ merely forks.  There
  is a morphism $t' \to t$ in our new category.
\end{example}
This extended category, $\T_X$, yields an intermediate notion of
strategy, given by functors $\op{\T_X} \to \set$. Among the new
objects, we have in particular \traces{} on just one \agent{} of $X$
obtained by sequentially composing atomic actions whose final
\position again consists of one \agent. We call this particular
kind of \trace a \emph{view}. Views are the most `local' kind of
objects in $\T_X$. They form a full subcategory $\V_X$ of $\T_X$.
\begin{example}
  If $X$ merely consists of an agent $x$ linked to $n$ communication
  channels, consider the atomic action given by $x$ forking into two
  new agents, say $x_1$ and $x_2$. This action, viewed as an object of
  $\T_X$ has three subobjects which are views: (1) the `identity'
  view, in which nothing happens, (2) $\paralof{n}$, which represents
  the left-hand branch (to $x_1$), (3) and $\pararof{n}$, which
  represents the right-hand branch (to $x_2$).
\end{example}
The inclusion $\V_X \into \T_X$ induces a simple Grothendieck
topology~\cite{MM} on $\T_X$, which amounts to decreeing that any
trace is covered by its views.  We finally call any $S \colon
\op{\T_X} \to \set$ \emph{innocent} precisely when it is a
\emph{sheaf} for this Grothendieck topology.  In particular, giving an
innocent presheaf on $\T_X$ is equivalent (up to isomorphism) to
separately giving an innocent presheaf for each agent of $X$, which
rules out the undesirable behaviour described in
Example~\ref{ex:inno}.

Sheaves on $\T_X$ form a category $\SS_X$, which is small thanks to
our use of $\set$ instead of $\Set$. They furthermore map back to
naive strategies, i.e., presheaves on $\T(X)$, by forgetting the local
information.  Finally, because the considered topology is particularly
simple, sheaves are equivalent to presheaves on views, i.e., $\SS_X
\equi \FPsh{\V_X}$ (recalling Notation~\ref{not:psh}).  In summary, we
have three categories of strategies: \emph{naive} strategies are
presheaves on the `global' category of \traces $\T (X)$,
\emph{innocent} strategies $\SS_X$ are sheaves on the extended
category of \traces $\T_X$, and so-called \emph{behaviours} $\Beh{X}$
are presheaves on the category of \threads $\Views_X$. The last two
are equivalent, and we furthermore have an adjunction
\adj[.5]{\FPsh{\T(X)}}{\SS_X\rlap{.}}{}{}

We use both sides of the equivalence: behaviours directly lead to our
compositional interpretation $\translfun \colon \picalc \to \SS$ of
$\pi$-calculus processes, and innocent strategies are used below as
the basis for our semantic definition of fair testing equivalence.

\subsection{Main result}
What should we do in order to demonstrate adequacy of our model?  By
definition, causal models expose some intensional information. Hence,
equality is generally much finer than any reasonable behavioural
equivalence, so we should not base our main result on it. On the other
hand, causal models are supposed to be `compositional', i.e., to come
equipped with an interpretation of syntactic operations in the
model. A natural thing to do is thus to choose some behavioural
equivalence defined from syntactic operations, use compositionality to
transpose it to the model, and prove that the two coincide. More
precisely, the considered equivalence induces by quotienting two
`extensional collapses', one syntactic and the other semantic, and we
want to prove that the translation induces a bijection between both
extensional collapses.  Following~\cite{ajm}, we call this
\emph{intensional full abstraction} for the considered equivalence. In
fact, all behavioural equivalences end up relying on some notion of
observation, which we will also need to transpose to the model.

We here focus on so-called \emph{testing
  equivalences}~\cite{DBLP:journals/tcs/NicolaH84,DBLP:conf/icalp/NatarajanC95,DBLP:conf/concur/BrinksmaRV95,DBLP:journals/iandc/RensinkV07},
which are defined in two stages.  First, one chooses a `mode of
interaction'. That is, one defines what the relevant \emph{tests} are
for a given process and specifies how the two should
interact. Typically, tests for $P$ are other processes $T$ with the
same free communication channels as $P$, and interaction is just
parallel composition $P \para T$.  This part will be easy to transpose
to the model by compositionality.  The second stage amounts to
choosing when $P \para T$ is \emph{successful}. E.g., in may testing
equivalence $P \para T$ is successful just when there exists a
transition $(P \para T) \xTo{\tick} P'$ (that is, a $\tick$
transition, possibly surrounded by silent transitions), where $\tick$
is some `tick' action fixed in advance. In must testing equivalence,
success is when all maximal (possibly infinite) transition sequences
contain at least one $\tick$ transition.  In fair testing equivalence,
one requires that all silent sequences $(P \para T) \xTo{} P'$ extend
to some sequence $P' \xTo{} P'' \xto{\tick} P'''$ ending with a
$\tick$ transition.  These ideas transpose to the model by observing
whether a given \trace contains a $\tick$ \action.  In this paper, we
focus on fair testing equivalence, i.e., we prove
(Theorem~\ref{thm:main}) that our model is intensionally
fully-abstract for fair testing equivalence.  We finally show
(Section~\ref{subsec:gen}) that the result generalises to a wide range
of testing equivalences, obtained by varying the notion of success.

In order to fix intuitions, let us quickly motivate must and fair
testing, using \emph{barbed
  congruence}~\cite{DBLP:books/daglib/0004377} as a standard starting
point.  Barbed congruence equates processes $P$ and $Q$, roughly, when
for all contexts $C$, $C[P]$ and $C[Q]$ are weakly bisimilar w.r.t.\
\emph{reduction} (i.e., only $\tau$-actions are allowed), and
furthermore they have the same interaction capabilities at all
stages. Barbed congruence is sometimes perceived as too discriminating
w.r.t.\ guarded choice. Consider, e.g., the following CCS processes.
\begin{center}
  $P_1 \quad = \quad \diag(.3,.3){%
    |(a)| \bullet \& |(b)| \bullet \& |(d)| \bullet
    \& |(f)| \bullet \\
    \& |(c)| \bullet \& |(e)| \bullet \& |(g)| \bullet %
  }{%
    (a) edge[labela={\tau}] (b) %
    edge[labelbl={\tau}] (c) %
    (b) edge[labela={\tau}] (d) %
    edge[labelbl={a}] (e) %
    (d) edge[labela={\tau}] (f) %
    edge[labelbl={b}] (g) %
  }$
\hfil
    $P_2 \quad = \quad \diag(.3,.3){%
      |(a)| \bullet \& |(b)| \bullet \& |(d)| \bullet
      \& |(f)| \bullet \\
      \& |(c)| \bullet \& |(e)| \bullet \& |(g)| \bullet %
    }{%
      (a) edge[labela={\tau}] (b) %
      edge[labelbl={\tau}] (c) %
      (b) edge[labela={\tau}] (d) %
      edge[labelbl={b}] (e) %
      (d) edge[labela={\tau}] (f) %
      edge[labelbl={a}] (g) %
    }$
\end{center}
Both processes may disable both actions $a$ and $b$, the only
difference being that $P_1$ disables $a$ \emph{before} disabling $b$.
Barbed congruence distinguishes $P_1$ from $P_2$ (consider the trivial
context $C = \square$), which some view as a deficiency.

Another possibility would be \emph{must testing}
equivalence~\cite{DBLP:journals/tcs/NicolaH84}.  Recall that $P$
\emph{must pass} a test process $R$ iff all maximal executions of
$P \para R$ perform, at some point, the `tick' action $\tick$.  Then,
$P$ and $Q$ are must testing equivalent iff they must pass the same
tests.  Must testing equivalence does equate $P_1$ and $P_2$ above,
but is sometimes perceived as too discriminating w.r.t.\ divergence.
E.g., consider $Q_1 = {!\tau} \para a$ and $Q_2 = a$. Perhaps
surprisingly, $Q_1$ and $Q_2$ are \emph{not} must testing
equivalent. Indeed, $Q_2$ must pass the test $\abar.\tick$, but $Q_1$
does not, due to an infinite, silent reduction sequence.

Fair testing equivalence was originally introduced (for CCS-like
calculi) to rectify both the deficiency of barbed congruence w.r.t.\
choice and that of must testing equivalence w.r.t.\ divergence.  The
idea is that two processes are equivalent when they \emph{should} pass
the same tests. A process $P$ should pass the test $T$ iff their
parallel composition $P \para T$ never loses the ability of performing
the special `tick' action $\tick$, after any $\tick$-free reduction
sequence. Fair testing equivalence thus equates $P_1$ and $P_2$ above,
as well as $Q_1$ and $Q_2$.  Cacciagrano et
al.~\cite{DBLP:journals/corr/abs-0904-2340} provide an excellent
survey of fair testing for $\pi$.

\begin{example}[\cite{DBLP:journals/corr/abs-0904-2340}]
  The $\pi$-calculus features a well-known encoding of internal choice
  using channel creation and parallel composition. Mixing this with
  replication leads to intriguing examples of fair testing.  Consider
  the following subtly different encodings of $!(b \oplus c)$, where
  $\oplus$ denotes internal choice and $!$ denotes replication: let
  $R_1 = !\nu a.( \abar \para a.b \para a.c)$ and
  $R_2 = \nu a.!( \abar \para a.b \para a.c)$.  These are clearly fair
  testing equivalent. However, each encoding has an execution that
  always makes, say, the left choice, and Cacciagrano et al.\ argue
  that for $R_1$ this execution is fair, as the involved channel is
  different each time. They use similar examples to argue that fair
  testing is in fact too coarse, and instead propose alternative
  notions (which lie beyond the scope of this paper).
\end{example}

\subsection{Contributions}\label{sec:contrib}
Since this paper follows the same approach as previous work on
CCS~\cite{HP11,DBLP:conf/calco/Hirschowitz13,HirschoDoubleCats}, we
should explain in which sense extending the approach to $\pi$ is more
than an easy application.

A first contribution comes from the fact that, in order to even define
composition in our category of traces, we need to show that traces
form the total category of a \emph{fibration}~\cite{Jacobs} over
\positions. In previous work, this was done in an \emph{ad hoc}
way. We here introduce a more satisfactory approach based on
\emph{factorisation systems}~\cite{dualityforgroups,FK}.

A second significant contribution is prompted by the interplay between
synchronisation and private channels in $\pi$, which is notoriously
subtle to handle. And indeed, our proof method for CCS fails miserably
on $\pi$. One reason for this, we think, is that our notion of trace
for $\pi$, though simple and natural, is not `modular' enough, in the
sense that a trace contains strictly more information than the
collection of all `local' information accessible to agents (i.e., of
all of its views, in the above sense).  Thus, adapting our proof
technique from CCS would have required us to define a much more
complex but more modular notion of trace.  Instead, we here take a
somewhat rougher route.

Finally, our proof now applies not only to fair testing equivalence,
but also to a whole class of testing equivalences.

\subsection{Related work}\label{sec:rw}
Beyond the obviously closely related, already mentioned work of
Winskel et al., we should mention other causal models for
$\pi$~\cite{DBLP:conf/concur/BusiG95,DBLP:journals/entcs/MontanariP95,DBLP:journals/tcs/Engelfriet96,DBLP:conf/ctcs/CattaniSW97,DBLP:journals/acta/BorealeS98,DBLP:journals/tcs/DeganoP99,DBLP:conf/lics/CattaniS00,DBLP:conf/concur/BruniMM06,DBLP:conf/fossacs/CrafaVY12,DBLP:journals/jlp/BusiG09},
as well as interleaving
models~\cite{DBLP:conf/lics/FioreMS96,DBLP:conf/lics/FioreT01,DBLP:conf/lics/Stark96,DBLP:conf/lics/CattaniS00,DBLP:journals/tcs/MontanariP05,DBLP:journals/tcs/Hennessy02}
and the early approach~\cite{DBLP:conf/amast/JagadeesanJ95} based on
Girard's Geometry of Interaction.  All of these models are based on
some \lts{} for $\pi$.  Instead, ours is rather based on
\emph{reduction rules}.  The subtleties usually showing up in \ltss{},
related to mixing synchronisation and private channels, do resurface
in our proof of intensional full abstraction, but not in the
definition of our model. Indeed, it merely goes by describing the
`rule of the game' in $\pi$, and applying the general framework of
\emph{playgrounds}~\cite{HirschoDoubleCats}.

Another general framework relating operational and denotational
descriptions of programs is Kleene
coalgebra~\cite{DBLP:conf/concur/BonchiBRS09}, which is mainly
designed for automata theory.  Playgrounds may be viewed as adapting
ideas from Kleene coalgebra to the process-algebraic setting.

We should also mention Laird's games model of (a fragment of)
$\pi$~\cite{DBLP:conf/fsttcs/Laird06}, which accounts for \emph{trace}
(a.k.a.\ \emph{may testing}) equivalence.  Standard game models view
strategies as \emph{sets} of traces (with well-formedness conditions),
so, as we have seen, lend themselves better to modelling trace
equivalence. In a non-deterministic, yet not concurrent setting,
Harmer and McCusker~\cite{DBLP:conf/lics/HarmerHM07} resort to an
explicit action for divergence, which allows them to recover a finer
behavioural equivalence. We feel that the presheaf-based approach is
more general.

Furthermore, recent work by Tsukada and Ong~\cite{OngTsukada} adapts
and extends some ideas of~\cite{HP11,DBLP:conf/calco/Hirschowitz13} to
nondeterministic, simply-typed $\lambda$-calculus. In particular, they
show that innocent strategies as sheaves are compatible with the
\emph{hiding} operation of standard game semantics.  Eberhart and
Hirschowitz further establish~\cite{TOtoEH} a formal link between
Tsukada and Ong's notion of innocence and ours: they construct a model
of nondeterministic, simply-typed $\lambda$-calculus in our style, and
then a morphism of Grothendieck sites, which entails that both models
are equivalent.

Let us moreover mention less closely related work: Girard's
\emph{ludics}~\cite{DBLP:journals/mscs/Girard01}, Melliès's reworking
of game semantics~\cite{Mellies04,DBLP:conf/lics/Mellies05}, the part
of it rediscovered by Levy~\cite{PBLGalop13} with a different
presentation, Melliès's game semantics in string
diagrams~\cite{DBLP:conf/lics/Mellies12}, Harmer et al.'s categorical
combinatorics of innocence~\cite{DBLP:conf/lics/HarmerHM07}, McCusker
et al.'s graphical foundation for
schedules~\cite{DBLP:journals/entcs/McCuskerPW12}, and Winskel's
strategies as profunctors~\cite{DBLP:conf/fossacs/Winskel13}. Finally,
Hildebrandt's work~\cite{DBLP:journals/tcs/Hildebrandt03} also uses
sheaves, though as a tool to correctly handle infinite behaviour, as
opposed to their use here to force reactions of agents to depend only
on their views.

\subsection{Plan}
In previous work~\cite{HirschoDoubleCats}, we have defined an
algebraic notion called \emph{playground}, which provides a sufficient
framework for sheaf-based innocence to make sense.  Namely, it
organises \positions, atomic actions, and traces into a
\emph{pseudo double
  category}~\cite{Ehresmann:double,Ehresmann:double2,GrandisPare,GrandisPareAdjoints,LeinsterHC,GarnerPhD}
with additional structure. Any playground $\D$ automatically gives
rise, among other things, to
\begin{itemize}
\item categories of innocent strategies $\SSX$ on each \position
  $X$, organised into a pseudo double functor from $\op\D$ to small
  categories;
\item a simple, yet complete syntax for innocent strategies, together
  with \anlts{} $\SSS_\D$ for them over an alphabet built from atomic
  actions.
\end{itemize}

After introducing some notation, the considered variant of
$\pi$-calculus, fair testing equivalence (Section~\ref{sec:prelims}),
and recalling the notion of playground, we construct a playground $\D$
for the $\pi$-calculus in Sections~\ref{sec:double}
to~\ref{sec:playground:pi}.  This is a lot of work, and not all
aspects of playgrounds are used in defining the model and proving the
main result. The reason we devote so much energy to it is that
playgrounds provide a really helpful setting, in fact a calculus, for
reasoning about \positions{}, \traces{}, views and the various notions
of strategies.  The underlying pseudo double category is constructed
in Section~\ref{sec:double}.  The main playground axiom, asserting
that a certain functor is a Grothendieck fibration, is established in
Section~\ref{sec:fib}. Finally, the remaining axioms are proved in
Section~\ref{sec:playground:pi}.

We then continue in Section~\ref{sec:model} by applying results
from~\cite{HirschoDoubleCats} to define our sheaf model and semantic
fair testing equivalence, as well as our translation $\translfun$ of
$\pi$.  We then state the main result (Theorem~\ref{thm:main}).  In
Section~\ref{sec:translation}, after introducing the basic notion of
definite residual, we reduce our main theorem to an analogous
statement about \anlts{} $\SSS$ for strategies (derived from
$\SSS_\D$). The advantage of the latter statement is that it lies
entirely in the realm of \ltss{}.  We then define a further, more
syntactic \lts{} $\MMM$ which we prove equivalent to $\SSS$, thus
further reducing the main result to an analogous statement about
$\MMM$. We finally prove the latter, which entails the main result.

\section{Notation and preliminaries}\label{sec:prelims}
We start in this section with a few reminders.  In
Section~\ref{subsec:lts}, after fixing some basic notation, we recall
\ltss{}. We use a slightly more general, graph-based notion than the
standard, relation-based one.  In Section~\ref{subsec:pi}, we
introduce the considered $\pi$-calculus, which is mostly standard
except that (1) we use a presentation in the style of Berry and
Boudol's \emph{chemical abstract
  machine}~\cite{DBLP:conf/popl/BerryB90}, and (2) we consider
infinite terms, thus sparing us the need for recursion or replication
constructs. We then go on and recall fair testing equivalence for
$\pi$ in Section~\ref{subsec:fair}. In fact, because we will also need
to define fair testing for other \ltss{}, we introduce a general
framework in which it makes sense, called \emph{graphs with
  testing}. We further provide sufficient conditions for a relation
between the vertices of two graphs with testing to preserve and
reflect fair testing equivalence (Lemma~\ref{lem:fairness} and
Corollary~\ref{cor:fairness:weak}).  Finally, in
Section~\ref{subsec:playgrounds}, we recall and briefly explain the
definition of playgrounds.

\subsection{Basic notation and labelled transition systems}\label{subsec:lts}
First of all, we adopt the notation
of~\cite[Section~2]{HirschoDoubleCats}, with the slight modification
that $\set$ now denotes the category with finite subsets of $\Nat$ as
objects, and all maps as morphisms.  (This category is equivalent to
what we used in~\cite{HirschoDoubleCats}, but slightly easier to work
with for our purposes.)  For all $n \in \Nat$, we often abuse notation
and let $n$ denote the finite set $\ens{1, \ldots, n}$.  We denote by
$\Chat$ the category of presheaves on $\C$, and by $\FPsh{\C}$ the
category of presheaves of finite sets, i.e., of contravariant functors
to $\set$. For any category $\CCC$, let $\CCC_f$ denote the full
subcategory of \emph{finitely presentable} objects~\cite{Adamek}, or
\emph{finite} objects for short. In the only case where we'll use
this, $\CCC$ will be a presheaf category $[\op{\C},\set]$ and
furthermore due to the special form of $\C$, finite presentability of
$F \in \Chat$ will be equivalent to the category of elements of $F$
being finite, and further equivalent to the set of elements of $F$
being finite, i.e., $\sum_{c \in \ob (C)} F (c)$ is finite.

To recall some bare minimum: we often confuse objects $C$ of a
category $\C$ with the corresponding representable presheaves
$\yoneda_C \in \Chat$. $\Gph$ denotes the category of reflexive
graphs, and all our graphs are reflexive so we often omit mentioning
it. We think of morphisms $p \colon G \to A$ in $\Gph$ as \ltss{} over
the \emph{alphabet} $A$, except that for reasons specific to
playgrounds our convention is that a transition from $x$ to $y$ is
represented as an edge $x \ot y$. Using graphs as alphabets
generalises the standard approach based on sets of labels: indeed, in
order to model any set of labels, take for $A$ the graph with one
vertex and one endo-edge for each label. The extra generality is
useful, e.g., to add some typing information on labels.  Finally,
using graphs as alphabets provides us with standard tools for
transporting \ltss{} across morphisms (by pullback, resp.\
post-composition).

For any graph $G$, $G^\star$ denotes the graph with the same vertices
and all paths between them; on the other hand, $\freecat{G}$ denotes
the free category on $G$, i.e., the category with the same vertices
and identity-free paths between them. Both $(-)^\star$ and
$\freecatfun$ extend to functors, i.e., act on morphisms. We often
silently coerce $\freecat{G}$ into a reflexive graph, and denote by
$\idfree{-}$ the obvious morphism $G^\star \to \freecat{G}$.

For any graph $p \colon G \to A$ over $A$, $x,y \in \ob (G)$, and edge
$a \colon p(y) \to p(x)$ in $A$, we denote by $x \xot{a} y$ the
existence of an edge $e \colon y \to x$ in $G$ such that $p(e) =
a$. When $a = \id$, we just write $x \ot y$.
We denote strong bisimilarity over $A$ by $\bisima$.

For any graph $p \colon G \to A$ over $A$, $x,y \in \ob (G)$, and path
$\rho \colon p(y) \to p(x)$ in $A^\star$, we denote by $x \xOt{\rho}
y$ the existence of a path $r \colon y \to x$ in $G^\star$ such that
$\idfree{p^\star (r)} = \idfree{\rho}$. When $\rho$ is the empty path
we just write $x \xOt{} y$.  We denote weak bisimilarity over $A$ by
$\wbisima$.

\subsection{A \texorpdfstring{$\pi$}{π}-calculus}\label{subsec:pi}
We now present our variant of $\pi$, which features a chemical
abstract machine presentation and infinite terms. Also, we keep track
of the channels known to the considered process, i.e., we work with a
`natural deduction' presentation of terms.

\emph{Processes} are infinite terms coinductively generated by the grammar
\begin{mathpar}
  \inferrule{\gamma \vdashg P_1 \\ \ldots \\ \gamma \vdashg P_n 
    }{ %
      \gamma \vdash \sum_{i \in n} P_i
    }
    \and
    \inferrule{\gamma \vdash P \\ \gamma \vdash Q}{\gamma \vdash P \para Q}
    \\
    \inferrule{\gamma,a \vdash P}{\gamma \vdashg \nu a.P}
    \and
    \inferrule{\gamma \vdash P}{\gamma \vdashg \tick.P}
    \and
    \inferrule{\gamma \vdash P}{\gamma \vdashg \tau.P}
    \and
    \inferrule{a \in \gamma \\ \gamma,b \vdash P}{\gamma \vdashg a(b).P}
    \and
    \inferrule{a,b \in \gamma \\ \gamma \vdash P}{\gamma \vdashg \send{a}{b}.P}~,
\end{mathpar}
where
\begin{itemize}
\item we have two judgements, $\vdash$ for \emph{processes} and
  $\vdashg$ for \emph{guarded processes};
\item $\gamma$ ranges over finite
  sets of natural numbers, and
\item $\gamma,a$ is defined iff $a \notin \gamma$ and then denotes
  $\gamma \uplus \ens{a}$.
\end{itemize}

\begin{notation}
  Let $\picalc$ be the set of all such (non-guarded) processes.
  Let $\picalc_\gamma$ denote the set of processes $\gamma \vdash P$.
\end{notation}

As usual, $a$ is bound in $\nu a.P$ and $b$ is bound in $a(b).P$.
In the following, processes are considered equivalent up to renaming
of bound channels. Capture-avoiding substitution extends the
assignment $\gamma \mapsto \picalc_\gamma$ to a functor $\set \to
\Set$ mapping $\sigma \colon \gamma \to \gamma'$ to $P \mapsto
P[\sigma]$.

Let us now describe the dynamics of our $\pi$-calculus.  They are
slightly unusual, in that they are presented in the style of the
chemical abstract machine.  In particular, there are silent
transitions for `heating' both parallel composition and name
creation. A further slight peculiarity, which we adopt for its
convenience in the chemical abstract machine presentation, is that
name creation is a guard. E.g., we have some processes of the form
$(\nu a. P) + b(x).Q$. This is hardly significant. E.g., the previous
process is strongly bisimilar to $(\tau.\nu a.P) + b(x).Q$ in more
standard settings, and our results are about equivalences coarser than
weak bisimilarity anyway.

\begin{notation}
  For any $\gamma \vdashg P$, $\gamma \vdash Q$ of the form $\sum_{i
    \in n} Q_i$, and injection $h \colon n \into n+1$, we denote by
  $P+_hQ$ the sum $\sum_{j \in n+1} P_j$, where $P_{h(i)} = Q_i$ for
  all $i \in n$ and $P_k = P$, for $k$ the unique element of $(n+1)
  \setminus \im(h)$.
\end{notation}
\begin{defi}\label{def:multisets}
  Let $\Multiset{-}$ denote the finite multiset monad on sets. 
\end{defi}
\begin{defi}
  A \emph{configuration} is an element of $\Conf=\sum_{\gamma \in
    \powfin(\Nat)} \Multiset{\picalc_\gamma}$.
\end{defi}
\begin{notation}\label{not:multisets}
  Configurations $(\gamma,S)$ will be denoted by $\conf{\gamma}{S}$,
  and we will use list syntax $[P_1,\ldots,P_n]$ for multisets,
  sometimes dropping brackets, e.g., as in
  $\conf{\gamma}{P_1,\ldots,P_n}$.  We sometimes resort to a hopefully
  clear `multiset comprehension' notation $[P \aalt \varphi(P)]$.  We
  use $\cup$ for multiset union and $x \cons M = [x] \cup M$.

  Just as $\picalc$, $\Conf$ extends to a functor $\set \to \Set$ by
  capture-avoiding substitution.  
\end{notation}

Let us now extend $\Conf$ to \anlts{} over the alphabet $\ens{\tick,\tau}$.
This means that we need to construct a graph morphism
$\Conf \to \Sierp$, where $\Sierp$ denotes
\begin{center}
\begin{tikzpicture}[baseline=(V.base)]
	\node (V) at (0,0) {$\bullet$};
	\draw[->] (V) .. controls (-1,1) and (-1,-1) .. (V) node [midway,left] {$\tick$};
	\draw[->] (V) .. controls (1,1) and (1,-1) .. (V) node [midway,right] {$\tau$};
\end{tikzpicture},
\end{center}
$\tau$ being the chosen identity edge.

This is done in Figure~\ref{fig:redpi}, omitting identity edges.
There, we let $R$ and $R'$ range over processes of the form
$\sum_{i \in n} P_i$.  The last rule makes sense because each
transition as in the premise implicitly comes with an inclusion
$h \colon \gamma_1 \into \gamma_2$, and the second occurrence of $S$
is implicitly $S[h]$.

\begin{figure}[ht]
  \begin{mathpar}
    \inferrule{ }{\conf{\gamma}{P\para Q} \xtrans{}{\tau}
      \conf{\gamma}{P,Q}} 
\and %
    \inferrule{ }{\conf{\gamma}{\tau.P +_h R} \xtrans{}{\tau}
      \conf{\gamma}{P}} 
\and %
 \inferrule{ }{\conf{\gamma}{\tick. P +_h
        R} \xtrans{}{\tick} \conf{\gamma}{P}} 
\and %
 \inferrule{ }{\conf{\gamma}{\nu a . P +_h R} \xtrans{}{\tau}
      \conf{\gamma,a}{P}} 
\and %
 \inferrule{
    }{\conf{\gamma}{a(b).P +_h R,\send{a}{c}.Q +_{h'} R'} \xtrans{}{\tau}
      \conf{\gamma}{P[b \mapsto c],Q}} 
\and %
    \inferrule{\conf{\gamma_1}{S_1} \xtrans{}{\alpha}
      \conf{\gamma_2}{S_2}}{ \conf{\gamma_1}{S \cup S_1} \xtrans{}{\alpha}
      \conf{\gamma_2}{S \cup S_2}}~(\alpha\in\{\tau,\tick\})
  \end{mathpar}
  \caption{Reduction rules for $\Conf$}
\label{fig:redpi}
\end{figure}

\subsection{Fair testing equivalence}\label{subsec:fair}
Let us now define fair testing equivalence for $\pi$, together with
our general framework of graphs with testing.  Graphs with testing are
essentially De Nicola and Hennessy's original
framework~\cite{DBLP:journals/tcs/NicolaH84}, adapted to our
graph-based presentation of \ltss{}. We derive a few results about
general graphs with testing, notably sufficient conditions for a
relation between two graphs with testing to preserve and reflect fair
testing equivalence.

\begin{rem}
In~\cite{HirschoDoubleCats}, an abstract framework was defined
for studying fair testing equivalence and its relationship with weak
bisimilarity. We won't use this here, and instead work in a simpler
setting.
\end{rem}

We first cover $\pi$-calculus, and then generalise.  The starting
point is that we need to be able to test processes against other
processes, and more generally configurations against
configurations. Because configurations carry their sets of free
channels, it makes sense to consider a partial parallel composition
operation:
\begin{defi}\label{def:atpi}
  For any $\conf{\gamma}{S}, \conf{\gamma'}{S'} \in \Conf$, let
  $\conf{\gamma}{S} @ \conf{\gamma'}{S'}$ denote $\conf{\gamma}{S \cup
    S'}$ if $\gamma = \gamma'$ and be undefined otherwise.
  Let furthermore  $\epsilon_\gamma = \conf{\gamma}{}$.
\end{defi}

\begin{lem}
  The domain of $@$, i.e., the set of pairs $(C,C')$ such that
  $C @ C'$, is an equivalence relation.
\end{lem}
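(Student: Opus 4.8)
The plan is to recognise the domain of $@$ as the kernel of the first projection on configurations, for which being an equivalence relation is automatic.

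Concretely, I would first unfold Definition~\ref{def:atpi}. Writing $C = \conf{\gamma}{S}$ and $C' = \conf{\gamma'}{S'}$, the partial operation $C @ C'$ is defined exactly when $\gamma = \gamma'$, and its value (when defined) plays no role in the statement. Hence the domain of $@$ is precisely the relation
\[
  R = \ens{(C,C') \mid \gamma = \gamma'}
\]
on $\Conf$, i.e.\ the pullback of the equality relation (the diagonal) on $\powfin(\Nat)$ along the projection $\pi \colon \Conf \to \powfin(\Nat)$ sending $\conf{\gamma}{S}$ to $\gamma$. Since $R$ is exactly $\ens{(C,C') \mid \pi(C) = \pi(C')}$, it is the kernel pair of a function, and such relations are always equivalence relations.

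For completeness I would also spell out the three clauses directly from the characterisation $C \mathrel{R} C' \iff \gamma = \gamma'$: reflexivity because $\gamma = \gamma$ always holds, so $C @ C$ is always defined; symmetry because $\gamma = \gamma'$ yields $\gamma' = \gamma$; and transitivity because $\gamma = \gamma'$ together with $\gamma' = \gamma''$ yields $\gamma = \gamma''$. There is no genuine obstacle here; the only point requiring a moment's care is the reading of the statement, namely that ``the set of pairs $(C,C')$ such that $C @ C'$'' means the set of pairs on which the partial operation $@$ is defined, which is indeed the relation $R$ above.
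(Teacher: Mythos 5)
Your proof is correct; the paper in fact omits any proof of this lemma, treating it as immediate, and your kernel-of-a-projection argument (equivalently, the direct verification of reflexivity, symmetry, and transitivity of ``having the same $\gamma$'') is exactly the intended, evident reasoning.
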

Let us denote by $\coh_\Conf$ this equivalence relation.

Here is the standard definition of fair testing equivalence:
\begin{defi}\label{def:faireq:pi}
Let $\bot^\Conf$ denote the set of
  configurations $C$ such that for all $C \xOt{} C'$, there exists $C'
  \xOt{\tick} C''$.

  Any two configurations $C$ and $C'$ are \emph{fair testing
    equivalent} iff $C \coh_\Conf C'$ and for all $D \coh_\Conf C$,
  $(C @ D) \in \bot^\Conf$ iff $(C' @ D) \in \bot^\Conf$.
\end{defi}

Let us now abstract away from this definition.  For this, it would
make sense to start from a partial, parallel composition map.
However, in the model, the corresponding map will involve a pushout of
positions which is only determined up to isomorphism. We thus
generalise from partial maps to relations, but we need to impose some
conditions.  What matters is the use we will make of parallel
composition for testing. Intuitively, we will check that the parallel
composition $C @ C'$ belongs to some given \emph{pole}, which is
closed under weak bisimilarity over $\Sierp$.  This choice is perhaps
slightly arbitrary, but it encompasses all known testing
equivalences. Finally, to gain just a little more generality, we will
use the fact that weak bisimilarity over $\Sierp$ coincides with
strong bisimilarity over $\freecat{\Sierp}$, and work with the latter
(Definition~\ref{def:pole}). For \ltss{} over $\fcSierp$ obtained by
applying $\freecatfun$ to ones over $\Sierp$, this is equivalent to
the previous setting. The extra generality is useful for describing
\ltss{} which are not free categories. E.g., in
Section~\ref{subsec:SSS}, we introduce the graph with testing $\CCC$,
whose transitions are \traces{}, which compose in a non-free way.
We thus start from the following notion, and then define fair testing
equivalence.
\begin{defi}
  A \emph{graph with testing} is a graph $G$ together with a morphism
  $p \colon G \to \fcSierp$ and a relation $R \colon (\ob (G))^2 \modto
  \ob (G)$ whose domain is an equivalence relation and which is
  partially functional up to strong bisimilarity over $\fcSierp$.

  The domain being an equivalence relation more precisely means that
  the set $\ens{ (x,y) \aalt \exists z. (x,y) \relR z }$ forms an
  equivalence relation over $\ob(G)$.

  Partial functionality up to strong bisimilarity means that if
  $(x,y) \relR z$ and $(x,y) \relR z'$, then $z \bisimfcsierp z'$.
\end{defi}

\begin{notation}
  The relation is called the \emph{testing} relation, and we denote it
  by $\para_G$, i.e., $(x,y) \relR z$ is denoted by $z \in (x \para_G
  y)$. Furthermore, its domain is denoted by $\coh_G$.  We use $\para$
  and $\coh$ when there is no ambiguity.  Since $\para$
  is partially functional up to strong bisimilarity, for any $(x,y) \relR z$, as
  long as what we say about $z$ is invariant under strong bisimilarity,
  then it also holds for any other $z'$ such that $(x,y) \relR z'$. In
  such cases, we implicitly make some global choice of $z$ and
  consider $\para$ as partially functional.
\end{notation}

\begin{example}
Figure~\ref{fig:redpi} defines a morphism
$\pp^\Conf\colon \Conf \to \Sierp$.  Because, $@$ is a partial map, it
induces a partially functional relation
$\ob(\Conf)^2 \modto \ob(\Conf)$, whose domain is an equivalence
relation. Because partially functional implies partially functional up
to strong bisimilarity, we have:
\begin{prop}
  The morphism $\freecat{\pp^\Conf} \colon \freecat{\Conf} \to
  \fcSierp$, with $@$ as testing relation, forms a graph with testing.
\end{prop}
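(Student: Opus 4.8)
The plan is to check the three requirements in the definition of a graph with testing one at a time, each of which reduces to something already in hand.

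First, for the underlying labelled morphism, I would invoke functoriality of $\freecatfun$ on $\Gph$ (recalled in Section~\ref{subsec:lts}): applying it to the reduction morphism $\pp^\Conf \colon \Conf \to \Sierp$ of Figure~\ref{fig:redpi} yields a morphism $\freecat{\pp^\Conf} \colon \freecat{\Conf} \to \freecat{\Sierp}$, and since $\fcSierp$ is by definition $\freecat{\Sierp}$, this is exactly the desired $\freecat{\Conf} \to \fcSierp$ (with $\freecat{\Conf}$ silently coerced into a reflexive graph).

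Second, I would note that $\freecatfun$ leaves the vertex set untouched, so $\ob(\freecat{\Conf}) = \ob(\Conf)$, and the partial operation $@$ legitimately induces a relation $\ob(\freecat{\Conf})^2 \modto \ob(\freecat{\Conf})$, namely $(C,C') \relR D$ iff $C @ C' = D$. Its domain is precisely the set of pairs $(C,C')$ for which $C @ C'$ is defined, i.e., the relation $\coh_\Conf$ shown to be an equivalence relation in the preceding Lemma. This discharges the domain condition.

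Third, for partial functionality up to strong bisimilarity, I would observe that $@$ is a bona fide partial function: when defined, $C @ C'$ denotes the single configuration $\conf{\gamma}{S \cup S'}$ (writing $C = \conf{\gamma}{S}$ and $C' = \conf{\gamma}{S'}$). Hence $(C,C') \relR D$ and $(C,C') \relR D'$ force $D = D'$, so in particular $D \bisimfcsierp D'$, since equal vertices are trivially strongly bisimilar.

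There is no genuine obstacle here: each clause is either functoriality of $\freecatfun$, invariance of vertices under $\freecatfun$, or the preceding Lemma. The only step deserving a word is the passage from the strict partiality of $@$ to the weaker \emph{up to strong bisimilarity} phrasing demanded by the definition, which is immediate because equality refines any bisimilarity — exactly the remark made just before the statement.
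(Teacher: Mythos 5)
Your proposal is correct and follows essentially the same route as the paper, which justifies the proposition by exactly the same three observations: Figure~\ref{fig:redpi} defines $\pp^\Conf$, the partial map $@$ induces a partially functional relation whose domain is the equivalence relation of the preceding Lemma, and partial functionality implies partial functionality up to strong bisimilarity. Your write-up merely makes explicit the functoriality of $\freecatfun$ and the invariance of vertices, which the paper leaves implicit.
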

\end{example}

We may now mimick the standard definition of fair testing equivalence
in the abstract setting:
\begin{defi}\label{def:faireq}
  For any graph with testing $p\colon G \to \freecat{\Sierp}$, let
  $\bot^G$ denote the set of objects $x$ such that for all
  $x \xot{} y$, there exists $y \xot{\tick} z$.

  Any two objects $x$ and $y$ are \emph{fair testing equivalent} iff
  $x \coh_G y$ and for all $z \coh_G x$, $(x \para_G z) \in \bot^G$ iff
  $(y \para_G z) \in \bot^G$.
\end{defi}
\begin{rem}
  Because we are working over $\fcSierp$, if $G$ has the shape
  $\freecat{G'}$, then single transitions like $x \xot{} y$ denote
  arbitrary paths of silent transitions. And analogously
  $x \xot{\tick} y$ denotes any path with all edges silent except
  exactly one.
\end{rem}

\begin{notation}
  We denote fair testing equivalence in $G$ by $\faireqof{G}$.  Given
  $x$, any $z$ such that $x \coh z$ is called a \emph{test} for $x$,
  and $x$ \emph{passes} the test iff $(x \para_G z) \in \bot^G$.
\end{notation}

\begin{example}
  Definition~\ref{def:faireq} instantiates to
  Definition~\ref{def:faireq:pi}.
\end{example}

In fact, the construction of the graph with testing
$\freecat{\pp^\Conf}$ from $\pp^\Conf$ is easily generalised:
\begin{lem}\label{lem:freetesting}
  For any morphism $p^G \colon G \to \Sierp$ and relation $R \colon
  (\ob (G))^2 \modto \ob (G)$ whose domain is an equivalence relation,
  $R$ equips $\freecat{p^G}$ with testing structure iff it is partially
  functional up to weak bisimilarity over $\Sierp$.
\end{lem}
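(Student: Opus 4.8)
The plan is to unwind the definition of \emph{graph with testing} for $\freecat{p^G}$ and observe that all of its clauses except one are automatic. Equipping $\freecat{p^G} \colon \freecat{G} \to \fcSierp$ with testing structure asks for three things: that $\freecat{p^G}$ be a graph morphism into $\fcSierp$, that the domain of $R$ be an equivalence relation, and that $R$ be partially functional up to strong bisimilarity over $\fcSierp$. The first is automatic, since $\freecatfun$ is a functor and we silently coerce the result into a reflexive graph; the second is assumed in the statement. So the lemma collapses to a single equivalence: $R$ is partially functional up to strong bisimilarity over $\fcSierp$ if and only if $R$ is partially functional up to weak bisimilarity over $\Sierp$.

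Both conditions have literally the same shape, namely `$(x,y) \relR z$ and $(x,y) \relR z'$ imply $z \approx z'$', differing only in the relation $\approx$: strong bisimilarity over $\fcSierp$, i.e.\ $\bisimfcsierp$, on one side, and weak bisimilarity over $\Sierp$ on the other. Since $G$ and $\freecat{G}$ have the same vertices, both bisimilarities are relations on the single set $\ob(G)$, so it suffices to show that they coincide there. This is exactly the coincidence of weak bisimilarity over $\Sierp$ with strong bisimilarity over $\fcSierp$ announced for the definition of poles (Definition~\ref{def:pole}); if it is already available I would simply cite it, but since it is the crux of the lemma I would also verify it as follows.

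An edge of $\freecat{G}$ is an identity-free path of $G$, and $\freecat{p^G}$ sends it to the image path in $\Sierp$ with the identity edge $\tau$ deleted; as $\tick$ is the only non-identity edge of $\Sierp$, this image is some $\tick^n$. Thus $x \xot{\tick^n} x'$ over $\fcSierp$ holds precisely when some path of $G$ from $x'$ to $x$ has exactly $n$ occurrences of $\tick$ in its $p^G$-image, which is exactly the weak transition $x \xOt{\rho} x'$ over $\Sierp$ with $\idfree{\rho} = \tick^n$, modulo identifying paths of $G^\star$ with identity-free paths of $\freecat{G}$ up to deletion of identities. Since the two transition relations then agree label-for-label, a relation on $\ob(G)$ is a strong bisimulation over $\fcSierp$ if and only if it is a weak bisimulation over $\Sierp$, hence the two bisimilarities agree and the reduction above closes the argument. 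I expect the only real obstacle to be this last bookkeeping step, carefully matching single $\freecat{G}$-edges with the $\xOt{}$-transitions of $\Sierp$ and checking that the $\idfree{}$-quotient makes the $\tick$-counts line up; everything else is a direct transcription of the definitions, with no new construction needed beyond the already-announced coincidence of the two bisimilarities.
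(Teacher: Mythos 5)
Your proposal is correct and follows the same route as the paper, whose entire proof is the one-line observation that a relation is a strong bisimulation for $\freecat{G}$ over $\fcSierp$ iff it is a weak bisimulation for $G$ over $\Sierp$. You simply spell out the path-counting bookkeeping behind that coincidence, which the paper leaves implicit (and reuses later in Corollary~\ref{cor:fairness:weak}).
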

\begin{proof}
  $R$ is a strong bisimulation for $\freecat{G}$ iff it is
  a weak bisimulation for $G$.
\end{proof}
\begin{defi}
  A graph with testing is \emph{free} iff it is of the form
  $\freecat{p^G}$.
\end{defi}
Our $\freecat{\pp^\Conf}$ is thus a free graph with testing.

Let us conclude this section with a sufficient condition for a
relation between the vertices of two graphs with testing to be
adequate for fair testing equivalence, and a natural specialisation to
free graphs with testing.
\begin{defi}\label{def:fair}
  A relation $R \colon \ob(G) \modto \ob(H)$ between the vertex
  sets of two graphs with testing $p^G \colon G \to \fcSierp$ and $p^H
  \colon H \to \fcSierp$ is \emph{fair} iff
  \begin{itemize}
  \item $x \relR y$ and $x' \relR y'$ implies $(x \coh_G x') \Leftrightarrow (y \coh_H y')$;
  \item $R$ is total and surjective, i.e.,
    \begin{itemize}
    \item for all $x \in G$, there exists $y \in H$ such that $x \relR y$, and
    \item for all $y \in H$, there exists $x \in G$ such that $x \relR y$;
    \end{itemize}
  \item $x \relR y$ implies $x \bisimfcsierp y$;
  \item if $x \relR y$, $x' \relR y'$, and $x \coh_G x'$, then there
    exist $u \in (x \para_G x')$ and $v \in (y \para_H y')$ such that
    $u \relR v$.
  \end{itemize}
\end{defi}

\begin{lem}\label{lem:symmetry}
  A relation $R \colon \ob(G) \modto \ob(H)$ between the vertex sets
  of two graphs with testing $p^G \colon G \to \fcSierp$ and
  $p^H \colon H \to \fcSierp$ is fair iff its converse
  $\converse{R}$ is, where $\converse{R}$ is defined by
  $(y \mathrel{\converse{R}} x) \Leftrightarrow (x \relR y)$.
\end{lem}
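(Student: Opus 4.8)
The plan is to exploit that taking the converse of a relation is an involution: since $\converse{(\converse{R})} = R$, it suffices to prove just one implication, namely that if $R$ is fair then $\converse{R}$ is fair. The reverse implication then follows by applying this single implication to $\converse{R}$ in place of $R$. So I would assume $R \colon \ob(G) \modto \ob(H)$ is fair and verify each of the four clauses of Definition~\ref{def:fair} for $\converse{R} \colon \ob(H) \modto \ob(G)$, systematically translating every statement about $\converse{R}$ back into one about $R$ through the defining equivalence $(y \mathrel{\converse{R}} x) \Leftrightarrow (x \relR y)$.

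Three of the four clauses then transfer essentially for free, for structural reasons. The coherence-compatibility clause is stated as a biconditional, hence is symmetric: from $y \mathrel{\converse{R}} x$ and $y' \mathrel{\converse{R}} x'$, i.e.\ $x \relR y$ and $x' \relR y'$, the clause for $R$ gives $(x \coh_G x') \Leftrightarrow (y \coh_H y')$, which is exactly what $\converse{R}$ requires. The totality/surjectivity clause transfers because converse exchanges the two notions: totality of $\converse{R}$ is precisely surjectivity of $R$, and surjectivity of $\converse{R}$ is precisely totality of $R$, so assuming both for $R$ yields both for $\converse{R}$. The bisimilarity clause transfers because $\bisimfcsierp$ is symmetric (being an equivalence relation): $y \mathrel{\converse{R}} x$ unfolds to $x \relR y$, hence $x \bisimfcsierp y$, hence $y \bisimfcsierp x$.

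The only clause demanding an actual argument is the testing-compatibility clause, and this is where the main (albeit mild) obstacle lies. Given $y \mathrel{\converse{R}} x$, $y' \mathrel{\converse{R}} x'$, and $y \coh_H y'$, I must first relocate the coherence hypothesis before the testing clause of $R$ is even applicable: unfolding gives $x \relR y$ and $x' \relR y'$, and then the already-established coherence biconditional transports $y \coh_H y'$ back to $x \coh_G x'$. Now the testing clause for $R$ produces witnesses $u \in (x \para_G x')$ and $v \in (y \para_H y')$ with $u \relR v$, i.e.\ $v \mathrel{\converse{R}} u$; reading these two witnesses in the order prescribed by Definition~\ref{def:fair} for $\converse{R}$ supplies exactly the required $v \in (y \para_H y')$ and $u \in (x \para_G x')$ with $v \mathrel{\converse{R}} u$. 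No step is genuinely hard here; the only care needed is this bookkeeping — invoking the coherence biconditional to move the hypothesis from $H$ to $G$, and then swapping the roles of the two witnesses — with everything else following from the symmetry of $\coh$ and of $\bisimfcsierp$ and from converse exchanging totality with surjectivity.
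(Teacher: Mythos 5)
Your proof is correct and is exactly the routine verification the paper leaves implicit (its proof of this lemma is literally the single word ``Easy''). The involution trick to reduce to one implication, the observation that converse swaps totality with surjectivity, and the use of the coherence biconditional to transport $y \coh_H y'$ to $x \coh_G x'$ before invoking the testing clause are all precisely the right bookkeeping.
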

\begin{proof}
  Easy.
\end{proof}

\begin{lem}\label{lem:fairness}
  For any fair relation $R \colon \ob(G) \modto \ob(H)$, 
  if $x \relR y$ and $x' \relR y'$, then
  $(x \faireqof{G} x') \Leftrightarrow (y \faireqof{H} y')$.
\end{lem}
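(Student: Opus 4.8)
The plan is to unfold Definition~\ref{def:faireq} on both sides and match up tests via $R$, using the four clauses of fairness. Suppose $x \relR y$ and $x' \relR y'$. I need to show $(x \faireqof{G} x') \Leftrightarrow (y \faireqof{H} y')$. By Lemma~\ref{lem:symmetry}, $\converse{R}$ is also fair, so the two directions of this biconditional are symmetric: it suffices to prove one implication, say $(x \faireqof{G} x') \Rightarrow (y \faireqof{H} y')$, and then apply the same argument to $\converse{R}$ for the converse.

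So I assume $x \faireqof{G} x'$ and aim to establish $y \faireqof{H} y'$. First, the coherence requirement $y \coh_H y'$: since $x \faireqof{G} x'$ gives $x \coh_G x'$, and the first fairness clause says $x \relR y$, $x' \relR y'$ force $(x \coh_G x') \Leftrightarrow (y \coh_H y')$, this is immediate. Next, the main clause: I must show that for every test $w \coh_H y$, we have $(y \para_H w) \in \bot^H \Leftrightarrow (y' \para_H w) \in \bot^H$. The strategy is to pull $w$ back along $R$: by totality/surjectivity of $R$ (second clause) there is some $z$ with $z \relR w$; the first fairness clause then transfers $w \coh_H y$ to $z \coh_G x$, so $z$ is a legitimate test for $x$ on the $G$ side. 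Since $x \faireqof{G} x'$, I get $(x \para_G z) \in \bot^G \Leftrightarrow (x' \para_G z) \in \bot^G$. The remaining work is to connect membership in $\bot^G$ with membership in $\bot^H$ across the composites.

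This connection is where the fourth fairness clause and the bisimilarity clause do their work, and it is the \textbf{main obstacle}. From $x \relR y$, $z \relR w$, and $x \coh_G z$, the fourth clause produces $u \in (x \para_G z)$ and $v \in (y \para_H w)$ with $u \relR v$; the third clause then gives $u \bisimfcsierp v$. The key sublemma I need is that membership in $\bot$ is invariant under strong bisimilarity over $\fcSierp$ — i.e. if $u \bisimfcsierp v$ then $u \in \bot^G \Leftrightarrow v \in \bot^H$. This should follow directly from the definition of $\bot$ (for all $u \xot{} u'$ there is $u' \xot{\tick} u''$), since a strong bisimulation over $\fcSierp$ lets me match the defining silent path and the subsequent $\tick$-transition step for step across the two graphs; I would isolate this as a short auxiliary observation. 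Granting it, and recalling that $\para$ is partially functional up to $\bisimfcsierp$ so that the choice of representative $u \in (x \para_G z)$ is immaterial to $\bot$-membership, I get $(y \para_H w) \in \bot^H \Leftrightarrow u \in \bot^G \Leftrightarrow (x \para_G z) \in \bot^G$, and symmetrically $(y' \para_H w) \in \bot^H \Leftrightarrow (x' \para_G z) \in \bot^G$.

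Chaining these equivalences through $(x \para_G z) \in \bot^G \Leftrightarrow (x' \para_G z) \in \bot^G$ yields $(y \para_H w) \in \bot^H \Leftrightarrow (y' \para_H w) \in \bot^H$ for the arbitrary test $w$, which is exactly what $y \faireqof{H} y'$ demands. The only genuinely delicate point is the careful bookkeeping of which representatives of the partially-functional $\para$ are chosen on each side and the verification that $\bot$-membership is $\bisimfcsierp$-invariant; everything else is a routine transport of the fair-testing conditions along $R$ using the four clauses in turn. I expect the proof to read as little more than this diagram-chase once the bisimilarity-invariance of $\bot$ is recorded.
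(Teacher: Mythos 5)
Your proposal is correct and follows essentially the same route as the paper: reduce to one implication via Lemma~\ref{lem:symmetry}, transfer coherence and tests along $R$ using surjectivity, and conclude via the fourth fairness clause together with the auxiliary fact that membership in $\bot$ is invariant under $\bisimfcsierp$ (which the paper also isolates as a separate lemma, proved by the step-matching argument you sketch).
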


For proving this lemma, we need:
\begin{lem}
  For all graphs with testing $p^G \colon G \to \fcSierp$ and $p^H
  \colon H \to \fcSierp$, $x \in G$ and $y \in H$, if $x \bisimfcsierp
  y$, then $(x \in \bot^G) \Leftrightarrow (y \in \bot^H)$.
\end{lem}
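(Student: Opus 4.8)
The plan is to observe that membership in $\bot^G$ is defined purely in terms of the $\fcSierp$-labelled transition structure of $G$ — reachability by a silent edge $\xot{}$ followed by the possibility of a $\tick$-labelled edge $\xot{\tick}$ — and that strong bisimilarity over $\fcSierp$ matches exactly these labels. So a standard zig-zag chase should suffice. Since the converse of a strong bisimulation over $\fcSierp$ is again one (swapping the roles of $G$ and $H$), I would first reduce to proving a single implication, say $x \in \bot^G \Rightarrow y \in \bot^H$, the reverse implication then being symmetric.

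For that implication I would fix a strong bisimulation $\mathcal{B} \colon \ob(G) \modto \ob(H)$ over $\fcSierp$ with $x \mathrel{\mathcal{B}} y$, and unfold the definition of $y \in \bot^H$: given an arbitrary silent edge $y \xot{} y'$, I must exhibit some $z'$ with $y' \xot{\tick} z'$. Applying the zig-zag property to $x \mathrel{\mathcal{B}} y$ and the edge $y \xot{} y'$ (whose $\fcSierp$-label is the identity) produces $x'$ with $x \xot{} x'$ and $x' \mathrel{\mathcal{B}} y'$. Since $x \in \bot^G$, the silent edge $x \xot{} x'$ gives some $z$ with $x' \xot{\tick} z$. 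A second application of the zig-zag property, now to $x' \mathrel{\mathcal{B}} y'$ and the $\tick$-labelled edge $x' \xot{\tick} z$, yields the desired $z'$ with $y' \xot{\tick} z'$. Hence $y \in \bot^H$.

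There is no genuine obstacle here: the argument is a routine bisimulation chase. The only points I would be careful about are (i) invoking the zig-zag property in the correct direction at each step — from $H$ to $G$ to transfer the silent edge, then from $G$ to $H$ to transfer the $\tick$ edge — and (ii) the fact that over $\fcSierp$ the silent label (the identity) and the $\tick$ label are distinct morphisms matched separately by strong bisimilarity, so that the $\tick$-edge found in $G$ really is reflected as a $\tick$-edge in $H$ rather than some other transition.
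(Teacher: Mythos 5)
Your proof is correct and follows essentially the same route as the paper's: a two-step bisimulation chase (transfer the silent edge from $H$ to $G$, use $x \in \bot^G$, transfer the $\tick$ edge back), with the reverse implication obtained by symmetry. The paper's version is merely more terse.
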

\begin{proof}
  Assume $x \in \bot^G$ and consider any transition $y \ot y'$ in $H$.
  By bisimilarity, $x \ot x' \bisimfcsierp y'$. By hypothesis, we find
  $x' \xot{\tick} x''$, so by bisimilarity, $y' \xot{\tick} y''$.
  Thus, $y \in \bot^H$, which entails the result by symmetry.
\end{proof}
\begin{proof}[Proof of Lemma~\ref{lem:fairness}]
  Consider any such $x,y,x',$ and $y'$. By Lemma~\ref{lem:symmetry},
  it suffices to show one direction of the desired equivalence. So let
  us assume that $x \faireqof{G} x'$.  Then $x \coh_G x'$, hence also
  $y \coh_H y'$ by fairness of $R$.  By symmetry, it again suffices to
  check one direction of the desired implication.  Consider thus any
  $t \coh_H y$ such that $(y \para t) \in \bot^H$. By surjectivity of
  $R$, we find $s \in G$ such that $s \relR t$. By fairness again, we
  find $u \in (x \para s)$ and $v \in (y \para t)$ such that $u \relR
  v$, so $(x \para s) \bisimfcsierp u \bisimfcsierp v \bisimfcsierp
  (y \para t)$, and hence by the previous lemma $(x \para s) \in
  \bot^G$. Because $x \faireqof{G} x'$, this entails $(x' \para s) \in
  \bot^G$, hence by a similar argument $(y' \para t) \in \bot^H$, as
  desired.
\end{proof}

\begin{defi}
  A relation $R \colon \ob(G) \modto \ob(H)$ between the vertex sets
  of two free graphs with testing respectively generated by $p^G
  \colon G \to \Sierp$ and $p^H \colon H \to \Sierp$ is \emph{weakly
    fair} iff it satisfies the conditions of
  Definition~\ref{def:fair}, except for the third one, which is replaced
  by: $x \relR y$ implies $x \wbisimsierp y$.
\end{defi}
\begin{cor}\label{cor:fairness:weak}
  For any weakly fair relation $R \colon \ob(G) \modto \ob(H)$, 
  if $x \relR y$ and $x' \relR y'$, then
  $(x \faireqof{\freecat{G}} x') \Leftrightarrow (y \faireqof{\freecat{H}} y')$.
\end{cor}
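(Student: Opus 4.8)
The plan is to reduce the corollary to Lemma~\ref{lem:fairness} by observing that a weakly fair relation between the generators $p^G \colon G \to \Sierp$ and $p^H \colon H \to \Sierp$ is nothing but a fair relation between the free graphs with testing $\freecat{G}$ and $\freecat{H}$ that they generate. Once this is established, the stated equivalence is a direct instance of Lemma~\ref{lem:fairness} applied to $\freecat{G}$ and $\freecat{H}$.

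First I would note that applying $\freecatfun$ changes neither the vertices nor the testing relation: $\ob(\freecat{G}) = \ob(G)$, the testing relation of $\freecat{G}$ is still $\para_G$, and hence its domain $\coh_{\freecat{G}}$ equals $\coh_G$, and likewise for $H$. Consequently the first, second, and fourth clauses of Definition~\ref{def:fair}—which mention only $\coh$, $\para$, and the totality and surjectivity of $R$—hold for $R$ regarded as a relation $\ob(\freecat{G}) \modto \ob(\freecat{H})$ exactly when they hold for $R$ between $G$ and $H$. Since these three clauses are shared verbatim by the definition of weak fairness, they transfer without further work.

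The only clause that genuinely differs is the third. Weak fairness supplies $x \relR y \Rightarrow x \wbisimsierp y$ (weak bisimilarity over $\Sierp$), whereas fairness for the free graphs with testing, which live over $\fcSierp$, requires $x \relR y \Rightarrow x \bisimfcsierp y$ (strong bisimilarity over $\fcSierp$). Here I would invoke the coincidence recalled just before Definition~\ref{def:fair}, and already exploited in Lemma~\ref{lem:freetesting}, that weak bisimilarity over $\Sierp$ is the very same relation as strong bisimilarity over $\freecat{\Sierp}$. This upgrades the third clause immediately and shows that $R$ is a fair relation between $\freecat{G}$ and $\freecat{H}$.

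With fairness in hand, the conclusion $(x \faireqof{\freecat{G}} x') \Leftrightarrow (y \faireqof{\freecat{H}} y')$ is precisely Lemma~\ref{lem:fairness} instantiated at the graphs with testing $\freecat{G}$ and $\freecat{H}$. I do not expect any real obstacle: the entire content lies in the identification of weak bisimilarity over $\Sierp$ with strong bisimilarity over $\fcSierp$, which is exactly the reason the abstract framework was set up over $\fcSierp$; everything else is routine checking that $\coh$ and $\para$ are untouched by $\freecatfun$.
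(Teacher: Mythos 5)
Your proposal is correct and follows exactly the paper's own argument: the corollary is reduced to Lemma~\ref{lem:fairness} by noting that weak fairness coincides with fairness for the generated free graphs with testing, the sole non-trivial point being the identification of weak bisimilarity over $\Sierp$ with strong bisimilarity over $\fcSierp$. You simply spell out in more detail the routine checks that the paper leaves implicit.
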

\begin{proof}
  Because weak bisimilarity over $\Sierp$ is the same as strong
  bisimilarity over $\fcSierp$, being weakly fair is the same as being
  fair for the generated graphs with testing.
\end{proof}

\subsection{Playgrounds}\label{subsec:playgrounds}
To conclude this preliminary section, let us recall the axioms for
playgrounds~\cite{HirschoDoubleCats}.  Some constructions and results
are developped from these axioms in op.\ cit.  Some of the main ideas
are reviewed and reworked in Sections~\ref{sec:model}
and~\ref{sec:translation}.

Let us start with a brief recap of pseudo double categories.  A pseudo
double category $\D$ consists of a set $\ob (\D)$ of \emph{objects},
shared by a `horizontal' category $\Dh$ and a `vertical' bicategory
$\Dv$. Since we won't consider strict double categories, we'll often
omit the word `pseudo'. Following Paré~\cite{PareYoneda}, $\Dh$, being
a mere category, has standard notation (normal arrows, $\rond$ for
composition, $\id$ for identities), while the bicategory $\Dv$ earns
fancier notation ($\proto$ for arrows, $\vrond$ for composition,
$\idv$ for identities). $\D$ is furthermore equipped with a set of
\emph{double cells} $\alpha$, which have vertical, resp.\ horizontal,
domain and codomain, denoted by $\domv (\alpha)$, $\codv (\alpha)$,
$\domh (\alpha)$, and $\codh (\alpha)$.  The horizontal domain and
codomain of a double cell are vertical morphisms, while the vertical
domain and codomain are horizontal morphisms. E.g., for $\alpha$ in
the diagram delow, we have $u = \domh (\alpha)$, $u' = \codh
(\alpha)$, $h = \domv (\alpha)$, and $h' = \codv (\alpha)$.  $\D$ is
furthermore equipped with operations for composing double cells:
$\rond$ composes them along a common vertical morphism, $\vrond$
composes along horizontal morphisms. Both vertical compositions (of
morphisms and double cells) may only be associative up to coherent
isomorphism. The full axiomatisation is given by
Garner~\cite{GarnerPhD}, and we here only mention the
\emph{interchange law}, which says that the two ways of parsing the
above diagram coincide: $(\beta' \rond \beta) \vrond (\alpha' \rond
\alpha) = (\beta' \vrond \alpha') \rond (\beta \vrond \alpha)$.
\begin{center}
      \Diag{%
      \twocellbr{m-2-1}{m-1-1}{m-1-2}{\alpha} %
      \twocellbr{m-2-2}{m-1-2}{m-1-3}{\alpha'} %
      \twocellbr{m-3-1}{m-2-1}{m-2-2}{\beta} %
      \twocellbr{m-3-2}{m-2-2}{m-2-3}{\beta'} %
    }{%
      X \& X' \& X'' \\
      Y \& Y' \& Y'' \\
      Z \& Z' \& Z'' %
    }{%
      (m-1-1) edge[labelu={h}] (m-1-2) %
      edge[pro,labell={u}] (m-2-1) %
      (m-2-1) edge[labelu={h'}] (m-2-2) %
      (m-1-2) edge[pro,labell={u'}] (m-2-2) %
      (m-1-2) edge[labelu={k}] (m-1-3) %
      (m-2-2) edge[labelu={k'}] (m-2-3) %
      (m-1-3) edge[pro,labelr={u''}] (m-2-3) %
      (m-2-1) edge[pro,labell={v}] (m-3-1) %
      (m-3-1) edge[labeld={h''}] (m-3-2) %
      (m-2-2) edge[pro,labell={v'}] (m-3-2) %
      (m-3-2) edge[labeld={k''}] (m-3-3) %
      (m-2-3) edge[pro,labelr={v''}] (m-3-3) %
    }

\end{center}
For any (pseudo) double category $\D$, we denote by $\DH$ the category
with vertical morphisms as objects and double cells as morphisms, and
by $\DV$ the bicategory with horizontal morphisms as objects and
double cells as morphisms.  Domain and codomain maps form
functors $\dom_v,\cod_v \colon \DH \to \D_h$ and pseudofunctors
$\dom_h,\cod_h \colon \DV \to \D_v$. We will refer to $\domv$ and
$\codv$ simply as $\dom$ and $\cod$, reserving subscripts for $\domh$
and $\codh$.

We then need to recall the notion of fibration (see~\cite{Jacobs}).
Consider any functor $p \colon \E \to \B$. A morphism $r \colon E' \to
E$ in $\E$ is \emph{cartesian} when, as in
\begin{mathpar}
\Diag(.25,.8){%
    }{%
      |(U'')| E'' \\ \\ 
      \& |(U')| E' \& \& |(U)| E \\ 
      |(Y'')| p(E'') \\ \\ 
      \& |(Y')| p(E') \& \& |(Y)| p(E)\rlap{,}  %
    }{%
      (U') edge[labelb={r}] (U) %
      (Y') edge[labelb={p(r)}] (Y) %
      (U'') edge[bend left=10,labelar={t}] (U) %
      (Y'') edge[bend left=10,labelar={p(t)}] (Y) %
      (U'') edge[dashed,labelbl={s}] (U') %
      (Y'') edge[labelbl={k}] (Y') %
    } %
\end{mathpar}
 for all $t
\colon E'' \to E$ and $k \colon p(E'') \to p(E')$, if $p(r) \rond k =
p(t)$ then there exists a unique $s \colon E'' \to E'$ such that $p(s)
= k$ and $r \rond s = t$.
\begin{defi} 
  A functor $p \colon \E \to \B$ is a \emph{fibration} iff for all
  $E \in \E$, any $h \colon B' \to p(E)$ has a \emph{cartesian
    lifting}, i.e., a cartesian antecedent by $p$.
\end{defi}
\begin{notation}
  We denote by $\restr{E}{h}$ the domain of the (chosen) cartesian lifting,
  and call it the \emph{restriction} of $E$ along $h$.
\end{notation}

We may now state the definition of playgrounds. 
\begin{rem}\label{rem:basic}
  The following differs slightly from the original definition, mostly
  in presentation and terminology, but more significantly because the
  class $\B$ of basic moves was mistakenly required to be replete
  in~\cite{HirschoDoubleCats}.
\end{rem}
We provide some intuition right after the definition.

\begin{defi}
  In a double category, a cell $\alpha$ is \emph{special} when
  its vertical domain and codomain $\cod(\alpha)$ and $\dom(\alpha)$
  are identities.
\end{defi}

\begin{defi}\label{def:playground}
  A \emph{playground} is a double category $\D$ such that
  $\cod$ is a fibration, equipped with
  \begin{itemize}
  \item a full subcategory $\DI \into \Dh$ of objects called
    \emph{individuals},
  \item a full, replete\footnote{Replete means closed under
      isomorphism.} subcategory $\DM \into \D_H$, whose objects are
    called \emph{\actions}, with full subcategories $\DB$ and
    $\DF$ of \emph{basic} and \emph{full} \actions, with $\F$ replete,
  \item a map $\length{-} \colon \ob(\DH) \to \Nat$ called the \emph{length},
  \end{itemize}
  satisfying the following conditions:
  \begin{myaxioms}
  \item $\DI$ is discrete. Basic \actions have no non-trivial
    automorphisms in $\DH$.  Vertical identities on individuals have
    no non-trivial endomorphisms.
    \label{discreteness}
  \item (Individuality) Basic \actions have individuals as both domain
    and codomain. \label{individuality}
  \item
      (Atomicity) For any cell $\alpha \colon v \to u$ in $\DH$, if
      $\length{u} = 0$ then also $\length{v} = 0$.  Up to a special
      isomorphism in $\DH$, all plays $u$ of length $n > 0$ admit
      decompositions into $n$ \actions.  For any $u \colon X \proto Y$
      of length 0, there is an isomorphism $\idv_X \to u$ in $\DH$, as in
      \begin{center}
        \diag{%
            |(X0)| X \&|(X)| X \\
            |(Xi)| X \&|(Y)| Y\rlap{.} %
          }{%
            (X0) edge[identity,pro,twol={}] (Xi) %
            edge[identity] (X) %
            (X) edge[pro,twor={u}] (Y) %
            (Xi) edge[labelb={\bar{u}}] (Y) %
            (l) edge[cell=0.3,labelb={\alpha^u}] (r) %
          }
      \end{center}
\label{atomicity}
\item (Fibration, continued) Restrictions of \actions (resp.\ full \actions)
  to individuals either are \actions (resp.\ full \actions), or have length
  0. \label{fibration:continued}
\item (Views) For any \action $M \colon Y \proto X$, and $y
  \colon d \to Y$ with $d\in \DI$, there exists a cell
  \begin{center}
    \diag{%
      |(d)| d \& |(Y)| Y \\
      |(dMy)| d^{y,M} \& |(X)| X, %
    }{%
      (d) edge[labelu={y}] (Y) %
      edge[pro,dashed,twol={v^{y,M}}] (dMy) %
      (Y) edge[pro,twor={M}] (X) %
      (dMy) edge[dashed,labeld={y^M}] (X) %
      (l) edge[cell=.3,dashed,labelu={\scriptstyle \alpha^{y,M}}] (r) %
    }
  \end{center}
  where $v^{y,M}$ either is a basic \action or has length $0$, which
  is unique up to canonical isomorphism, i.e., for any
  $y' \colon d' \to X$, $v' \colon d \proto d'$, and
  $\alpha' \colon v' \to M$ with $\dom(\alpha') = y$ and
  $\cod(\alpha') = y'$, we have $y' = y^M$ and there exists a unique
  isomorphism $\beta \colon v \to v'$ making the diagram
\begin{center}
  \Diag (.3,.5) {%
    \twocellr{dMy}{d}{Y}{\alpha^{y,M}} %
    \twocellr[.5]{d''}{d}{Y}{\alpha'} %
    \twocellro[.5]{dMy}{d}{d''}{\beta} %
     \path[->,draw] %
    (dMy) edge node[pos=.7,above] {$\scriptstyle y^M$} (X) %
    edge[identity] (d'')
    (d'') edge[labelbr={y'}] (X) %
    (d) edge[pro,fore,labeloat={v'}{.352}] (d'') %
    ; %
  }{%
    \& |(d)| d \& \& \& |(Y)| Y \\
    \& {} \\
    |(dMy)| d^{y,M} \& \& \& \& |(X)| X \\
    \& \& |(d'')| d' \& %
  }{%
    (d) edge[labelu={y}] (Y) %
    edge[pro,twol={v^{y,M}}] (dMy) %
    (Y) edge[pro,twor={M}] (X) %
  }    
\end{center}
commute. \label{views}
  \item (Left decomposition) 
    Any double cell
    \begin{center}
      \Diag{%
        \twocellbr{B}{A}{X}{\alpha} %
      }{%
        |(A)| A \& |(X)| X \\
        \& |(Y)| Y \\
        |(B)| B \& |(Z)| Z %
      }{%
        (A) edge[labelu={h}] (X) %
        edge[pro,labell={u}] (B) %
        (X) edge[pro,labelr={w_1}] (Y) %
        (Y) edge[pro,labelr={w_2}] (Z) %
        (B) edge[labeld={k}] (Z) %
      }
      \hfil decomposes as \hfil
      \Diag(1,2){%
        \twocellbr[.3]{C}{A}{X}{\alpha_1} %
        \twocellbr[.3]{B}{C}{Y}{\alpha_2} %
        \twocell[.3]{L}{A}{C}{}{celllr={0}{0},bend
          right,fore,labelbl={\scriptscriptstyle \alpha_3}} %
      }{%
        \& |(A)| A \& |(X)| X \\
        |(L)| \& |(C)| C \& |(Y)| Y \\
        \& |(B)| B \& |(Z)| Z %
      }{%
        (A) edge[labelu={h}] (X) %
        edge[pro] node[pos=.5,anchor=north west] {$\scriptscriptstyle {u_1}$} (C) %
        edge[pro,bend right=70,labell={u}] (B) %
        (C) edge[pro] node[pos=.5,anchor=north west] {$\scriptscriptstyle {u_2}$} (B) %
        edge[labelu={l}] (Y) %
        (X) edge[pro,labelr={w_1}] (Y) %
        (Y) edge[pro,labelr={w_2}] (Z) %
        (B) edge[labeld={k}] (Z) %
      }
    \end{center}
    with $\alpha_3$ an isomorphism, in an essentially unique way.
    \label{leftdecomposition}
  \item (Right decomposition) Any double cell as in the center below,
    where $b$ is a basic \action and $M$ is \anaction, decomposes in exactly
    one of the forms on the left and right:
  \begin{mathpar}
    \begin{minipage}[t]{0.18\linewidth}
      \centering \Diag {%
        \twocell[.4][.3]{B}{A}{X}{}{celllr={0.0}{0.0},bend
          right=30,labelbr={\alpha_1}} %
        \twocell[.4][.3]{C}{B}{Y}{}{celllr={0.0}{0.0},bend
          right=20,labelbr={\alpha_2}} %
      }{%
    |(A)| A \& |(X)| X \\
     |(B)| B \& |(Y)| Y \\
    |(C)| C \& |(Z)| Z %
      }{%
        (A) edge[pro] (B) %
        edge (X) %
        (B) edge (Y) %
        edge[pro] (C) %
        (C) edge (Z) %
        (X) edge[pro] (Y) %
        (Y) edge[pro] 
        (Z) %
      } 
    \end{minipage}
    \and \leftlsquigarrow \and
    \begin{minipage}[t]{0.18\linewidth}
      \centering      \Diag{%
    \twocellbr{B}{A}{X}{\alpha} %
  }{%
    |(A)| A \& |(X)| X \\
     |(B)| B \& |(Y)| Y \\
    |(C)| C \& |(Z)| Z %
  }{%
    (A) edge[labelu={h}] (X) %
    edge[pro,labell={w}] (B) %
    (B) edge[pro,labell={b}] (C) %
    (X) edge[pro,labelr={u}] (Y) %
    (Y) edge[pro,labelr={M}] (Z) %
    (C) edge[labeld={k}] (Z) %
  }
    \end{minipage}
\and \rightrsquigarrow \and
     \begin{minipage}[t]{0.18\linewidth}
      \centering \Diag {%
        \twocell[.4][.3]{B}{A}{X}{}{celllr={0.0}{0.0},bend
          right=30,labelbr={\alpha_1}} %
        \twocell[.3][.4]{C}{Y}{Z}{}{celllr={0.0}{0.0},bend
          right=20,labeld={\alpha_2}} %
      }{%
    |(A)| A \& |(X)| X \\
     |(B)| B \& |(Y)| Y \\
    |(C)| C \& |(Z)| Z. %
      }{%
        (A) edge[pro] (B) %
        edge (X) %
        (B) 
        edge[pro] (C) %
        (C) edge (Z) %
        edge (Y) %
        (X) edge[pro] (Y) %
        (Y) 
        edge[pro] (Z) %
      } 
    \end{minipage}
  \end{mathpar}
    \label{views:decomp}
  \item (Finiteness) For any object $X$, the comma category $\DI / X$
    (taken in $\Dh$) is finite. \label{finiteness}
  \item (Basic vs.\ full) For all $d \in \DI$ and \actions
    $M \colon X \proto d$, $M' \colon X' \proto d$, and
    $b \colon d' \proto d$ with $M$ and $M'$ full and $b$ basic, if
    there exist cells $M \ot b \to M'$ then $M \iso M'$.
    \label{basic:full}
\end{myaxioms}
\end{defi}
Intuitively, the objects of $\D$ are configurations, or \positions, in
the game. The considered games are multi-party, so it makes sense to
consider embeddings of \positions: this is intended to be described by
the horizontal category $\Dh$.  The vertical category $\Dv$ is that of
\traces, or plays: morphisms $u\colon Y \proto X$ model plays from the
initial \position $X$, to the final one, $Y$. Finally, cells model
embeddings of plays, preserving initial and final position. E.g., this
could model embedding the part of a play involving a particular
\agent.

Individuals are intended to model \agents in a \position, with a
role similar to that of representable presheaves among general ones.
Typically, the hom-set $\Dh(d, X)$, with $d \in \DI$, models the set
of \agent slots of type $d$ in the \position $X$.  For example,
in the playground we construct below for $\pi$, individuals bear the
number of channels that \anagent is connected to, so that a morphism
$[n] \to X$ amounts to an $n$-ary \agent in $X$, i.e., \anagent
connected to $n$ channels.  Accordingly, the object $[n]$ models a
position with just one $n$-ary \agent.

\Actions model moves in the game, and atomicity~\axref{atomicity}
notably says that any play decomposes into moves.  The distinction
between \emph{basic} and \emph{full} \actions has to do with
innocence. The two notions are more or less dual: basic \actions are
as thin as possible, while full ones are as wide as possible.
Intuitively, \anaction is full when it cannot embed into a larger one
(unless possibly some \agents are added), while it is basic when it
cannot embed any smaller one. As we will see, in our playground for
$\pi$, we have \anaction for forking, which describes how \anagent{}
$x$ may fork into two, say $x_1$ and $x_2$. This \action is full, and
it has two basic sub-\actions, which respectively model the passage
from $x$ to $x_1$ and to $x_2$.  For another example, we also have
\anaction for inputting on some given channel: it is obviously basic,
but in fact also full, because the only way to embed it into a wider
\action is to add \anagent and do a synchronisation.  In view of this,
it should be natural that basic \actions have individuals as their
domain and codomain~\axref{individuality}.  As alluded to
in~\axref{views}, views will be defined as composites of basic
\actions.  Axiom \axref{views} intuitively enforces existence of one
sub-\action for each \agent in the final position of any
\action. Extending this to general plays will yield an operation
analogous to taking the view of a play in standard game
semantics. Axiom~\axref{basic:full} requires that these basic
sub-\actions may not be shared among different full \actions.

Both~\axref{leftdecomposition} and~\axref{views:decomp} are
decomposition axioms.  The former says that a decomposition of a play
reflects essentially uniquely onto any subplay.  The latter
essentially says that basic \actions are strictly sequential: if any
play of the form $b \vrond w$ with $b$ basic embeds into some other
play, then the image of $w$ should occur after that of $b$.
This is expressed in a slightly convoluted way by saying that
if the latter play decomposes as $M \vrond u$, then
\begin{itemize}
\item either $b$ maps to $M$, in which case $w$ should map to $u$,
\item or $b$ maps to $U$, in which case $w$ should also map to $u$.
\end{itemize}

This should make most of the axioms rather intuitive: the others are
technical, which means that they emerged from our attempts to make
things work out, but that we are not yet able to explain them
satisfactorily.

\section{A pseudo double category of \traces}\label{sec:double}
In this section, we introduce our notion of \trace, which is based on
certain combinatorial objects, close in spirit to string diagrams.  We
first define these string diagrams, and then use them to define
\traces. \Positions are special, hypergraph-like string diagrams
whose vertices represent \agents and whose hyperedges represent
channels.  A perhaps surprising point is that \actions are not just a
binary relation between \positions, because we not only want to
say \emph{when} there is \anaction from one \position to another,
but also \emph{how} this \action is performed. This will be
implemented by viewing \actions from $X$ to $Y$ as \emph{cospans} $Y
\to M \ot X$ in a certain category $\Chatf$, whose objects we call
higher-dimensional string diagrams for lack of a better term. The idea
is that $X$ and $Y$ respectively are the initial and final
\positions, and that $M$ describes how one goes from $X$ to $Y$.
By combining such \actions (by pushout), we get a bicategory $\Dv$ of
\positions and \traces. Finally, we recast $\Dv$ as the vertical
bicategory of a pseudo double category $\D$.

\subsection{String diagrams}\label{subsec:string}
The category $\Chatf$ will be a category of finite presheaves over a base
category, $\C$.  Let us motivate the definition of $\C$ by recalling
that (directed, multi) graphs may be seen as presheaves over the
category with two objects $\star$ and $[1]$, and two non-identity
morphisms $s,t \colon \star \to [1]$. Any such presheaf $G$ represents the
graph with vertices in $G(\star)$ and edges in $G[1]$, the source and
target of any $e \in G[1]$ being respectively $G (s) (e)$ and $G (t)
(e)$, or $e \cdot s$ and $e \cdot t$ for short. A way to visualise how
such presheaves represent graphs is to compute their \emph{categories
  of elements}~\cite{MM}. Recall that the category of elements $\el
G$ for a presheaf $G$ over $\C$ has as objects pairs $(c,x)$ with $c
\in \C$ and $x \in G(c)$, and as morphisms $(c,x) \to (d,y)$ all
morphisms $f \colon c \to d$ in $\C$ such that $y \cdot f = x$. This
category admits a canonical functor $\pi_G$ to $\C$, and $G$ is the
colimit of the composite $\el G \xto{\pi_G} \C \xto{\yoneda} \Chat$
with the Yoneda embedding. E.g., the category of elements for
$\yoneda_{[1]}$ is the poset $(\star, s) \xto{s} ([1],\id_{[1]}) \xot{t}
(\star, t)$, which could be pictured as
\diagramme[stringdiag={0.1}{0.6}]{baseline=(A.south)}{%
  \path[-,draw] %
  (A) edge (E) %
  (B) edge (E) %
  ; %
  \node at ($(B.south east) + (.1,0)$) {,} ;%
}{%
  \joueur{A} \& \node[regular polygon,anchor=center,regular polygon
  sides=3,fill,minimum size=3pt,draw,rotate=-90] (E) {}; \&
  \joueur{B} %
}{%
} \hspace*{-.7em} where dots represent vertices, the triangle
represents the edge, and links materialise the graph of $G(s)$ and
$G(t)$, the convention being that $t$ connects to the apex of the
triangle.  We thus recover some graphical intuition.


Let us give the formal definition of $\C$ for reference.  We advise to
skip it on a first reading, as we then attempt to provide some
intuition.
\begin{definition}
  Let $G_{\C}$ be the graph with, for all $n$, $m$, with $a,b \in n$ and $c,d \in m$:
  \begin{itemize}
  \item vertices $\star$, $[n]$, $\forkln$, $\forkrn$, $\forkn$,
    $\nun$, $\tickn$, $\taun$, $\inna$, $\outnab$, and $\taunamcd$;
  \item edges $s_1,...,s_n : \star \to [n]$, plus, $\forall v \in
    \ens{\forkln,\forkrn,\tickn,\taun,\outnab}$, edges $s,t : [n] \to
    v$;
  \item edges $[n] \xto{t} \nun \xot{s} [n+1]$ and $[n] \xto{t} \inna
    \xot{s} [n+1]$;
  \item edges $\forkln \xto{l} \forkn \xot{r} \forkrn$ and $\inna \xto{\rho} \taunamcd \xot{\epsilon} \outmcd$.
  \end{itemize}

  Let $\C$ be the free category on $G_{\C}$, modulo the equations
  \begin{mathpar}
    {s \rond s_i = t \rond s_i}
    \and 
    {l \rond t = r \rond t}  \and
    {\rho \rond t \rond s_a = \sender \rond t \rond s_c}  \and 
    {\rho \rond s \rond s_{n+1} = \sender \rond s \rond s_d}.
  \end{mathpar}
  The first equation should be understood in $\C(\star,v)$ for all $n
  \in \Nat$, $i \in n$, and $v \in {\cup_{a,b \in n}} \{\forkln,
  \forkrn, \linebreak\tickn, \taun, \inna, \outnab, \nun\}$.  (This is rather
  elliptic: if $v$ has the shape $\inna$ or $\nun$, $s \rond s_i$ is
  really $\star \xto{s_i} [n+1] \xto{s} v$.)  The second equation should
  be understood in $\C([n],\paran)$ for all $n$, and the last two in 
  $\C(\star, \taunamcd)$, for all $n,m$, $a \in n$, and $c,d \in m$.
\end{definition}

Our category of string diagrams is the category of finite presheaves
$\Chatf$.  
A presheaf $X$ over $\C$ is a kind of higher-dimensional graph whose
components are typed by objects of $\C$:
\begin{itemize}
\item $X(\star)$ is the set of
vertices, or \emph{channels}; 
\item $X[n]$ is the set of \emph{\agents}
connected to $n$ channels (which are given by $X(s_i)$);
\item $X(\iotana)$ is the set of \emph{input \actions} by some $n$-ary \agent
  on its $a$th channel; 
\item $X(\outmcd)$ is the set of \emph{output \actions} by
  some $m$-ary \agent on its $c$th channel of its $d$th channel;
\item $X (\taunamcd)$ is the set of \emph{synchronisations}
  between some input and some output on a common channel;
\item $X(\forkn)$ is the set of \emph{forking \actions} by some $n$-ary \agent;
\item and similarly for $X(\forkln)$, $X(\forkrn)$, $X(\nun)$,
  $X(\tickn)$, and $X (\taun)$.
\end{itemize}
To see these intuitions at work, let us compute a few categories of
elements. Let us start with an easy one, that of $[3] \in \C$
(recalling that we implicitly identify any $c \in \C$ with
$\yoneda_c$). An easy computation shows that it is the poset pictured
in the top left part of Figure~\ref{fig:tau}. We think of it as
\aposition with one \agent $([3],\id_{[3]})$ connected to three
channels, and draw it as in the top right part, where the bullet
represents the \agent, and circles represent channels.  
\begin{defi}
  \emph{\Positions} are finite presheaves empty except perhaps on
  $\star$ and $[n]$s.
\end{defi}
Let us organise \positions{} into a category, by designing a notion of
morphism.  We may equip the objects of $\C$ with a \emph{dimension}:
$\star$ has dimension $0$, any $[n]$ has dimension $1$, all of
$\tau_n,\forkln,\forkrn,\tickn,\iotana,\outncd,\nun$ have dimension
$2$, $\forkn$ has dimension $3$, $\taunimjk$ has dimension~$4$.

\begin{defi} We accordingly define the \emph{dimension} of a
presheaf $X$ on $\C$ to be the lowest $n \in \Nat$ such that for any
$m \in \C$ of dimension strictly greater than $n$, $X(m) = \emptyset$.

A \position is thus equivalently a finite presheaf in
$[\op{\C},\set]$ of dimension at most $1$. An \emph{interface} is one
of dimension~$0$.
\end{defi}

  \begin{defi}
    A map in $\Chat$ is \emph{1-injective} iff it is injective in all
    strictly positive dimensions.
  \end{defi}

  A \emph{morphism of \positions} is a 1-injective morphism in
  $\Chat$.  The intuition for a morphism $X \to Y$ between
  \positions is thus that $X$ embeds into $Y$, possibly
  identifying some channels.
\begin{definition}\label{def:Dh}
  \Positions and morphisms between them form a category $\Dh$.
\end{definition}

  \newcommand{\longueurfigun}{.6}
  \newcommand{\separation}{} 
  \begin{figure*}[t]
    \centering
      \diagramme[stringdiag={.8}{1.3}]{}{%
}{%
  \node (s_1) {$(\star, s_1)$}; \& \node (s_2) {$(\star, s_2)$}; \& \node (s_3) {$(\star, s_3)$}; \\
    \& \node (id) {$([3], \id_{[3]})$}; 
    }{%
      (s_1) edge (id) %
      (s_2) edge (id) %
      (s_3) edge (id) %
    }
    \hfil
      \diagramme[stringdiag={.8}{1.3}]{}{%
    \path[-,draw] %
    (a) edge (j1) %
    (c) edge (j1) %
    (b) edge (j1) %
    ; %
}{%
    \canal{a}     \& \canal{b} \&  \canal{c} \\
    \& \joueur{j1}
    }{%
    }
    \\
    \diag (.4,.3) {%
      \& \& \& \& |(lt)| l s \& \& |(rt)| r s \& \& \\ 
      \& \&       |(lt1)| l s s_1 \& \& |(l)| l \& |(para)| \id_{\paraof{2}} \& |(r)| r \& \& |(lt2)| l s s_2 \\ 
      \& \&       \& \& \& |(ls)| l t = r t \&  \& \& 
    }{%
      (lt1) 
      edge (ls) %
      (lt2) 
      edge (ls) %
      (ls) edge (l) edge (r) %
      (lt) edge (l) %
      (rt) edge (r) %
      (l) edge (para) %
      (r) edge (para) %
      (lt1) edge[identity] (lt1) 
      edge (lt) %
      edge[fore,bend left=10] (rt) %
      (lt2) edge[identity] (lt2) 
      edge (rt) %
      edge[bend right=10,fore] (lt) %
    }
    \hfil
          \diagramme[stringdiag={.3}{.6}]{}{
  }{%
     \& \& \joueur{t_1} \&  \& \joueur{t_2} \\
    \& \&   \&  \\
    \& \ \& \\
    \canal{t0} \& \& \& \couppara{para} \& \& \& \canal{t1} \\ 
    \& \ \& \\
    \& \&  \\
    \& \& \& \joueur{s} \& \& 
  }{%
    (para) edge[-] (t_2) %
    (t1) edge[-,bend right=10] (t_2) %
    (t0) 
    edge[-] (s) %
    (t1) 
    edge[-] (s) %
    (s) edge[-] (para) %
    (para) edge[fore={.3}{.3},-] (t_1)
    (t0) edge[fore={.5}{.5},-,bend left=10] (t_2) %
    (t0) edge[-,bend left=15] (t_1) %
    (t1) edge[-,fore={1}{.5},bend right=10] (t_1) %
  }  \\[-2.5em]
      \diagramme[diag={.2}{.5}]{}{%
        \path (iota) -- node[pos=.6,inner sep=0pt,outer sep=0pt] (tau) {$\scriptscriptstyle \id_{\taunamcd}$}  (iota') ; %
        \path[->] %
        (t1) edge[bend left=10] (s) %
        (t2) edge (s) %
        (t0) edge (t) %
        (t0) edge (s) %
        (t2) edge (t) %
        (t1) edge (s') %
        (t') edge (t2) %
        (t2) edge[->] (iota') %
        ; %
        \path[->,draw] %
        (t1) edge[fore={.3}{.5},bend left=10] (iota) %
        (t1) edge[fore={0.3}{.5}] (iota') %
        ; %
        \path[draw,->] (t2) edge[->,bend right=10] (iota) %
        (iota) edge[fore={.4}{0}] (tau) %
        (iota') edge[fore={.4}{0}] (tau) %
        ; %
        \path[->] (t1) edge[fore={.3}{.3}] (t') %
        ; %
        \path[->] (t1) edge[fore={.1}{.5},bend left=15] (t) ; %
        \foreach \x/\y in {s/iota,t/iota,s'/iota',t'/iota'} \path[->]
        (\x) edge (\y) ; %
      }{%
        \&  |(t)| \epsilon s \& \&   \& \&  |(t')| \rho s   \\
        \&   \& \& |(t2)| \epsilon t s_3 \\
        |(t0)[anchor=base west]| \epsilon t s_1 \&  |(iota)| \epsilon \& \& \& 
        \&  |(iota')| \rho \\
        \&  \& \& \& |(t1)| \epsilon t s_2 \\
        \& |(s)| \epsilon t \& \& \& \& |(s')| \rho t 
      }{%
      }%
      \hfil
      \diagramme[stringdiag={.5}{.8}]{}{%
        \path[-] %
        (s) edge[bend right=5] (t1) %
        (t2) edge (s) %
        (t0) edge (t) %
        (t0) edge (s) %
        (t2) edge (t) %
        (s') edge (t1) %
        (t') edge (t2) %
        (iota') edge[gray,very thin] (t2) %
        ; %
        \path[-] %
        (iota) edge[fore={.3}{.5},very thick,bend right=10] (t1) %
        (iota') edge[fore={0.3}{.5},very thick] (t1) %
        ; %
        \twocell[.3][.5]{iota}{t1}{iota'}{}{
          decorate,decoration={snake,amplitude=.3mm,segment length=1mm},bend left=30,
        shorten <=-.1cm}
        \path[] (t2) edge[color=gray,very thin,bend right=10]
        node[coordinate,pos=.6] (iotatip) {} (iota) ; %
        \path (iota') -- (t2) node[coordinate,pos=0.55] (iotatip') {}
        ; %
        \path[-] (iotatip) edge[bend right=7,-latex] (iota) ; %
        \path[-] (iota') edge[-latex] (iotatip') ; %
        \path[-] (t1) edge[fore={.3}{.3}] (t') %
        ; %
        \path[-] (t) edge[fore={.1}{.5},bend right=10] (t1) ; %
        \foreach \x/\y in {s/t,s'/t'} \path[-] (\x) edge (\y) ; %
        \node[anchor=south] at (t2.north) {$\scriptstyle \beta$} ; %
        \node[anchor=north] at (t1.south) {$\scriptstyle \alpha$} ; %
        \node[anchor=south] at (t.north) {$\scriptstyle x'$} ; %
        \node[anchor=north] at (s.south) {$\scriptstyle x$} ; %
        \node[anchor=south] at (t'.north) {$\scriptstyle y'$} ; %
        \node[anchor=north] at (s'.south) {$\scriptstyle y$} ; %
      }{%
        \& \joueur{t} \& \& \& \& %
        \joueur{t'}   \\
        \&   \& \& \canal{t2} \\
        \canal{t0} \&  \coupout{iota}{0} \& \&   \& \&  \coupin{iota'}{0} \\
        \&  \& \& \& \canal{t1} \\
        \& \joueur{s} \& \& \& \& \joueur{s'} 
      }{%
      }%
    \caption{Categories of elements for $[3]$, $\paraof{2}$, and $\tau_{1,1,3,2,3}$, with graphical representation}
\label{fig:tau}
\end{figure*} %
Returning to our explanation of $\C$ through categories of elements,
let us consider that of $\paraof{2}$. It is the poset generated by the
left-hand graph in the second row of Figure~\ref{fig:tau} (omitting
base objects for conciseness).  We think of it as a binary \agent ($l
t$) forking into two \agents ($l s$ and $r s$), and draw it as on the
right.  The equation $lt=rt$ ensures that $\paralof{2}$ and
$\pararof{2}$ are performed by the same \agent. The graphical
convention is that a black triangle stands for the presence of
$\id_{\forkof{2}}$, $l$, and $r$. Below, we represent just $l$ as a
white triangle with only a left-hand branch, and symmetrically for
$r$.  Furthermore, in all our pictures, time flows `upwards'.

Another category of elements, characteristic of the $\pi$-calculus, is
the one for synchronisation $\taunamcd$. The case $(n,a,m,c,d) =
(1,1,3,2,3)$ is the poset generated by the graph at the bottom left of
Figure~\ref{fig:tau}, which we will draw as on the right. %
The left-hand ternary \agent $x$ outputs its $3$rd channel, here
$\beta$, on its $2$nd channel, here $\alpha$. The right-hand unary
\agent $y$ receives the sent channel on its unique channel, here
$\alpha$. Both \agents have two occurrences, one before and one after
the \action, respectively marked as $x / x'$ and $y / y'$.  Both $x$
and $x'$ are ternary here, while $y$ is unary and $y'$, having gained
knowledge of $\beta$, is binary. There are actually three \actions
here, in the sense that there are three higher-dimensional elements.
The first is the output \action $\epsilon$ from $x$ to $x'$,
graphically represented as the middle point of %
\raisebox{.25em}{
\diagramme[ampersand replacement=\&,column sep=.5cm,inner sep=0.1pt]{inner sep=0pt}{
          \path[-] %
          (b) edge[very thick] (c) %
          (a) edge[-latex] (b) %
          ; %
        }{%
          \node[coordinate] (a){}; \& \node[coordinate] (b){}; \& \node[coordinate] (c){}; %
        }{%
        }%
}
(intended to evoke the point where $\beta$ enters channel $\alpha$).
The second is the input \action $\rho$ from $y$ to $y'$, graphically represented
as the middle point of 
\raisebox{.25em}{
\diagramme[ampersand replacement=\&,column sep=.5cm,inner sep=1pt]{inner sep=0pt}{
          \path[-] %
          (a) edge[very thick] (b) %
          (b) edge[-latex] (c) %
          ; %
        }{%
          \node[coordinate] (a){}; \& \node[coordinate] (b){}; \&
          \node[coordinate] (c){}; %
        }{%
        }%
      } %
      (where $\beta$ exits channel $\alpha$).  The third \action is
      the synchronisation itself, which `glues' the other two
      together, as represented by the squiggly line.

We leave the computation of other categories of elements as an
exercise to the reader. 
The remaining string diagrams 
are depicted 
in the top row of Figure~\ref{fig:stringmoves}, for $p = 2$ and
$(n,a,m,c,d) = (1,1,3,2,3)$. 
\begin{figure*}[t]
  \centering
  \begin{tabular}{c@{\,}c@{\,}c@{\,}c@{\,}c@{\,}c@{\,}c@{\,}c}
  \diagramme[stringdiag={.2}{.33}]{}{ }{%
    \& \joueur{t_1} \& \& \& \\ 
    \& \&   \&  \\
    \& \ \& \\
    \canal{t0} \& \& \coupparacreux{para} \& \& \canal{t1}
    \\ 
    \& \ \& \\
    \& \&  \\
    \& \& \joueur{s} \& \&
  }{%
    (t0) edge[-] (t_1) %
    (t1) edge[-,bend right=20] (t_1) %
    (t0) edge[-] (s) %
    (t1) edge[-] (s) %
    (s) edge[-] (para) %
    (para) edge[-] (t_1) %
  }
  &
%
  \diagramme[stringdiag={.2}{.33}]{}{ }{%
    \& \& \& \joueur{t_2} \& \\ 
    \& \&   \\
    \& \ \& \\
    \canal{t0} \& \& \coupparacreux{para} \& \& \canal{t1}
    \\ 
    \& \ \& \\
    \& \&  \\
    \& \& \joueur{s} \& \&
  }{%
    (t0) edge[-,bend left=20] (t_2) %
    (t1) edge[-] (t_2) %
    (t0) edge[-] (s) %
    (t1) edge[-] (s) %
    (s) edge[-] (para) %
    (para) edge[-] (t_2) %
  }
  &
%
  \diagramme[stringdiag={.3}{.5}]{baseline=($(iota.south)$)}{%
    \path[-] %
    (t2) edge (s) %
    (t1) edge (t) %
    (t0) edge (t) %
    (t2) edge (t) %
    (t) edge (iota.west) %
    (s) edge (iota.west) %
    (t0) edge[bend right=20] (s) %
    (t1) edge (s) %
    ; %
    \movepiout[1]{t1}{iota}{t2}{.4} %
    \foreach \x/\y in {s/t} \path[-] (\x) edge (\y) ; %
  }{%
    \& \& \joueur{t}  \\
    \&  \\
    \canal{t0} \& \& \coupout{iota}{0}  \& \canal{t2} \\
    \& \canal{t1} \\
    \& \& \joueur{s} }{%
  }%
  &
  \diagramme[stringdiag={.6}{\longueurfigun}]{baseline=($(in.south)$)}{
    \path[-] (a) edge (p) %
    (in) edge (p) edge (p') %
    edge[gray,very thin] (b) %
    (p') edge (a) edge (b) %
    ; %
    \movepiin[.5]{a}{in}{b}{.6} %
    \foreach \x/\y in {p/p',a/a} \path[-] (\x) edge (\y) ; %
  }{ %
    \& \joueur{p'} \& \\ %
    \canal{a} \& \coupout{in}{0} \& \canal{b} \\ %
    \& \joueur{p} 
  }{%
  } %
  &
%
  \diagramme[stringdiag={.6}{\longueurfigun}]{}{ \path[-] (a) edge
    (a) %
    edge (p) %
    (tick) edge[shorten <=-1pt] (p) edge[shorten <=-1pt] (p') %
    (p') edge (a) edge (b) %
    (b) edge (p) edge (b) %
    ; %
  }{ %
    \& \joueur{p'} \& \\ %
    \canal{a} \& \couptick{tick} \& \canal{b} \\ %
    \& \joueur{p} \& %
  }{%
  } &
%
  \diagramme[stringdiag={.6}{\longueurfigun}]{}{ \path[-] (a) edge
    (a) %
    edge (p) %
    (tick) edge[shorten <=-1pt] (p) edge[shorten <=-1pt] (p') %
    (p') edge (a) edge (b) %
    (b) edge (p) edge (b) %
    ; %
  }{ %
    \& \joueur{p'} \& \\ %
    \canal{a} \& \couptau{tick} \& \canal{b} \\ %
    \& \joueur{p} \& %
  }{%
  } &
    %
  \diagramme[stringdiag={.3}{.5}]{baseline=($(nu.center)$)}{%
    \path[-,draw] %
    (t1) edge (s) %
    (t1) edge (t) %
    (t0) edge (t) %
    (t2) edge (t) %
    (t) edge (nu) %
    (s) edge (nu) %
    (nu) edge[gray,very thin] (t2) %
    (t0) edge[bend right=20] (s) %
    ; %
  }{%
      
    \& \& \joueur{t} \& \&  \& \\
    \&  \&    \\
    \canal{t0} \& \& \coupnu{nu} \& \& \canal{t2}  \\
    \& \canal{t1} \\
    \& \& \joueur{s} \& }{%
  }%
  \\
  \diag(.2,.2){%
    {[p]} \\ {\forklp} \\ {[p]} %
  }{%
    (m-1-1) edge (m-2-1) %
    (m-3-1) edge (m-2-1) %
  }%
  &
  \diag(.2,.2){%
    {[p]} \\ {\forkrp} \\ {[p]} %
  }{%
    (m-1-1) edge (m-2-1) %
    (m-3-1) edge (m-2-1) %
  }%
  &
  \diag(.2,.2){%
    {[m]} \\ {\outmcd} \\ {[m]} %
  }{%
    (m-1-1) edge (m-2-1) %
    (m-3-1) edge (m-2-1) %
  }%
  &
  \diag(.2,.2){%
    {[n+1]} \\ {\inna} \\ {[n]} %
  }{%
    (m-1-1) edge (m-2-1) %
    (m-3-1) edge (m-2-1) %
  }%
  &
  \diag(.2,.2){%
    {[p]} \\ {\tickp} \\ {[p]} %
  }{%
    (m-1-1) edge (m-2-1) %
    (m-3-1) edge (m-2-1) %
  }%
  &
  \diag(.2,.2){%
    {[p]} \\ {\taup} \\ {[p]} %
  }{%
    (m-1-1) edge (m-2-1) %
    (m-3-1) edge (m-2-1) %
  }%
  &
  \diag(.2,.2){%
    {[p+1]} \\ {\nup} \\ {[p]} %
  }{%
    (m-1-1) edge (m-2-1) %
    (m-3-1) edge (m-2-1) %
  }%
  &
\end{tabular}
  \caption{Pictures and corresponding cospans 
    for $\paralp$, $\pararp$, $\outmcd$, $\inna$, $\tickp$, $\taup$, and $\nup$}
\label{fig:stringmoves}
\end{figure*}
%
The first two are \emph{\threads}, in the game semantical sense, of
the fork \action $\forkof{2}$ explained above. The next two, $\outmcd$
(for `output') and $\inna$ (for `input'), respectively are \threads
for the sender and receiver in a synchronisation \action. The $\taup$
\action is a silent, dummy action standard in $\pi$-calculus.  The
$\tickp$ \action is the special `tick' \action used for defining fair
testing equivalence. The last one, $\nup$, is a channel creation \action.

\subsection{From string diagrams to \actions}
In the previous section, we have defined our category of string
diagrams as $\Chatf$, and provided some intuition on its objects.  The
next step is to construct a bicategory whose objects are \positions,
and whose morphisms represent \traces. We start in this section by
defining in which sense higher-dimensional objects of $\C$ represent
\actions, and continue in the next one by explaining how to compose
\actions to form \traces. \Actions are defined in two stages:
\emph{seeds}, first, give their local form, their global form being
given by embedding into bigger \positions{}.

To start with, until now, our string diagrams contain no information
about the `flow of time', although we mentioned it informally in the
previous section. To add this information, for each string diagram $M$
representing \anaction, we define its initial and final
\positions, say $X$ and $Y$, and view the whole \action as a
cospan $Y \xto{s} M \xot{t} X$. We have taken care, in drawing our
pictures before, of placing initial \positions at the bottom, and
final \positions at the top.  So, e.g., the initial and final
\positions for the example synchronisation of
Figure~\ref{fig:tau} are as follows.
\begin{center}
  \diagramme[stringdiag={.3}{.5}]{
  }{%
    \path[-,draw] %
    (t) edge (t0) %
    edge (a) %
    edge (b) %
    (t') edge (a) %
    ; %
  }{%
    \& \& \canal{b} \\
    \canal{t0} \& \joueur{t} \& \& \joueur{t'} \\
    \& \& \canal{a} \\
  }{%
  }%
  \ {$\leadsto$} \
  \diagramme[stringdiag={.3}{.5}]{baseline=($(nu.center)$)}{%
    \path[-,draw] %
    (t) edge (t0) %
    edge (a) %
    edge (b) %
    (t') edge (a) %
    edge (b) %
    ; %
  }{%
    \& \& \canal{b} \\
    \canal{t0} \& \joueur{t} \& \& \joueur{t'} \\
    \& \& \canal{a} \\
  }{%
  }%
\end{center}
They map into (the representable presheaf over) $\tau_{1,1,3,2,3}$,
yielding the cospan $$Y \xto{s} \tau_{1,1,3,2,3} \xot{t} X.$$
We leave it to the reader to define, based on the above pictures, the
expected cospans for forking and synchronisation 
\begin{center}
      \diag(.2,.2){%
      {[p] \para [p]} \\ {\forkp} \\ {[p]} %
    }{%
      (m-1-1) edge (m-2-1) %
      (m-3-1) edge (m-2-1) %
    }%
    \hfil
    \diag(.2,.2){%
      {[m] \paraofij{c,d}{a,n+1} [n+1] } \\ {\taunamcd } \\ {
        [m] \paraofij{c}{a} [n] } %
    }{%
      (m-1-1) edge (m-2-1) %
      (m-3-1) edge (m-2-1) %
    }
\end{center}
plus the remaining ones specified in the bottom row of
\figurename~\ref{fig:stringmoves}.  Initial \positions are at the
bottom, and we use:
\begin{notation}
  We denote by $[m] \paraofij{a_1,\ldots,a_p}{c_1,\ldots,c_p} [n]$ the
  \position consisting of an $m$-ary \agent $x$ and an $n$-ary
  \agent $y$, quotiented by the equations $x \cdot s_{a_k} = y \cdot
  s_{c_k}$ for all $k \in p$. When both lists are empty, by
  convention, $m=n$ and the \agents share all channels in order.
\end{notation}
\begin{definition}
  These cospans are called \emph{seeds}. 
\end{definition} 

We now define \actions from seeds by embedding the latter into bigger
\positions. E.g., we allow a fork \action to occur in
\aposition with more than one \agent. 
\begin{definition}\label{def:interface}
  The \emph{interface} $I_F$ of a presheaf $F \in \Chat$ is $F(\star)
  \cdot \star$, the $F(\star)$-fold coproduct of $\star$ with itself,
  or in other words the \position consisting solely of $F$'s
  channels.  The interface of a seed $Y \xto{s} M \xot{t} X$ is $I_X$.
\end{definition}

Since channels occurring in the initial \position remain in the
final one, we have for each seed a cone from $I_X$ to the seed.  For
any morphism of \positions $I_X \to Z$,
pushing the cone along $I_X \to Z$ using the
universal property of pushout as in
    \begin{equation}
    \Diag(.02,.3){%
      \pbk{X}{X'}{Z} %
      \pullback[1.2em]{Z}{M'}{M}{draw,-} %
      \pullback{Z}{Y'}{Y}{draw,-} %
    }{%
      \&\& |(Y)| Y \& \&\& |(Y')| {Y'} \\
      \&\& \ \&\& \\
      \&\& |(M)| M \& \&\& |(M')| M'  \\
      |(I)| I_X \&\&\& |(Z)| Z \\
      \&\& |(X)| X \& \&\& |(X')| X' %
    }{%
      (Z) edge[] (X') %
      edge (M') %
      edge (Y') %
      (I) edge[] (X) %
      edge (Z) %
      edge (M) %
      edge (Y) %
      (Y) edge[fore] (Y') %
      (M) edge[fore] (M') %
      (X) edge (X') %
      (X') edge[dashed] (M') %
      (Y') edge[dashed] (M') %
      (X) edge[fore] (M) %
      (Y) edge[fore] (M) %
    }\label{morphismofinterfacedseeds}
  \end{equation}
yields a new cospan, say
$Y' \to M' \ot X'$. 

  \begin{definition}
    Let \emph{\actions} be all such pushouts of seeds.
  \end{definition}
  Intuitively, taking pushouts glues string diagrams together. Let us do a
  few examples.
  \begin{example}\label{ex:forkmove}
    The seed $[2]\para[2] \xto{[ls,rs]} \forkof{2} \xot{lt} [2]$ has
    as interface the presheaf $I_{[2]} = \star + \star$, consisting of
    two channels, say $a$ and $b$.  Consider the \position $[2] +
    \star$ consisting of \anagent $y$ connected to two channels $b'$ and $c$,
    plus an additional channel $a'$. Further consider the map $h
    \colon I_{[2]} \to [2]+\star$ defined by $a \mapsto a'$ and $b
    \mapsto b'$. The pushout
    \begin{center}
      \Diag{%
        \pbk{pi}{M'}{star} %
        }{%
         |(I2)| {I_{[2]}} \& |(star)| {[2]+\star} \\
         |(pi)| {\forkof{2}} \&|(M')| {M'} %
        }{%
          (I2) edge (star) edge (pi) %
          (pi) edge (M') %
          (star) edge (M') %
        } %
        \hfil \mbox{is} \hfil
        \begin{minipage}[c]{0.55\linewidth}
          \vspace*{-.5em} 
          \diagramme[stringdiag={.2}{.6}]{}{
            \node[diagnode,at= (c.south east)] {\ \ \ .} ; %
            \node[anchor=south] at (t_1.north) {$\scriptstyle x_1$}
            ; %
            \node[anchor=south] at (t_2.north) {$\scriptstyle x_2$}
            ; %
            \node[anchor=north] at (s.south) {$\scriptstyle x$} ; %
            \node[anchor=north] at (y.south) {$\scriptstyle y$} ; %
            \node[anchor=north] at (c.south) {$\scriptstyle c$} ; %
            \node[anchor=north] at (t0.south) {$\scriptstyle a=a'$}
            ; %
            \node[anchor=north] at (t1.south) {$\scriptstyle b=b'$}
            ; %
          }{%
            \& \& \joueur{t_1} \&  \& \joueur{t_2} \\
            \& \&   \&  \\
            \& \ \& \\
            \canal{t0} \& \& \& \couppara{para} \& \& \& \canal{t1} \&
            \& \joueur{y} \& \& \canal{c}
            \\ 
            \& \ \& \\
            \& \&  \\
            \& \& \& \joueur{s} \& \& \&
            \& 
          }{%
            (para) edge[-] (t_2) %
            (t1) edge[-,bend right=10] (t_2) %
            (t0) 
            edge[-] (s) %
            (t1) 
            edge[-] (s) %
            (s) edge[-] (para) %
            (y) edge[-] (t1) %
            edge[-] (c) %
            (para) edge[fore={.3}{.3},-] (t_1) (t0)
            edge[fore={.5}{.5},-,bend left=10] (t_2) %
            (t0) edge[-,bend left=15] (t_1) %
            (t1) edge[-,fore={1}{.5},bend right=10] (t_1) %
          }
        \end{minipage}
      \end{center}
      The meaning of such \anaction is that $x$ forks while $y$ is passive.
  \end{example}
  \begin{example}\label{ex:interface}
    Because we push along \emph{initial} channels, the interface of a
    seed may not contain all involved channels. E.g., in an input
    \action (not part of any synchronisation), the received channel
    cannot be part of the initial \position.
  \end{example}

\subsection{From \actions to \traces}\label{subsec:plays}
Having defined \actions, we now define their composition to yield our
bicategory $\Dv$ of \positions and \traces. Consider
$\Cospan{\Chatf}$, the bicategory which has as objects all finite
presheaves on $\C$, as morphisms $X \to Y$ all cospans $X \to U \ot
Y$, and obvious 2-cells.  Composition is given by pushout, and hence
is not strictly associative.
\begin{notation} By convention, the initial \position is the
  \emph{target} of the morphism in $\Cospan{\Chatf}$. We denote
  morphisms in $\Cospan{\Chatf}$ with special arrows $Y \proto X$;
  composition and identities are denoted with $\vrond$ and $\idv$,
  which matches the notation of pseudo double categories
  (Section~\ref{subsec:playgrounds}).
\end{notation}

\begin{definition}
  A \emph{trace} is any cospan in $\Chatf$ which is isomorphic to some
  finite, possibly empty composite of \actions in $\Cospan{\Chatf}$.
  Let $\Dv$ denote the subbicategory of $\Cospan{\Chatf}$ obtained by
  restricting to \positions, \traces, and 1-injective 2-cells.
\end{definition}

Thus, arrows $X \to Y$ in $\Dh$ denote embeddings of $X$ into $Y$ (up
to identification of channels), whereas arrows $Y \proto X$ in $\Dv$
denote \traces with $X$ initial and $Y$ final.  Intuitively,
composition in $\Dv$ glues string diagrams on top of each other, which yields
a truly concurrent notion of \trace: the only information retained in
\atrace about the order of occurrence of \actions is their causal
dependencies.

\begin{figure}[t]
  \centering
                  \diagramme[stringdiag={.2}{.4}]{baseline=(t0)}{
        \node[anchor=south] at (t_1.north) {$\scriptstyle x_1$} ; %
        \node[anchor=south] at (t_2.north) {$\scriptstyle x_2$} ; %
        \node[anchor=south] at (y_1.north) {$\scriptstyle y_1$} ; %
        \node[anchor=south] at (y_2.north) {$\scriptstyle y_2$} ; %
        \node[anchor=north] at (s.south) {$\scriptstyle x$} ; %
        \node[anchor=north] at (y.south) {$\scriptstyle y$} ; %
        \node[anchor=north] at (c.south) {$\scriptstyle c$} ; %
        \node[anchor=north] at (t0.south) {$\scriptstyle a=a'$} ; %
        \node[anchor=north] at (t1.south) {$\scriptstyle b=b'$} ; %
  }{%
     \& \& \joueur{t_1} \&  \& \joueur{t_2} \& \& \& \& \joueur{y_1} \& \& \joueur{y_2} \\
    \& \&   \&  \\
    \& \ \& \\
    \canal{t0} \& \& \& \couppara{para} \& \& \& \canal{t1} \& \&  \& \couppara{para'} \&  \& \& \canal{c} \\
    \& \ \& \\
    \& \&  \\
    \& \& \& \joueur{s} \& \& \& \& \& \& \joueur{y} 
  }{%
    (t1) edge[-,bend right=10] (t_2) %
    (t0) 
    edge[-] (s) %
    (t1) 
    edge[-] (s) %
    (para) edge[-] (t_2) %
    (s) edge[-] (para) %
    (y) edge[-] (t1) %
     edge[-] (c) %
    (para) edge[fore={.3}{.3},-] (t_1)
    (t0) edge[fore={.5}{.5},-,bend left=10] (t_2) %
    (t0) edge[-,bend left=15] (t_1) %
    (t1) edge[-,fore={1}{.5},bend right=10] (t_1) %
    (c) edge[-,bend right=10] (y_2) %
    (para') edge[-] (y_2) %
    (y) edge[-] (para') %
    (y) edge[-] (t1) %
     edge[-] (c) %
    (para') edge[fore={.3}{.3},-] (y_1) %
    (t1) edge[fore={.5}{.5},-,bend left=10] (y_2) %
    (t1) edge[-,bend left=15] (y_1) %
    (c) edge[-,fore={1}{.5},bend right=10] (y_1) %
  }  
\hfill
      \diagramme[stringdiag={.6}{.8}]{baseline=(a)}{%
        \framenode{a} %
        \framenode{a'} %
        \circlenode{c} %
        \circlenode{c'} %
        \receive{b}{i}{a'}{.4}{} %
        \envoie{a}{o}{b}{.4}{} %
        \envoie{c}{o'}{a'}{.4}{} %
        \receive[fore={.1}{.1}]{a'}{i'}{c'}{.4}{} %
        \node[below=1pt] at (a.south) {$\scriptstyle a$} ; %
        \node[below=1pt] at (b.south) {$\scriptstyle b$} ; %
        \node[below=1pt] at (c.south) {$\scriptstyle c$} ; %
        \node[below=1pt] at (a'.south) {$\scriptstyle a$} ; %
        \node[right=1pt] at (c') {$\scriptstyle c$} ; %
        \node[below=1pt] at (x.south) {$\scriptstyle x$} ; %
        \node[below=1pt] at (y.south) {$\scriptstyle y$} ; %
        \node[below=1pt] at (z.south) {$\scriptstyle z$} ; %
        \node[above left,inner sep=1pt] at (y') {$\scriptstyle y'$} ; %
        \node[left] at (y'') {$\scriptstyle y''$} ; %
        \twocell{o}{b}{i}{}{
          decorate,decoration={snake,amplitude=.3mm,segment length=1mm},bend left=50}
        \twocell[.4][.25]{o'}{a'}{i'}{}{
          decorate,decoration={snake,amplitude=.3mm,segment length=1mm},bend right=30}
      }{%
        \& \& \& \joueur{y''} \\
        \& \& \& \coupin{i'}{0} \& \canal{c'} \\
        \& \joueur{x'} \& \& \joueur{y'} \& \& \joueur{z'} \\
    \canal{a} \& \coupout{o}{0} \& \canal{b} \& \coupin{i}{0} \& \canal{a'} \& \coupout{o'}{0} \& \canal{c} \\
    \& \joueur{x} \& \& \joueur{y} \& \& \joueur{z} %
  }{%
    (a) edge[-] (x) %
    edge[-] (x') %
    (b) edge[-] (x) %
    edge[-] (x') %
    edge[-] (y) %
    edge[-] (y') %
    edge[-] (y'') %
    (a') edge[-] (y') %
    edge[-,bend right=10] (y'') %
    edge[-] (z) %
    edge[-] (z') %
    (c) edge[-] (z) %
    edge[-] (z') %
    (c') edge[-] (y'') %
    (i) edge[-] (y) %
    edge[-] (y') %
    (i') edge[-] (y') %
    edge[-] (y'') %
    (o) edge[-] (x) %
    edge[-] (x') %
    (o') edge[-] (z) %
    edge[-] (z') %
  }
  \caption{Example \traces}
  \label{fig:plays}
\end{figure}
\begin{example}\label{exa:causal}
  Composing the \action of Example~\ref{ex:forkmove} with a forking
  \action by $y$ yields the first string diagram of
  Figure~\ref{fig:plays}, which shows that the ordering between remote
  \actions is irrelevant.  To illustrate how composition retains
  causal dependencies between \actions, consider the second string
  diagram.  It is unfolded for readability: one should identify both
  framed nodes, resp.\ both circled ones.  In the initial
  \position, there are channels $a,b$, and $c$, and three \agents
  $x(a,b)$, $y(b)$, and $z(a,c)$ (channels known to each \agent are in
  parentheses).  In a first \action, $x$ sends $a$ on $b$, and $y$
  receives it. In a second \action, $z$ sends $c$ on $a$, and the
  avatar $y'$ of $y$ receives it. The second \action is enabled by the
  first, by which $y$ gains knowledge of $a$.
\end{example}

Before going on to construct the base double category for our
playground, let us observe the following two basic facts about
\traces.
\begin{lem}\label{lem:s:t:monos}
  For any \trace $Y \xto{s} U \xot{t} X$,
  $s$ and $t$ are monos.
\end{lem}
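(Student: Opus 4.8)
The plan is to argue by structural induction, using the defining construction of \traces as finite composites of \actions, and of \actions as pushouts of seeds. Concretely, it suffices to establish three facts: (a) the two legs of every seed are monos; (b) forming an \action from a seed preserves monicity of both legs; and (c) composing two cospans whose legs are monos yields a cospan whose legs are monos. The empty composite is an identity cospan $X \xto{\id} X \xot{\id} X$, whose legs are monos, and monicity of legs is plainly invariant under the isomorphisms of cospans allowed by the definition of \trace (these fix the feet $X$ and $Y$ and act by an isomorphism on the apex). Hence (a)--(c) give the lemma by induction on the number of \actions in the composite.

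The one categorical ingredient I would use throughout is that $\Chat$ is a presheaf topos, hence adhesive, so that the pushout of a monomorphism along an arbitrary morphism is again a monomorphism. Since pushouts of finite presheaves are computed as in $\Chat$ and stay finite, the same holds in $\Chatf$, where all the colimits at issue live. Fact (a) is then a routine, diagram-by-diagram inspection of the finitely many seeds: in each case both legs are the evident inclusion of the initial, resp.\ final, \position into the string diagram, hence injective. The only points deserving care are the input and synchronisation seeds, where one checks that the freshly created, resp.\ transmitted, channel is not identified with anything, so that these inclusions remain monos.

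For (b), fix a seed $Y \xto{s} M \xot{t} X$ with $s,t$ monos, its cone out of $I_X$, and a map $I_X \to Z$; the resulting \action has apexes $X' = X +_{I_X} Z$, $Y' = Y +_{I_X} Z$, and $M' = M +_{I_X} Z$. The key observation is that, by the pasting lemma for pushouts, the commuting square with edges $t \colon X \to M$, the canonical map $X \to X'$, the leg $X' \to M'$, and the injection $M \to M'$ is itself a pushout, exhibiting $X' \to M'$ as the pushout of the mono $t$ along $X \to X'$; adhesivity then makes $X' \to M'$ a mono. The identical argument, run with the cone component $I_X \to Y$ and using $s$ in place of $t$, shows that $Y' \to M'$ is a mono. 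For (c), the composite of \traces $Y \xto{} U \xot{} X$ and $Z \xto{} V \xot{} Y$ has apex $W = U +_Y V$, the pushout of the two legs $Y \to U$ and $Y \to V$, which are monos by the induction hypothesis; adhesivity makes both injections $U \to W$ and $V \to W$ monos, so the composite legs $X \to U \to W$ and $Z \to V \to W$ are composites of monos, hence monos.

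I expect the main obstacle to be the bookkeeping in (a) together with the correct application of the pasting lemma in (b), namely recognising each \action leg as a genuine pushout of the corresponding seed leg, rather than the adhesivity input, which is standard. Note in particular that the maps $X \to X'$ and $Y \to Y'$ need not themselves be monos (the interface map $I_X \to Z$ may identify channels), which is exactly why I phrase the argument as ``pushout of the mono $t$'' rather than relying on stability of monos under pullback.
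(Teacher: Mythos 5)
Your proof is correct and follows essentially the same route as the paper's: induction on the number of \actions{} in a decomposition, with the legs of seeds checked by inspection, \actions{} handled via the pushout lemma together with stability of monos under pushout, and composition of cospans handled by the same stability plus closure of monos under composition. Your extra care in noting that $X \to X'$ need not be a mono, and in justifying stability via adhesivity of presheaf categories, only makes explicit what the paper leaves implicit.
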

\begin{proof}
  We proceed by induction on the number of \actions involved in any
  decomposition of $U$. The base case is trivial.  For the induction
  step, because composition of cospans is by pushout, the result
  follows from the induction hypothesis, stability of monos under
  pushout and composition, and the fact that the result holds for
  \actions. The latter in turn follows from the fact that monos are
  stable under pushout (again!), the pushout lemma, and the
  fact that the result holds for seeds, which holds by case inspection.
\end{proof}

\begin{lem}\label{lem:s:star:iso}
  For any \trace $Y \xto{s} U \xot{t} X$, $s_\star$ is an isomorphism.
\end{lem}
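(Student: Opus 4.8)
The plan is to mirror the inductive proof of Lemma~\ref{lem:s:t:monos}, replacing its use of ``monos are stable under pushout and composition'' by the observation that the property ``$s_\star$ is an isomorphism'' propagates through the very pushouts used to build \actions{} from seeds and to compose \actions{} into \traces. Two elementary facts drive everything. First, evaluation at $\star$, i.e.\ the functor $(-)(\star)\colon \Chat \to \Set$, preserves all colimits, since colimits of presheaves are computed pointwise; hence every pushout occurring in the construction of $U$ becomes a pushout of sets after applying $(-)(\star)$. Second, the pushout of an isomorphism is again an isomorphism. Together these let me transport a seed-level computation all the way up to arbitrary \traces.

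First I would settle the base data: for each of the finitely many seeds $Y \xto{s} M \xot{t} X$ (the fork and synchronisation seeds, together with those in the bottom row of Figure~\ref{fig:stringmoves}), I would check directly that $s_\star$ is a bijection. Concretely this amounts to counting the channels of $M$ via its category of elements and matching them with the channels of the final \position $Y$: every channel of $M$ already occurs in $Y$, whereas the channels created or received are exactly those \emph{absent} from the initial \position $X$ (the asymmetry flagged in Example~\ref{ex:interface}, which is why $t_\star$ need not be iso). For instance, for $\inna$ the equations $s \rond s_i = t \rond s_i$ ($i \in n$) identify the old channels, leaving $\inna(\star)$ in bijection with the $n+1$ channels of $Y = [n+1]$; and for $\taunamcd$ both $M(\star)$ and the final \position $[m] \paraofij{c,d}{a,n+1} [n+1]$ carry $m+n-1$ channels.

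Next I would promote this from seeds to arbitrary \actions. An \action is obtained from a seed by the pushout~\eqref{morphismofinterfacedseeds} along a map $I_X \to Z$, which simultaneously produces $M'$ and $Y'$ as the pushouts of $M \ot I_X \to Z$ and $Y \ot I_X \to Z$, the \action's $s$-leg $Y' \to M'$ being induced by the seed's $s$-leg together with $\id_Z$. Applying $(-)(\star)$ turns both into pushouts of sets along the common map $I_X(\star) \to Z(\star)$, differing only in the remaining corner, where one has $Y(\star)$ and the other $M(\star)$. Since $s_\star$ is an isomorphism there and is moreover compatible with the cone maps out of $I_X$ (the seed's legs form a cone over $Y \to M \ot X$), the induced map $Y'(\star) \to M'(\star)$ is an isomorphism by functoriality of pushouts under isomorphism of diagrams.

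Finally I would run the induction on the number of \actions in a decomposition of $U$, exactly as in Lemma~\ref{lem:s:t:monos}. The empty composite is an identity cospan, for which $s = \id$; up to isomorphism of \traces this gives the base case, as an iso of cospans is in particular iso on $\star$. For the inductive step I would write $U \iso u \vrond w$ with $w$ a single \action and $u$ a \trace of one fewer \action, glued by pushout over the shared \position $Z$. The composite's $s$-leg factors as $k \rond s_u$, where $s_u$ is the $s$-leg of $u$ and $k$ is the coprojection of the apex of $u$ into the pushout. By the induction hypothesis $(s_u)_\star$ is an iso; and $k$ is, in the gluing pushout, the pushout of the $s$-leg of $w$, whose $\star$-component is an iso by the \action case, so $k_\star$ is an iso after applying $(-)(\star)$. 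Hence $(k \rond s_u)_\star$ is an iso. The only genuinely laborious point is the seed-level case check, but it is entirely routine; the rest is formal manipulation of pushouts, and I expect no real obstacle beyond keeping straight which coprojection is a pushout of which leg in the gluing square.
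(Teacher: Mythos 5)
Your proposal is correct and follows essentially the same route as the paper, whose proof is simply ``similar to the previous proof'' of Lemma~\ref{lem:s:t:monos}: induction on a decomposition of $U$ into \actions{}, with the seed case checked by inspection and the inductive step handled by stability of the relevant property (here, being an isomorphism at $\star$) under the pushouts used to build \actions{} and to compose them. You have merely made explicit the details the paper leaves implicit, and they all check out.
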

\begin{proof}
  Similar to the previous proof.
\end{proof}
The intuition behind the last lemma is that no channel is forgotten
during the play.

\subsection{The main double category}
At last, we define the base double category $\D$ of our
playground for the $\pi$-calculus.  It is a sub-double category of a
double category of cospans in~$\Chat$.

Consider the double category $\D^0$ with 
\begin{itemize}

\item \positions as objects, 

\item horizontal morphisms $X \to Y$ given by all natural
transformations $h \colon X \to Y$,

\item vertical morphisms $X \proto Y$ given by cospans
  $X \xto{s} U \xot{t} Y$ in $\Chat$,

\item and double cells $U \to V$ given by commuting diagrams

  \begin{equation}
    \label{generaldoublecell}
      \diag{%
       |(X')| X' \&|(Y')| Y' \\
       |(U)| U \&|(V)| V \\
       |(X)| X \&|(Y)| Y\rlap{.} %
      }{%
        (X') edge[labell={s_U}] (U) %
        edge[labela={k}] (Y') %
        (X) edge[labell={t_U}] (U) %
        edge[labelb={h}] (Y) %
        (U) edge[labela={l}] (V) %
        (Y') edge[labelr={s_V}] (V) %
        (Y) edge[labelr={t_V}] (V) %
      }
  \end{equation}
\end{itemize} 

\begin{defi} Let $\D$ denote the sub-double
  category of $\D^0$ obtained by restricting
  \begin{itemize}
  \item vertical morphisms to traces, 
  \item horizontal morphisms to 1-injective maps,
  \item double cells to diagrams~\eqref{generaldoublecell} in which $k,l$, and $h$ are 1-injective.
  \end{itemize}
\end{defi}
\begin{prop}\label{prop:D:D0}
  $\D$ indeed forms a sub-double category of $\D^0$, i.e., is closed
  under all composition operations.
\end{prop}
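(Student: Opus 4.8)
The plan is to exhibit $\D$ as a genuine sub-double category of the cospan double category $\D^0$ by checking that each of the three restricted classes---\positions as objects, \traces (resp.\ $1$-injective maps) as vertical (resp.\ horizontal) morphisms, and diagrams~\eqref{generaldoublecell} with $1$-injective $k,l,h$ as cells---is closed under the relevant composition and contains the relevant identities. Most of these checks are routine. For horizontal morphisms: identities are injective, hence $1$-injective, and a composite of $1$-injective maps is again $1$-injective. For vertical morphisms: $\idv$ is the empty composite, hence a \trace, and the composite of a \trace isomorphic to $n$ \actions with one isomorphic to $m$ \actions is, by associativity of cospan composition up to isomorphism, isomorphic to $n+m$ \actions; since a pushout of finite presheaves is again finite, this composite still lies in $\Chatf$, so it is a \trace. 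For cells, identity cells carry only identity maps, and horizontal composition ($\rond$, along a common vertical morphism) composes the component maps $k,l,h$ of the two cells by ordinary composition, so $1$-injectivity is preserved by the same argument as for horizontal morphisms.

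The one substantial point is vertical composition of cells ($\vrond$, along a common horizontal morphism). Consider two cells sharing a horizontal map $h\colon X \to Y$, with left vertical apexes $U_\alpha, U_\beta$, right vertical apexes $V_\alpha, V_\beta$, and middle maps $l_\alpha \colon U_\alpha \to V_\alpha$ and $l_\beta \colon U_\beta \to V_\beta$. The composite cell inherits its outer horizontal maps (still $1$-injective), but its middle map is the comparison map $m \colon W \to W'$ produced by the universal property of pushout, where $W = U_\alpha +_X U_\beta$ and $W' = V_\alpha +_Y V_\beta$. We must show $m$ is $1$-injective. Pushouts of presheaves are computed pointwise, and $1$-injectivity concerns only the components of dimension $\geq 1$. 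Since \positions vanish above dimension $1$, in every dimension $\geq 2$ the gluing is along the empty set, both pushouts are disjoint unions, and $m$ restricts to the coproduct of the two injections $l_\alpha$ and $l_\beta$, hence is injective.

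The remaining, and genuinely delicate, case is dimension $1$, i.e.\ the components at each $[n]$; I expect this to be the main obstacle. Here the cospan legs $t_{U_\alpha}, s_{U_\beta}, t_{V_\alpha}, s_{V_\beta}$ are monic by Lemma~\ref{lem:s:t:monos}, so each set-level pushout is the union of the two apex components glued precisely along the shared \position; the only way $m$ could fail to be injective is for an \agent of $U_\alpha$ outside $\im(t_{U_\alpha})$ and an \agent of $U_\beta$ outside $\im(s_{U_\beta})$ to become identified in $W'$ through a common \agent of $Y$. I would rule this out by showing that the two comparison squares (e.g.\ $t_{U_\alpha}$ over $t_{V_\alpha}$ along $h$ and $l_\alpha$) are pullbacks in dimension $1$---equivalently, that a $1$-injective cell map both preserves and reflects membership in the initial, resp.\ final, boundary of a \trace. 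This reflection property should follow from naturality: the initial boundary $\im(t_U)$ consists exactly of the \agents that do not arise as the post-\action occurrence of any higher-dimensional element of the \trace (and dually for $\im(s)$), and any map commuting with the structure maps of $\C$ both preserves and reflects such a characterisation. One proves the characterisation itself by induction on the number of \actions in a decomposition, reducing to the case of \actions and then of seeds by case inspection, in the same style as Lemmas~\ref{lem:s:t:monos} and~\ref{lem:s:star:iso}. With the pullbacks in hand the offending configuration cannot occur, so $m$ is injective in dimension $1$, hence $1$-injective, which completes the closure proof.
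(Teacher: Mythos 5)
Your proof is correct and follows essentially the same route as the paper's: the only non-trivial closure is vertical composition of cells, and you establish $1$-injectivity of the pushout comparison map from monicity of the cospan legs together with the pullback property of the shared-boundary square, which you derive---exactly as in Lemma~\ref{upper squares all pullbacks}---from the characterisation of boundary \agents{} and naturality. The paper merely packages your dimension-by-dimension ``offending configuration'' analysis as a standalone cube lemma (Corollary~\ref{cor:cube}), so the difference is one of presentation rather than substance.
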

\begin{proof}
  The only non-obvious point is that double cells in $\D$ are closed
  under vertical composition, and in particular that the middle
  component of the composite is 1-injective. This follows from
  Lemma~\ref{upper squares all pullbacks} and Corollary~\ref{cor:cube}
  below.
\end{proof}

\begin{defi}\label{def:V}
Let $\vertical_0$ denote the set of `$t$'-legs (i.e., lower legs) of seeds.
\end{defi}

\begin{lem}\label{upper squares all pullbacks}
 For any morphism~\eqref{generaldoublecell}
in $\D^0_H$, if $U$ and $V$ are \traces, then the upper square is a
pullback.  \end{lem}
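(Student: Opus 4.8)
The plan is to verify the pullback pointwise and to convert the statement into a purely image-theoretic condition on the source legs, which can then be analysed for each trace separately. Since limits in $\Chat$ are computed componentwise, it suffices to check that the upper square of~\eqref{generaldoublecell}, evaluated at each object $c \in \C$, is a pullback of sets. By Lemma~\ref{lem:s:t:monos}, $s_U$ and $s_V$ are monic. For a square of sets whose right-hand vertical leg is monic, being cartesian is equivalent to the left-hand vertical leg being \emph{the} pullback mono, i.e.\ to the equality of subobjects $\im(s_U)_c = l_c^{-1}\bigl(\im(s_V)_c\bigr)$ inside $U(c)$. Commutativity of the double cell, $l \rond s_U = s_V \rond k$, already yields $\im(s_U)_c \subseteq l_c^{-1}(\im(s_V)_c)$, so the lemma reduces to the reverse inclusion, namely
\[
  l_c(u) \in \im(s_V)_c \ \Longrightarrow\ u \in \im(s_U)_c \qquad\text{for all } u \in U(c).
\]

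Next I would dispose of most objects $c$ by dimension. For $c = \star$, Lemma~\ref{lem:s:star:iso} makes both $(s_U)_\star$ and $(s_V)_\star$ isomorphisms, so both images are everything and the implication is trivial. For $c$ of dimension at least $2$, the final \positions $X'$ and $Y'$ are empty in dimension $c$, so both images are empty and the implication holds vacuously. Only $c = [n]$ remains, and there I would describe $\im(s_W)_{[n]}$ intrinsically: call an agent $a \in W([n])$ of a \trace $W$ \emph{final} when it is not the initial-agent face $w \cdot \theta$ of any higher-dimensional element $w$ of $W$ (with $\theta$ ranging over the face maps of $\C$ that single out the \agent occurring \emph{before} an \action). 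The key claim is that $\im(s_W)_{[n]}$ is exactly the set of final agents of $W$.

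Granting this claim for both $U$ and $V$, the required implication is immediate from naturality of $l$, with no induction: if $u \in U([n])$ is not final, there is a higher element $w$ with $w \cdot \theta = u$, whence $l(w)$ is a higher element of $V$ with $l(w) \cdot \theta = l(u)$, so $l(u)$ is not final, i.e.\ $l(u) \notin \im(s_V)_{[n]}$. Contrapositively, $l_c(u) \in \im(s_V)_{[n]}$ forces $u$ to be final, hence $u \in \im(s_U)_{[n]}$, as desired. It thus remains to prove the claim, which I would do by induction on the number $r$ of \actions in a decomposition of $W$. For $r = 0$ we have $W \iso \idv$ and $s_W$ is invertible, so every agent is final; for $r = 1$, $W$ is an \action, i.e.\ a pushout of a seed along its interface $I_X \to Z$, and since this pushout glues only along channels (dimension $0$) its $[n]$-component is a disjoint union of the seed's agents with the passive agents coming from $Z$, so the claim follows by direct inspection of the finitely many seeds together with the observation that passive agents are both final and carried into the final \position. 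For the inductive step I would write $W \iso M \vrond W_0$ with $M$ an \action on top and $W_0$ a \trace of $r-1$ \actions, so that $W$ is the pushout of the monos $t_M \colon X'' \into M$ and $s_{W_0} \colon X'' \into W_0$ along the intermediate \position $X''$ (Lemma~\ref{lem:s:t:monos}), computed pointwise, with $s_W = (M \to W) \rond s_M$; tracking agents through this pushout, using the induction hypothesis on $W_0$, the $r=1$ case for $M$, and the fact that an agent of $W_0$ reaching $X''$ is either consumed as an initial agent of $M$ or passes passively into the final \position, yields the claim for $W$.

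The hard part will be exactly this last bookkeeping, and in particular the inclusion $\im(s_W)_{[n]} \subseteq \{\text{final agents}\}$: one must rule out that a final agent is simultaneously the initial agent of a later \action. This is precisely where it matters that $W$ is a genuine \trace rather than an arbitrary cospan, and it rests on the monicity statements of Lemmas~\ref{lem:s:t:monos} and~\ref{lem:s:star:iso} together with the stability of pullbacks under the (van Kampen–style) pushouts along monos used to define composition in $\Chat$.
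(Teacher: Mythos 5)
Your proof is correct and takes essentially the same route as the paper's: both reduce the pullback to a pointwise comparison of the images of $s_U$ and $s_V$ (trivial over $\star$ by Lemma~\ref{lem:s:star:iso} and in dimensions $\geq 2$), characterise the image of the $s$-leg of a trace in dimension $1$ as the agents not hit by the $t$-face of any higher-dimensional element, and conclude by naturality of $l$. The only substantive difference is that you prove this characterisation by induction on the length of the trace, whereas the paper simply asserts it; your phrasing of cartesianness as the subobject equality $\im(s_U) = l^{-1}(\im(s_V))$ is an equivalent repackaging of the paper's mono-plus-epi argument for the mediating map.
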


\begin{proof} 
  For any \trace $Y \xto{s} P \xot{t} X$ and $n \in \Nat$, $Y[n]$
  consists of all elements of $P[n]$ which are not in the image of
  (the action of) any map in $\vertical_0$.

  Now, consider any double cell as in~\eqref{generaldoublecell}.
  Because $s_V$ is monic, $U \times_V Y'$ may be chosen to be just
  $l^{-1}(Y') \subseteq U$.  By Lemma~\ref{lem:s:t:monos} and standard
  cancellation properties of monos, the mediating arrow $X' \to U
  \times_V Y'$ is mono. To show that it is epi, we proceed
  pointwise. Over $\star$, the result follows from $s_U$ and $s_V$
  being isomorphisms (Lemma~\ref{lem:s:star:iso}). Over $[n]$, if $x
  \in (U \times_V Y') [n]$ then $x \in U[n]$, and $l(x) \in V[n]$ is
  not in the image of (the action of) any $t \in \vertical_0$. But if
  there existed $y$ such that $y \cdot t = x$, then by naturality we
  would have $l(y) \cdot t = l(x)$, contradicting the latter. Any
  natural transformation being both epi and mono is an isomorphism,
  hence the result. \end{proof}

\begin{lem}
  In $\Set$, consider any cube 
  \begin{center}
    \Diag(.4,.6){%
      \pbk{A}{C}{B} %
      \pbk{I'}{I}{B} %
      \pbk{A'}{C'}{B'} %
      \path[->,draw] %
      (I) edge[fore] (A) %
      ; %
    }{%
      |(I)| I \& \& \&|(B)| B \\
      \&|(A)| A \& \& \&|(C)| C \\
      \& \& \& \& \& \\
      |(I')| I' \& \& \&|(B')| B' \\
      \&|(A')| A' \& \& \&|(C')| C'\rlap{,} \\
        \& \& \& \& \& %
    }{%
      (I) edge (I') edge (B) %
      (I') edge (A') edge (B') %
      (B') edge (C') %
      (B) edge (C) %
      edge (B') (A') edge[fore] (C') %
      (A) edge[fore] (C) %
      edge[fore] (A') %
      (C) edge[labelr={f}] (C') %
    }
  \end{center}
  with the marked pushouts and pullback, and with all arrows mono
  except perhaps $f$.  Then, $f$ is also mono and the front square is
  also a pullback.
\end{lem}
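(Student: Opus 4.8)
The plan is to reduce everything to the combinatorics of pushouts of monos in $\Set$ and then chase elements. The key preliminary observation is that in $\Set$ a pushout of a span of monos is again a pullback, and its injections are again mono. Applying this to the top and bottom faces (both marked pushouts, with all their edges mono) shows that $A \to C$, $B \to C$, $A' \to C'$, $B' \to C'$ are mono and that each of these two faces is simultaneously a pushout and a pullback. Concretely, writing $\iota_A, \iota_B$ for the injections into $C$, we obtain $C = \iota_A(A) \cup \iota_B(B)$ with $\iota_A(A) \cap \iota_B(B)$ equal to the common image of $I$, and symmetrically for $C'$. Throughout I would use the cube's commuting faces to record that $f$ restricts to $a \colon A \to A'$ on $A$ and to $b \colon B \to B'$ on $B$, i.e.\ $f \circ \iota_A = \iota_{A'} \circ a$ and $f \circ \iota_B = \iota_{B'} \circ b$.

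To show $f$ is mono, I would take $c_1, c_2 \in C$ with $f(c_1) = f(c_2)$ and split on where the $c_i$ live. If both lie in $A$ (or both in $B$), monicity of $a$ (resp.\ $b$) together with that of $\iota_{A'}$ (resp.\ $\iota_{B'}$) gives $c_1 = c_2$ immediately. The only genuine case is the mixed one, say $c_1 = \iota_A(\tilde a)$ and $c_2 = \iota_B(\tilde b)$: then $\iota_{A'}(a(\tilde a)) = \iota_{B'}(b(\tilde b))$, so, the bottom face being a pullback, there is $j \in I'$ with $(I' \to A')(j) = a(\tilde a)$ and $(I' \to B')(j) = b(\tilde b)$. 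Hence $b(\tilde b)$ factors through $I'$, and the back face being a pullback ($I = I' \times_{B'} B$) produces $k \in I$ with $(I \to B)(k) = \tilde b$. Thus $c_2 = \iota_B(\tilde b) = \iota_A((I \to A)(k))$ lies in $\iota_A(A)$, which collapses the mixed case into the ``both in $A$'' case and forces $c_1 = c_2$.

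For the front face, I would check that $A$ has the universal property of the pullback of the cospan $A' \to C' \leftarrow C$ built from $\iota_{A'}$ and $f$. Uniqueness is immediate since $\iota_A \colon A \to C$ is mono. For existence, given $a' \in A'$ and $c \in C$ with $\iota_{A'}(a') = f(c)$, I split on $c \in \iota_A(A)$ or $c \in \iota_B(B)$. In the first case $c = \iota_A(x)$, and monicity of $\iota_{A'}$ forces $a(x) = a'$, so $x$ is the required antecedent. In the second case $c = \iota_B(\tilde b)$, and exactly as above the bottom- and back-face pullbacks force $\tilde b$ into the image of $I \to B$, so $c$ in fact lies in $\iota_A(A)$ and we are back to the first case. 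This establishes the pullback property.

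The main obstacle is the mixed/second case in both arguments: it is precisely where the hypotheses genuinely interact, using adhesivity of $\Set$ (pushout of monos $=$ pullback) to descend an equality holding in $C'$ down to $I'$, and then the back-face pullback to pull this $I'$-datum back to $I$. The bookkeeping around the commuting faces of the cube — ensuring that $f$ really restricts to $a$ and $b$, and that the two images of $I$ in $C$ coincide — is routine but must be done carefully, and the same two facts (bottom face a pullback, back face a pullback) are the only nontrivial inputs at each step.
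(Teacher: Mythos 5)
Your proof is correct. It is, however, presented quite differently from the paper's. The paper argues by normal form: since pushouts of monos in $\Set$ are just unions, the whole cube is isomorphic to one built from coproduct inclusions ($B \iso I+R$, $A \iso I+Y$, $C \iso I+Y+R$, and similarly below with $I' \iso I+X$, etc.), and the single nontrivial verification is that the map $I+R \to I+X+S$ sends $R$ into $S$ --- which the paper extracts from the back-face pullback. Once that is known, $f$ is a coproduct of monos (hence mono) and the front square is a coproduct of pullback squares (hence a pullback), all by inspection. Your element chase is an unpacked version of exactly this: your decomposition $C = \iota_A(A) \cup \iota_B(B)$ with intersection the image of $I$ is the normal form in disguise, and your mixed case --- descending an equality in $C'$ to $I'$ via the bottom-face pullback and then pulling it back to $I$ via the back-face pullback --- is precisely the paper's ``the image of $R$ has to lie in $S$.'' What your route buys is that every detail is verified explicitly at the level of elements, whereas the paper's one-line appeal to ``naturally isomorphic to some cube of the shape\dots'' leaves the reader to check that the isomorphism can be chosen compatibly across the whole cube; what the paper's route buys is brevity and a cleaner picture of why the conclusion is forced. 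Both arguments rest on the same two inputs (adhesivity of $\Set$, i.e.\ pushouts of monos are pullbacks with mono and jointly epi injections, plus the back-face pullback), so the mathematical content is the same.
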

\begin{proof}
  Any such cube is naturally isomorphic to some cube of the shape
  \begin{center}
    \Diag(.4,.6){%
      \pbk[3em]{A}{C}{B} %
      \pbk{I'}{I}{B} %
      \pbk[3em]{A'}{C'}{B'} %
      \path[->,draw] %
      (I) edge[fore] (A) %
      ; %
    }{%
      |(I)| I \& \& \&|(B)| I+R \\
      \&|(A)| I+Y \& \& \&|(C)| I+Y+R \\
      \& \& \& \& \& \\
      |(I')| I+X \& \& \&|(B')| I+X+S \\
      \&|(A')| I+X+Z \& \& \&|(C')| I+X+Z+S\rlap{,} \\
        \& \& \& \& \& %
    }{%
      (I) edge[labell={\injl}] (I') edge (B) %
      (I') edge[labelbl={\injl}] (A') edge (B') %
      (B') edge[labelar={}] (C') %
      (B) edge (C) %
      edge[labellat={\injl+k}{.7}] (B') (A') edge[fore] (C') %
      (A) edge[fore] (C) %
      edge[fore,labellat={I+h}{.3}] (A') %
      (C) edge[labelr={f=I+h+k}] (C') %
    }
  \end{center}
  the only non-trivial point being that the map $I+R \to I+X+S$ has
  the given shape. But this is because we know that its pullback along
  $I+X \to I+X+S$ is $\injl$, so the image of $R$ has to lie in $S$.
\end{proof}
\begin{cor}\label{cor:cube}
  Consider any cube   
  \begin{center}
    \Diag(.4,.6){%
      \pbk{A}{C}{B} %
      \pbk{I'}{I}{B} %
      \pbk{A'}{C'}{B'} %
      \path[->,draw] %
      (I) edge[fore] (A) %
      ; %
    }{%
      |(I)| X \& \& \&|(B)| B \\
      \&|(A)| A \& \& \&|(C)| C \\
      \& \& \& \& \& \\
      |(I')| X' \& \& \&|(B')| B' \\
      \&|(A')| A' \& \& \&|(C')| C'\rlap{,} \\
        \& \& \& \& \& %
    }{%
      (I) edge (I') edge (B) %
      (I') edge (A') edge (B') %
      (B') edge (C') %
      (B) edge (C) %
      edge (B') (A') edge[fore] (C') %
      (A) edge[fore] (C) %
      edge[fore] (A') %
      (C) edge[labelr={f}] (C') %
    }
  \end{center}
  in $\Chat$ in which all arrows except perhaps $f$ are 1-injective,
  and the marked squares are pushouts, resp.\ pullbacks. Then $f$ is
  also 1-injective.
\end{cor}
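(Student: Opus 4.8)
The plan is to reduce the statement pointwise to the preceding lemma, whose content is entirely set-theoretic. First I would recall that $\Chat = [\op{\C},\Set]$ is a presheaf category, so that all limits and colimits are computed objectwise. In particular, for each object $c \in \C$ the evaluation functor $F \mapsto F(c)$ preserves both the two pushout faces (top and bottom) and the pullback face (back) of the given cube. Evaluating the whole cube at a fixed $c$ therefore yields a cube in $\Set$ of exactly the same shape, with the same faces marked as pushouts, resp.\ pullback; no matching of markings needs to be argued, since it is immediate from pointwise computation of (co)limits.

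Next I would invoke the characterisation of $1$-injectivity that is built into the definitions: a map $g$ in $\Chat$ is $1$-injective precisely when $g_c$ is injective for every object $c$ of strictly positive dimension, the unique object of dimension $0$ being $\star$. So fix any $c \neq \star$ and evaluate the cube at it. Since all arrows of the cube except possibly $f$ are assumed $1$-injective, and $c$ has positive dimension, every arrow of the evaluated cube except possibly $f_c$ is injective, i.e.\ mono in $\Set$. Thus the evaluated cube meets all the hypotheses of the preceding lemma: it is a cube in $\Set$ with the marked pushouts and pullback, and with every arrow mono save perhaps $f_c$.

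Applying that lemma then yields that $f_c$ is mono, hence injective (and, as a free bonus that we do not actually need here, that the front face at $c$ is a pullback). As $c$ was an arbitrary object of positive dimension, $f_c$ is injective in every positive dimension, which is exactly the assertion that $f$ is $1$-injective. I expect no genuine obstacle in this corollary: the combinatorial work is concentrated in the lemma above, and the corollary is merely its transport along the evaluation functors. The only point worth stating carefully is the equivalence between $1$-injectivity in $\Chat$ and objectwise injectivity over the non-$\star$ objects, which lets the $\Set$-level conclusion lift back to a statement in $\Chat$.
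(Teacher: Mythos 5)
Your proof is correct and follows essentially the same route as the paper's: evaluate the cube pointwise at each object of strictly positive dimension, observe that the marked pushouts and pullback are preserved, and apply the preceding set-level lemma to conclude that $f$ is injective there, which is exactly $1$-injectivity.
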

\begin{proof}
  We proceed pointwise. On any object $C$ of dimension $>0$, we obtain
  a diagram in sets for which the lemma applies.
\end{proof}

\section{Codomain is a fibration}\label{sec:fib}
In this section, we prove that the double category $\D$ of
\traces{} constructed in the previous section satisfies the primary
axiom for playgrounds, namely that the codomain functor $\DH \to \Dh$
is a fibration.  We proceed as follows.  We first define a (strong)
factorisation system on $\Chat$ (Section~\ref{subsec:facto}), from
which we derive in Section~\ref{subsec:D1} an intermediate
sub-double category $\D \into \D^{1} \into \D^{0}$. We further show
that by the properties of factorisation, the codomain functor
$\D^{1}_H \to \D^{1}_h$ is a fibration. Finally, we want to show that
$\DH \to \Dh$ is a fibration by proving that \traces{} are stable
under relevant cartesian lifting in $\D^1$, i.e., cartesian liftings
of any \trace{} along any morphism in $\Dh$ are again in $\D_H$. We
first check this for seeds, by case analysis, in
Section~\ref{subsec:seeds}.  In order to generalise to \actions
$M\colon Y \proto X$, the basic idea is to
\begin{enumerate}
\item decompose $X$ as a pushout
of $X_0$, where the generating seed $M_0$ takes place, and $Z$, which is
passive;
\item decompose the morphism along which we want to restrict, say
  $h\colon X' \to X$, accordingly, say as $h_0\colon X'_0 \to X_0$ and
  $h_Z\colon Z' \to Z$;
\item restrict $M_0$ along $h_0$ to obtain $P'_0$;
\item \label{lab:recompose} recompose \atrace{} $P'$ from $P'_0$ and $Z'$;
\item check that $P'$ admits a cartesian morphism to $M$.
\end{enumerate}
Step~\eqref{lab:recompose} is non-trivial, so we devote
Section~\ref{subsec:opliftings} to it.  It works as a kind of formal
opposite to restriction, as we essentially lift $P'_0$ along
$h_0\colon X'_0 \to X'$. We call this an \emph{oplifting} of $P'_0$
along $h_0$, by analogy with lifting in opfibrations.  However,
opliftings do not enjoy the relevant universal property
(opcartesianness, which is dual to cartesianness). Instead, we find
that opliftings are in fact cartesian!  In
Section~\ref{subsec:actions}, we use opliftings to show that
\actions{} are stable under restriction, following the above plan.
Finally, we extend the result to arbitrary \traces in
Section~\ref{subsec:traces}.

\subsection{A factorisation system}\label{subsec:facto}
Let us start by defining the (strong) factorisation system
$(\vertical,\horizontal)$ on $\Chat$, on which the intermediate
sub-double category $\D^1$ will be based. The idea is that all three
components of cartesian morphisms in $\DH$ are in $\horizontal$, while
$t$-legs of vertical morphisms are in $\vertical$. The cartesian
lifting of any $V\colon Y' \proto Y$ as in~\eqref{generaldoublecell}
along any $h \colon X \to Y$ is then given by factoring $t_V \circ h$
as $l \circ t_U$ with $t_U \in \vertical$ and $l \in \horizontal$ to
obtain
\begin{equation}
      \Diag(.5,.7){%
      \pbk{m-2-1}{m-1-1}{m-1-2} %
    }{%
      X' \& Y' \\
      U \& V \\
      X \& Y, %
    }{%
      (m-1-1) edge[dashed,labela={k}] (m-1-2) %
      (m-2-1) edge[dashed,labelo={l}] (m-2-2) %
      (m-3-1) edge[labelb={h}] (m-3-2) %
      (m-1-1) edge[dashed,labell={s_U}] (m-2-1) %
      (m-1-2) edge[labelr={s_V}] (m-2-2) %
      (m-3-1) edge[dashed,labell={t_U}] (m-2-1) 
      (m-3-2) edge[labelr={t_V}] (m-2-2) %
    } \label{eq:facto}%
\end{equation}
where the upper square is a pullback.

We recall from Definition~\ref{def:V} that $\vertical_0$ denotes the
set of `$t$'-legs (i.e., lower legs) of seeds.  Following
Bousfield's~\cite{Bousfield} construction of `cofibrantly generated'
factorisation systems, we define $\horizontal = \vertical_0^{\bot}$ to
be the class of maps $f$ such that for any $t \in \vertical_0$ and
commuting square $(u,v) \colon t \to f$ in $\Chat^{\mathrel{\to}}$,
there exists a unique filler $h$ making the following diagram commute:
\begin{center}
    \diag{%
    X \& X' \\
    Y \& Y'\rlap{.} %
  }{%
    (m-1-1) edge[labelu={u}] (m-1-2) %
    edge[labell={t}] (m-2-1) %
    (m-2-1) edge[labeld={v}] (m-2-2) %
    (m-1-2) edge[labelr={f}] (m-2-2) %
    (m-2-1) edge[dashed,labelal={h}] (m-1-2) %
  }
\end{center}
In this situation, one says that $f$ is \emph{right-orthogonal} to $t$, and $t$ is \emph{left-orthogonal}
to $f$, which is denoted by $t \ortho f$.

We finally define $\vertical = {{}^{\bot}}\horizontal$ to consist of
all maps which are left-orthogonal to any map in $\horizontal$.  Of
course, we have $\vertical_0 \subseteq \vertical$.  The following is
an application of~\cite[Theorem 4.1]{Bousfield}:
\begin{prop}\label{prop:facto}
  The pair $(\vertical,\horizontal)$ forms a factorisation system.
\end{prop}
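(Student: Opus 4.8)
The plan is to recognise Proposition~\ref{prop:facto} as a direct instance of Bousfield's Theorem~4.1~\cite{Bousfield}, so that the only real work is checking that the ambient category and the generating set meet its hypotheses. Recall that this theorem produces, from any \emph{set} $\Sigma$ of morphisms in a suitably cocomplete category satisfying a smallness (solution-set) condition, the orthogonal factorisation system $({}^{\bot}(\Sigma^{\bot}),\Sigma^{\bot})$, in which every arrow factors as a map left-orthogonal to $\Sigma^{\bot}$ followed by a map in $\Sigma^{\bot}$, with unique diagonal fillers. Since we have \emph{defined} $\horizontal = \vertical_0^{\bot}$ and $\vertical = {}^{\bot}\horizontal$, taking $\Sigma = \vertical_0$ yields exactly the pair $(\vertical,\horizontal)$; hence it suffices to confirm that $\Chat$ and $\vertical_0$ satisfy the hypotheses.

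First I would observe that $\C$ is a small category: its objects range over a countable set of shapes (the generators $\star$, $[n]$, and the higher-dimensional vertices), and $G_{\C}$ carries only countably many edges, so $\C$, being the free category on $G_{\C}$ modulo a set of equations, has a set of objects and a set of morphisms. Consequently $\Chat = [\op{\C},\Set]$ is a presheaf category over a small base, hence locally presentable. In particular it is complete and cocomplete (limits and colimits computed pointwise), well-powered and co-well-powered, and every object is presentable. These are precisely the cocompleteness and smallness conditions required by Bousfield's construction, so its transfinite, small-object-style factorisation argument applies.

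Second, I would check that $\vertical_0$ is genuinely a set rather than a proper class: by Definition~\ref{def:V} its members are the lower legs of seeds, and seeds are indexed by finite combinatorial data (a fork, input, output, synchronisation, tick, silent, or channel-creation shape together with finitely many integer parameters $n,a,m,c,d$), so $\vertical_0$ is at most countable.

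With both hypotheses in hand, Bousfield's Theorem~4.1 immediately delivers the orthogonal factorisation system $(\vertical,\horizontal)$, the uniqueness of diagonal fillers guaranteeing that it is \emph{strong} in the sense recalled before the statement (i.e.\ $t \ortho f$ for all $t \in \vertical$ and $f \in \horizontal$). The only point demanding any care is the smallness hypothesis of the cited theorem, but this is automatic here because $\Chat$ is locally presentable; I therefore expect no genuine obstacle beyond this bookkeeping.
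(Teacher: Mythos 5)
Your proposal is correct and takes essentially the same route as the paper, which proves Proposition~\ref{prop:facto} simply by invoking Bousfield's Theorem~4.1; you merely make explicit the hypothesis-checking (that $\Chat$ is a presheaf category over a small base, hence cocomplete and locally presentable, and that $\vertical_0$ is a set) that the paper leaves implicit. The paper's only additional remark is that, although Bousfield's existence of factorisations rests on a transfinite small-object argument, the particular factorisations needed later are computed by hand.
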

What does that mean? Here is a modern definition~\cite{FK}:
\begin{defi}\label{def:facto}
  The classes of maps $\vertical$ and $\horizontal$ form a
  factorisation system iff $\vertical = \botleft{\horizontal}$,
  $\botright{\vertical} = \horizontal$, and any arrow factors as $h
  \rond v$ with $h \in \horizontal$ and $v \in \vertical$.
\end{defi}
In the case where $\horizontal = \botright{\vertical_0}$ and
$\vertical = \botleft{\horizontal}$, Bousfield proves that any map in
$\Chat$ admits a factorisation using a transfinite construction (a
so-called \emph{small object} argument). But here we will only need
factorisations of particular morphisms, which we will actually be able
to calculate by hand.  Bousfield's results include:

\begin{lem}\label{vertical stable under pushout}
  $\vertical$ is stable under pushout and composition, contains all
  isomorphisms, and enjoys the right cancellation property, i.e., if
  $v \in \vertical$ and $f v \in \vertical$, then $f \in \vertical$.

  $\horizontal$ is stable under pullback and composition, contains all
  isomorphisms, and enjoys the left cancellation property, i.e., if
  $h \in \horizontal$ and $h f \in \horizontal$, then $f \in \horizontal$.
\end{lem}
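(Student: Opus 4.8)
The plan is to derive every clause formally from the orthogonality characterisation established in Proposition~\ref{prop:facto} and Definition~\ref{def:facto}, namely $\vertical = \botleft{\horizontal}$ and $\horizontal = \botright{\vertical}$. The two halves of the statement are dual: each assertion about $\horizontal$ is exactly the corresponding assertion about $\vertical$ read in $\op{\Chat}$ (pullback becoming pushout, left cancellation becoming right cancellation). So I would prove the four clauses for $\vertical$ and invoke duality for $\horizontal$. Throughout, the only tool is the defining property that $v \in \vertical$ iff for every $h \in \horizontal$ and every commuting square $(u,w) \colon v \to h$ there is a \emph{unique} diagonal filler; the uniqueness is what makes the cancellation clauses go through.

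For isomorphisms, given an iso $f$ and any $h \in \horizontal$, the composite $u \rond f^{-1}$ is a diagonal for any square $(u,w) \colon f \to h$, and uniqueness is immediate since any filler $d$ satisfies $d = d \rond f \rond f^{-1} = u \rond f^{-1}$; hence $f \in \vertical$. For composition, given $v_1 \colon A \to B$ and $v_2 \colon B \to C$ in $\vertical$ and a square against some $h \in \horizontal$, I would lift first against $v_1$ to obtain a unique $d_1 \colon B \to X$, then lift $d_1$ against $v_2$ to obtain a unique $d_2 \colon C \to X$; the latter is the required filler, and stacking the two uniqueness statements yields uniqueness for $v_2 \rond v_1$.

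For stability under pushout, suppose $v \in \vertical$ is pushed out along some map to give $v'$, with pushout cocone $v', g'$. Given a lifting problem for $v'$ against $h \in \horizontal$, I would precompose with the pushout injections to get a lifting problem for $v$, solve it uniquely to obtain a map $d$, and then use the universal property of the pushout to glue $d$ with the top map of the original square into a single arrow $e$ out of the pushout. That $e$ is a genuine filler, and that it is unique, both follow from the fact that $v'$ and $g'$ are jointly epic: two arrows out of the pushout agreeing after $v'$ and after $g'$ coincide.

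The clause I expect to require the most care is \emph{right cancellation}: given $v \in \vertical$ and $f \rond v \in \vertical$, show $f \in \vertical$. For a square $(u,w) \colon f \to h$ with $h \in \horizontal$, I would solve the lifting problem for $f \rond v$ (with top leg $u \rond v$) to get a unique $d$ with $d \rond (f \rond v) = u \rond v$ and $h \rond d = w$; the filler property $h \rond d = w$ is then immediate, but the identity $d \rond f = u$ is \emph{not} --- it is exactly here that uniqueness against $v$ is used. Namely, both $u$ and $d \rond f$ are diagonals of the square $(u \rond v,\, w \rond f) \colon v \to h$, so by $v \ortho h$ they coincide. Uniqueness of $d$ as a filler for $f$ then reduces to uniqueness of the lift against $f \rond v$. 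The dual argument in $\op{\Chat}$ gives left cancellation for $\horizontal$, completing the proof.
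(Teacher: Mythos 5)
Your proof is correct. Note, though, that the paper does not actually prove this lemma: it is stated as an instance of Bousfield's general results on orthogonality classes (the text introduces it with ``Bousfield's results include''), so your contribution is a self-contained verification of those standard facts rather than a genuinely different route. Your argument is the canonical one: every clause follows from the characterisations $\vertical = \botleft{\horizontal}$ and $\horizontal = \botright{\vertical}$ supplied by Proposition~\ref{prop:facto}, and you correctly locate the one non-trivial point, namely that in the right-cancellation clause the identity $d \rond f = u$ is not automatic but follows from \emph{uniqueness} of fillers against $v$ (both $u$ and $d \rond f$ fill the square $(u \rond v, w \rond f) \colon v \to h$). The pushout clause is also handled properly: the filler is assembled by the universal property of the pushout, and both the filler equation $h \rond e = w'$ and uniqueness are discharged by joint epimorphy of the cocone legs $v', g'$. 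The appeal to duality for the $\horizontal$ half is legitimate since left- and right-orthogonality classes are exchanged under passage to the opposite category. The only cosmetic caveat is that you should make sure you are using $\horizontal = \botright{\vertical}$ (which Proposition~\ref{prop:facto} grants) rather than only the defining $\horizontal = \vertical_0^{\bot}$ when dualising, although the same closure properties hold for $S^{\bot}$ for an arbitrary class $S$, so either reading works.
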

\begin{rem}
  Stability under pushout is ambiguous here: we mean 
  that for any pushout
  \begin{center}
    \Diag{%
      \pbk{m-2-1}{m-2-2}{m-1-2} %
    }{%
      X \& Y \\
      Z \& T\rlap{,} %
    }{%
      (m-1-1) edge[labela={}] (m-1-2) %
      edge[labell={v}] (m-2-1) %
      (m-2-1) edge[labelb={}] (m-2-2) %
      (m-1-2) edge[labelr={v'}] (m-2-2) %
    }
  \end{center}
  if $v \in \vertical$, then $v' \in \vertical$. Stability under
  pullback is defined dually.
\end{rem}

\subsection{A first `fibred' double category}\label{subsec:D1}
We now make concrete the idea evoked in the previous section, of using
our factorisation system to obtain a codomain fibration.  Consider the
sub-double category $\D^1$ of $\D^{0}$ obtained by restricting
vertical morphisms to cospans $X \xto{s} U \xot{t} Y$ with $t \in
\vertical$. Its vertical morphisms are stable under composition and
contain identities by Lemma~\ref{vertical stable under pushout}, i.e.:
\begin{lem}
  $\D^1$ forms a sub-double category of $\D^0$.
\end{lem}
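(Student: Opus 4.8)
The plan is to verify that $\D^1$ is closed under every composition operation of $\D^0$. Since $\D^1$ retains all the objects and horizontal morphisms of $\D^0$ and restricts only the vertical morphisms (to those cospans whose lower leg lies in $\vertical$), the one substantive point is that this restricted class of vertical morphisms is closed under vertical composition $\vrond$ and contains the vertical identities $\idv$. Closure of the double cells will then come essentially for free, because composing cells produces no vertical boundaries other than ones already present or their vertical composites.

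First I would dispatch identities: the vertical identity $\idv_X$ is the cospan $X \xto{\id} X \xot{\id} X$, whose lower leg is $\id_X$. Since $\vertical$ contains all isomorphisms (Lemma~\ref{vertical stable under pushout}), this leg lies in $\vertical$, so $\idv_X$ is a vertical morphism of $\D^1$. Next comes the composition step. Given composable vertical morphisms presented as cospans $X \xto{s_1} U_1 \xot{t_1} Y$ and $Y \xto{s_2} U_2 \xot{t_2} Z$ with $t_1, t_2 \in \vertical$, their vertical composite is computed by the pushout of the span $U_1 \xot{t_1} Y \xto{s_2} U_2$, yielding $V = U_1 +_Y U_2$ with coprojections $j_1 \colon U_1 \to V$ and $j_2 \colon U_2 \to V$. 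The lower leg of the composite cospan $X \proto Z$ is then $j_2 \circ t_2$. Now $j_2$ is precisely the pushout of $t_1$ along $s_2$, so $t_1 \in \vertical$ gives $j_2 \in \vertical$ by stability of $\vertical$ under pushout; combined with $t_2 \in \vertical$ and stability under composition (both recorded in Lemma~\ref{vertical stable under pushout}), the composite leg $j_2 \circ t_2$ lies in $\vertical$. Hence the composite is again a vertical morphism of $\D^1$.

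Finally I would observe that the double cells are automatically closed. Horizontal composition $\rond$ of cells (along a shared vertical morphism) leaves the outer vertical boundaries unchanged, so these remain in $\D^1$ by hypothesis. Vertical composition $\vrond$ of cells (along a shared horizontal morphism) has as its horizontal domain and codomain the vertical composites of the boundary morphisms of the two cells, which the previous step shows remain in $\D^1$. Since $\D^0$ is already a double category, the composite cells exist there, and we have just checked that their vertical boundaries stay within $\D^1$; thus $\D^1$ is closed under all compositions.

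There is no serious obstacle here: the only delicate point is the bookkeeping that correctly identifies the composite's lower leg as $j_2 \circ t_2$ with $j_2$ a pushout of $t_1$ (rather than of $s_2$), after which the stability properties of $\vertical$ from Lemma~\ref{vertical stable under pushout} close the argument immediately.
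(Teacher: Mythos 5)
Your proof is correct and follows exactly the paper's (very terse) argument: the paper simply notes that the vertical morphisms of $\D^1$ contain identities and are closed under composition by Lemma~\ref{vertical stable under pushout}, which is precisely the computation you spell out — the composite's lower leg is a $\vertical$-map composed with a pushout of a $\vertical$-map. The identification of the coprojection $j_2$ as the pushout of $t_1$ along $s_2$ is the right bookkeeping, and the closure of cells is indeed automatic once the vertical boundaries are handled.
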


\begin{lem}
\Traces are in $\D^1_v$, i.e., we have $\D \subseteq \D^{1}$.
\end{lem}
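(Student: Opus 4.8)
The plan is to reduce the inclusion $\D \subseteq \D^1$ to a single statement about lower legs. Recall that $\D^1$, like $\D^0$, has \emph{all} natural transformations as horizontal morphisms and all commuting diagrams of shape~\eqref{generaldoublecell} as double cells, whereas $\D$ restricts both to $1$-injective ones. Thus the horizontal morphisms and cells of $\D$ are automatically among those of $\D^1$, and the only point to check is that every vertical morphism of $\D$, i.e.\ every \trace $Y \xto{s} U \xot{t} X$, is a vertical morphism of $\D^1$, i.e.\ that its lower leg $t$ lies in $\vertical$.

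I would prove this last statement by induction on the number of \actions in a decomposition of $U$, relying throughout on Lemma~\ref{vertical stable under pushout}: $\vertical$ contains all isomorphisms and is stable under pushout and composition. Since a \trace is only required to be \emph{isomorphic} to a composite of \actions, I first reduce to honest composites: if $\phi \colon U \to U'$ is a comparison isomorphism of cospans, then $t = \phi^{-1} \rond t'$, and $\phi^{-1}$ lies in $\vertical$, so $t \in \vertical$ whenever $t' \in \vertical$. The base case is the empty composite, the identity cospan, whose lower leg $\id_X$ is in $\vertical$.

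The heart of the argument is the case of a single \action. By definition an \action is obtained from a seed $Y \xto{s} M \xot{t_0} X$ by pushing its canonical cone from $I_X$ along a map $I_X \to Z$, as in~\eqref{morphismofinterfacedseeds}, and $t_0 \in \vertical_0 \subseteq \vertical$ by Definition~\ref{def:V} and Lemma~\ref{vertical stable under pushout}. The key observation is that the cone factors the interface map $I_X \to M$ as $I_X \to X \xto{t_0} M$. Applying the pasting lemma for pushouts to the two squares stacked over $I_X \to Z$ (the one producing $X'$ and the one producing $M'$), the induced lower leg $X' \to M'$ of the \action is precisely the pushout of $t_0$ along $X \to X'$; stability of $\vertical$ under pushout then gives $X' \to M' \in \vertical$.

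For the induction step, a \trace admitting a decomposition into more than one \action is the composite in $\Cospan{\Chatf}$ of a shorter \trace and an \action. As composition of cospans is by pushout, the lower leg of the composite factors as $\iota \rond t_{\mathrm{in}}$, where $t_{\mathrm{in}}$ is the lower leg of whichever factor carries the initial \position (in $\vertical$ by the induction hypothesis or by the \action case) and $\iota$ is the pushout injection gluing the two apexes. This $\iota$ is itself a pushout of the \emph{other} factor's lower leg, which again lies in $\vertical$, so $\iota \in \vertical$; closure under composition then yields $\iota \rond t_{\mathrm{in}} \in \vertical$, completing the induction. The step I would be most careful about is the \action case: one must use the factorisation of the interface cone to recognise the \action's lower leg as a pushout of the \emph{seed's} lower leg $t_0$, and not of some unrelated map — this is exactly where the geometry of~\eqref{morphismofinterfacedseeds} is used.
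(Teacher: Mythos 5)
Your proof is correct and is exactly the argument the paper leaves implicit: its entire proof is ``By Lemma~\ref{vertical stable under pushout}'', i.e.\ an appeal to closure of $\vertical$ under isomorphisms, pushouts and composition, which you unpack via the same induction on the decomposition into \actions{} (with the \action{} case handled, as you note, by recognising the lower leg as a pushout of the seed's leg $t_0 \in \vertical_0$ via the pushout pasting lemma). The only cosmetic quibble is that $\vertical_0 \subseteq \vertical$ follows directly from $\vertical = {}^{\bot}(\vertical_0^{\bot})$ rather than from Lemma~\ref{vertical stable under pushout}.
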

\begin{proof}
 By Lemma~\ref{vertical stable under pushout}.
\end{proof}

The main interest of introducing $\D^{1}$ is:
\begin{lem}\label{cartesiandoublecells} The codomain functor $\cod \colon \D^1_H \to \D^1_h$ is a
  fibration in which a double cell~\eqref{generaldoublecell} is
  cartesian iff $l \in \horizontal$ and the upper square is a
  pullback.  \end{lem}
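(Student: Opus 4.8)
The plan is to build the cartesian lifting explicitly by the factorise-then-pull-back recipe already displayed in~\eqref{eq:facto}, and then to read the characterisation off its universal property. Given a vertical morphism $V$, that is, a cospan $Y' \xto{s_V} V \xot{t_V} Y$ with $t_V \in \vertical$ and $\cod(V) = Y$, together with a horizontal map $h \colon X \to Y$, I factor $t_V \rond h$ as $l \rond t_U$ with $t_U \in \vertical$ and $l \in \horizontal$ (Proposition~\ref{prop:facto}), and then form $X' = U \times_V Y'$, the pullback of $s_V$ along $l$, with projections $s_U \colon X' \to U$ and $k \colon X' \to Y'$. This produces a cospan $X' \xto{s_U} U \xot{t_U} X$ and a double cell of the shape~\eqref{generaldoublecell} over $h$, whose middle leg $l$ is in $\horizontal$ and whose upper square is a pullback by construction. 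The very first thing to check is that this datum really lives in $\D^1$: the lower leg $t_U$ lies in $\vertical$ by construction, and the new final object $X'$ is a \position because $X'(c) = U(c) \times_{V(c)} Y'(c)$ is empty as soon as $\dim(c) \ge 2$, since $Y'$ is a \position; the residual finiteness bookkeeping is routine.

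The heart of the argument is to show that \emph{any} cell of this special form, with $l \in \horizontal$ and upper square a pullback, is cartesian. This simultaneously supplies cartesian liftings (hence the fibration) and the `if' half of the characterisation. So let $\beta$ be any cell with left side a cospan $Z' \xto{s_W} W \xot{t_W} Z$ ($t_W \in \vertical$) and right side $V$, and suppose $\cod(\beta) = h \rond g$ for some $g \colon Z \to X$; write $m \colon W \to V$ and $k_\beta \colon Z' \to Y'$ for its middle and upper legs. I must produce a unique filler $\gamma$ over $g$ with $\alpha \rond \gamma = \beta$. The middle leg $n \colon W \to U$ of $\gamma$ is forced by orthogonality: the square with sides $t_W$, $m$, $l$ and $t_U \rond g$ commutes, since $l \rond t_U \rond g = t_V \rond h \rond g = m \rond t_W$, and because $t_W \in \vertical$, $l \in \horizontal$ gives $t_W \ortho l$, there is a unique diagonal $n$ with $n \rond t_W = t_U \rond g$ and $l \rond n = m$. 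The upper leg $k_\gamma \colon Z' \to X'$ is then forced by the pullback: the maps $n \rond s_W$ and $k_\beta$ agree over $V$ (as $l \rond n \rond s_W = m \rond s_W = s_V \rond k_\beta$), so they induce a unique $k_\gamma$ with $s_U \rond k_\gamma = n \rond s_W$ and $k \rond k_\gamma = k_\beta$. These two legs assemble into a valid cell $\gamma$ over $g$, and $\alpha \rond \gamma = \beta$ by the two defining equations; uniqueness of $n$ (orthogonality) and of $k_\gamma$ (pullback) forces $\gamma$ to be unique. Hence $\alpha$ is cartesian.

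For the `only if' direction I would invoke uniqueness of cartesian liftings. Given an arbitrary cartesian cell over $h$ with codomain $V$, I compare it to the canonical lifting $\alpha_0$ built above: both being cartesian liftings of $h$ for $V$, there is a unique vertical isomorphism between their apexes over $\id_X$ carrying one cell to the other. As a cell in the fibre over $X$ this isomorphism has identity bottom leg and isomorphic middle and upper legs, so writing the given $l$ as the composite of $l_0$ with the (iso) middle leg shows $l \in \horizontal$ (isomorphisms lie in $\horizontal$, which is closed under composition by Lemma~\ref{vertical stable under pushout}), and pasting the upper square through the isomorphism shows it is again a pullback. The one step needing genuine care is thus the well-definedness flagged at the outset, namely that factorisation keeps us among \positions with $\vertical$ lower legs, rather than the universal-property juggling, which is purely formal once the factorisation system and the pullback are available.
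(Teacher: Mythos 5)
Your proposal is correct and follows essentially the same route as the paper: construct the candidate lifting by factorising $t_V \rond h$ through $(\vertical,\horizontal)$ and pulling back along $s_V$, obtain the mediating middle arrow by unique orthogonal lifting against $l \in \horizontal$ and the mediating upper arrow by the universal property of the pullback, and handle the converse by comparing an arbitrary cartesian cell with the canonical one. The only differences are presentational — you spell out that the apex of the pullback is again a \position{} and that the cartesianness argument applies verbatim to any cell satisfying the two conditions, both of which the paper leaves implicit.
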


\begin{proof} Let us show that the lifting candidate computed
  in~\eqref{eq:facto} is cartesian. Indeed, consider any double
  cell~\eqref{eq:facto}, and any morphism from some vertical morphism
  $Z' \to W \ot Z$ to $V$ whose bottom component factors through
  $h \colon X \to Y$. By unique lifting in $(\vertical, \horizontal)$,
  we obtain a unique dashed arrow making
\begin{center}
  \diag{%
    Z \& X \& U \\
    \& Y \\
    W \& \& V %
  }{%
    (m-1-1) edge (m-1-2) %
    edge (m-2-2) %
    edge (m-3-1) %
    (m-2-2) edge (m-3-3) %
    (m-1-2) edge (m-1-3) %
    edge (m-2-2) %
    (m-1-3) edge (m-3-3) %
    (m-3-1) edge (m-3-3) %
    (m-3-1) edge[dashed,bend right=20] (m-1-3) %
  }
\end{center}
commute. We finally obtain the desired arrow $Z' \to X'$ by universal
property of pullback.  Conversely, any cartesian double cell, being
isomorphic to such a lifting, satisfies the conditions.
\end{proof}

As a final observation, let us record:
\begin{lem}\label{lem:Dh:H}
  Any morphism in $\Dh$ is automatically in $\horizontal$.
\end{lem}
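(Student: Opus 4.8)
The plan is to argue directly from the definition $\horizontal = \vertical_0^{\bot}$, and to observe that the orthogonality requirement is met \emph{vacuously}. Fix any morphism $f \colon X' \to Y'$ in $\Dh$. By definition of $\Dh$, both $X'$ and $Y'$ are \positions, hence finite presheaves of dimension at most $1$; in particular $Y'(c) = \emptyset$ for every object $c$ of $\C$ of dimension at least $2$. To prove $f \in \horizontal$ it suffices to show $t \ortho f$ for every $t \in \vertical_0$.

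The crucial point is a dimension mismatch. Each $t \in \vertical_0$ is the lower leg $t \colon X \to M$ of a seed, and by inspection of the seeds (the bottom row of Figure~\ref{fig:stringmoves}, together with the fork and synchronisation seeds) its middle object $M$ is a representable presheaf $\yoneda_c$ on an object $c$ of $\C$ of dimension at least $2$ --- namely one of $\forkln$, $\forkrn$, $\tickn$, $\taun$, $\inna$, $\outnab$, $\nun$, $\forkn$, or $\taunamcd$. Since $\id_c \in \C(c,c) = \yoneda_c(c)$, the presheaf $M$ has a nonempty component in a dimension at least $2$.

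The conclusion is then immediate. A commuting square $(u,v) \colon t \to f$ would in particular provide a natural transformation $v \colon M \to Y'$; but $M$ is nonempty in some dimension at least $2$ where $Y'$ is empty, so no such $v$ exists. Hence there are \emph{no} squares from $t$ to $f$, and the requirement that each such square admit a unique filler holds trivially. As this argument is uniform in $t \in \vertical_0$, we conclude $f \in \vertical_0^{\bot} = \horizontal$. I expect no real obstacle here: the only thing to verify with any care is that \emph{every} seed has a middle object of dimension at least $2$, which is settled by the short enumeration above (and note that $1$-injectivity of $f$ plays no role whatsoever).
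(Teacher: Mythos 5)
Your proof is correct and follows essentially the same route as the paper's: the middle object of any seed's lower leg is a representable of dimension $>1$, while the codomain of a morphism in $\Dh$ is a \position{} of dimension $\leq 1$, so no commuting square from $t$ to $f$ can exist and the orthogonality condition holds vacuously. The explicit enumeration of seeds and the remark that $1$-injectivity is irrelevant are fine but add nothing beyond the paper's argument.
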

\begin{proof}
  Indeed, consider any $h\colon X \to Y$ in $\Dh$. There cannot be any
  commuting square
\begin{center}
  \diag{%
    X_0 \& X \\
    U \& Y %
  }{%
    (m-1-1) edge[labela={u}] (m-1-2) %
    edge[labell={t}] (m-2-1) %
    (m-2-1) edge[labelb={v}] (m-2-2) %
    (m-1-2) edge[labelr={h}] (m-2-2) %
  }
\end{center}
with $t \in \vertical_0$, because $U$ is a representable presheaf of
dimension $> 1$ and $Y$ has dimension $\leq 1$, so there cannot be any
$v\colon U \to Y$.
\end{proof}

\subsection{Restriction of seeds}\label{subsec:seeds}
We now show that restrictions of seeds (in the sense of $\D^1$) are
\traces.

\begin{lem}\label{lem:restriction:seeds} 
  Consider any diagram $X' \xto{h} X \xto{t} M$, where $t \in
  \vertical_0$ and $h \in \Dh(X',X)$. Its factorisation $X' \xto{t'}
  M' \xto{h'} M$ with $t' \in \vertical$ and $h' \in \horizontal$ is
  such that $h'$ is 1-injective and the obtained restriction is a
  \trace of length at most 2.  If $X'$ is an \emph{individual}, i.e.,
  a \position of shape $[n]$, then it is isomorphic to some seed.
  If $X'$ is an interface, then the restriction is an
  equivalence (in $\Dv$).
\end{lem}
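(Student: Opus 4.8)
The plan is to reduce the statement to a finite case analysis. There are only finitely many seeds, and for each seed with initial \position $X$, the maps $h\colon X' \to X$ in $\Dh$ are classified up to isomorphism by which \agents of $X$ lie in the image of $h$ --- necessarily a subset, since $h$ is $1$-injective and hence injective on \agents --- together with the behaviour of $h$ on channels. The key tool I would set up first is an explicit description of the factorisation. Since $\horizontal = \vertical_0^{\bot}$ and every map in $\vertical_0$ is the lower leg $X_0 \xto{t_0} M_0$ of a seed, with $M_0$ a representable of positive dimension, a map $h'\colon M' \to M$ belongs to $\horizontal$ exactly when it \emph{reflects higher cells}: whenever the seed-shaped lower boundary of a higher cell of $M$ is already realised in $M'$, that cell itself already lies in $M'$. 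This lets me compute the factorisation $X' \xto{t'} M' \xto{h'} M$ of $g = t \circ h$ by hand: $M'$ is obtained from $X'$ by gluing on exactly those higher cells of $M$ whose lower boundary is already realised in $X'$ over $g$, with $t'$ the resulting structural map and $h'$ the induced extension of $g$.

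With this in hand, I would first dispatch the two extreme cases named in the statement. If $X'$ is an interface, then $g$ hits no \agent, so no higher cell is forced; the factorisation has $M' \iso X'$, $t'$ an isomorphism, and $h'$ the evident map on channels, which lies in $\horizontal$ vacuously because every map in $\vertical_0$ has positive-dimensional domain. The resulting cospan is then a vertical identity up to isomorphism, i.e.\ an equivalence in $\Dv$ of length $0$. If $X'$ is an individual $[n]$, then $h$ selects a single \agent of $X$ of arity $n$: when $X$ is itself that individual (fork, input, output, $\tick$, $\tau$, $\nu$) the only forced cell is the top cell, so $M' \iso M$ and the restriction is the seed; when $M = \taunamcd$ and $X'$ picks out the sender (resp.\ the receiver), the forced cell is exactly the output \action $\outmcd$ (resp.\ the input \action $\inna$), so the restriction is again (isomorphic to) a seed, of length $1$.

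The only remaining, and genuinely delicate, case is the synchronisation $M = \taunamcd$ restricted along an $h$ whose image contains both \agents. Here I expect the subtlety to be that, since $h$ is injective only in positive dimensions, it may send two distinct channels of $X'$ to the single shared channel of $X$. In that event the sender and receiver of $M'$ do not share a channel, the lower boundary of the synchronisation cell fails to be realised in $X'$, and that cell is not forced, even though both the output and the input \actions are; thus $M'$ is a composite of two independent \actions and the restriction is a \trace of length $2$. When $h$ does identify the two channels, the synchronisation cell is forced, $M' \iso M$, and the length is $1$. In each of the above cases one checks directly that $h'$ is $1$-injective, the only identifications it performs being among channels, which is permitted.

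The hard part will be this synchronisation analysis: against the explicit generators and relations defining $\C$, I must verify that the factorisation behaves as claimed --- in particular that, when the shared channel is not hit, the synchronisation cell really is absent from $M'$, and that the apex is then recognisable as an honest composite of the output and input seeds rather than some other presheaf with a leg in $\vertical$. Showing that $t' \in \vertical$ exhibits $M'$ as a composite of \actions, and pinning down the induced final \position (the upper leg of the restriction cospan, obtained via the pullback square of the factorisation), is where most of the effort lies; all other cases become routine once the description of $\horizontal$ above is established.
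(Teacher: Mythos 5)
Your proposal is correct and follows essentially the same route as the paper's proof: a case analysis over seeds in which the factorisation is exhibited explicitly, with the interface case handled by the absence of any square from a map in $\vertical_0$, the individual case yielding a seed, and the synchronisation case splitting on whether the two \agents' channels are identified in $X'$ (giving $\taunimjk + I$ and length $1$, versus $\iotani + \outmjk + I$ and length $2$). The paper's only nontrivial orthogonality check is precisely the one you flag — that no map $[n]\paraij[m] \to \iotani + \outmjk + I$ exists when the shared channel is not hit — so your identification of where the effort lies matches the paper exactly.
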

\begin{proof} We proceed by case analysis. In each case, one has to
  check that $h'$ is 1-injective, that $X'$ individual implies $t'$
  seed, that $X'$ interface implies that $t'$ is an isomorphism, and
  that the upper leg of the obtained cospan is as expected: this is
  routine so we mention it here once and for all. 

  Let us first treat the case where $M = \yoneda_c$, for $c$ not of
  the shape $\taunimjk$. Then, we have $X = [n]$ for some $n$. If
  $id_c \in \im(th)$, then $X' \iso [n] + I$ for some interface $I$
  (since $h$ is 1-injective). Consider the diagram
\begin{center}
  \diag{%
  M + I\& M \\
  {[n]+I} \& {[n].} %
  }{%
(m-1-1) edge[labelu={[id,tk]}] (m-1-2) %
edge[<-,labell={t + id_I}] (m-2-1) %
(m-2-1) edge[labeld={h=[id,k]}] (m-2-2) %
(m-1-2) edge[<-,labelr={t}] (m-2-2) %
  }
\end{center}
The map $t+id_I$ is in $\vertical$ by Lemma~\ref{vertical stable under pushout}, so we just 
have to prove that $[id,tk]$ is in $\vertical_0^{\bot}$, which is a simple verification.

If now $id_c \notin \im(h)$, then $X'$ is an interface, and the relevant factorisation is
\begin{equation}
  \diag{%
    X' \& M \\
    X' \& {[n],} %
  }{%
(m-1-1) edge[labelu={t \rond h}] (m-1-2) %
edge[<-,labell={\id}] (m-2-1) %
(m-2-1) edge[labeld={h}] (m-2-2) %
(m-1-2) edge[<-,labelr={t}] (m-2-2) %
  }
\label{X'isaninterface}
\end{equation}
because $t \circ h$ is easily checked to be in $\vertical_{0}^{\bot}$.

The case of $\taunimjk$ is a bit more complicated.  Here, $t$ is
actually $t_0 = [\rho t , \sender t]$. First of all, if $X'$ is an
interface, then we obtain a factorisation analogous
to~\eqref{X'isaninterface}.  Consider now the case where $\im (h)$
contains both agents of $[n] \paraij [m]$. Let $x$ denote the $n$-ary
one and $y$ denote the $m$-ary one (in $X'$).  If $x \cdot s_i = y
\cdot s_j$, then $X' = ([n] \paraij [m]) + I$ for some interface $I$ and
the required factorisation is easily seen to be
  \begin{center}
    \diag{%
      \taunimjk + I \& \taunimjk \\
      ([n] \paraij [m]) + I \& {[n] \paraij [m].} %
    }{%
(m-1-1) edge[labelu={[\id,t_0 k]}] (m-1-2) %
edge[<-,labell={t_0 + \id_I}] (m-2-1) %
(m-2-1) edge[labeld={h=[\id,k]}] (m-2-2) %
(m-1-2) edge[<-,labelr={t_0}] (m-2-2) %
    }
  \end{center}

  Consider now the case where $X'$ still contains both agents but $x
  \cdot s_i \neq y \cdot s_j$.  Then $X' = [n] + [m] + I$ for some
  interface $I$, and the required factorisation is
\begin{center}
  \diag{%
  \iotani + \outmjk + I \& \taunimjk \\
  {[n] + [m] + I} \& {[n] \paraij [m]\rlap{.}}
  }{%
    (m-1-1) edge[labelu={[\rho,\sender,t_0 k]}] (m-1-2) %
    edge[<-,labell={t + t + \id_I}] (m-2-1) %
    (m-2-1) edge[labeld={[x,y,k]}] (m-2-2) %
    (m-1-2) edge[<-,labelr={t_0}] (m-2-2) %
  }
\end{center}

The only non-trivial point here is to show that $[\rho,\sender,t_0 k]$
is in $\vertical_{0}^{\bot}$, which easily reduces to showing that there is
no commuting square
  \begin{center}
    \diag{%
      [n] \paraij [m] \&  \iotani + \outmjk + I \\
	\taunimjk \&  \taunimjk,
    }{%
(m-1-1) edge[labelu={u}] (m-1-2) %
edge[labell={t_0}] (m-2-1) %
(m-2-1) edge[labeld={v}] (m-2-2) %
(m-1-2) edge[labelr={[\rho,\sender,t_0 k]}] (m-2-2) %
    }
  \end{center}
  which is true because there is no such $u$. 

The cases where $X'$ only contains one agent of $[n] \paraij [m]$
are similar to the latter case: if it contains $x$ then the
factorisation is through $\iotani$, and otherwise it is through
$\outmjk$.  \end{proof}

\subsection{Opliftings}\label{subsec:opliftings}
We now aim at extending Lemma~\ref{lem:restriction:seeds} from seeds
to \actions.  Consider an arbitrary \action $Y \to M \ot X$, obtained
by pushing some seed $Y_0 \to M_0 \ot X_0$ along $I_{X_0} \to
Z$. Consider now a morphism $h\colon X' \to X$ in $\Dh$, along which
we wish to restrict $M$. As explained at the beginning of
Section~\ref{sec:fib}, our strategy is to consider the pullback
\begin{center}
  \Diag{%
    \pbk{m-2-1}{m-1-1}{m-1-2} %
  }{%
    X'_0 \&  X_0 \\
    X' \&  X\rlap{.} %
  }{%
    (m-1-1) edge[labelu={}] (m-1-2) %
    edge[labell={}] (m-2-1) %
    (m-2-1) edge[labeld={}] (m-2-2) %
    (m-1-2) edge[labelr={}] (m-2-2) %
  }
\end{center}
Indeed, the part of $X$ really concerned by $M$ is the image of $X_0$,
so we would like to first restrict $M_0$ to $X'_0$ using
Lemma~\ref{lem:restriction:seeds}, and then extend it to $X'$. The
present section is devoted to constructing the second step, the
extension of $\restr{(M_0)}{X'_0}$ to $X'$, which we call an
\emph{oplifting} of $\restr{(M_0)}{X'_0}$ along $X'_0 \to X'$.

In the general case, we want to compute the oplifting of $Y \to U \ot
X$ along $h \colon X \to X'$. Because $h$ is 1-injective, we
can complete the solid part of
\begin{center}
  \Diag{%
    \pbk{m-2-1}{m-2-2}{m-1-2} %
  }{%
    I_X \& Z \\
    X \& X' %
  }{%
(m-1-1) edge[dashed,labelu={}] (m-1-2) %
edge[into,labell={\subseteq}] (m-2-1) %
(m-2-1) edge[labeld={h}] (m-2-2) %
(m-1-2) edge[dashed,into,labelr={\subseteq}] (m-2-2) %
  }
\end{center}
into a pushout. Indeed, we take $Z(\star) = X'(\star)$, $Z[n] =
X'[n] \setminus \im(h_{[n]})$ for all $n$, and $I_X \to Z$ is uniquely
determined by $h_{\star}$.  In passing, we have:
\begin{lem}\label{lem:unique:pushout}
  This pushout is uniquely determined up to canonical isomorphism by
  $h$ alone.
\end{lem}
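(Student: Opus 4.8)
The plan is to read the statement as a uniqueness-of-pushout-complement claim: we are given the composable pair $I_X \into X \xto{h} X'$ in which $I_X \into X$ is a monomorphism, and we must show that any object $Z$ together with maps $I_X \to Z$ and $Z \to X'$ making the square a pushout is determined, up to canonical isomorphism, by $h$. Conceptually this holds because $\Chat$ is a presheaf topos, where pushout complements of monos are unique; but since \positions are presheaves of dimension at most $1$ and pushouts in $\Chat$ are computed pointwise, I would instead give the elementary componentwise verification, checking that $Z$ is forced on each of the components $\star$ and $[n]$ and that the resulting identifications are natural.

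First I would record the shape of $I_X \into X$: by Definition~\ref{def:interface}, $I_X$ consists solely of $X$'s channels, so this map is the identity on $\star$ and the empty map $\emptyset \to X[n]$ on each $[n]$. Over $\star$, the leg $I_X(\star) \to X(\star)$ is therefore an isomorphism, and pushing out along an isomorphism forces the parallel leg $Z(\star) \to X'(\star)$ to be an isomorphism as well; comparing composites then shows that the top leg $I_X(\star) \to Z(\star)$ must coincide with $h_\star$. This recovers $Z(\star) \iso X'(\star)$. Over each $[n]$, the relevant leg of $I_X$ is the empty map, so the pushout is a coproduct $X'[n] \iso X[n] + Z[n]$ with $X[n] \to X'[n]$ equal to $h_{[n]}$; since $h$ is 1-injective, $h_{[n]}$ is mono, whence $Z[n]$ is forced up to canonical bijection to be the complement $X'[n] \setminus \im(h_{[n]})$, exactly as in the construction preceding the lemma.

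Finally I would assemble these components. The only remaining data on $Z$ are the structure maps $s_i \colon Z[n] \to Z(\star)$; since $Z \to X'$ must be a morphism of presheaves and the identifications above make $Z(\star) \iso X'(\star)$, these are inherited from $X'$ and hence determined. As \positions carry no elements in dimension $>1$, naturality of the pointwise isomorphisms is automatic, so they assemble into a single isomorphism of \positions compatible with both legs $I_X \to Z$ and $Z \to X'$, giving uniqueness up to canonical isomorphism depending on $h$ alone. I expect the one delicate point to be the $\star$ component, where $h$ may identify channels: there I must confirm that the forced top leg is precisely $h_\star$ and that no channels beyond those dictated by $h$ are added or merged — which is exactly what pushing out along the isomorphism $I_X(\star) \to X(\star)$ guarantees.
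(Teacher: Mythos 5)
Your proof is correct and matches the paper's (implicit) reasoning: the paper gives no separate proof, merely exhibiting the explicit formula $Z(\star) = X'(\star)$, $Z[n] = X'[n] \setminus \im(h_{[n]})$ just before the lemma, and your componentwise verification — iso over $\star$ since $I_X(\star) \to X(\star)$ is an iso, coproduct complement over each $[n]$ since $I_X[n] = \emptyset$ — is exactly the justification that makes this canonical. The only remark worth adding is that injectivity of $h_{[n]}$ is in fact forced by the pushout being a coproduct in dimension $1$, so citing $1$-injectivity there is harmless but not strictly needed for uniqueness.
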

Then, we observe that, because $U$ is a \trace, $I_X \to X \to U$
factors through $Y \to U$. 
\begin{defi}
  Let the \emph{oplifting of $U\colon Y \proto X$ along
    $h\colon X \to X'$} be the cospan obtained as
  in~\eqref{morphismofinterfacedseeds} by pushing $U$ along
  $I_X \to Z$.
\end{defi}
All horizontal maps are 1-injective by construction, and we have:
  \begin{lem}
 The obtained cospan $Y' \to U' \ot X'$ is a trace.
  \end{lem}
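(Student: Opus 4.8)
The plan is to prove the statement by induction on the \emph{length} of $U$, i.e.\ on the number of \actions{} occurring in a decomposition $U \iso M_n \vrond \dots \vrond M_1$ in $\Cospan{\Chatf}$. Everything rests on one structural observation: the oplifting is defined by pushing $U$ along $I_X \to Z$, and composition of \traces{} is itself given by pushout, so the two operations should interchange by the pasting lemma for pushouts. Concretely, I would isolate two facts and then assemble them: \textbf{(a)} the oplifting of a single \action{} is again an \action{} (of the same length), and \textbf{(b)} the oplifting of a composite is the composite of the opliftings. The legs of $U'$ are already known to be $1$-injective by construction, so the only thing left to establish is that $U'$ is a finite composite of \actions.

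For the base case, a length-$0$ \trace{} is isomorphic to $\idv_X$, whose middle object is $X$; pushing the identity cospan along $I_X \to Z$ yields exactly the identity cospan on the pushout $X'$, i.e.\ $\idv_{X'}$, which is a length-$0$ \trace. For fact \textbf{(a)}, recall that an \action{} $Y \to M \ot X$ is a pushout of a seed $Y_0 \to M_0 \ot X_0$ along some $I_{X_0} \to Z_0$, so $X$ is the pushout of $X_0 \ot I_{X_0} \to Z_0$. Since \positions{} are glued along channels and $I_{X_0}$ already comprises \emph{all} channels of $X_0$, the channels of $X$ coincide with those of $Z_0$, that is $I_X = I_{Z_0}$. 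Hence the oplifting pushout along $I_X \to Z$ can be pasted onto the pushout building $M$ from $M_0$: the pasting lemma rewrites the result as a single pushout of the \emph{same} seed $M_0$ along the interface map $I_{X_0} \to Z'$, where $Z'$ is the pushout of $Z \ot I_X \to Z_0$ (a \position{} again, being of dimension $\le 1$). Thus the oplifting of $M$ is once more an \action{} over $X'$.

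For fact \textbf{(b)}, write $U = W \vrond M_1$ with $M_1 \colon X_1 \proto X$ the first \action{} and $W \colon Y \proto X_1$ a \trace{} of length $n-1$, so $U$ is obtained by gluing $W$ and $M_1$ along $X_1$. Oplifting $M_1$ along $h$ produces, by \textbf{(a)}, an \action{} $M_1' \colon X_1' \proto X'$ with an induced map $g \colon X_1 \to X_1'$; this $g$ is a pushout of $I_X \to Z$, and since an interface is purely $0$-dimensional the map $I_X \to Z$ is trivially $1$-injective, whence $g$ is $1$-injective by Corollary~\ref{cor:cube}. The oplifting of $W$ along $g$ is therefore defined, and is a \trace{} $W' \colon Y' \proto X_1'$ by the induction hypothesis. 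Because $I_X \to U$ factors through the $M_1$-part of $U$ (channels of $X$ are preserved by Lemma~\ref{lem:s:star:iso}, so the relevant interface maps $I_X \to M_1$, $I_X \to X_1$ exist), the oplifting pushout along $I_X \to Z$ meets the composition pushout only inside $M_1$. Pasting the two pushouts then exhibits the oplifting of $U$ as the composite $W' \vrond M_1'$, which is a \trace. This closes the induction.

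The hard part will be the interface bookkeeping in these two pasting steps: one must verify that the passive context along which $W$ is oplifted — the part of $X_1'$ lying outside the image of $g$ — is precisely the datum $Z$ transported to $X_1$, so that the iterated pushouts coincide on the nose with the single pushout defining the oplifting of $U$. Here I would lean on Lemma~\ref{lem:unique:pushout}, which pins the relevant pushouts down up to canonical isomorphism from $h$ (resp.\ $g$) alone, and on Lemmas~\ref{lem:s:t:monos} and~\ref{lem:s:star:iso} to guarantee that the legs and channel maps behave well enough for the pasting to apply.
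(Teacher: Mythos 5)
Your proposal is correct and follows essentially the same route as the paper: first show opliftings of single \actions{} are \actions{} by pasting the oplifting pushout onto the seed-defining pushout (using that the interface of the initial \position{} factors through the passive part $Z_0$), then induct on length by oplifting the first \action{}, oplifting the remainder along the induced map between intermediate \positions{}, and pasting. The one step you defer as ``the hard part'' --- that the canonical passive context for the second oplifting agrees with the original datum --- is exactly what the paper resolves by forming an auxiliary pushout ($Z_2$, gluing the passive context with the intermediate interface) and invoking the pushout lemma together with the uniqueness of the oplifting pushout.
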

\begin{proof}
  We start by showing that $U'$ is an action if $U$ is. So assume $U$
  is obtained by pushing a seed $Y_0 \to M_0 \ot X_0$ along some
  $I_{X_0} \to Z_0$. Then, $Z_0 \to X$ is surjective on $\star$
  because $I_{X_0} \to X_0$ is and epis are stable under pushout in
  presheaf categories.  Thus, $I_X \to X$ factors through $Z_0 \to
  X$. Let $Z''$ denote the pushout
  \begin{center}
    \Diag{%
      \pbk{m-2-1}{m-2-2}{m-1-2} %
    }{%
      I_X \& Z_0 \\
      Z \& Z'' \rlap{.} %
    }{%
      (m-1-1) edge[labela={}] (m-1-2) %
      edge[labell={}] (m-2-1) %
      (m-2-1) edge[labelb={}] (m-2-2) %
      (m-1-2) edge[labelr={}] (m-2-2) %
    }
  \end{center}
  By the pushout lemma, $Y' \to U' \ot X'$ is isomorphic to the cospan
  obtained by pushing $Y \to U \ot X$ along $Z_0 \to Z''$. By the
  pushout lemma again, it is isomorphic to the cospan obtained by
  pushing $Y_0 \to M_0 \ot X_0$ along $I_{X_0} \to Z_0 \to Z''$.
  Thus, it is indeed an action.

  We now prove the general case by induction on $\card{U}$.  This is
  trivial if $U$ is isomorphic to an identity.  If now $U$ is a
  composite $Z \xproto{V} Y \xproto{M} X$, then we compute the
  oplifting of $M$ along $h$ to obtain a double cell, say
\begin{center}
  \diag{%
    Y \& Y' \\
    X \& X'\rlap{,} %
  }{%
    (m-1-1) edge[labela={h'}] (m-1-2) %
    edge[pro,twol={M}] (m-2-1) %
    (m-2-1) edge[labelb={h}] (m-2-2) %
    (m-1-2) edge[pro,twor={M'}] (m-2-2) %
    (l) edge[cell=.3, labela={\alpha_M}] (r) %
  }%
\end{center}
by pushing $M$ along some morphism $I_X \to Z_1$ making
  \begin{center}
  \Diag{%
    \pbk{m-2-1}{m-2-2}{m-1-2} %
  }{%
    I_X \& Z_1 \\
    X \& X' %
  }{%
    (m-1-1) edge[labela={}] (m-1-2) %
    edge[labell={}] (m-2-1) %
    (m-2-1) edge[labelb={h}] (m-2-2) %
    (m-1-2) edge[labelr={}] (m-2-2) %
  }%
\end{center}
into a pushout.
  
The crucial insight is then that by computing the following pushout
$Z_2$ and applying its universal property, we obtain a diagram
\begin{equation}
  \Diag{%
    \pbk{m-2-1}{m-2-2}{m-1-2} %
  }{%
    I_X \& I_Y \& Y \\
    Z_1 \& Z_2 \& Y' %
  }{%
    (m-1-1) edge[labela={}] (m-1-2) %
    edge[labell={}] (m-2-1) %
    (m-2-1) edge[labelb={}] (m-2-2) %
    (m-1-2) edge[labelr={}] (m-2-2) %
    (m-1-2) edge (m-1-3) %
    edge (m-2-2) %
    (m-1-3) edge (m-2-3) %
    (m-2-2) edge (m-2-3) %
  }%
\label{eq:twopushouts}
\end{equation}
whose exterior is a pushout by construction. So by the pushout lemma
the right-hand square is again a pushout, which by
Lemma~\ref{lem:unique:pushout} is the unique pushout along which to
push $V$ to compute the oplifting of $V$ along $h'$, say
\begin{center}
  \diag{%
    Z \& Z' \\
    Y \& Y'\rlap{.} %
  }{%
    (m-1-1) edge[labela={h''}] (m-1-2) %
    edge[pro,twol={V}] (m-2-1) %
    (m-2-1) edge[labelb={h'}] (m-2-2) %
    (m-1-2) edge[pro,twor={V'}] (m-2-2) %
    (l) edge[cell=.3, labela={\alpha_V}] (r) %
  }%
\end{center}
By induction hypothesis, $V'$ is again a \trace.  But by the pushout
lemma again, $\alpha_V$ is also what we obtain by pushing $V$ along
the exterior rectangle of~\eqref{eq:twopushouts}.  A bunch of
applications of the pushout lemma finally yields that
$\alpha_M \vrond \alpha_V$ is the oplifting of $M \vrond V$ along $h$,
so $M' \vrond V'$ is a \trace by construction.
\end{proof}

Although opliftings have an opcartesian flavour, they are in fact not
opcartesian in general, and moreover opcartesian liftings do not exist
in general.

\begin{example}
  Consider the oplifting of the seed $\iota_{1,1}$ along $[1] \into
  [1] \paraofij{1}{1} [1]$, say $\iota_{1,1} \paraofij{1}{1} [1]$,
  whose final position is $[2] \paraofij{1}{1} [1]$.
  To see that it is not opcartesian, consider the diagram
  \begin{mathpar}
\Diag(.25,.8){%
    }{%
      \& \& \& |(X'')| [2]\paraofij{1,2}{1,1}[1] \\ \\ 
      |(X)| [2] \& \& |(X')| [2]\paraofij{1}{1} [1] \\
      \& \& \& |(U'')| \tauuof{1,1,1,1,1} \\ \\ 
      |(U)| \iota_{1,1} \& \& |(U')| \iota_{1,1} \paraofij{1}{1} [1] \\ 
      \& \& \& |(Y'')| [1]\paraofij{1}{1}[1] \\ \\ 
      |(Y)| [1] \& \& |(Y')| [1]\paraofij{1}{1} [1]\rlap{.}  %
    }{%
      (U) edge[labelb={}] (U') %
      (Y) edge[labelb={}] (Y') %
      (X) edge[labelb={}] (X') %
      (U) edge[bend left=10] (U'') %
      (Y) edge[bend left=10] (Y'') %
      (X) edge[bend left=10] (X'') %
      (Y') edge[labelbl={}] (Y'') %
      (U) edge[<-] (Y) edge[<-] (X) %
      (U') edge[fore,<-] (Y') edge[fore,<-] (X') %
      (U'') edge[<-] (Y'') edge[<-] (X'') %
      (U') edge[dashed] (U'') %
      (X') edge[dashed] (X'') %
    } %
\end{mathpar}
In this case, opcartesianness would mean finding dashed arrows
rendering the diagram commutative.  There is indeed a unique arrow
$\iota_{1,1} \paraofij{1}{1} [1] \to \tauuof{1,1,1,1,1}$ making the
corresponding square commute, but then no arrow $[2]\paraofij{1}{1}
[1] \to [2]\paraofij{1,2}{1,1}[1]$ fits. Indeed, there is only one
arrow of the given type, and it does not make the relevant square
\begin{center}
  \diag{%
{[2]\paraofij{1}{1} [1]} \& {[2]\paraofij{1,2}{1,1}[1]} \\
{\iota_{1,1} \paraofij{1}{1} [1]} \& {\tauuof{1,1,1,1,1}} %
  }{%
    (m-1-1) edge[labela={}] (m-1-2) %
    edge[labell={}] (m-2-1) %
    (m-2-1) edge[labelb={}] (m-2-2) %
    (m-1-2) edge[labelr={}] (m-2-2) %
  }
\end{center}
commute, because one side (down; right) of the square maps the unary
player to $\rho \rond t$, while the other (right; down) maps it to
$\rho \rond s$.  This shows that $\iota_{1,1} \paraofij{1}{1} [1]$ is
not an opcartesian lifting of $\iota_{1,1}$. But in fact,
$\iota_{1,1} \paraofij{1}{1} [1]$ and $\tauuof{1,1,1,1,1}$ are two
liftings of $\iota_{1,1}$ along $[1] \into [1] \paraofij{1}{1} [1]$.
Thus, any opcartesian lifting should have a final position mapping
both to $[2]\paraofij{1}{1}[1]$ and $[2]\paraofij{1,2}{1,1}[1]$, hence
containing just one, binary player: no \trace can meet this
requirement.
\end{example}

However, even though they are not opcartesian, opliftings are in fact
cartesian. Let us now show this, starting with a few preliminary
results.

\begin{defi}
  Let $\Pl(X) = \sum_{n \in \Nat} X[n]$ denote the set of agents of
  any \position~$X$.
\end{defi}

\begin{lem}\label{lem:connected}
  For any seed  $Y \to C \ot X$, the morphism 
  $$\sum_{(n,x) \in \Pl (X)} [n] \xto{[x]_{(n,x)}} X$$
  is epi.
\end{lem}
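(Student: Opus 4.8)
The plan is to unwind the definitions and argue combinatorially. A seed $Y \to C \ot X$ is one of the basic cospans listed (fork, input, output, synchronisation, $\tick$, $\tau$, $\nu$), and in each case $X$ is either a single individual $[n]$ or a glued pair of individuals $[m] \paraij [n]$ (as in the synchronisation seed $[m]\paraofij{c}{a}[n]$). The claim is that the canonical map $\sum_{(n,x) \in \Pl(X)} [n] \xto{[x]_{(n,x)}} X$, induced by the family of all agents $x\colon [n] \to X$ of $X$, is epi in $\Chat$.

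First I would recall that epimorphisms in a presheaf category are computed pointwise, so it suffices to check surjectivity of the coproduct map on each object $c$ of $\C$. Since $X$ is a \position, it is empty except on $\star$ and on the $[n]$'s, so there are only two kinds of components to check. On each $[m]$, surjectivity is immediate: every element of $X[m]$ \emph{is} an agent $x\colon [m] \to X$, and the corresponding summand $[m]$ in the coproduct contains $\id_{[m]}$, which $x$ sends precisely to that element. On $\star$, I must show that every channel of $X$ lies in the image of some agent's channels, i.e.\ that for each $c \in X(\star)$ there is an agent $(n,x)$ and an $i \in n$ with $x \cdot s_i = c$. This is exactly where the \textbf{case analysis on seeds} comes in: the only obstacle is ruling out `orphan' channels belonging to no agent.

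The key step, therefore, is to verify that in every seed the initial \position $X$ has no isolated channel. I would simply inspect the list of seeds: in each one, $X$ is a single $n$-ary agent $[n]$ (fork, output, $\tick$, $\tau$, $\nu$, input) or a pair of agents sharing channels ($[m]\paraofij{c}{a}[n]$ for synchronisation), and in all cases every channel of $X$ is one of the $s_i$-channels of one of these agents. For the glued \positions $[m]\paraij[n]$ this holds because, by the definition of the notation $[m]\paraofij{\vec c}{\vec a}[n]$, the channel set of the \position is the (possibly quotiented) union of the two agents' channels, so each channel is $x\cdot s_{a_k}$ or $y\cdot s_{c_k}$ for one of the two agents. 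The one point that deserves a remark is the received channel in an input seed $\iota_{n,a}$: as noted in Example~\ref{ex:interface}, the received channel is \emph{not} part of the initial \position $X = [n]$, so it raises no difficulty here, and indeed $X$ consists only of the single $n$-ary agent, whose $n$ channels are exactly $X(\star)$.

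Thus the main obstacle is purely bookkeeping: confirming, seed by seed, that $X$ has no channel unconnected to an agent; once that is done, surjectivity on $\star$ follows, surjectivity on the $[m]$'s is automatic from representability, and pointwise surjectivity on both kinds of components yields that the map is epi. No higher-dimensional components of $X$ occur, so no further cases arise, and the result holds by inspection.
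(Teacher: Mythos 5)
Your proof is correct and follows exactly the paper's argument: the paper's proof reads ``This is trivial except in dimension 0, where it holds by case inspection,'' which is precisely your split into the automatic surjectivity on the $[m]$-components (via representability) and the seed-by-seed check that no channel of the initial \position{} is orphaned. Your expanded treatment, including the remark that the received channel of an input seed lives only in the final \position{} and so causes no trouble, just makes explicit what the paper leaves to the reader.
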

\begin{proof}
  This is trivial except in dimension 0, where it holds by case
  inspection.
\end{proof}

\begin{cor}\label{cor:connected}
  For all arrows as in
  \begin{center}
    \diag{%
      X \& U \\
      C \& U' %
    }{%
      (m-1-1) edge[bend left,labela={h}] (m-1-2) %
      edge[bend right,labela={h'}] (m-1-2) %
      edge[labell={t}] (m-2-1) %
      (m-2-1) edge[labelb={g}] (m-2-2) %
      (m-1-2) edge[labelr={f}] (m-2-2) %
    }
  \end{center}
  in $\Chat$ such that $f$ is 1-injective, $t \in \vertical_0$, and
  $fh = gt = fh'$, we have $h=h'$.
\end{cor}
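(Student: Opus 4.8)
The plan is to prove the two morphisms $h,h' \colon X \to U$ equal componentwise, splitting the verification into strictly positive dimensions, where $1$-injectivity of $f$ does all the work, and dimension~$0$ (channels), where I will need the connectedness supplied by Lemma~\ref{lem:connected}. First I would discard the part of the hypothesis that is not needed: from $fh = gt = fh'$ only the outer equality $fh = fh'$ gets used. Since $f$ is $1$-injective, each component $f_{[n]} \colon U[n] \to U'[n]$ with $n \geq 1$ is injective; applying this to $f_{[n]}h_{[n]} = f_{[n]}h'_{[n]}$ and cancelling the mono gives $h_{[n]} = h'_{[n]}$ for every $n \geq 1$. Thus $h$ and $h'$ already agree on all \agents of $X$, and the whole problem reduces to showing they agree in dimension~$0$, i.e.\ on the channels $X(\star)$.

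The key step, and the place where the hypothesis $t \in \vertical_0$ is actually consumed, is dimension~$0$. Because $t$ is the $t$-leg of a seed, its source $X$ is the initial \position of a seed, so Lemma~\ref{lem:connected} applies verbatim and tells us that
\[
  \sum_{(n,x) \in \Pl(X)} [n] \xto{[x]_{(n,x)}} X
\]
is epi, hence in particular surjective on $\star$. Concretely, every channel $c \in X(\star)$ can be written as $c = x \cdot s_i$ for some \agent $x \colon [n] \to X$ and some $i \in n$. This is exactly the connectedness fact that a general \position may fail (it can carry channels attached to no \agent), which is why some input beyond $1$-injectivity of $f$ is indispensable here.

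Combining the two steps closes the argument. For a channel $c = x \cdot s_i$, naturality of $h$ and $h'$ together with the already-established equality $h_{[n]} = h'_{[n]}$ yields
\[
  h(c) = h(x) \cdot s_i = h'(x) \cdot s_i = h'(c),
\]
so $h$ and $h'$ coincide on the image of the epimorphism above, that is on all of $X(\star)$. Together with their agreement in positive dimensions this gives $h = h'$. I expect the only real subtlety to be pinpointing why $t \in \vertical_0$ matters: it is not the map $t$ itself that is used, but merely the fact that it identifies $X$ as a seed's initial \position, thereby licensing Lemma~\ref{lem:connected}; the data $g$, $C$ and the middle equality with $gt$ are otherwise inert in the proof.
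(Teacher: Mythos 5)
Your proof is correct and follows essentially the same route as the paper's: both rest on Lemma~\ref{lem:connected} (every channel of the seed's initial position $X$ is attached to some agent) together with $1$-injectivity of $f$ in positive dimension, with $g$, $C$ and $t$ serving only to put $X$ in the scope of that lemma. The paper merely packages your two steps into one line by precomposing with the epimorphism $[x]_{(n,x)\in\Pl(X)}$ and cancelling it, whereas you unfold this into a componentwise check on agents and then on channels via naturality; the substance is identical.
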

\begin{proof}
  We construct the diagram
  \begin{center}
    \diag(1,1){%
      \sum_{(n,x) \in \Pl(X)} [n] \& \& U \\
      X \& \& U \\
      C \& \& U'\rlap{,} %
    }{%
      (m-1-1) edge[bend left=15,labela={h e}] (m-1-3) %
      edge[bend right=15,labelb={h' e}] (m-1-3) %
      edge[onto,labell={e}] (m-2-1) %
      (m-1-3) edge[identity] (m-2-3) %
      (m-2-1) edge[bend left=15,labela={h}] (m-2-3) %
      edge[bend right=15,labelb={h'}] (m-2-3) %
      edge[labell={t}] (m-3-1) %
      (m-2-3) edge[labelr={f}] (m-3-3) %
      (m-3-1) edge[labelb={g}] (m-3-3) %
    }
  \end{center}
  and observe that $h' e = h e$ by 1-injectivity of $f$, hence $h =
  h'$ because $e$ is epi by Lemma~\ref{lem:connected}.
\end{proof}

\begin{lem}\label{opcartesians are cartesian}
Opliftings of \traces are cartesian. 
\end{lem}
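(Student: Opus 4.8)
The plan is to show that the oplifting cell is cartesian in $\D^1$ by verifying the two conditions of Lemma~\ref{cartesiandoublecells}: that its middle component $l$ lies in $\horizontal$, and that its upper square is a pullback. By definition the oplifting of a \trace{} $U\colon Y \proto X$ along $h\colon X \to X'$ is obtained by pushing the cospan $Y \to U \ot X$ along $I_X \to Z$, so the middle map is the pushout injection $l\colon U \to U' = U +_{I_X} Z$ and the upper square is the one with corners $Y,Y',U,U'$. First I would record the structural fact that drives everything: since $Z$ is a \position{} (dimension at most $1$) and $I_X$ an interface (dimension $0$), the pushout is computed pointwise, so $l$ is a bijection on every object of $\C$ of dimension at least $2$, is the coproduct injection $U[n] \to U[n]+Z[n]$ on each $[n]$, and is a pushout of sets on $\star$. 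In particular $l$ is $1$-injective.

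For $l \in \horizontal = \vertical_0^{\bot}$ I would take any $t\colon D \to C$ in $\vertical_0$ and any commuting square $(u,v)\colon t \to l$, and construct the unique filler. Every such $C$ is a representable $\yoneda_{c_0}$ with $c_0$ of dimension at least $2$, so $v$ is determined by $v(\id_{c_0}) \in U'(c_0)$; as $l_{c_0}$ is a bijection, this element lifts to a unique $e \in U(c_0)$, which by the Yoneda lemma yields $w\colon C \to U$ with $l \rond w = v$. The remaining triangle $w \rond t = u$ then follows because $l \rond (w \rond t) = v \rond t = l \rond u$ and $l$ is injective on maps out of the \position{} $D$: indeed $l$ is injective on agents, and any map out of $D$ is determined by its values on agents, since all of $D$'s channels are attached to agents (as for every seed domain, by Lemma~\ref{lem:connected}). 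Uniqueness of $w$ is immediate from $l_{c_0}$ being a bijection.

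For the upper square I would check the pullback property pointwise in $\Set$. On objects of dimension at least $2$ the two vertical legs and the middle map are identities, so the square is trivially a pullback. On each $[n]$ the square pits the coproduct injection $U[n] \to U[n]+Z[n]$ against $s' = s + \id_{Z[n]}\colon Y[n]+Z[n] \to U[n]+Z[n]$, whose pullback is exactly $Y[n]$, matching the comparison map $y \mapsto (s(y), \injl(y))$. On $\star$ I would invoke Lemma~\ref{lem:s:star:iso}: since $s_\star$ is an isomorphism, so is the induced $s'_\star$ on the pushout, and a square two of whose parallel sides are isomorphisms is automatically a pullback. Hence the square is a pullback in $\Chat$, and Lemma~\ref{cartesiandoublecells} gives the result.

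The main obstacle is the $\star$ (channel) dimension: there $h$, and with it $I_X \to Z$, need not be injective, so neither $l_\star$ nor the comparison map can be controlled by monicity; the point is to sidestep this using that no channel is forgotten along a \trace{} (Lemma~\ref{lem:s:star:iso}), which forces $s_\star$ and hence $s'_\star$ to be isomorphisms and makes the channel square degenerate into a pullback regardless of how channels are identified. A secondary point worth stressing is that this argument applies directly to an arbitrary \trace{} $U$, with no induction on its length, since cartesianness is read off from the single pushout defining the oplifting rather than from a decomposition into \actions{}.
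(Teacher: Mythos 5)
Your proof is correct and follows essentially the same route as the paper's: reduce to the characterisation of cartesian cells in Lemma~\ref{cartesiandoublecells}, obtain the lifting against $\vertical_0$ from the fact that $U$ and $U'$ have the same elements in dimensions $>1$, and close the remaining triangle by the agents-are-epi argument (your inline reasoning is exactly Corollary~\ref{cor:connected} with $h=u$, $h'=wt$). The only divergence is that you verify the pullback condition by a direct pointwise computation on the defining pushout (using Lemma~\ref{lem:s:star:iso} at $\star$), where the paper simply cites the general Lemma~\ref{upper squares all pullbacks}; both are valid.
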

\begin{proof}
  Consider any oplifting $U \xto{f} U'$ of $Y \to U \ot X$ along, say
  $X \to X'$.  By Lemma~\ref{cartesiandoublecells}, it is enough to
  show that the middle arrow $U \to U'$ is in $\horizontal$ and that
  its upper square is a pullback.  The latter follows from
  Lemma~\ref{upper squares all pullbacks}.  So we just have to show
  that any square

\begin{center}
  \diag{%
  Z \&  U \\
  C \&  U' %
  }{%
(m-1-1) edge[labelu={u}] (m-1-2) %
edge[labell={t}] (m-2-1) %
(m-2-1) edge[labeld={v}] (m-2-2) %
(m-1-2) edge[labelr={}] (m-2-2) %
  }
\end{center}
with $t \in \vertical_{0}$ admits a unique lifting $C \to U$. By the
Yoneda lemma, $v$ amounts to an element of $U'(C)$ of dimension $> 1$,
but by construction $U$ and $U'$ have exactly the same such
elements. This yields a candidate lifting, say $k$, which is unique by
1-injectivity and makes the bottom triangle commute by
construction. The top one finally commutes by
Corollary~\ref{cor:connected} with $h=u$ and $h'=kt$.
\end{proof}

\subsection{Restriction of \actions}\label{subsec:actions}
Let us now extend Lemma~\ref{lem:restriction:seeds} from seeds to
\actions, following the strategy sketched at the beginning of
Section~\ref{sec:fib}.

\begin{lem}\label{fibration on actions} \
  For any \action $Y \xto{s} M \xot{t} X$ and $h \in \Dh(X',X)$, the
  factorisation $X' \xto{t'} P' \xto{h'} M$ of $t h$ with $t' \in
  \vertical$ and $h' \in \horizontal$ is such that $h'$ is 1-injective
  and the obtained restriction is a \trace of length at most $2$.  If
  $X'$ is an individual then it is either a seed or an equivalence; if
  $X'$ is an interface then it is an equivalence.  \end{lem}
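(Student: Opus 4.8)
The plan is to realise the restriction of an action as the oplifting of the restriction of its generating seed, and then to identify this with the cartesian lift supplied by the factorisation system. Write $M$ as the pushout of its generating seed $M_0\colon Y_0 \proto X_0$ along some $I_{X_0}\to Z$, so that $M$ is the oplifting of $M_0$ along the $1$-injective pushout injection $i\colon X_0 \to X$; by Lemma~\ref{opcartesians are cartesian} the resulting cell $M_0 \to M$ is cartesian. Now form the pullback $X'_0 = X'\times_X X_0$, giving $h_0\colon X'_0 \to X_0$ and $j\colon X'_0 \to X'$ (both $1$-injective, as pullbacks of $1$-injective maps) with $i\rond h_0 = h\rond j$. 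Restrict the seed $M_0$ along $h_0$ by Lemma~\ref{lem:restriction:seeds}: this yields a trace $P'_0$ of length at most $2$ with $1$-injective $h'_0\colon P'_0 \to M_0$, and since $M_0$ and $P'_0$ are traces the cell $P'_0 \to M_0$ is cartesian by Lemma~\ref{cartesiandoublecells} (its middle map lies in $\horizontal$ and, by Lemma~\ref{upper squares all pullbacks}, its upper square is a pullback).

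I would then oplift $P'_0$ along $j$. Because opliftings of traces are traces and the inductive construction preserves the number of actions, this produces a trace $P'$ over $X'$ of length at most $2$, together with a cartesian cell $P'_0 \to P'$. This $P'$ is the candidate restriction, and it remains to carry out step~(5) of the strategy: exhibit a cartesian cell $\delta\colon P' \to M$ over $h$.

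The comparison map $d\colon P' \to M$ is induced by functoriality of the two pushouts $P' = P'_0 +_{I_{X'_0}} Z'$ and $M = M_0 +_{I_{X_0}} Z$, out of $h'_0\colon P'_0 \to M_0$ and the induced map $Z'\to Z$ of passive complements (these are compatible over $I_{X'_0}\to I_{X_0}$ precisely because $X'_0 = X'\times_X X_0$). Its $1$-injectivity then follows from Corollary~\ref{cor:cube} applied to this cube of pushouts, and its upper square is a pullback by Lemma~\ref{upper squares all pullbacks}. By Lemma~\ref{cartesiandoublecells}, cartesianness of $\delta$ therefore reduces to the single assertion $d\in\horizontal$.

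I expect this last point to be the main obstacle, and it cannot be obtained by cancellation: $\horizontal$ enjoys only left cancellation (Lemma~\ref{vertical stable under pushout}), whereas here we know the first factor $P'_0\to P'$ and the composite $P'_0\to M_0\to M$ rather than the second factor $d$. Instead I would verify orthogonality directly. Given $t\colon W_0 \to S$ in $\vertical_0$ and a square into $d$, the top-dimensional element of the seed $S$ forces the map $S\to M$ to factor through $M_0\to M$ (as $Z$ has dimension $\le 1$, $M_0\to M$ is $1$-injective, and every cell of a seed is a face of its top element), and symmetrically the map $W_0 \to P'$ factors through $P'_0\to P'$ (its agents must land in the active part $M_0$, hence in the image of $P'_0 \to P'$); the required unique filler then comes from orthogonality of $h'_0\colon P'_0\to M_0$, which lies in $\horizontal$. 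Once $\delta$ is cartesian we get $P' \iso \restr{M}{h}$, a trace of length at most $2$ with $1$-injective $h'=d$. The final clauses follow by inspecting the two cases: if $X'$ is an individual then either its agent lies in the active part, whence $X'_0 \iso X'$, $j$ is invertible and $P'=P'_0$ is a seed or an equivalence by Lemma~\ref{lem:restriction:seeds}, or it lies in the passive part, whence $X'_0$ is an interface, $P'_0$ is an equivalence, and so is its oplifting $P'$; and if $X'$ is an interface then $X'_0$ is an interface, $P'_0$ is an equivalence by Lemma~\ref{lem:restriction:seeds}, and hence so is $P'$.
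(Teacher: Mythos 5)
Your overall architecture coincides with the paper's: pull back along $h$ to get $X'_0 \to X_0$, restrict the generating seed by Lemma~\ref{lem:restriction:seeds}, oplift the result along $X'_0 \to X'$, induce the comparison map $f\colon P' \to M$ by the universal property of the pushout cube, get $1$-injectivity from Corollary~\ref{cor:cube} and the pullback condition from Lemma~\ref{upper squares all pullbacks}, and reduce everything to $f \in \horizontal$. Up to that point the proposal is sound.

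The gap is in the orthogonality check, precisely at the step you flag as the main obstacle. Given a square from $t\colon W_0 \to S$ in $\vertical_0$ to $f$, you assert that the top map $u\colon W_0 \to P'$ factors through $P'_0 \to P'$ ``because its agents land in the active part''. That justification only handles agents. The map $P'_0 \to P'$ is a pushout along an interface, hence injective in dimensions $\geq 1$ but generally \emph{not} injective on channels; so even once the agents of $W_0$ are known to land in the image of $P'_0$, a factorisation of $u$ as a natural transformation requires choosing compatible preimages of the channels, and when $W_0 = [n]\paraij[m]$ the shared-channel equation $x\cdot s_i = y\cdot s_j$ need not be liftable to $P'_0$. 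This is exactly where the difficulty of the lemma is concentrated: when $M_0 = \taunimjk$ and $P'_0$ has the shape $\iotani + J$, $\outmjk + J$ or $\iotani + \outmjk + J$, there is \emph{no} map $W_0 \to P'_0$ of the required form at all, and your argument would break down. The lemma survives only because in those configurations one can show that no square exists in the first place (one agent is missing in $P'$, or the two relevant channels remain distinct in $P'$ because $X'_0$ is a pullback) --- this is the case analysis occupying the second half of the paper's proof, and it cannot be omitted. Even in the unproblematic cases, both the well-definedness of your factorisation on channels and the commutativity of the resulting square $t \to f_0$ (you only know the two composites agree after postcomposing with the non-monic $M_0 \to M$) need the epimorphism argument of Lemma~\ref{lem:connected} / Corollary~\ref{cor:connected}; the paper sidesteps the factorisation of $u$ altogether by building the candidate filler from the \emph{bottom} of the square via Yoneda and then using Corollary~\ref{cor:connected} once to close the top triangle.

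A secondary, smaller imprecision: in the final clauses, when $X'$ is an individual $[k]$ whose agent lies in the active part, the pullback $X'_0 = X' \times_X X_0$ need not be isomorphic to $X'$ --- it can acquire extra channels when $X_0 \to X$ identifies channels, so $X'_0$ has the shape $[k] + I$ rather than $[k]$, and $j$ is not invertible. The conclusion still holds, but via the $[n]+I$ case of Lemma~\ref{lem:restriction:seeds} followed by the oplifting, not by the shortcut you describe.
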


\begin{proof}
  Consider any \action $Y \to M \ot X$ obtained by pushing the
  following seed-with-interface along $I \to Z$:
  \begin{center}
        \diag(.3,.6){%
    \&|(I)| I \\
   |(X)| Y_0 \&|(M)| M_0 \&|(Y)| X_0\rlap{.} %
  }{%
    (I) edge (X) edge (M) edge (Y) %
    (X) edge (M) %
    (Y) edge (M) %
  }
  \end{center}
  By Lemma~\ref{opcartesians are cartesian}, the morphism $M_0 \to M$
  is cartesian.

Consider the pullback of the bottom square below
along $h \colon X' \to X$ to obtain the top square
\begin{center}
    \Diag(.4,.6){%
      \pbk{A}{C}{B} %
      \pbk{A'}{C'}{B'} %
      \path[->,draw] %
      (I) edge[fore] (A) %
      ; %
    }{%
      |(I)| I' \& \& \&|(B)| Z' \\
      \&|(A)| X'_0 \& \& \&|(C)| X' \\
      \& \& \& \& \& \\
      |(I')| I \& \& \&|(B')| Z \\
      \&|(A')| X_0 \& \& \&|(C')| X\rlap{,} \\
        \& \& \& \& \& %
    }{%
      (I) edge (I') edge (B) %
      (I') edge (A') edge (B') %
      (B') edge (C') %
      (B) edge (C) %
      edge (B') (A') edge[fore] (C') %
      (A) edge[fore] (C) %
      edge[fore] (A') %
      (C) edge[labelr={f}] (C') %
    }
  \end{center}
which, because presheaf categories are \emph{adhesive}~\cite{DBLP:conf/fossacs/LackS04}
and $I \to X_0$ is mono, is again a pushout.

Furthermore, consider the front face, which is a pullback in $\Dh$.
By Lemma~\ref{lem:restriction:seeds}, restricting $M_0$ along $X'_0
\to X_0$ yields a \trace,
say $Y'_0
\to P'_0 \ot X'_0$ with a morphism to $Y_0 \to M_0 \ot X_0$ in $\DH$.
Since it is a \trace, $I' \into X'_0 \into M'_0$ factors through
$Y'_0 \into M'_0$.
Pushing $Y'_0 \to P'_0 \ot X'_0$ along $I' \to Z'$, we obtain a \trace
$Y' \to P' \ot X'$ (we may choose $X'$ as initial position because the
top square above is a pushout) with a morphism from $Y'_0 \to P'_0 \ot
X'_0$ in $\DH$, which is an oplifting, hence cartesian (by
Lemma~\ref{opcartesians are cartesian}).  Then, by universal property
of pushout we obtain a unique morphism $f \colon P' \to M$ making the
following cube commute:
\begin{center}
    \Diag(.4,.6){%
      \pbk{A}{C}{B} %
      \pbk{I'}{I}{B} %
      \pbk{A'}{C'}{B'} %
      \path[->,draw] %
      (I) edge[fore] (A) %
      ; %
    }{%
      |(I)| I' \& \& \&|(B)| Z' \\
      \&|(A)| P'_0 \& \& \&|(C)| P' \\
      \& \& \& \& \& \\
      |(I')| I \& \& \&|(B')| Z \\
      \&|(A')| M_0 \& \& \&|(C')| M\rlap{.} \\
        \& \& \& \& \& %
    }{%
      (I) edge (I') edge (B) %
      (I') edge (A') edge (B') %
      (B') edge (C') %
      (B) edge (C) %
      edge (B') (A') edge[fore] (C') %
      (A) edge[fore] (C) %
      edge[fore] (A') %
      (C) edge[labelr={f}] (C') %
    }
  \end{center}
By Corollary~\ref{cor:cube}, $f$ is 1-injective, which entails that the induced 
morphism of \traces is in~$\DH$.

We now need to show that $P' \to M$ is cartesian, which by
Lemmas~\ref{upper squares all pullbacks}
and~\ref{cartesiandoublecells} amounts to showing that its middle
arrow $f \colon P' \to M$ is in $\horizontal$.  To this end, consider
any morphism $t \colon Z'' \to C$ in $\vertical_0$ and morphism $(u,v)
\colon t \to f$ in $\Chat^{\mathrel{\to}}$. First of all, because $M_0
\to M$ is identity in dimensions $> 1$, the morphism $v$ uniquely
factors through $M_0 \to M$. Furthermore, in all cases where $f_0
\colon P'_0 \to M_0$ is identity in dimensions $>1$, the Yoneda lemma
entails that $C \to M_0$ uniquely factors through $P'_0$. This yields
a diagram
\begin{center}
  \Diag{%
    \path (a) -- node {?} (m-1-2) ; %
  }{%
    Z'' \& P'_0 \& P' \\
    C \& M_0 \& M %
  }{%
    (m-1-1) edge[dashed,labela={l}] (m-1-2) %
    edge[bend left=40,twoa={u}] (m-1-3) %
    edge[labell={}] (m-2-1) %
    (m-2-1) edge[labelb={}] (m-2-2) %
    edge[bend right=40,labelb={v}] (m-2-3) %
    edge[dashed,labelal={k}] (m-1-2) %
    (m-1-2) edge[labelr={f_0}] (m-2-2) %
    (m-1-2) edge[labela={g}] (m-1-3) %
    (m-2-2) edge[labelb={}] (m-2-3) %
    (m-1-3) edge[labelr={f}] (m-2-3) %
  }
\end{center}
which commutes except perhaps for the upper part marked `?'.  But the
latter also commutes by Corollary~\ref{cor:connected} with $h = u$ and
$h' = gl$. We thus obtain a lifting, which is unique by 1-injectivity
of $f$.

So what happens when is $f_0$ non-identity in dimensions $>1$? By
inspection of the proof of Lemma~\ref{lem:restriction:seeds}, except
for the easy case where $P'_0 \iso X'_0$, this is when
$M_0 = \taunimjk$ for some $n,m,i,j,k$, and $P'_0$ has one of the
shapes $\iotani+J$, $\outmjk+J$, or $\iotani + \outmjk + J$, for some
interface $J$. In the first two cases, $C \neq \taunimjk$ because
there can be no $u \colon [n] \paraij [m] \to P'$ (one agent is
missing in $P'$), so the previous argument applies.  In the third
case, letting $x$ and $y$ respectively denote the $n$- and $m$-ary
initial agents in $P'_0$ and $a = x \cdot s_i$ and $a' = y \cdot s_j$
the corresponding channels, one easily shows that $g(a) \neq g(a')$,
so again there can be no $u \colon [n] \paraij [m] \to P'$. Thus,
$C \neq \taunimjk$ and the previous argument again applies.
\end{proof}

\subsection{Restriction of \traces}\label{subsec:traces}
So far, we have shown that seeds and \actions admit cartesian liftings
in $\D_H$. We now show that it is also the case for arbitrary \traces.
We proceed by induction on the length of the considered \trace, which
requires a few preliminary results.

\begin{lem} \label{cube pullback}
  In sets, for any commuting diagram
  \begin{center}
  \Diag{%
    \pbk{J}{B}{A}
  }{%
   |(I)| I \& |(A)| A \& |(C)| C \\
   |(J)| J \& |(B)| B \& |(D)| D \\
  }{%
    (I) edge (J) edge (A) %
    (A) edge[into] (B) edge (C) %
    (J) edge (B) %
    (B) edge (D) %
    (C) edge[into] (D)
  }%
\end{center}
whose exterior rectangle is a pullback, with the marked pushout and
monos, the right-hand square is also a pullback.
\end{lem}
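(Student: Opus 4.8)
The plan is to work elementwise in $\Set$, exploiting the explicit description of a pushout of sets. Write the maps of the diagram as $i \colon I \to A$, $u \colon I \to J$, $a \colon A \to C$, $m \colon A \into B$, $j \colon J \to B$, $n \colon C \into D$, and $b \colon B \to D$. The left square being a pushout means that $B$ is the quotient of the coproduct $A + J$ by the relation identifying $i(w)$ with $u(w)$ for each $w \in I$, with $m$ and $j$ the two structural maps; in particular every element of $B$ is of the form $m(x)$ for some $x \in A$ or $j(y)$ for some $y \in J$. The exterior rectangle being a pullback means $I \cong \{(c,y) \in C \times J \mid n(c) = b(j(y))\}$, with projections $a \circ i$ and $u$.

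First I would exhibit the comparison map $\varphi \colon A \to B \times_D C$ sending $x$ to $(m(x), a(x))$. This is well defined because the right square commutes, i.e.\ $b \circ m = n \circ a$. Its injectivity is immediate: the first projection of $\varphi$ is $m$, which is a mono, so $\varphi$ is injective as well.

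The substance is surjectivity. Given $(z,c) \in B \times_D C$, i.e.\ $z \in B$ and $c \in C$ with $b(z) = n(c)$, I would split on the two ways $z$ can arise in the pushout. If $z = m(x)$, then $n(a(x)) = b(m(x)) = b(z) = n(c)$, and since $n$ is a mono this forces $a(x) = c$, so $\varphi(x) = (z,c)$. If instead $z = j(y)$ for some $y \in J$, then the pair $(c,y)$ satisfies $n(c) = b(z) = b(j(y))$, so by the exterior pullback there is a (unique) $w \in I$ with $a(i(w)) = c$ and $u(w) = y$; setting $x = i(w)$ gives $a(x) = c$ and, using commutativity of the left square, $m(x) = m(i(w)) = j(u(w)) = j(y) = z$, whence again $\varphi(x) = (z,c)$. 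Thus $\varphi$ is a bijection, so the right square is a pullback.

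I expect the only real subtlety to be the second surjectivity case: one has to observe that an element of $B$ lying in the image of $j$ but paired over $D$ with an element of $C$ is forced, via the exterior pullback, to come from $I$ and hence to lie in the image of $m$ as well. Everything else — the pushout presentation and the two uses of the monos $m$ and $n$ — is routine.
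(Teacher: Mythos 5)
Your proof is correct and follows essentially the same route as the paper's: both argue elementwise in $\Set$, split an element of $B$ according to whether it comes from $A$ or from $J$ (using the pushout presentation of $B$), and use the mono $C\into D$ in the first case and the exterior pullback in the second. The only cosmetic difference is that you phrase the conclusion as bijectivity of the comparison map $A\to B\times_D C$, whereas the paper checks the universal property of the pullback against $1$, which amounts to the same thing in sets.
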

\begin{proof}
 We check the universal property of pullback for $A$, relative
  to $1$, which is enough in sets. So consider any commuting square
  \begin{center}
    \diag{%
      1 \& C \\
      B \& D\rlap{.} %
    }{%
      (m-1-1) edge[labela={c}] (m-1-2) %
      edge[labell={b}] (m-2-1) %
      (m-2-1) edge[labelb={}] (m-2-2) %
      (m-1-2) edge[labelr={}] (m-2-2) %
    }
  \end{center}
  First, we observe that there is at most one mediating arrow $1 \to
  A$, because $A \to B$ is mono.

  If $b$ has an antecedent in $A$, say $a$, then because $C \to D$ is
  mono, $a$ makes both required triangles commute and we are done.

  Otherwise, by surjectivity of $A+J \to B$, $b$ admits an antecedent
  in $J$, i.e., there exists $j \colon 1 \to B$ such that $b$ is $1
  \xto{j} J \to B$, then we have a cone to $C \to D \ot J$, so we
  apply the universal property of pullback to obtain $i$ as in
  \begin{center}
  \Diag{%
    \pbk{J}{B}{A}
  }{%
    |(one)| 1 \\
    \& |(I)| I \& |(A)| A \& |(C)| C \\
    \& |(J)| J \& |(B)| B \& |(D)| D \\
  }{%
    (I) edge (J) edge (A) %
    (A) edge[into] (B) edge (C) %
    (J) edge (B) %
    (B) edge (D) %
    (C) edge[into] (D)
    (one) edge[fore,bend right=20,labelo={b}] (B) %
    edge[labelo={c}] (C) %
    edge[bend right=20,labelo={j}] (J) %
    edge[labelo={i}] (I)
  }%
\end{center}
  making everything commute and so $1 \xto{i} I \to A$ suits our needs.
\end{proof}

\begin{cor} \label{cor:cube pullback} In any presheaf category, in any
  commuting cube
  \begin{center}
    \Diag(.4,.6){%
      \pbk{A}{C}{B} %
      \pbk{I'}{I}{B} %
      \pbk{A'}{C'}{B'} %
      \path[->,draw] %
      (I) edge[fore] (A) %
      ; %
    }{%
      |(I)| I \& \& \&|(B)| B \\
      \&|(A)| A \& \& \&|(C)| C \\
      \& \& \& \& \& \\
      |(I')| I' \& \& \&|(B')| B' \\
      \&|(A')| A' \& \& \&|(C')| C'\rlap{,} \\
    }{%
      (I) edge (I') edge[] (B) %
      (I') edge (A') edge[into] (B') %
      (B') edge (C') %
      (B) edge (C) %
      edge (B') (A') edge[fore] (C') %
      (A) edge[fore] (C) %
      edge[fore] (A') %
      (C) edge[labelr={f}] (C') %
    }
  \end{center}
  with the marked pushouts, pullback, and mono,
  the front square is also a pullback.
\end{cor}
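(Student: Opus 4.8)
The plan is to reduce the statement pointwise to $\Set$ and then recognise the cube as an instance of the two-square configuration of Lemma~\ref{cube pullback}. In a presheaf category all limits, colimits and monomorphisms are computed pointwise, and a square of presheaves is a pullback precisely when it is one at every object of the base category; hence it suffices to prove the analogous statement in $\Set$. There the hypotheses read: the top and bottom faces are pushouts (with apexes $C$ and $C'$), the back face is a pullback (apex $I$), and $I' \to B'$ is mono.

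First I would produce, inside $\Set$, the monomorphisms and the auxiliary pullback that Lemma~\ref{cube pullback} requires. Since $I' \to B'$ is mono and the back face realises $I$ as $I' \times_{B'} B$, the map $I \to B$ is mono, being a pullback of a mono. The top face then exhibits $A \to C$ as the pushout of the mono $I \to B$ along $I \to A$, so $A \to C$ is mono; symmetrically $A' \to C'$ is the pushout of $I' \to B'$ and is mono. Finally, as a pushout along a mono in $\Set$ is also a pullback, the bottom face is a pullback, giving $I' = A' \times_{C'} B'$.

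Next I would build the exterior pullback by pasting. Composing the back pullback $I = I' \times_{B'} B$ with the bottom pullback $I' = A' \times_{C'} B'$ along the shared arrow $I' \to B'$, the pullback pasting lemma shows that the rectangle on the corners $I, A', B, C'$ is a pullback, i.e.\ $I = A' \times_{C'} B$, where the map $I \to A'$ is $I \to A \to A'$ and the map $B \to C'$ is $B \to C \xto{f} C'$.

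Finally I would invoke Lemma~\ref{cube pullback} through the identification $I_L = I$, $A_L = A$, $C_L = A'$ (top row $I \to A \to A'$) and $J_L = B$, $B_L = C$, $D_L = C'$ (bottom row $B \to C \xto{f} C'$). Under it, the lemma's left square is the top face, a pushout with apex $C = B_L$; its exterior rectangle is the pullback $I = A' \times_{C'} B$ found above; and its two monos $A_L \to B_L$ and $C_L \to D_L$ are the maps $A \to C$ and $A' \to C'$ already shown monic. The lemma's conclusion, that the right square on $A, A', C, C'$ is a pullback, is exactly the assertion that the front face is a pullback; carrying this out at each object of the base category yields the result. The only delicate point is choosing the matching with the correct orientation and checking that the exterior rectangle really is a pullback; the genuinely combinatorial content, namely the gluing along the top pushout, is already discharged by Lemma~\ref{cube pullback}, so no further obstacle remains.
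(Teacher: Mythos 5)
Your proposal is correct and follows essentially the same route as the paper's proof: reduce pointwise to $\Set$, derive the monicity of $I\to B$, $A\to C$, $A'\to C'$ from stability of monos under pullback and pushout, note that the bottom pushout along a mono is a pullback, paste it with the back pullback to get the exterior pullback $I = A'\times_{C'}B$, and conclude by Lemma~\ref{cube pullback} applied to the top face. The only difference is that you spell out the identification of the cube with the lemma's two-square diagram more explicitly than the paper does.
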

\begin{proof}
  It suffices to show the result in sets, as all involved properties
  are pointwise in presheaf categories.  First, as monos are stable
  under pullback and pushout, $I \to B$, $A \to C$, and $A' \to C'$
  are also monos. Furthermore, pushouts along monos are also
  pullbacks, so the top and bottom faces are also pullbacks. By the
  pullback lemma, the rectangle
\begin{center}
  \Diag{%
    \pbk{m-2-1}{m-1-1}{m-1-2} %
    \pbk{m-2-2}{m-1-2}{m-1-3} %
  }{%
    I \& I' \& A' \\
    B \& B' \& C' %
  }{%
    (m-1-1) edge[labela={}] (m-1-2) %
    edge[labell={}] (m-2-1) %
    (m-2-1) edge[labelb={}] (m-2-2) %
    (m-1-2) edge[labelr={}] (m-2-2) %
    (m-1-2) edge (m-1-3) %
    (m-2-2) edge (m-2-3) %
    (m-1-3) edge (m-2-3) %
  }%
\end{center}
is a pullback.  The previous lemma thus entails the result.
\end{proof}

\begin{lem}\label{main point fibration on traces} 
For any commuting diagram as the solid part of
\begin{center}
  \diag{%
   |(T)| T \& |(V')| V' \& |(U')| U' \\
   |(C)| C \& |(V)| V \& |(U)| U %
  }{%
    (T) edge[labell={t}] (C) edge (V') %
    (V') edge[labellat={h}{.2}] (V) edge (U') %
    (C) edge (V) %
    (V) edge (U) %
    (U') edge[labelr={f}] (U)
    (C) edge[dashed,fore,labelalat={k}{.3}] (U')
  }
\end{center}
with $t \in \vertical_0$, $h \in \horizontal$, and
$f$ 1-injective, there is a unique lifting $k$ as shown.
\end{lem}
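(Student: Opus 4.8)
The plan is to reduce the existence of $k$ to the orthogonality $t \ortho h$ afforded by $h \in \horizontal = \vertical_0^{\bot}$, and then to extract uniqueness from the $1$-injectivity of $f$ together with the fact that the codomain $C$ of a map in $\vertical_0$ is representable of dimension $>1$.

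First I would record that, since the solid diagram commutes, so does its left-hand square, i.e.\ the square with vertical maps $t \colon T \to C$ and $h \colon V' \to V$ and horizontal maps $T \to V'$ and $C \to V$, so that $h \rond (T \to V') = (C \to V) \rond t$. Now $t \in \vertical_0$ and $h \in \horizontal = \vertical_0^{\bot}$, so by definition $t$ is left-orthogonal to $h$, i.e.\ $t \ortho h$. This square therefore admits a (unique) diagonal filler $d \colon C \to V'$ with $d \rond t = (T \to V')$ and $h \rond d = (C \to V)$.

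Next I would set $k = (V' \to U') \rond d \colon C \to U'$ and check the two required triangles. The upper triangle, $k \rond t = (T \to V' \to U')$, is immediate from $d \rond t = (T \to V')$. For the lower triangle, commutativity of the right-hand square gives $f \rond (V' \to U') = (V \to U) \rond h$, whence $f \rond k = (V \to U) \rond h \rond d = (V \to U) \rond (C \to V)$, which is exactly the prescribed composite $C \to V \to U$. Thus $k$ is a lifting as shown.

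Finally, for uniqueness I would invoke the Yoneda lemma. Being the codomain of a map in $\vertical_0$, $C$ is a representable $\yoneda_c$ with $c$ of dimension $>1$ (exactly as used in the proof of Lemma~\ref{lem:Dh:H}), so a map $C \to U'$ is an element of $U'(c)$ of positive dimension. Since $f$ is $1$-injective, the component $f_c \colon U'(c) \to U(c)$ is injective, so any two liftings agreeing after postcomposition with $f$ must coincide; in particular the condition $f \rond k = (C \to V \to U)$ pins $k$ down completely. The only genuine subtlety is precisely this last step: $f$ is merely $1$-injective rather than monic, so uniqueness really does hinge on $C$ lying in dimension $>1$. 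Everything else is a routine chase once the orthogonality $t \ortho h$ has been put to use.
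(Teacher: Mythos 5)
Your proof is correct and follows essentially the same route as the paper's: use the orthogonality $t \ortho h$ to obtain the diagonal $C \to V'$, compose with $V' \to U'$ to get $k$, and deduce uniqueness from $1$-injectivity of $f$. Your added justification that uniqueness genuinely needs $C$ to be representable of dimension $>1$ (since $f$ is only $1$-injective, not monic) is a correct and welcome elaboration of the paper's terser "uniqueness follows from 1-injectivity of $f$".
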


\begin{proof}
  Because $h \in \horizontal$, there is a unique map from $k' \colon C
  \to V'$ making both triangles commute. By composing $k'$ with $V'
  \rightarrow U'$, we obtain a lifting $k$ for the desired
  square. Uniqueness follows from 1-injectivity of $f$.
\end{proof}

\begin{lem}\label{fibration} \ 
 The codomain functor $\cod \colon \D_H \to \D_h$ is a subfibration
    of $\cod \colon \D^{1}_H \to \D^{1}_h$.
\end{lem}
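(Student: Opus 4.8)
The plan is to reduce everything to the fibration structure already established on $\D^1$. By Lemma~\ref{cartesiandoublecells}, $\cod \colon \D^{1}_H \to \D^{1}_h$ is a fibration, and $\D$ sits inside $\D^1$ as a sub-double category. Hence to prove the claim it suffices to establish the single fact that for any \trace $V \colon Y \proto X$ (an object of $\D_H$) and any $1$-injective $h \colon X' \to X$ (a morphism of $\D_h$), the cartesian lifting of $V$ along $h$ computed in $\D^1$ by the factorisation~\eqref{eq:facto} already lies in $\D$, i.e.\ its middle object is a \trace and the three components of the lifting cell are $1$-injective. Once this is known, the $\D^1$-universal property of that cell restricts in particular to the cells of $\D$, so the cell is $\D$-cartesian; this provides cartesian liftings for $\cod \colon \D_H \to \D_h$ and shows that the cartesian cells of $\D$ are exactly the $\D^1$-cartesian ones lying in $\D$, which is precisely the meaning of being a subfibration.

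I would prove the boxed fact by induction on the length $n$ of $V$. For $n = 0$, $V$ is an equivalence, so $t_V$ is an isomorphism; since $h \in \horizontal$ by Lemma~\ref{lem:Dh:H}, the map $t_V \rond h$ factors as an isomorphism (in $\vertical$) followed by a map in $\horizontal$, so the computed lifting is an equivalence on $X'$, visibly in $\D$. For the inductive step, decompose $V$ up to special isomorphism (atomicity~\axref{atomicity}) as a composite $M \vrond W$, with $M \colon Y_1 \proto X$ an \action whose initial \position is $X$ and $W \colon Y \proto Y_1$ a \trace of length $n-1$. Restricting $M$ along $h$ via Lemma~\ref{fibration on actions} yields a \trace $M' \colon Y_1' \proto X'$ and a $\D^1$-cartesian cell $\alpha_M \colon M' \to M$ in $\D_H$, whose domain leg $h_1 \colon Y_1' \to Y_1$ is $1$-injective. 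Restricting $W$ along $h_1$ by the induction hypothesis yields a \trace $W' \colon Y' \proto Y_1'$ and a $\D^1$-cartesian cell $\beta \colon W' \to W$ in $\D_H$. Composing vertically, $M' \vrond W'$ is again a \trace, and the composite cell $\alpha_M \vrond \beta \colon M' \vrond W' \to M \vrond W = V$ lies in $\D$, in particular has $1$-injective components, since $\D$ is closed under vertical composition of cells (Proposition~\ref{prop:D:D0}).

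The technical heart is to check that $\alpha_M \vrond \beta$ is $\D^1$-cartesian, which by Lemma~\ref{cartesiandoublecells} amounts to its upper square being a pullback and its middle component $l$ lying in $\horizontal$. The upper square is automatically a pullback by Lemma~\ref{upper squares all pullbacks}, since $M' \vrond W'$ and $V$ are \traces (alternatively one pastes the upper pullback squares of $\alpha_M$ and $\beta$ using Corollary~\ref{cor:cube pullback}). For $l \in \horizontal$, I would use that the middle object of a composite is the pushout $M +_{Y_1} W$ (resp.\ $M' +_{Y_1'} W'$), and that, the gluing being along a \position, in every dimension $> 1$ this pushout is the plain coproduct $M \sqcup W$ (resp.\ $M' \sqcup W'$). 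Now given any orthogonality square from some $t \in \vertical_0$ to $l$, the codomain of $t$ is representable of dimension $\geq 2$, so the bottom map selects a single element of dimension $> 1$, lying entirely in the $M$-summand or in the $W$-summand. In the first case I apply Lemma~\ref{main point fibration on traces} with middle component $M' \to M$, which is in $\horizontal$ as the middle leg of the $\D^1$-cartesian cell $\alpha_M$, and $f = l$; in the second case I use $W' \to W$ from $\beta$ instead. Either way this produces the required filler, unique by $1$-injectivity of $l$ (cf.\ Corollary~\ref{cor:connected}), completing the induction.

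The main obstacle is exactly this membership $l \in \horizontal$. It is not formal because the vertical composition $\vrond$ along which $\alpha_M$ and $\beta$ are glued is \emph{not} the composition of the fibration $\cod \colon \D^{1}_H \to \D^{1}_h$ (which is the horizontal composition $\rond$ of cells), so the standard fact that cartesian morphisms compose does not apply and cartesianness of the glued cell must be re-derived by hand. The delicate part is the interaction of an orthogonality test against $\vertical_0$ with the pushout seam $Y_1'$: this is what forces the splitting of the middle presheaf in dimensions $> 1$ and the routing of the lift through the packaged Lemma~\ref{main point fibration on traces} rather than a naive lifting, with $1$-injectivity supplying uniqueness throughout.
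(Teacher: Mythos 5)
Your overall strategy matches the paper's: reduce to showing that the $\D^1$-cartesian lifting of a \trace{} along a map of $\Dh$ is again a \trace{} with its cell in $\D$, induct on length, restrict the first \action{} by Lemma~\ref{fibration on actions} and the remainder by the induction hypothesis, and then verify that the recomposed cell is cartesian by checking that its middle component $l$ lies in $\horizontal$. The base case, the pullback of the upper square, and your observation that cartesianness of the composite is not formal (because $\vrond$ is not the composition of the fibration) are all correct and agree with the paper.

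The gap is in the orthogonality check itself, in the case where the test $v\colon C\to M\vrond W$ lands in the $M$-summand. Lemma~\ref{main point fibration on traces} does not produce the lift from nothing: its solid data includes a map $T\to M'$ sitting over $u\colon T\to M'\vrond W'$, and you never construct it. On the $W$-side this map does come for free, from the pullback $W'=(M'\vrond W')\times_{M\vrond W}W$ that you obtain via Corollary~\ref{cor:cube pullback}; but no analogous pullback $M'=(M'\vrond W')\times_{M\vrond W}M$ is available --- nothing you have established prevents an element of $W'$ from being sent into $M$ through the seam $Y_1$ without lying in (the image of) $M'$, since only the \emph{upper} squares of the restriction cells are pullbacks (Lemma~\ref{upper squares all pullbacks}), not the lower ones. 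So the two cases are genuinely asymmetric, and the $M$-case is where essentially all the work of the paper's proof is concentrated. The paper resolves it by showing, for each \agent{} $[n]\xto{x}T$, that $ux$ cannot factor through $W'$: that would force the \agent{} $v'tx$, which performs the \action{} $C$ in $M$, to lie in the final \position{} $Y_1=M\times_{M\vrond W}W$ of $M$, a contradiction. It then glues the resulting maps $[n]\to M'$ into $u'\colon T\to M'$ using that $\sum_{(n,x)\in\Pl(T)}[n]\to T$ is epi (Lemma~\ref{lem:connected}) and $M'\to M'\vrond W'$ is mono. Without this step your induction does not close.
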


\begin{proof}
  First of all, it is sufficient to prove that given any \trace
  $Y \xto{s} U \xot{t} X$ and $h \in \Dh(X',X)$, any cartesian lifting
  in $\D^1$ lies in $\D$, i.e., the obtained vertical morphism is
  again \atrace and the double cell to $U$ lies in $\D$ (i.e., all its
  components are 1-injective). Indeed, mediating morphisms computed in
  $\D^1$ are automatically in $\D$ by cancellation.

  We proceed by induction on $U$.  If $U$ is an equivalence, then the
  result is obviously true.  Otherwise, $U = M \vrond V$ for some
  \action $M$ and \trace $V$.  Let us call $Z$ the final \position
  of $M$. By Lemma~\ref{fibration on actions}, $M$ admits a lifting
  $P'$ along $h$, with a final \position $Z'$, and $Z' \rightarrow
  Z$ and $f_M \colon P' \to M$ are 1-injective. By induction
  hypothesis, $V$ admits a lifting $V'$ along $Z' \rightarrow Z$ with
  a double cell to $V$ in $\DH$. Therefore, considering the composite
  $P' \vrond V'$, we have a commuting diagram as in Figure~\ref{fig:subfib},
where $f$ is obtained by universal property of pushout.
\begin{figure}[ht]
\diagramme[diagorigins={1}{1.5}]{}{%
      \pbk[2em]{V'}{C'}{P'} %
      \pbk[2em]{V}{C}{M} %
      \pbk{P'}{Z'}{Z} %
      \pbk{C'}{V'}{V} %
      \pbk{V'}{Y'}{Y} %
      \path[->,draw] %
      (Z) edge (V) %
      (C') edge[fore,labelaat={f}{0.2}] (C) %
      ; %
    }{%
      |(Y')| Y' \& \& \& \& |(Y)| Y \\
      \& |(V')| V' \& \& \& \& |(V)| V \\
      \& \& |(C')| P' \vrond V' \& \& \& \& |(C)| M \vrond V \\
      |(Z')| Z' \& \& \& \& |(Z)| Z \\
      \& |(P')| P' \& \& \& \& |(M)| M \\
      |(blank)| \\
      |(X')| X' \& \& \& \& |(X)| X\rlap{,} %
    }{%
      (X) edge[labelbr={t_0}] (M) %
      (Z) edge[labelbl={s_0}] (M) %
          edge[labelalat={t_1}{.7}] (V) %
      (Y) edge[labelar={s_1}] (V) %
      (M) edge[labelbr={t_2}] (C) %
      (V) edge[labelar={s_2}] (C) %
      (X') edge (X) %
           edge (P') %
      (Z') edge (Z) %
           edge (P') %
           edge (V') %
      (Y') edge (Y) %
           edge (V') %
      (P') edge[labela={f_M}] (M) %
           edge[fore,labeloat={t'_2}{.3}] (C') %
      (V') edge[labela={f_V}] (V) %
           edge[labeloat={s'_2}{.6}] (C') %
    }
\label{fig:subfib}
\caption{Diagram for induction step in Lemma~\ref{fibration}}
\end{figure}

Because pushouts along monos are also pullbacks in presheaf
categories, both marked pushouts are also pullbacks.  Furthermore, by
Lemma~\ref{upper squares all pullbacks}, $Z' = P' \times_M Z$ and
$Y' = (V' \times_V Y)$, as shown.  Also, by Corollary~\ref{cor:cube
  pullback}, $V' = (P' \vrond V') \times_{M \vrond V} V$, as shown.
Furthermore, by Proposition~\ref{prop:D:D0}, $f$ is 1-injective.

By Lemmas~\ref{cartesiandoublecells} and~\ref{upper squares all
  pullbacks}, it suffices to show that $f$ is in $\horizontal$, i.e.,
that it is right-orthogonal to any $T \xto{t} C$ in
$\vertical_0$. Consider any commuting square
\begin{center}
    \diag{%
      |(X)| T \& |(Co')| P' \vrond V' \\
      |(C)| C \& |(Co)| M \vrond V. %
    }{%
    	  (X)   edge[labell={t}] (C) %
    	        edge[labela={u}] (Co') %
    	  (C)   edge[labelb={v}] (Co) %
    	  (Co') edge[labelr={f}] (Co) %
    }
\end{center}
Since $M \vrond V$ is the coproduct of $M$ and $V$ in dimensions greater than
1 and $C$ is a representable of dimension greater than 1, we have that
$v$ factors either through $t_2$ or $s_2$.

If $v$ factors through $s_2$, then by universal property of pullback we find a map $u' : T \rightarrow V'$ 
making
\begin{center}
  \Diag{%
    \pbk{m-2-2}{m-1-2}{m-1-3} %
  }{%
    T \& V' \& P' \vrond V' \\
    C \& V \& M \vrond V %
  }{%
    (m-1-1) edge[labela={u'}] (m-1-2) %
    edge[labell={t}] (m-2-1) %
    edge[bend left,labela={u}] (m-1-3) %
    (m-2-1) edge[labelb={}] (m-2-2) %
    edge[bend right,labelb={v}] (m-2-3) %
    (m-1-2) edge[labelr={}] (m-2-2) %
    (m-1-2) edge (m-1-3) %
    (m-2-2) edge[labelb={s_2}] (m-2-3) %
    (m-1-3) edge (m-2-3) %
  }
\end{center}
commute.  Then, by Lemma~\ref{main point fibration on traces}, we find
a unique lifting as desired.

If $v$ factors as $t_2 v'$, then by Lemma~\ref{main point fibration on
  traces}, it is sufficient to show that there is a map $u' : T
\rightarrow P'$ making
\begin{center}
  \diag{%
    T \& P' \& P' \vrond V' \\
    C \& M \& M \vrond V %
  }{%
    (m-1-1) edge[labela={u'}] (m-1-2) %
    edge[labell={t}] (m-2-1) %
    edge[bend left,labela={u}] (m-1-3) %
    (m-2-1) edge[labelb={v'}] (m-2-2) %
    edge[bend right,labelb={v}] (m-2-3) %
    (m-1-2) edge[labelr={}] (m-2-2) %
    (m-1-2) edge[labela={t'_2}] (m-1-3) %
    (m-2-2) edge[labelb={t_2}] (m-2-3) %
    (m-1-3) edge[labelr={f}] (m-2-3) %
  }
\end{center}
commute.
To that end, it is sufficient to show that for every $[n] \xto{x} T$,
there is a map $[n] \xto{f_x} P'$ such that
\begin{equation}
    \diag{%
      |(ltl)| [n] \& |(ltr)| T \\
      |(lbl)| P' \& |(lbr)| P' \vrond V' %
    }{%
    	  (ltl) edge[labelu={x}] (ltr) %
    	        edge[labell={f_x}] (lbl) %
    	  (ltr) edge[labelr={u}] (lbr) %
    	  (lbl) edge[labelb={t'_2}] (lbr) %
    }\label{eq:square:fx}
\end{equation}
commutes. Indeed, if that is the case, then the square
\begin{center}
    \diag(.6,2){%
      |(ltl)| \sum_{(n,x) \in \Pl(T)}[n] \& |(ltr)| T \\
      |(lbl)| P' \& |(lbr)| P' \vrond V' %
    }{%
    	  (ltl) edge[labelu={[x]_{(n,x) \in \Pl(T)}}] (ltr) %
    	        edge[labell={[f_x]_{(n,x) \in \Pl(T)}}] (lbl) %
    	  (ltr) edge[labelr={u}] (lbr) %
    	  (lbl) edge[labelb={t'_2}] (lbr) %
          (ltr) edge[dashed,labelal={u'}] (lbl) %
    }
\end{center}
also commutes, and since its bottom map is mono and its top map is epi
by Lemma~\ref{lem:connected}, there is a unique lifting $u' \colon T
\rightarrow P'$ making both triangles commute. The square
\begin{center}
    \diag{%
      |(ltl)| T \& |(lbl)| P' \\
      |(ltr)| C \& |(lbr)| M %
    }{%
    	  (ltl) edge[labell={t}] (ltr) %
    	        edge[labela={u'}] (lbl) %
    	  (ltr) edge[labelb={v'}] (lbr) %
    	  (lbl) edge (lbr) %
    }
\end{center}
also commutes because it commutes when composed with $t_2$, which is mono.

So we now need to show that for every $[n] \xto{x} T$, there is a map
$[n] \xto{f_x} P'$ making the square~\eqref{eq:square:fx} commute.
Because $P' + V' \xto{[t'_2,s'_2]} P' \vrond V'$ is epi and $[n]$ is a
representable presheaf, $[n] \xto{ux} P' \vrond V'$ factors through either
$t'_2$ or $s'_2$. 

If it factors through $s'_2$, say as $s'_2 x'$, then we have 
$$t_2 (v' t x) = v t x = f u x = f s'_2 x' = s_2 f_V x',$$
so by universal property of the pullback $Z$ there exists a unique
$x'' \colon [n] \to Z$ such that $s_0 x'' = v' t x$ and $t_1 x'' = f_V
x'$. We thus obtain a commuting diagram
\begin{center}
  \diag{%
        |(n)| {[n]} \& \&|(Z)| Z \\
        |(T)| T \&|(C)| C \&|(M)| M, %
  }{%
    (n) edge[labell={x}] (T) %
    edge[labelal={x''}] (Z) %
    (T) edge[labela={t}] (C) %
    (C) edge[labela={v'}] (M) %
    (Z) edge[labelr={s_0}] (M) %
  }
\end{center}
which is impossible because $v' t x$, as one of the \agents{}
performing \action{} $C$ in $M$, cannot remain in the final \position
of $M$.

Thus, $ux$ factors through $t'_2$ and we are done.
\end{proof}

\section{A playground for \texorpdfstring{$\pi$}{π}}\label{sec:playground:pi}
In the previous section, we have proved the main playground axiom,
asserting that the codomain functor $\DH \to \Dh$ of the double
category of \traces{} constructed in Section~\ref{sec:double} is a
fibration. We now prove the remaining axioms. In
Section~\ref{subsec:candidate}, we equip $\D$ with playground
structure and prove that it satisfies all the needed axioms, except
both decomposition axioms (\axref{leftdecomposition}
and~\axref{views:decomp}) which require a bit more work. In
Section~\ref{subsec:correctness}, we establish a correctness criterion
detecting when a given cospan in $\Chat$ is \atrace. We then use this
criterion in Section~\ref{subsec:playground:pi} to prove both
remaining axioms.
\subsection{A candidate playground}\label{subsec:candidate}
So we start in this section by defining the needed additional
structure on $\D$.

\begin{defi}\label{def:playground:data}
  We recall from Lemma~\ref{lem:restriction:seeds} that $\DI$, the set
  of \emph{individuals}, consists of representable \positions $[n]$.
  Let $\B$, the full subcategory of \emph{basic} \actions, span
  all seeds of shape $\tau_n$, $\forkrn$, $\forkln$, $\nun$, $\tickn$,
  $\iotani$, or $\outmjk$.  \emph{Full} \actions (notation $\F$) are
  all \actions obtained from seeds of shape $\tau_n$, $\forkn$,
  $\nun$, $\tickn$, $\iotani$, $\outmjk$, or $\taunimjk$.
  \emph{Closed-world} \actions are all \actions obtained from seeds of
  shape $\tau_n$, $\forkn$, $\nun$, $\tickn$, or $\taunimjk$. Let $\W$
  denote the graph with \positions{} as vertices and closed-world
  \actions between them as edges (the initial \position being the
  target). Finally, all decompositions of any \trace $U$ into \actions
  have the same length which we denote by $\length{U}$.
\end{defi}
Here is a summary of which \actions{} are basic, full and closed-world:
\begin{center}
  $\begin{array}[b]{|r|c|c|c|c|c|c|c|c|c|} \hline
     & \taun & \forkln & \forkrn & \forkn & \iotani & \outmjk & \taunimjk & \nun & \tickn \\
     \hline
     \text{Basic} & \checkmark & \checkmark & \checkmark & & \checkmark & \checkmark & & \checkmark & \checkmark  \\
     \text{Full} & \checkmark &   &  & \checkmark & \checkmark & \checkmark & \checkmark & \checkmark & \checkmark  \\
     \text{Closed-world} & \checkmark & && \checkmark & && \checkmark & \checkmark &
                                                                     \checkmark \\ \hline
  \end{array}$~.
\end{center}
\begin{rem}
  The definition matches the explanations following
  Definition~\ref{def:playground}. Basic \actions are as small as
  possible, which here means they start from one \agent and only
  retain one \agent in the final position. Full \actions are those
  that retain all possible \agents in the final position.  So the only
  kind of \actions which are not basic are $\forkn$ and $\taunimjk$.
  They each have one
  sub-\action for each \agent in their final \positions, namely $\forkrn$
  and $\forkln$ for $\forkn$, and $\inni$ and $\outmjk$ for $\taunimjk$.
  All of these sub-\actions are basic, but only $\inni$ and $\outmjk$ are full.
  Finally, a new class of \actions appear here, that of closed-world
  \actions. Intuitively, it consists of those \actions that do not
  involve any interaction with the environment. Or, in other words,
  those that cannot be extended, even by adding new \agents.  E.g.,
  $\iotani$ may be completed to $\taunimjk$ by adding \anagent, hence
  is not closed-world. But $\taunimjk$ is. Closed-world \actions will
  be at the basis of our semantic notion of testing equivalence
  (Definition~\ref{def:fair:semantic}).
\end{rem}
\begin{lem} \ 
  \begin{itemize}
  \item \axref{discreteness} $\DI$, viewed as a subcategory of $\Dh$,
    is discrete. Basic \actions have no non-trivial automorphisms in
    $\DH$.  Vertical identities on individuals have no non-trivial
    endomorphisms.
  \item \axref{individuality} (Individuality) Basic \actions have individuals as both domain
    and codomain. 
  \item \axref{atomicity}
      (Atomicity) For any cell $\alpha \colon U \to U'$, if $\length{U'}
      = 0$ then also $\length{U} = 0$.  Up to a special isomorphism in
      $\DH$, all \traces of length $n > 0$ admit decompositions into
      $n$ \actions.  For any $U \colon X \proto Y$ of length 0, there is
      an isomorphism $\idv_X \to U$ in $\DH$ as in
      \begin{center}
        \diag{%
            |(X0)| X \&|(X)| X \\
            |(Xi)| X \&|(Y)| Y\rlap{.} %
          }{%
            (X0) edge[identity,pro,twol={}] (Xi) %
            edge[identity] (X) %
            (X) edge[pro,twor={U}] (Y) %
            (Xi) edge[labelb={\bar{U}}] (Y) %
            (l) edge[cell=0.3,labelb={\alpha^U}] (r) %
          }
      \end{center}
  \item \axref{fibration:continued} Restrictions of \actions (resp.\
    full \actions) to individuals either are \actions (resp.\ full
    \actions), or have length 0. 
  \end{itemize}
\end{lem}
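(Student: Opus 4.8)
The plan is to establish the four axioms one at a time, relying on two ingredients: the combinatorial rigidity of the base category $\C$, and the restriction results already proved in Section~\ref{sec:fib}. For \axref{discreteness}, I would first note that every individual is a representable $[n]$, so by Yoneda a morphism $[n]\to[m]$ in $\Dh$ is just an element of $\C([n],[m])$. Each generating edge of $G_\C$ strictly raises dimension (e.g.\ $\star\to[n]$, $[n]\to\taun$, $\forkln\to\forkn$, and $\iotani\to\taunimjk$ raise it by $1$, $1$, $1$, $2$ respectively), and the imposed equations preserve dimension, so every non-identity morphism of $\C$ strictly raises dimension. Consequently no object of $\C$ has a non-identity endomorphism, and $\C([n],[m])$ is empty unless $n=m$, where it is $\{\id\}$; this gives discreteness of $\DI$. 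The same rigidity settles the remaining two clauses at once: an endomorphism in $\DH$ of a basic action $\yoneda_c$, resp.\ of a vertical identity $\idv_{[n]}$, has identity top and bottom components by discreteness of $\DI$, and its middle component is an endomorphism of the representable $\yoneda_c$ (resp.\ $\yoneda_{[n]}$) compatible with the (monic) cospan legs, hence the identity.

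Axiom \axref{individuality} is then immediate by inspecting the seeds declared basic in Definition~\ref{def:playground:data}: reading their cospans off the bottom row of Figure~\ref{fig:stringmoves}, each of $\taun,\forkln,\forkrn,\nun,\tickn,\iotani,\outmjk$ has representable positions as both initial and final position. For \axref{atomicity}, the key observation is that a trace has positive length iff its apex carries an element of dimension $\ge 2$: each action contributes the top cell of a seed, and these survive the defining pushouts. Hence if $\alpha\colon U\to U'$ is a cell with $\length{U'}=0$, then the apex of $U'$ has dimension $\le 1$, so the middle component of $\alpha$ lands in a dimension-$\le 1$ presheaf, forcing the apex of $U$ to have dimension $\le 1$ and thus $\length{U}=0$. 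The decomposition-into-$n$-actions clause is exactly the content of Definition~\ref{def:playground:data}. For the length-$0$ clause, such a $U$ is by definition isomorphic as a cospan to an identity $\idv_X$ on its final position $X$; in particular both legs $s_U,t_U$ are isomorphisms (consistent with Lemmas~\ref{lem:s:t:monos} and~\ref{lem:s:star:iso}), and taking $\bar U = t_U^{-1}\circ s_U$ exhibits the required invertible special cell $\idv_X\to U$.

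Finally, for \axref{fibration:continued} I would invoke Lemma~\ref{fibration on actions}, which already shows that restricting an action along $h$ to an individual yields either a seed or an equivalence; since a seed is an action, this settles the unparenthesised claim. For the parenthesised (full) claim it remains, when the restriction is a seed, to check that the seed is of full shape. This I would read off from the case analysis in the proof of Lemma~\ref{lem:restriction:seeds}: the full actions with single-agent initial position, namely $\taun,\forkn,\nun,\tickn,\iotani,\outmjk$, admit only the identity restriction to an individual (there is no morphism $[m]\to[n]$ in $\C$ for $m\neq n$) and so restrict to themselves or to length $0$; and the only full action with two-agent initial position, the synchronisation $\taunimjk$ on $[n]\paraij[m]$, restricts to one of its two agents, producing $\iotani$ or $\outmjk$, both of which are full.

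I expect this last point to be the main obstacle, since it is the one place where Lemma~\ref{fibration on actions} alone does not suffice: one must use the explicit restriction computation to see that synchronisation always restricts to the \emph{full} input or output seeds rather than to some merely-basic action, and must therefore track carefully which agent of $[n]\paraij[m]$ the individual maps onto. Everything else reduces either to the rigidity of representables over $\C$ or to a direct appeal to the fibration machinery of Section~\ref{sec:fib}.
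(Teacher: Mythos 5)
Your proof is correct and follows essentially the same route as the paper's: \axref{discreteness} and \axref{individuality} by Yoneda and the rigidity of $\C$, \axref{atomicity} via the observation that a length-$0$ trace is specially isomorphic to an identity cospan, and \axref{fibration:continued} by appeal to the restriction lemmas of Section~\ref{sec:fib}. The paper's own proof is just a terser version of this (it dispatches the first two axioms with ``direct by Yoneda'' and the last with a reference to Lemma~\ref{fibration} and its proof); your extra care on the ``full'' clause of \axref{fibration:continued}, checking that $\taunimjk$ restricts to the full seeds $\iotani$ and $\outmjk$, is exactly the detail the paper leaves implicit in the case analysis of Lemma~\ref{lem:restriction:seeds}.
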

\begin{proof}
  \axref{discreteness} and~\axref{individuality} are direct by
  Yoneda. \axref{fibration:continued} is also easy in view of
  Lemma~\ref{fibration} and its proof.  For~\axref{atomicity}, any
  vertical $X \to U \ot Y$ of length $0$, being a \trace, is
  isomorphic to an identity cospan, say $Z \to Z \ot Z$. To construct
  $\alpha^U$, just take the composite $\idv_X \xto{\iso} \idv_Z
  \xto{\iso} U$.  \end{proof}

Let us now treat the axiom for views, which is really easy.  It
actually becomes stronger because of Remark~\ref{rem:basic}, though
this does not affect the rest of the construction:
\begin{defi}\label{def:DB0}
  Let $\DB_{0}$ be the full subcategory of $\DH$ having as objects
  basic \actions and vertical identities between individuals.
\end{defi}
\begin{lem}\label{lem:views} \ 
 \axref{views} For any \action $M \colon Y \proto X$ and $y
    \colon d \to Y$ in $\Dh$ with $d\in \DI$, there exists a unique
    cell
\begin{center}
  \diag{%
    |(d)| d \& |(Y)| Y \\
    |(dMy)| d^{y,M} \& |(X)| X, %
  }{%
    (d) edge[labelu={y}] (Y) %
    edge[pro,dashed,twol={v^{y,M}}] (dMy) %
    (Y) edge[pro,twor={M}] (X) %
    (dMy) edge[dashed,labeld={y^M}] (X) %
    (l) edge[cell=.3,dashed,labelu={\scriptstyle \alpha^{y,M}}] (r) %
  }
\end{center}
with $v^{y,M} \in \DB_0$.
\end{lem}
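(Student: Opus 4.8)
The plan is to reduce the statement to a direct inspection of seeds via the pushout presentation of \actions. First I would write $M \colon Y \proto X$ as obtained from a seed $Y_0 \to M_0 \ot X_0$ by pushing along some $I_{X_0} \to Z$, so that $Y$, $M$, and $X$ are the pushouts $Y_0 +_{I_{X_0}} Z$, $M_0 +_{I_{X_0}} Z$, and $X_0 +_{I_{X_0}} Z$. Since the interface $I_{X_0}$ has dimension $0$ while \positions{} have dimension at most $1$, and colimits of presheaves are computed pointwise, pushing along $I_{X_0} \to Z$ neither creates nor merges any \agent. Writing $d = [n]$ and viewing $y$ by Yoneda as an element of $Y[n]$, this means $Y[n] = Y_0[n] + Z[n]$ as a \emph{disjoint} union, so $y$ lies in exactly one of the two summands, and this dichotomy drives the whole argument.

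If $y$ lands in $Z[n]$, its \agent is passive: it already occurs in the initial \position, as $Z$ maps into $X$ as well. I would then set $v^{y,M} = \idv_d \in \DB_0$, let $y^M \colon d \to X$ be that \agent of $Z$ seen in $X$, and take for $\alpha^{y,M}$ the evident cell, whose middle leg is the common value $s \rond y = t \rond y^M$. If instead $y$ lands in $Y_0[n]$, its \agent is active, and I would argue by cases on the shape of $M_0$. The single-\agent seeds $\taun$, $\tickn$, $\nun$, $\iotani$, $\outmjk$, $\forkln$, and $\forkrn$ have a unique \agent in their final \position, and the wanted basic sub-\action is the seed itself; the only seeds with two final \agents are $\forkn$ and $\taunimjk$, for which one \agent selects $\forkln$ (resp.\ $\iotani$) and the other $\forkrn$ (resp.\ $\outmjk$). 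In every case the selected basic seed $b$ carries a canonical cell $b \to M_0$ whose vertical domain picks out the \agent $y$, and composing its middle leg with $M_0 \to M$ yields $\alpha^{y,M}$, with $v^{y,M} = b \in \DB \subseteq \DB_0$ and $y^M$ the initial \agent of $b$ transported into $X$.

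For uniqueness I would observe that any cell $\alpha' \colon v' \to M$ with $v' \in \DB_0$ and $\domv(\alpha') = y$ must send the unique final \agent of $v'$ to the element $s \rond y \in M[n]$; this element already records which summand of $Y[n]$ contains $y$ and, in the active case, which of the at most two \agents of the seed is meant. Hence $v'$ and the middle leg of $\alpha'$ must agree with those built above, the bottom leg $y^M$ is then forced because $t$ is monic (Lemma~\ref{lem:s:t:monos}), and no residual automorphism can intervene since basic \actions are rigid by~\axref{discreteness}. The hard part, if any, will be the purely bookkeeping check in the synchronisation case: as noted in Example~\ref{ex:interface}, the received channel of the input view $\iotani$ is fresh, whereas the cell into $M$ must identify it with a channel already present there. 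This is exactly what $1$-injectivity of the cell's middle component allows, so it is a verification rather than a genuine difficulty, consistent with this being one of the easy playground axioms.
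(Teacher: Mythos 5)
Your proof is correct and follows essentially the same route as the paper's: establish the result for seeds by case analysis, split on whether $y$ lies in the image of the generating seed's final \position{} (active, giving a basic seed) or in the passive part $Z$ (giving $\idv_d$), and derive uniqueness from~\axref{discreteness}. Your version merely spells out in more detail the pushout decomposition, the disjointness of $Y_0[n]$ and $Z[n]$, and the uniqueness argument, all of which the paper leaves implicit.
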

\begin{proof}
  The result holds for seeds, by case analysis.  E.g.,
  $v^{ls, \forkn} = \forkln$, $v^{rs, \forkn} = \forkrn$, and so on.
  Now, any \action $M$ comes with a cell from its generating seed, say
  $\beta\colon M_0 \to M$.  If $y$ is in the image of $M_0$, then the
  required cell $\alpha^{y,M}$ is $\beta \rond
  \alpha^{y,M_0}$. Otherwise, $v^{y,M} = \idv_d$ admits a cell to $M$,
  which suits our needs. Uniqueness follows by~\axref{discreteness}.
\end{proof}
\begin{rem}
  The important result that Axiom~\axref{views}
  entails~\cite[Proposition 4.24]{HirschoDoubleCats} says that when we
  replace $M$ with any \trace $u$, we get a double cell $\alpha^{y,u}$
  which is only unique up to isomorphism.  Below, we still define
  views up to isomorphism, so our modified Axiom~\axref{views} does
  not make this any stronger.
\end{rem}
We conclude this section with the (straightforward) verification
of~\axref{finiteness} and~\axref{basic:full}.
\begin{lem}
\axref{finiteness} For any $X$, $\DI / X$ is finite.
\end{lem}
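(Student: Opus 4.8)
The plan is to reduce the claim to the finiteness of a single set of agents, and then read that finiteness off from the hypothesis that $X$ is a finite presheaf. First I would use Axiom~\axref{discreteness}, which was already established above, to observe that the subcategory $\DI \into \Dh$ is discrete. Consequently $\DI / X$ has only identity morphisms: a morphism $(d,f) \to (d',f')$ is a map $g \colon d \to d'$ in $\DI$ with $f' \rond g = f$, and discreteness forces $d = d'$, $g = \id$, hence $f = f'$. Thus $\DI / X$ is itself a discrete category, and it is finite exactly when its set of objects is finite, so the whole problem becomes a counting problem.

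Next I would count the objects, which are precisely the morphisms $[n] \to X$ in $\Dh$ for varying $n$ (recalling from Definition~\ref{def:playground:data} that individuals are exactly the representables $[n]$). Such a morphism is a $1$-injective natural transformation $\yoneda_{[n]} \to X$, and by the Yoneda lemma natural transformations $\yoneda_{[n]} \to X$ correspond bijectively to elements of $X[n]$. The point that needs a moment's care is that the $1$-injectivity side-condition is automatic here: the representable $\yoneda_{[n]}$ has exactly one element in strictly positive dimension, namely the generating agent $\id_{[n]}$ in dimension $1$ (there being no morphisms into $[n]$ from objects of dimension $\geq 2$ in $\C$, and $X$, being a \position, having dimension $\leq 1$ anyway), so injectivity in positive dimensions holds trivially. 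Hence $\Dh([n], X) \iso X[n]$.

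Putting these together, the object set of $\DI / X$ is in bijection with $\sum_{n} X[n] = \Pl(X)$, the set of agents of $X$. Since $X$ is by definition a finite presheaf, its total set of elements $\sum_{c \in \ob(\C)} X(c)$ is finite, as recalled in Section~\ref{subsec:lts}, and $\Pl(X)$ is a subset of it; so $\Pl(X)$ is finite, and therefore so is $\DI / X$. I do not expect any real obstacle in this argument: the entire content is the Yoneda identification together with the (immediate) automatic $1$-injectivity, after which finiteness is inherited directly from the finiteness of $X$.
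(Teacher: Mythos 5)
Your proof is correct and is essentially an expanded version of the paper's one-line argument (``All \positions{} are finitely presentable presheaves''): the paper implicitly relies on exactly the Yoneda identification of objects of $\DI/X$ with agents of $X$ and on finiteness of $X$ as a presheaf, which you spell out, together with the (correct) observations that $1$-injectivity is vacuous for maps out of $\yoneda_{[n]}$ and that discreteness of $\DI$ makes the comma category discrete. No gap.
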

\begin{proof}
All \positions are finitely presentable presheaves.
\end{proof}

\begin{lem}
\axref{basic:full} For all $d \in \DI$ and \actions $M \colon X \proto d$, $M' \colon X' \proto d$, 
    and $b \colon d' \proto d$ with $M$ and $M'$ full and $b$ basic,
    if there exist cells $M \ot b \to M'$ then $M \iso M'$.
\end{lem}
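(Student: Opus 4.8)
The plan is to reduce the statement to a finite combinatorial check in the base category $\C$ via the Yoneda lemma. First I would pin down the shape of the data. Since $d$ is an individual and $\DI$ is discrete (Axiom~\axref{discreteness}), every cell whose horizontal domain and codomain both have initial \position $d$ restricts to $\id_d$ on $d$; in particular the two given cells $b \to M$ and $b \to M'$ do. Next I would argue that a \emph{full} \action whose initial \position is a single individual $d$ must be isomorphic to a single-\agent seed. Indeed, a full \action is obtained by pushing a full seed along an interface $I_{X_0} \to Z$, and since an interface carries no \agent, the pushout's initial \position has exactly $(\text{\agents of } X_0) + (\text{\agents of } Z)$ \agents; for this to be the single \agent $d = [n]$ we need the generating seed to be single-\agent and $Z$ to be an interface, and then the requirement that the initial \position be \emph{exactly} $[n]$ forces $Z = I_{X_0}$, so $M \iso$ seed. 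Hence $M$ and $M'$ are isomorphic to seeds among $\taun$, $\forkn$, $\nun$, $\tickn$, $\iotani$, $\outmjk$, while $b$ is a basic seed among $\taun$, $\forkln$, $\forkrn$, $\nun$, $\tickn$, $\iotani$, $\outmjk$. In each case the apex of the \action is a representable presheaf $\yoneda_c$.

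The key step is then the enumeration. Writing $c_b, c_M, c_{M'} \in \C$ for the apex objects, the middle components of the two cells are presheaf maps $\yoneda_{c_b} \to \yoneda_{c_M}$ and $\yoneda_{c_b} \to \yoneda_{c_{M'}}$, so by Yoneda they correspond to morphisms $c_b \to c_M$ and $c_b \to c_{M'}$ in $\C$. For each basic single-\agent object $c_b$ I would list the morphisms out of $c_b$ in $\C$ and keep only those whose codomain is a \emph{full} single-\agent object. For $c_b \in \{\taun, \nun, \tickn\}$ there is no non-identity morphism out of $c_b$ (no outgoing generating edge), so necessarily $c_M \iso c_b$, and these objects are themselves full. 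For $c_b = \forkln$ (resp.\ $\forkrn$) every non-identity morphism out of $c_b$ has codomain $\forkn$, and since $\forkln, \forkrn$ are not full, the only full target is $\forkn$, whence $c_M \iso \forkn$. For $c_b = \iotani$ (resp.\ $\outmjk$) every non-identity morphism out of $c_b$ has some synchronisation $\taunimjk$ as codomain; but a synchronisation cannot be a full \action whose initial \position is the individual $d$, because the seed $\taunimjk$ has two \agents in its initial \position and pushing along an interface only adds \agents. These targets are therefore excluded, leaving $c_M \iso \iotani$ (resp.\ $\outmjk$). In every case the full single-\agent object admitting a morphism from $c_b$ is unique up to isomorphism.

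Since both $c_M$ and $c_{M'}$ are forced to be this unique object, we obtain $c_M \iso c_{M'}$, and therefore $M \iso M'$, which is the claim. The main obstacle is not a deep argument but the careful exclusion of the synchronisation target $\taunimjk$: it is the only object reachable from a basic \action that would break uniqueness (a single input \action does embed into many synchronisations), and ruling it out relies precisely on the hypothesis that the shared codomain $d$ is an \emph{individual}. Everything else is the routine reduction to single-\agent seeds and the enumeration of the very few morphisms of $\C$ out of each basic apex.
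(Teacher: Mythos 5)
Your proof is correct and follows essentially the same route as the paper, whose proof is simply ``by case analysis'' with one illustrative case ($b = \forkln$ forces $M \iso \forkn \iso M'$); you have just spelled out the reduction to single-\agent{} seeds, the Yoneda enumeration of morphisms out of each basic apex in $\C$, and the exclusion of the synchronisation targets. No gaps.
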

\begin{proof}
By case analysis. E.g., if $b = \paraln$, then $M \iso \paran \iso M'$.
\end{proof}

\begin{rem}
  While the verification of~\axref{basic:full} is straightforward in
  our case, this axiom does impose strong constraints on playgrounds.
  Morally, it demands that the basic sub\actions of a given full
  \action should be disjoint from those of a different full \action.
  To see why this is restrictive, let, for all $j \in n$,
  $\iota_{n,i,j}$ denote the quotient of $\iota_{n,i}$ by the
  equation $s \circ s_{n+1} = s \circ s_j$. The equation says that
  the received channel was already known as channel number
  $j$. Further let $[n]/\ens{i=j}$ denote $[n]$ quotiented by
  $s_i = s_j$. We could be tempted to decree that the cospan
$${[n+1] / \ens{n+1 = j}} \xto{s} \iota_{n,i,j} \xot{t} [n]$$
is \anaction. An example consequence would be that, e.g., the
synchronisation on $[n] \paraofij{i,l}{j,k} [m]$ where $[m]$ sends $k$
on $j$, when restricted to the receiver, would give $\iota_{n,i,l}$
instead of $\iotani$. But then $\iotani$ and $\iota_{n,i,l}$ would be
two non-isomorphic full \actions sharing a common basic sub\action,
$\iotani$.
\end{rem}

We now have proved all playground axioms for $\D$, except right and
left decomposition. These require the development of more machinery,
which we undertake in the next section.

\subsection{Correctness criterion}\label{subsec:correctness}
In order to prove the remaining playground axioms for $\D$, we set up
a combinatorial characterisation of \traces among cospans.  Before
delving into technicalities, let us briefly map out our correctness
criterion.  Given a \trace $Y \into U \otni X$, we start by forgetting
the cospan structure and exploring the properties of $U$ alone.

The main idea is to construct a binary relation over the elements of
$U$, modeling causality. So, e.g., if \anagent $x \in U[n]$ forks into
$x_1$ and $x_2$, then we will have causal relations
\begin{center}
  \diag{%
    x_1 \& \& x_2 \\
    \& r \\
    \& x, %
  }{%
    (m-2-2) edge[<-] (m-1-1)
    edge[<-] (m-1-3)
    edge (m-3-2) %
  }
\end{center}
where $r$ denotes the corresponding element in $U(\forkn)$.  In order
for $U$ to admit a sequential decomposition into \actions, the main
criterion is that the causality relation should be acyclic.  

In addition to this, a few sanity checks are necessary.  First of all,
because \actions are merely seeds pushed along 1-injective maps from
their interfaces, the neighbourhood of each \action $x \in U(\mu)$
should not be too degenerate. For instance, the corresponding map
$\name{x}\colon \mu \to U$ should be 1-injective.  Moreover, for
inputs and channel creations, the new channel should really be
new. This property, which is a bit tedious to define properly, is
called \emph{local 1-injectivity}.

Furthermore, when we add a new \action to some
\trace, it is played by \anagent in the final \position. This
entails that no two \actions in $U$ may be performed by the same
\agent. We call this \emph{target-linearity} below. Symmetrically, no
two \actions may share their `created' \agents, which we call
\emph{source-linearity}. \emph{Linearity} is then the conjunction of
source- and target-linearity.

These conditions are sufficient, in the sense that if any
$U \in \FPshf{\C}$ has an acyclic causal relation, and is furthermore
locally 1-injective and linear, then it is the middle object of
\atrace.  But in fact, it is then easy to determine the corresponding
initial and final \positions. We design notions of \emph{initial}
and \emph{final} morphisms, so that $Y \xto{s} U \xot{t} X$ is \atrace
iff $U$ satisfies the above conditions, $t$ is initial, and $s$ is
final.

Let us first define the causal relation.  A first step is to restrict
attention to the \emph{cores} of $U$, in the following sense, which
are intuitively the main elements. E.g., for a forking \action
$x \in U(\forkn)$, keeping track of $x$ is enough, and handling
$x \cdot l$ and $x \cdot r$ tends to get in the way.  Technically, an
input or output is a core iff it is not part of a synchronisation; and
a left or right fork \action is a core iff it is not part of a full
fork \action. Here is a concise definition:
 \begin{defi}
   A \emph{core} of a presheaf $U \in \Chat$ is an element of
   dimension $> 1$ which is not the image of any element of higher
   dimension.
 \end{defi}

 Our definition of the causal relation will rely on the preliminary
 notions of sources and targets of a core, and that of channels
 created by a core.  These notions will fix the direction of the
 causal relation.
 \begin{defi}
   For any $U$ and core $\mu \in U(c)$, 
   \begin{itemize}
   \item the \emph{sources} of $\mu$ are the \agents{} $x$ such that
     $x = \mu \cdot f \cdot s$ for some $f$;
   \item the \emph{targets} of $\mu$ are the \agents{} $y$ such that
     $y = \mu \cdot f \cdot t$ for some $f$;
   \item a channel $a \in U(\star)$ is \emph{created} by $\mu$ iff $\mu$
     has the shape $\nun$ or $\iotani$, and $a = \mu \cdot s \cdot
     s_{n+1}$.
   \end{itemize}
 \end{defi}
 \begin{example}
   In the representable $\forkn$, there is one target, $l \circ t$ (or
   equivalently $r \circ t$), and two sources, $s_1 = l \circ s$ and
   $s_2 = r \circ s$.  Another example is $\taunimjk$, which has two
   targets, $\sender \circ t$ and $\rho \circ t$, and two
   sources. However, $\rho \rond s \rond s_{n+1}$ is not created by
   the input element $\rho$, because it is not a core.
 \end{example}

 \begin{defi}
   For any $U$, let its \emph{causal graph} $G_U$ have:
   \begin{itemize}
   \item as vertices, all channels, agents, and cores in $U$,
   \item  for all $x \in U[n]$ and $i \in n$, an edge $x \to x
     \cdot s_i$,
   \item and, for each core $\mu$, an edge from each of its sources
     and created channels, and one into each of its targets, as in
     \begin{center}
     \diag{%
       \textrm{source}_1 \& \textrm{source}_2 \& \textrm{created} \\
       \& \textrm{core} \& \\
       \textrm{target}_1 \& \& \textrm{target}_2 \rlap{.} %
     }{%
       (m-1-1) edge (m-2-2) %
       (m-1-2) edge (m-2-2) %
       (m-1-3) edge (m-2-2) %
       (m-2-2) edge (m-3-1) %
       edge (m-3-3) %
     }
   \end{center}
 \end{itemize}
\end{defi}

Please note that edges $a \to \mu$ from a channel to an input \action
exist only if the involved \action is not part of a synchronisation;
for otherwise the synchronisation is a core, not the input.

 The obtained graph is actually a binary relation, since there is at most one
 edge between any two vertices. It is also a colored graph, in the
 sense that it comes equipped with a morphism to the graph $L$:

  \begin{center} 
    \diag{%
      \infty \& 1 \& 0, %
    }{%
      (m-1-1) edge[bend left] (m-1-2) %
      (m-1-2) edge[bend left] (m-1-1) %
      (m-1-3) edge[bend left=40] (m-1-1) %
      (m-1-2) edge (m-1-3) %
    } 
  \end{center} 
  mapping cores to $\infty$, agents to $1$, and channels to
  $0$. (In particular, there are no edges from channels to
  agents or from cores to channels.)  For any graph $G$, equipped with a
  morphism $l \colon G \to L$, we call vertices of $G$ channels,
  agents, or cores, according to their label.



 As expected, we have:
 \begin{prop}
   For any \trace $U$, $G_U$ is acyclic (in the directed sense).
 \end{prop}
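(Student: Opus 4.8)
The plan is to induct on the length of $U$, i.e.\ on the number of \actions in a decomposition. For the base case, a \trace of length $0$ is (isomorphic to) a \position, hence has no element of dimension $>1$ and so no core; the only edges of $G_U$ are then the edges $x \to x\cdot s_i$ from \agents to channels, and since channels receive no outgoing edge and \agents receive no incoming edge in the absence of cores, $G_U$ is trivially acyclic. Before the induction step I would also record the case of a single \action $M$: for each seed the graph is finite and acyclic by direct inspection of the pictures in \figurename~\ref{fig:stringmoves} and Figure~\ref{fig:tau}, and passing from a seed to a general \action only glues in a passive \position along its interface, which adds isolated \agents (each pointing only to its channels) and no new core, preserving acyclicity.

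For the inductive step, write $U \iso M \vrond V$ with $M$ an \action and $V$ a \trace of strictly smaller length, so that $U$ is the pushout $M +_Z V$ along the shared \position $Z$ (the final \position of $M$, i.e.\ the initial \position of $V$). Since $Z$ has dimension $\le 1$ and cores have dimension $>1$, the gluing creates no new higher-dimensional element and merges none: the cores of $U$ are exactly the disjoint union of those of $M$ and those of $V$, each remaining a core in $U$. Using that the coprojections $M \to U$ and $V \to U$ are monos (Lemma~\ref{lem:s:t:monos} together with stability of monos under pushout) and preserve sources, targets and created channels, I would first check that the full subgraph of $G_U$ spanned by the image of $M$ is exactly $G_M$, and likewise for $V$; by the induction hypothesis $G_V$ is acyclic, and $G_M$ is acyclic by the single-\action case above.

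It then suffices to show that no directed edge of $G_U$ leaves the $M$-part for a vertex lying only in the $V$-part, for then any directed cycle is confined to a single part and we conclude. This boundary analysis is the crux. The only shared vertices are the channels and \agents of $Z$. By Lemma~\ref{lem:s:star:iso} the $s$-leg of a \trace is an isomorphism on channels, so every channel of $M$ already lies in $Z$ while the genuinely new (created) channels of $V$ are $V$-only; a shared channel can therefore only be created on the $M$-side, whence its unique outgoing edge, if any, points to an $M$-core. A shared \agent $z$ is a \emph{final} \agent of $M$, hence a source of an $M$-core (edge $z \to \mathrm{core}$), but an \emph{initial} \agent of $V$, hence only a target of $V$-cores (edge $\mathrm{core} \to z$); thus every crossing edge incident to $z$ runs from the $V$-part into the $M$-part. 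Finally, the edges leaving the $M$-part are: core $\to$ target, whose target is the pre-\action \agent living in the initial \position and distinct from every source and from every passive \agent, hence $M$-only; \agent $\to$ channel, landing in a (shared) $M$-channel; and created-channel $\to$ core, landing in an $M$-core. None of these reaches a $V$-only vertex or a $V$-core, so the $M$-part is closed under reachability. I expect the hard part to be exactly this orientation argument---verifying, case by case over the shapes $\forkn$, $\taunimjk$, $\nun$, $\iotani$, $\outmjk$, $\taun$, $\tickn$, that targets are never shared and that created channels feed only their own core---since it is what rules out a cycle threading back and forth across $Z$.
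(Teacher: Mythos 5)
Your proof is correct and follows the paper's own (one-line) proof exactly: induction on a decomposition of $U$ into \actions, with the content being the boundary analysis over the shared \position{} $Z$. The only loose ends are that ``created channels of $V$ are $V$-only'' and ``initial \agents{} of $V$ are never sources of $V$-cores'' do not actually follow from Lemma~\ref{lem:s:star:iso} (which concerns the $s$-leg, not the $t$-leg); both facts are true but need to be carried along as a strengthening of the induction hypothesis, exactly as you anticipate when you flag the orientation argument as the hard part.
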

 \begin{proof}
   By induction on any decomposition of $U$.
 \end{proof}

 Let us now consider local 1-injectivity, linearity, initiality and
 finality. First, let us emphasise that for all seeds $Y \into M \otni
 X$, $M$ is a representable presheaf, so, e.g., it makes sense to
 consider $U(M)$.
 \begin{defi} A presheaf $U$ is \emph{locally 1-injective} iff for any
   seed $Y \into M \otni X$ with interface $I$ and core $\mu \in
   U(M)$, if two distinct elements of $M$ are identified by the Yoneda
   morphism $\mu \colon M \to U$, then they are in (the image of)
   $I(\star)$.
  \end{defi}
  This is equivalent to requiring that all morphisms $\yoneda_c \to
  U$, for all $c \in \C$, are 1-injective, and that
  for all core inputs and channel creations $x$ of arity $n$ in $U$, 
  for all $i \in n$, we have
  $$x \cdot s \cdot s_{n+1} \neq x \cdot s \cdot s_i.$$

\begin{prop}
  Any \trace $U$ is locally 1-injective.
\end{prop}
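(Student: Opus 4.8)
The plan is to proceed by induction on the length of $U$, exactly as for the acyclicity proposition above, using the elementwise reformulation of local 1-injectivity: that every $\yoneda_c \to U$ is 1-injective, and that for every core input or channel creation $x$ of arity $n$ one has $x \rond s \rond s_{n+1} \neq x \rond s \rond s_i$ for all $i \in n$. I will first dispatch the base case and the case of \actions, and then show the condition is preserved under composition of \traces.

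The base case is immediate: a \trace of length $0$ is isomorphic to a \position, hence has dimension at most $1$, contains no cores, and satisfies the condition vacuously. For \actions, recall that every \action $Y \to M \ot X$ is a pushout of some seed $Y_0 \to M_0 \ot X_0$ along a morphism $I_{X_0} \to Z$ out of its interface, and that $I_{X_0}$ consists solely of the channels of the \emph{initial} \position. Seeds are locally 1-injective by direct case inspection on the list in \figurename~\ref{fig:stringmoves}; in particular the created channel $s \rond s_{n+1}$ of an $\iotani$ or $\nun$ is distinct from each $s \rond s_i$. Since the pushout identifies only interface channels, and a created channel never lies in $I_{X_0}$ (cf.\ Example~\ref{ex:interface}), the created channel of a core of $M$ stays distinct from its known channels, and the Yoneda morphism $\name{\mu}$ of any core $\mu$ of $M$ identifies no more than it did in the seed. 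Hence every \action is locally 1-injective.

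For the induction step, write $U = M \vrond V$ with $M$ \anaction and $V$ a strictly shorter \trace. The composite is a pushout of $M \ot Y \to V$ along the shared final/initial \position $Y$, whose two legs $Y \to M$ and $Y \to V$ are monic by Lemma~\ref{lem:s:t:monos}; by stability of monos under pushout, both inclusions $M \to U$ and $V \to U$ are monic. The key structural fact is that in every dimension $>1$ the presheaf $U$ is the coproduct of $M$ and $V$, so each core of $U$ is the image of a unique core lying in either $M$ or in $V$. For a core coming from $M$, its Yoneda morphism factors as $\yoneda_c \to M \to U$, which is 1-injective since $\yoneda_c \to M$ is (local 1-injectivity of $M$) and $M \to U$ is monic; and for a core input or creation $x$ in $M$, both $x \rond s \rond s_{n+1}$ and $x \rond s \rond s_i$ lie in the image of the mono $M \to U$, so their distinctness in $M$ is reflected into $U$. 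Cores coming from $V$ are handled symmetrically, using the induction hypothesis for $V$ together with injectivity of $V \to U$ and naturality of the inclusion.

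The main obstacle is the created-channel clause: one must rule out that a channel created inside $V$ (or $M$) is identified, after gluing into $U$, with one of the \emph{same} core's already-known channels. This is precisely what monicity of the inclusions $M \to U$ and $V \to U$ forbids, so the real content is establishing that these inclusions are monic—via the pushout of the monic legs from Lemma~\ref{lem:s:t:monos}. (Note that a channel created in $V$ may well coincide in $U$ with a channel of $M$; this is harmless, since local 1-injectivity only compares a core's created channel with that same core's known channels, never demanding global novelty.) The remaining bookkeeping—matching the interface clause of the definition with the images of the monos—is routine.
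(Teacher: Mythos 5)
Your proof is correct and follows essentially the same route as the paper's: each core of $U$ lives in exactly one \action{} of a decomposition, and that \action{} is a pushout of a seed along its interface, so only interface channels can be identified. The paper's version is a one-liner that leaves implicit what you spell out, namely the induction on length and the monicity of the inclusions $M \to U$ and $V \to U$ guaranteeing that no further identifications occur when the \actions{} are glued together into $U$.
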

\begin{proof}
  Choose a decomposition of $U$ into \actions; $\mu$ corresponds to
  precisely one such \action, say $M'$, obtained, by definition, from
  some seed $M$ as a pushout~\eqref{morphismofinterfacedseeds}.  By
  construction of pushouts in presheaf categories, $M'$ is obtained
  from $M$ by identifying some channels according to $I \to Z$.
\end{proof}

Let us observe that, because local 1-injectivity is only about cores,
an input which is part of a synchronisation may receive an already
known channel, even if its $n+1$th channel is not part of its
interface --- because it is not a core.

After local 1-injectivity, let us consider linearity.
\begin{defi}
  Any $G \in \Gph / L$ is \emph{source-linear} iff
  for any cores $\mu, \mu'$, and other vertex (necessarily an agent or a channel) $x$,
  $\mu \ot x \to \mu'$ in $G$, then $\mu = \mu'$;
  $G$ is \emph{target-linear} iff for any cores $\mu,\mu'$ and agent $x$, if
  $\mu \to x \ot \mu'$ in $G$, then $\mu = \mu'$;
  $G$ is \emph{linear} iff it is both source-linear and target-linear.
\end{defi}

 \begin{prop}
   For any \trace $Y \xto{s} U \xot{t} X$, $G_U$ is linear.
 \end{prop}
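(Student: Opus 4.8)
The plan is to follow the same inductive pattern already used for acyclicity and local 1-injectivity: induct on $\length{U}$ using a decomposition of $U$ into \actions. If $\length{U} = 0$, then $U$ is isomorphic to an identity cospan, has no cores, and $G_U$ is trivially linear. Otherwise write $U = M \vrond V$ with $M$ an \action adjacent to the initial \position $X$ and $V$ a \trace of strictly smaller length; let $Z$ denote the final \position of $M$, which is also the initial \position of $V$. By induction, $G_V$ is linear. Composition of cospans glues $M$ and $V$ along $Z$; since $Z$ is a \position it has no elements of dimension $>1$, so the gluing neither identifies nor creates any cores. As $M$ is a single \action it contributes exactly one core $\mu_M$, and the cores of $U$ are precisely $\mu_M$ together with the cores of $V$.

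It then suffices to establish three statements about $U$, whose conjunction is equivalent to linearity: (a) every \agent is a source of at most one core; (b) every \agent is a target of at most one core; (c) every channel is created by at most one core. Indeed, an edge $x \to \mu$ of $G_U$ records either that the \agent $x$ is a source of $\mu$ or that the channel $x$ is created by $\mu$; thus source-linearity is the conjunction of (a) and (c), while target-linearity is exactly (b).

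For the inductive step I would analyse how the distinguished data of $\mu_M$ sit inside the gluing along $Z$. The sources of $\mu_M$ are the \emph{active} initial \agents of $M$; these are consumed by $M$ and do not survive into $Z$, so (the gluing being along $Z$) they are not identified with any \agent of $V$ and hence cannot be sources of any core of $V$. The channel possibly created by $\mu_M$ lies in $Z$, that is, in the initial \position of $V$; but a channel already present in the initial \position of a \trace is never \emph{created} by one of its cores, created channels being fresh. With the induction hypothesis for $V$ this yields (a) and (c). Dually, the targets of $\mu_M$ lie in $Z$, the initial \position of $V$, and an \agent present in the initial \position of a \trace is never a target of one of its cores, targets being freshly created \agents distinct from the initial ones; together with the induction hypothesis this gives (b).

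The main obstacle is the combinatorial bookkeeping relating the cores, \agents and channels of $M \vrond V$ to those of $M$ and $V$ across the pushout, and in particular the two freshness/consumption claims: that the active sources of $\mu_M$ disappear in $Z$, and that the targets and created channel of $\mu_M$ lie in the initial \position of $V$ and so escape all cores of $V$. These reduce to a direct inspection of seeds (as in the base cases of Lemma~\ref{lem:restriction:seeds} and of local 1-injectivity) together with the facts that the cospan legs are monos (Lemma~\ref{lem:s:t:monos}) and that pushing a seed along its interface shares only pre-existing channels, so that no spurious identifications arise. Once these are in place, source- and target-linearity of $G_U$ follow.
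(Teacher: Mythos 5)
Your proof is correct and takes essentially the same route as the paper's (whose entire proof is: induct on a decomposition of $U$ into actions, observing that the gluing happens along a position that is final for one piece and initial for the other). One caveat on terminology: you have the paper's conventions for \emph{source} and \emph{target} reversed — a source of a core $\mu$ is an element $\mu \cdot f \cdot s$, which lies in the \emph{final} position of the corresponding seed (an agent the action creates), while a target is $\mu \cdot f \cdot t$, an initial agent consumed by the action, so edges of $G_U$ run from created agents into the core and out to the consumed ones. Since your argument establishes both halves of linearity and the two claims you prove are exactly the two needed (just with swapped labels), the proof survives the relabelling intact, but as written your justification of source-linearity for agents is really a justification of target-linearity and vice versa.
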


 \begin{proof} Straightforward, by induction on any decomposition of
 $U$ into \actions, observing that we glue along agents and channels
 which are initial on one side and final on the other.  \end{proof}

The last of our necessary sanity checks is about initiality and finality. 
The idea here is that one may read in $U$ alone what both legs of the 
cospan $Y \into U \otni X$ should be. 
 \begin{defi}
   An agent is \emph{initial} in $U$ when it is not the source of any
   \action, i.e., for no \action $\mu \in U$, $x = \mu \cdot s$. A channel
   is initial when it is not created by any core.

   An agent $x$ in $U$ is \emph{final} iff it is not the target of any
   \action, i.e., for no \action $\mu \in U$, $x = \mu \cdot t$.  All
   channels are final.
 \end{defi}
 \begin{lem}
   An agent is initial in $U$ iff it has no edge to any core in $G_U$.
 \end{lem}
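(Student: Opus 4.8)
The plan is to prove both directions at once by unwinding the two definitions, reducing everything to a comparison of the \emph{source} maps. The key observation is that, for an agent, ``having an edge to a core'' is, by construction of $G_U$, literally the same as ``being a source of that core,'' and that being a source of a core coincides with being the source of an action.

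First I would analyse the edges of $G_U$ incident to an agent $x$. By the definition of the causal graph, the edges arriving at a core $\mu$ come only from the sources of $\mu$ and from the channels created by $\mu$; and the edges leaving an agent $x \in U[n]$ go either to a channel $x \cdot s_i$ (for $i \in n$) or to a core of which $x$ is a source. Since created elements are channels and $x$ is an agent, this already shows that $x$ has an edge to some core if and only if $x$ is a source of some core, i.e.\ $x = \mu \cdot f \cdot s$ for some core $\mu$ and some $f$. It then remains to identify ``$x$ is a source of some core'' with the negation of initiality, i.e.\ with ``$x = \nu \cdot s$ for some action $\nu$.'' One direction is immediate: if $x = \mu \cdot f \cdot s$ with $\mu$ a core, then $\mu \cdot f$ is a dimension-$2$ action element and $x = (\mu \cdot f) \cdot s$ witnesses non-initiality. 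For the converse I would use that every dimension-$2$ action $\nu$ is either a core itself—so $x = \nu \cdot \id \cdot s$ exhibits $x$ as a source of the core $\nu$—or lies in the image of a higher-dimensional element. Because the only objects of $\C$ of dimension $>2$ are $\forkn$ (dimension $3$) and $\taunimjk$ (dimension $4$), and these are \emph{always} cores (no morphism of $\C$ maps them into anything of strictly higher dimension), such a $\nu$ must be a sub-action $\mu \cdot f$ of a full-fork or synchronisation core $\mu$, with $f \in \{l,r\}$ or $f \in \{\rho,\epsilon\}$; then $x = \mu \cdot f \cdot s$ is again a source of the core $\mu$.

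The main, though entirely routine, obstacle is this last matching step: one must check by a short case analysis over the action types that the sources of a composite core are exactly the sources of its constituent sub-actions—the two branches of a full fork $\forkn$, and the input $\iotani$ and output $\outmjk$ of a synchronisation $\taunimjk$—and that no dimension-$2$ action escapes being either a core or a sub-action of one. Granting this correspondence, the chain of equivalences $x$ initial $\iff$ $x$ is not the source of any action $\iff$ $x$ is not the source of any core $\iff$ $x$ has no edge to any core in $G_U$ yields the claim.
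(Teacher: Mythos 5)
Your proof is correct, and it matches the intended argument: the paper states this lemma without proof, treating it as an immediate unwinding of the definitions of ``initial'', ``source of a core'', and the edges of $G_U$, which is exactly what you do. Your extra care in checking that every dimension-$2$ action is either itself a core or a sub-action $\mu \cdot f$ of a (necessarily core) element over $\forkn$ or $\taunimjk$ is the one point the paper leaves implicit, and you handle it correctly.
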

 \begin{lem}\label{lem:final}
   An agent is final in $U$ iff it has no edge from any core in $G_U$.
 \end{lem}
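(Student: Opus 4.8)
The plan is to prove the contrapositive on both sides, i.e.\ to show that an agent $x \in U[n]$ fails to be final exactly when it receives an edge from some core. First I would record what the out-edges of a core look like in $G_U$. By construction the only edges leaving a core $\mu$ are the edges into its targets, and the coloring $G_U \to L$ forbids any edge from a core to a channel (recall there are no edges from cores to channels), so every out-edge of a core lands on an agent. Hence, for our agent $x$, having an incoming edge from a core is equivalent to $x$ being a \emph{target} of some core $\mu$, that is, to $x = \mu \cdot f \cdot t$ for some core $\mu$ and some $f$.

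It then remains to identify the targets of cores with the agents of the form $\mu' \cdot t$ for an \action $\mu'$, which is precisely the set of non-final agents. The combinatorial input here is the shape of $\C$: the only dimension-raising generators are $l \colon \forkln \to \forkn$ and $r \colon \forkrn \to \forkn$, together with $\rho$ and $\epsilon$ into $\taunimjk$ from the input and output types; and neither $\forkn$ nor $\taunimjk$ admits any further generator upwards. Consequently the cores are exactly the elements not in the image of $l$, $r$, $\rho$, or $\epsilon$, and every \action-element carrying a $t$-leg --- i.e.\ of one of the dimension-$2$ types $\taun, \tickn, \nun, \inna, \outnab, \forkln, \forkrn$ --- is either itself a core (for $\taun, \tickn, \nun$, and for standalone $\inna, \outnab$) or is uniquely a component $\mu \cdot g$ of a core $\mu$ via some $g \in \{l, r, \rho, \epsilon\}$.

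With this correspondence in hand I would argue both inclusions. If $x = \mu \cdot f \cdot t$ is a target of a core $\mu$, then $\mu \cdot f$ is an \action-element carrying a $t$-leg with $(\mu \cdot f)\cdot t = x$, so $x$ is non-final. Conversely, if $x = \mu' \cdot t$ for some \action-element $\mu'$, then either $\mu'$ is already a core, whence $x$ is a target of $\mu'$ (take $f = \id$), or $\mu' = \mu \cdot g$ for its containing core $\mu$, whence $x = \mu \cdot g \cdot t$ exhibits $x$ as a target of $\mu$. Combining this with the first paragraph gives the desired equivalence, and taking contrapositives yields the lemma. I do not expect any serious obstacle: the only point requiring care is the bookkeeping of the core/\action-component correspondence, which is entirely dictated by the list of generators of $\C$; the companion statement for initial agents is proved by the mirror-image argument, with sources $\mu \cdot f \cdot s$ in place of targets.
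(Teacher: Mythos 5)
Your proof is correct, and it fills in exactly the unpacking that the paper treats as immediate: the paper states Lemma~\ref{lem:final} (and its companion for initial agents) without proof, the point being precisely the correspondence you spell out between ``targets of cores'' ($x = \mu\cdot f\cdot t$ for a core $\mu$) and ``targets of \actions'' ($x = \mu'\cdot t$ for a dimension-$2$ element $\mu'$), which holds because every dimension-$2$ element with a $t$-leg is either itself a core or the image $\mu\cdot g$ of a (necessarily maximal, hence core) element along one of $l,r,\rho,\epsilon$. No gaps; your observation that uniqueness of the containing core is not actually needed (existence suffices) is also accurate.
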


Now, here is the expected characterisation:
\begin{thm}\label{thm:completeness}
  A monic cospan $Y \xinto{s} U \xotni{t} X$ of finite presheaves is a \trace iff
  \begin{corrcond}
  \item \label{cond:locinj} $U$ is locally 1-injective, 
  \item \label{cond:X} $X$ contains exactly the initial agents and channels in
    $U$, 
  \item \label{cond:Y} $Y$ contains exactly the final agents and channels in
    $U$, 
  \item \label{cond:causality} and $G_U$ is linear and acyclic.
  \end{corrcond}
\end{thm}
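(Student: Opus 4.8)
The forward (``only if'') direction is essentially already in place. Local $1$-injectivity, linearity and acyclicity of $G_U$ for a \trace $U$ are exactly the three propositions immediately preceding the statement, so it only remains to check conditions~\ref{cond:X} and~\ref{cond:Y}, i.e.\ that the legs $t$ and $s$ of a \trace{} pick out precisely the initial, resp.\ final, \agents{} and channels. I would prove this by induction on a decomposition of $U$ into \actions. In the base case $U$ is an identity cospan, every \agent{} is both initial and final, every channel is both, and $t,s$ are isomorphisms, so the claim holds. For the inductive step $U \iso M \vrond V$, I would use that composition glues $M$ and $V$ along a \position{} consisting of \agents{} that are final on the $M$-side and initial on the $V$-side (Lemma~\ref{lem:final} and its initial counterpart), together with the description of the sources and targets of the single core of $M$, to conclude that the initial \position{} of $U$ is that of $M$ and the final one that of $V$, which are the initial/final \agents{} and channels by the induction hypothesis.

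The substance is the converse. I would argue by induction on the number of cores of $U$ (equivalently on $\card{U}$). If $U$ has no core then it has dimension $\le 1$, hence is a \position; conditions~\ref{cond:X} and~\ref{cond:Y} then force $t$ and $s$ to be isomorphisms (every \agent{} and channel being simultaneously initial and final), so $U$ is an empty \trace. Suppose now $U$ has at least one core. The key combinatorial step is to extract a core $\mu$ that can play the role of a \emph{first} \action, namely one all of whose targets are initial in $U$. Such a $\mu$ exists by acyclicity: if every core had a non-initial target --- i.e.\ a target that is the source of some other core --- one could build an infinite directed path $\mu_0 \to y_0 \to \mu_1 \to y_1 \to \cdots$ in the finite graph $G_U$, alternating target-edges and source-edges, forcing a cycle. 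Local $1$-injectivity and linearity guarantee that the neighbourhood of $\mu$ is exactly that of its generating seed, so $\mu$ determines a seed whose initial \agents{} are the targets of $\mu$; since these are initial in $U$, hence lie in $X$, I can push this seed along its interface into $X$ to obtain an \action{} $M \colon Y_M \proto X$.

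I would then define $U'$ by deleting from $U$ the core $\mu$ together with its target \agents{} (the ``before'' \agents{} consumed by $\mu$), while keeping the source \agents{} of $\mu$, all passive \agents{}, and all channels (including any channel created by $\mu$), equipped with the evident legs making it a cospan $Y \proto Y_M$. All four conditions transfer to $U'$: local $1$-injectivity, linearity and acyclicity are inherited because $G_{U'}$ is a subgraph of $G_U$ with one fewer core; and a short bookkeeping argument, using source-linearity (each \agent{} is the source of at most one core) and the removal of $\mu$, shows that the \agents{} and channels initial in $U'$ are exactly those of $Y_M$ --- the sources of $\mu$ and the channel it created become initial once $\mu$ is gone --- while the final \agents{} and channels of $U'$ are unchanged and equal to those of $Y$. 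By the induction hypothesis $U'$ is therefore a \trace.

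The main obstacle, and the technical heart of the proof, is the final reconstruction $U \iso M \vrond U'$: that gluing the extracted \action{} $M$ back onto $U'$ along $Y_M$ by pushout recovers $U$ on the nose. This amounts to checking that $U$ is the pushout of $M \ot Y_M \to U'$, which I would verify pointwise --- away from $\mu$ and its targets the two presheaves agree, and along $Y_M$ the gluing reinserts precisely the target \agents{} and the higher-dimensional element $\mu$ that were removed. Particular care is needed with the channel possibly created by $\mu$ (new in $M$ but an ordinary channel of $U'$) and with the compatibility of the two legs; here local $1$-injectivity and linearity are exactly what keep the pushout clean, neither collapsing nor duplicating elements. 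Combining the decomposition $U \iso M \vrond U'$ with the inductive conclusion that $U'$ is a \trace{} closes the induction, and hence proves the theorem.
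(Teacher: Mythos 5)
Your proof is correct and follows essentially the same route as the paper: the converse is proved by induction on the number of cores, peeling off a temporally first core (your ``all targets initial'' criterion is exactly the paper's \emph{maximal} core of Lemma~\ref{maxcore}), forming the action by pushing its seed along the interface into the passive remainder of $X$, removing its past to get $U'$, and reconstructing $U$ as the pushout $M \vrond U'$. The only points you gloss over---that $U - \past{\mu}$ is actually a subpresheaf (the content of Lemma~\ref{maxcore}, which needs target-linearity plus your initiality condition on $\mu$) and that one must push along $I_{X_0} \to X_1$ with $X_1 = X - \Pl(X_0)$ rather than into $X$ itself so that the resulting action has initial position exactly $X$---are handled by exactly the hypotheses you already have in place.
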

Of course, we have almost proved the `only if' direction, and the rest
is easy, so only the `if' direction remains to prove. The rest of this
section is devoted to this.  So given a cospan satisfying the above
conditions, we intend to sequentialise it, i.e., decompose it into
\actions. We will proceed by induction on the number of cores in $U$,
by picking a core $\mu$ which is maximal according to $G_U$, removing
it from $U$ and applying the induction hypothesis to the rest.
However, it may not be obvious how we should remove $\mu$ from $U$.
E.g., the topos-theoretic difference $U \setminus \mu$ does not yield
the expected result, as it removes all sources of $\mu$.  Instead, we
consider the following operation: for any morphism of presheaves
$f \colon U \to V$ and set $W$, let
$U - W = \sum_{c \in \C} \im(U (c)) \setminus W \subseteq \sum_{c \in
  \C} V (c)$. This is a slight abuse of notation, as $f$ is implicit,
but it should be easily inferred from context.
\begin{rem}
  We observe that $U - W$ is generally just a set, not a presheaf;
  i.e., its elements are not necessarily stable under the action of
  morphisms in $\C$. Consider for example $U = [1] \para [1]$ and let
  $W$ consist of the first agent and the unique channel. Then $U - W$
  does not contain the unique channel of $U$, so the action of $s_1$
  on the second agent steps outside $U-W$.
\end{rem}
But there is one useful case where $U - W$ is indeed a subpresheaf of
$U$, as we show below in Lemma~\ref{maxcore}.
\begin{defi}
  For any seed $Y \into M \otni X$, let the \emph{past}
  $\past{M} = M - Y$ of $M$ be the set of its elements not in the
  image of $Y$.  For any such $M$, presheaf $U$, and core $\mu \in U
  (M)$, let $\past{\mu} = \im(\past{M})$ consist of all images of
  $\past{M}$.
\end{defi}

To explain the statement a bit more, by Yoneda, we see $\mu$ as a map
$M \to U$, so we have a set-function $$\past{M} \into \ob(\el(M)) \to
\ob(\el(U))$$ (recalling that $\el$ denotes the category of
elements). We observe that $\past{\mu}$ is always a set of agents and
\actions only, since channels present in $X$ always are in $Y$ too.

Given a core $\mu \in U$, the relevant way of removing $\mu$ from $U$
will be:
$$U \fatbslash \mu = \bigcup \ens{V \into U \aalt \ob(\el(V)) \cap
  \past{\mu} = \emptyset}.$$
$U \fatbslash \mu$ is thus the largest
subpresheaf of $U$ not containing any element of the past of
$\mu$. The good property of this operation is:
\begin{lem}\label{maxcore}
  If $\mu$ is a \emph{maximal} core in $G_U$ (i.e., there is no path to any
  further core) and $G_U$ is target-linear, then $(U \fatbslash \mu)
  (c) = U (c) \setminus \past{\mu}$ for all $c$.
\end{lem}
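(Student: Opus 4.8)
The plan is to reduce the statement to showing that the pointwise complement $S$, defined by $S(c) = U(c)\setminus\past{\mu}$ for every $c\in\C$, is a \emph{subpresheaf} of $U$. Indeed, $U\fatbslash\mu$ is a union of subpresheaves, hence itself a subpresheaf, and every member of that union avoids $\past{\mu}$, so $(U\fatbslash\mu)(c)\subseteq U(c)\setminus\past{\mu}$ for all $c$; this already gives one inclusion. Conversely, if $S$ is a subpresheaf then it trivially avoids $\past{\mu}$, so it occurs among the $V$'s whose union defines $U\fatbslash\mu$, whence $S\subseteq U\fatbslash\mu$. Combining the two inclusions yields $S = U\fatbslash\mu$, which is exactly the claim. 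Thus the entire content of the lemma is the closure of $S$ under the action of $\C$.

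Closure of $S$ amounts to the \emph{coface-closure} of $\past{\mu}$: for every $x\in U(c)$ and $g\colon c'\to c$ in $\C$, if $x\cdot g\in\past{\mu}$ then $x\in\past{\mu}$. First I would record, by inspecting the seed shapes listed in Definition~\ref{def:playground:data}, that $\past{\mu}$ contains no channels and consists of $\mu$, its sub-cores, and its target agents. Since the only non-trivial faces of an agent are channels, and $\past{\mu}$ contains no channels, no agent $x$ can have a face in $\past{\mu}$; and channels have no non-trivial faces at all. Hence the only elements that could break coface-closure are cores $\nu$ possessing a face in $\past{\mu}$, and it suffices to prove that every such $\nu$ already lies in $\past{\mu}$.

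So let $\nu$ be a core with a face in $\past{\mu}$; I would show $\nu\in\past{\mu}$ by case analysis on that face. If the shared face is a sub-core of $\mu$, then $\nu$ and $\mu$ share a target (the target of that sub-core), so target-linearity of $G_U$ forces $\nu=\mu$. If the shared face is an agent $x$ of $\past{\mu}$, then $x$ is a target of $\mu$, giving an edge $\mu\to x$ in $G_U$; now $x$ is incident to $\nu$ either as a target of $\nu$, in which case $\mu\to x\ot\nu$ and target-linearity again yields $\nu=\mu$, or as a source of $\nu$, in which case the edges $\mu\to x\to\nu$ produce a directed path from $\mu$ to the core $\nu$, contradicting the maximality of $\mu$. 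In every admissible case $\nu$ is $\mu$ itself or a sub-core of $\mu$, hence in $\past{\mu}$, as required.

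The main obstacle, and the only genuinely delicate point, is the bookkeeping in this last paragraph: one must be certain that the agents occurring in $\past{\mu}$ are exactly the targets of $\mu$ (so that the edge $\mu\to x$ is indeed available), and that every incidence of a core with an agent or with a lower core is reflected by an edge of $G_U$ of the stated orientation. This is precisely where only \emph{target}-linearity is available — source-linearity is \emph{not} among the hypotheses — so each potential failure of coface-closure must be routed either through a shared target (eliminated by target-linearity) or through an outgoing directed path from $\mu$ (eliminated by maximality). Checking, seed by seed (the fork $\forkn$, the synchronisation $\taunimjk$, and the basic actions), that these are the only two possibilities is the crux of the argument.
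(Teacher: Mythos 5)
Your proposal is correct and follows essentially the same route as the paper: both reduce the lemma to showing that $c \mapsto U(c)\setminus\past{\mu}$ is a subpresheaf, and both settle the critical incidences by the same dichotomy (a shared target agent is killed by target-linearity, while a source incidence yields a directed path $\mu \to x \to \nu$ contradicting maximality of $\mu$). The one configuration you leave implicit — a non-core element of dimension $\geq 2$ (e.g.\ an $l$- or $\rho$-face of a core) having an agent face in $\past{\mu}$ — is exactly the case the paper also dispatches silently with its ``w.l.o.g.''; it reduces to your core case by passing to the associated core, so nothing essential is missing.
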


\begin{proof} The direction $(U \fatbslash \mu) (c) \subseteq U (c)
  \setminus \past{\mu}$ is by definition of $\fatbslash$. Conversely,
  it is enough to show that $c \mapsto U(c) \setminus \past{\mu}$
  forms a subpresheaf of $U$, i.e., that for any $f \colon c \to c'$
  in $\C$, and $x \in U(c') \setminus \past{\mu}$, $x \cdot f \notin
  \past{\mu}$. Assume on the contrary that $x' = x \cdot f \in
  \past{\mu}$. Then, of course $f$ cannot be the identity, and
  w.l.o.g.\ we may assume that $x'$ is \anagent{} and $x$ is a core.
  But then, because $x' \in \past{\mu}$, there is an edge $\mu \to x'$
  in $G_U$, and because $x' = x \cdot f$, there is an edge $x \to x'$
  or $x' \to x$ in $G_U$.  The former case is impossible by
  target-linearity, and the latter case would imply the existence of a
  path $\mu \to x$ in $G_U$, which contradicts the maximality of
  $\mu$. So $x' \in \past{\mu}$ is impossible altogether.
%
\end{proof}

\begin{proof}[Proof of Theorem~\ref{thm:completeness}]
  We proceed by induction on the number of \actions in $U$. If it is
  zero, then $U$ is a \position; by~\ref{cond:X}, $t$ is an iso,
  and by~\ref{cond:Y} so is $s$, hence the cospan is a \trace. For
  the induction step, we first decompose $U$ into
  $$Y \xto{s_2} U' \xot{t_2} Z \xto{s_1} M' \xot{t_1} X,$$
  and then show that $M'$ is \anaction and $U'$ satisfies the
  conditions of the theorem. 

  First, by acyclicity, pick a maximal core $\mu$ in $G_U$, i.e., one
  with no path to any other core. Let
  \begin{center}
    \diag{%
      \& I_0 \\
      Y_0 \& M_0 \& X_0 %
    }{%
      (m-1-2) edge (m-2-1) edge (m-2-2) edge (m-2-3) %
      (m-2-1) edge (m-2-2)
      (m-2-3) edge (m-2-2)
    }
  \end{center}
  be the seed with interface corresponding to $\mu$, so
  we have the Yoneda morphism $\mu\colon M_0 \to U$.

  Let $U' = (U \fatbslash \mu)$, and $X_1 = X - \Pl(X_0)$. $X_1$ is a
  subpresheaf of $X$, since it contains all channels. The square
  \begin{center}
    \Diag{%
      \pbk{m-2-1}{m-2-2}{m-1-2}  %
    }{%
         I_0 \& X_1 \\
         X_0  \& X  %
    }{%
      (m-1-1) edge[labelu={}] (m-1-2) %
      edge[labell={}] (m-2-1) %
      (m-2-1) edge[labeld={}] (m-2-2) %
      (m-1-2) edge[labelr={}] (m-2-2) %
    }
  \end{center}
  is a pushout, since it just adds the missing agents to $X_1$. 
  Define now $Z$, $M'$, $s_1$, and $t_1$ by the pushouts
\begin{center}
    \Diag(.3,.8){%
      \pbk{X0}{X}{X1} %
      \pullback{X1}{M'}{M0}{draw,-} %
      \pullback{X1}{Z}{Y}{draw,-} %
    }{%
      \& |(Y)| Y_0 \& \& |(Z)| {Z} \\
      \& \ \& \\
      \& |(M0)| M_0 \& \& |(M')| M' \& \& |(U)| U  \\
      |(I)| I_0 \&\& |(X1)| X_1 \\
      \& |(X0)| X_0 \& \& |(X)| X %
    }{%
      (U) edge[<-] (X) %
      (U) edge[<-] (X1) %
      (U) edge[<-] (M') %
      (U) edge[<-] (Z) %
      (X1) edge[] (X) %
      edge (M') %
      edge (Z) %
      (I) edge[] (X0) %
      edge (X1) %
      edge (M0) %
      edge (Y) %
      (Y) edge[fore] (Z) %
      (M0) edge[fore] (M') %
      (X0) edge (X) %
      (X) edge[dashed,fore,labelr={t_1}] (M') %
      (Z) edge[dashed,labelr={s_1}] (M') %
      (X0) edge[fore] (M0) %
      (Y) edge[fore] (M0) %
    }
\end{center}
and the induced arrows.  We further obtain arrows to $U$ by universal
property of pushout. 

Let us show that the arrow $f \colon M' \to U$ is mono.  First, it is
obviously mono in dimensions $> 1$.  It is also mono in dimension $1$,
because $M'[n] = X[n] + Y_0[n]$ for all $n$ and $X \to U$ is mono with
image consisting only of initial agents, which are thus dijoint from
the image of $Y_0$. Finally, for dimension $0$, i.e., at $\star$, the
pushout defining $M'$ is isomorphic to
  \begin{center}
    \Diag(1,.6){%
      \pullbackk{m-2-1}{m-2-2}{m-1-2}{draw,-,shorten <=.2cm} %
    }{%
        I_0(\star)  = X_0(\star) \& X_1(\star) = X(\star) \\
  M_0(\star) = X_0(\star) + I \&  M'(\star) = X(\star) + I  %
    }{%
(m-1-1) edge[labelu={}] (m-1-2) %
edge[labell={\injl}] (m-2-1) %
(m-2-1) edge[labeld={}] (m-2-2) %
(m-1-2) edge[labelr={\injl}] (m-2-2) %
    }
  \end{center}
  where $I = M_0(\star) \setminus X_0(\star)$ is the set of channels
  created by the \action.  Consider any $a,b \in M' (\star)$ such that
  $a \neq b$. Because $X \to U$ is mono, if $a,b \in X (\star)$ then
  $f (a) \neq f (b)$. By local 1-injectivity of $U$, if $a,b \in I$
  then $f(a) \neq f(b)$. Finally, if $a \in X(\star)$ and $b \in I$,
  then we have an edge $f(b) \to \mu$ in $G_U$, whereas $f (a)$ is
  initial by~\ref{cond:X}, so $f(a) \neq f(b)$.  This shows that
  $M' \to U$ is mono, which also entails that $Z \to U$ is a mono,
  because $s_1$ is a pushout of the mono $Y_0 \to M_0$.

By~\ref{cond:locinj} and
  Lemma~\ref{maxcore}, $U = M' \cup U'$, i.e., the square
  \begin{center}
    \Diag{%
      \pbk{m-2-1}{m-2-2}{m-1-2} %
    }{%
      Z \& U'  \\
      M'  \& U  %
    }{%
(m-1-1) edge[labelu={}] (m-1-2) %
edge[labell={}] (m-2-1) %
(m-2-1) edge[labeld={}] (m-2-2) %
(m-1-2) edge[labelr={}] (m-2-2) %
    }
  \end{center}
  is a pushout, so $U$ is indeed a composite as claimed, with $Z \into
  M' \otni X$ \anaction by construction. So, it remains to prove that
  $Y \into U' \otni Z$ satisfies the conditions. First, as a
  subpresheaf of $U$ (whose inclusion preserves cores), $U'$ is
  locally 1-injective and has a linear and acyclic causal graph, so
  satisfies~\ref{cond:locinj} and~\ref{cond:causality}. $U'$
  furthermore satisfies~\ref{cond:X} by construction of $Z$ and
  source-linearity of $G_U$, and~\ref{cond:Y} because removing
  $\past{\mu}$ cannot make any non-final agent final.
\end{proof}

Let us conclude this section with a helpful lemma, whose proof relies
on Theorem~\ref{thm:completeness}:
\begin{lem}\label{lem:posetal}
  There is at most one cell filling any diagram
  \begin{center}
    \diag{%
      Y' \& Y \\
      X' \& X %
    }{%
      (m-1-1) edge[labela={k}] (m-1-2) %
      edge[pro,twol={u'}] (m-2-1) %
      (m-2-1) edge[labelb={h}] (m-2-2) %
      (m-1-2) edge[pro,twor={u}] (m-2-2) %
    }
  \end{center}
  in $\D$.
\end{lem}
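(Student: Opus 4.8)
The plan is to read a cell off from the combinatorial data supplied by Theorem~\ref{thm:completeness}. Writing $u' = (Y' \xto{s'} U' \xot{t'} X')$ and $u = (Y \xto{s} U \xot{t} X)$, a cell filling the square is exactly a presheaf morphism $l \colon U' \to U$ with $l \rond s' = s \rond k$ and $l \rond t' = t \rond h$. Given two such morphisms $l_1$ and $l_2$, I would prove $l_1 = l_2$ by forcing their common value on every element of $U'$, working up in dimension and exploiting that both $U'$ and $U$ are \traces, hence carry linear, acyclic causal graphs.

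The key observation, which I would isolate first, is that in any \trace $U$, for every object $c \in \C$ of dimension $>1$, an element of $U(c)$ is determined by its \emph{target} agent. This is precisely where target-linearity of $G_U$ (established for all \traces earlier) does the work: a given agent is the target of at most one core of $U$, and that core pins down the element, whether the element is itself a core (the cases $\taun$, $\nun$, $\tickn$, $\taunimjk$, and non-synchronised inputs, outputs and stand-alone forks) or a component $\sigma \rond \rho$, $\sigma \rond \sender$, $\nu \rond l$, $\nu \rond r$ of a larger core $\sigma$ or $\nu$ (the remaining cases, handled by a short inspection of seed shapes). I expect this claim to be the genuine obstacle, and within it the delicate point is exactly that $l$ need \emph{not} preserve cores --- an input core of $U'$ may be sent to a synchronised input of $U$ --- so one cannot argue by transporting cores of $U'$ to cores of $U$; the statement must instead hold for \emph{all} elements of $U(c)$, core or not.

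Granting the claim, I would conclude in three steps. First, $l$ is forced on channels: by Lemma~\ref{lem:s:star:iso} the leg $s'$ is invertible in dimension $0$, so every channel of $U'$ is $s'(a)$ and $l(s'(a)) = s(k(a))$. Second, $l$ is forced on agents, by well-founded induction along the acyclic graph $G_{U'}$, say on the length of the longest directed path leaving a vertex. An initial agent lies in the image of the mono $t'$ (Lemma~\ref{lem:s:t:monos}) by condition~\ref{cond:X}, so $l \rond t' = t \rond h$ forces its value; a non-initial agent $x$ is, by source-linearity of $G_{U'}$, the source of a \emph{unique} core $\nu$, and any target of $\nu$ is strictly closer to the sinks of $G_{U'}$, hence already handled, so the claim forces $l(\nu)$ and therefore $l(x) = l(\nu) \cdot f \cdot s$. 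Third, every element of dimension $>1$ carries a target agent, now known, so the claim forces $l$ on it as well. Since this exhausts $U'$, we obtain $l_1 = l_2$, i.e.\ at most one cell.
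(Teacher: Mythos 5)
Your proof is correct and follows essentially the same route as the paper's: induction along the acyclic causal graph $G_{U'}$, with target-linearity of $G_U$ plus the uniqueness of the relevant hom-sets in $\C$ pinning down each higher-dimensional element of $U$ from an already-determined target agent, and source-linearity propagating the conclusion to non-initial agents. The only (harmless) cosmetic difference is that you dispose of all channels at once via $s_\star$ of the upper leg being invertible, where the paper treats initial channels in the base case and created channels by naturality from their cores.
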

In order to prove this, let us introduce:
\begin{defi}\label{def:core:ass}
  For any \action $x \in U$, let $\core{x}$, the \emph{core
    associated to $x$}, be the unique core $\mu \in U$ for
  which there exists $f$ in $\C$ such that $\mu \cdot f = x$. If $x$
  is \anagent or a channel, then by definition $\core{x} = x$.
\end{defi}

\begin{proof}[Proof of Lemma~\ref{lem:posetal}]
  By definition, we have cospans $Y' \xto{s'} u' \xot{t'} X'$ and $Y
  \xto{s} u \xot{t} X$.  Suppose we are given $l, l' \colon u' \to u$
  making $(k,l,h)$ and $(k,l',h)$ into cells.  By naturality, $l$ and
  $l'$ are determined by their images on channels, \agents, and cores.
  We show by induction on the ordering induced by $G_{u'}$ that they
  have to agree on these.  For the base case: they have to agree with
  $h$ on initial \agents and channels by definition of cells. For the
  induction step, we proceed by case analysis on the kind of element
  to consider. The image of any source of or channel created by a core
  $\mu$ is uniquely determined by naturality, which leaves the case of
  a core $\mu$, of which we assume that there is \anagent $x$ such
  that $\mu \to x$ in $G_{u'}$ and $l(x) = l'(x)$. The edge $\mu \to
  x$ yields a morphism, say $t$, in $\C$ such that $\mu \cdot t = x$.
  But then by naturality we have $l(\mu) \cdot t = l (x) = l'(x) = l'
  (\mu) \cdot t$.  By linearity of $G_u$ we have $\core{l (\mu)} =
  \core{l' (\mu)}$.  Now let $c_\mu$ denote the object of $\C$ over
  which $\mu$ lies, and let $c'$ be the one over which $\core{l
    (\mu)}$ lies.  By inspection of $\C$, there is exactly one
  morphism $f \colon c_\mu \to c'$, and so we have $l(\mu) = \core{l
    (\mu)} \cdot f = \core{l' (\mu)} \cdot f = l(\mu),$ as desired.
\end{proof}

\subsection{A playground}\label{subsec:playground:pi}
In this section, we finally prove:
\begin{thm} $\D$ forms a playground.  \end{thm}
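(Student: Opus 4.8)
The plan is to assemble the result from the work already done and then dispatch the only two outstanding axioms, \axref{leftdecomposition} and \axref{views:decomp}. Indeed, by Lemma~\ref{fibration} the codomain functor $\cod \colon \DH \to \Dh$ is a fibration; Definition~\ref{def:playground:data} supplies the required data (the individuals $\DI$, the basic \actions $\B$, the full \actions $\F$, and the length); and the lemmata of Section~\ref{subsec:candidate} establish \axref{discreteness}, \axref{individuality}, \axref{atomicity}, \axref{fibration:continued}, \axref{views} (Lemma~\ref{lem:views}), \axref{finiteness} and \axref{basic:full}. So everything reduces to the two decomposition axioms, for which the correctness criterion (Theorem~\ref{thm:completeness}) and the uniqueness of cells (Lemma~\ref{lem:posetal}) are the principal tools.

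For \axref{leftdecomposition}, I would start from a cell $\alpha$ from a \trace $u$ to a composite $w_2 \vrond w_1$, writing $l \colon U \to W$ for its middle component, where $U$ and $W$ are the apexes of $u$ and $w = w_2 \vrond w_1$. The decomposition of $w$ singles out its intermediate \position, say $Y \into W$. First I would pull this back along $l$, obtaining $C \into U$, and check that $C$ is a \position of $U$, i.e.\ a subpresheaf of dimension $\leq 1$. This slice cuts $U$ into the part whose cores map into $w_1$ and the part whose cores map into $w_2$; since $\alpha$ is a morphism of $\DH$ it respects the causal graphs, so each core of $U$ lands on exactly one side. Using Theorem~\ref{thm:completeness} I would verify that each piece is again a \trace (local $1$-injectivity, linearity and acyclicity are inherited from $U$ as for subpresheaves closed under cores, while the initiality/finality conditions at $C$ hold by construction), obtaining $u_1$ and $u_2$ with $u \iso u_2 \vrond u_1$. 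The restricted cells $\alpha_1 \colon u_1 \to w_1$ and $\alpha_2 \colon u_2 \to w_2$ are then forced, and the comparison isomorphism $\alpha_3$, together with essential uniqueness of the whole decomposition, follows from Lemma~\ref{lem:posetal}.

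For \axref{views:decomp}, I would take a cell $\alpha$ from $b \vrond w$ to $M \vrond u$, with $b$ a basic \action and $M$ an \action, and again write $l$ for its middle component. The key observation is that $b$, being basic, has a single core $c_b$; by naturality $l$ sends $c_b$ to a core of the composite $M \vrond u$, and the cores of that composite split into the unique core of $M$ and the cores lying in $u$. The argument is then a clean case distinction: either $l(c_b)$ is the core of $M$, which yields the left-hand form (with $\alpha_2 \colon b \to M$ read off from the core-to-core map and $\alpha_1 \colon w \to u$ the residual cell), or $l(c_b)$ lies in $u$, which by acyclicity and linearity of $G_{M \vrond u}$ (clause~\ref{cond:causality} of Theorem~\ref{thm:completeness}) forces the whole image of $b \vrond w$ into the later part $u$ and gives the right-hand form. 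That exactly one form occurs, and that the displayed sub-cells are uniquely determined, again rests on Lemma~\ref{lem:posetal}.

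The hard part will be the causal bookkeeping in \axref{leftdecomposition}: one must be sure that pulling back the time-slice $Y$ along $l$ produces a genuine \position that actually sequentialises $U$, i.e.\ that $l$ cannot send a core of $u$ occurring causally before $C$ to a core of $w$ occurring after $Y$, or conversely. This compatibility of the two causal orders under $l$ --- equivalently, that $C$ is a legitimate cut, downward-closed for $G_U$ in the appropriate sense --- is exactly where the linearity and acyclicity clauses of Theorem~\ref{thm:completeness} do the real work, and checking it is the only genuinely delicate point; the corresponding straddling-prevention for \axref{views:decomp} is comparatively easy, since a basic \action carries a single core.
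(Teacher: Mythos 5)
Your plan matches the paper's proof: everything except \axref{leftdecomposition} and \axref{views:decomp} is already established in the preceding sections, and both remaining axioms are proved exactly as you propose via the correctness criterion of Theorem~\ref{thm:completeness} --- the right decomposition by a dichotomy on whether the (core of the) basic action lands in the neighbourhood of the core of $M$, deriving a contradiction from a forbidden path in the causal graph otherwise, and the left decomposition by cutting $U$ at the preimage of the intermediate position. The only differences are minor and technical: the paper forms $U_1$ as the pullback of the whole later trace $W_1$ along the middle component and takes $C$ to be its initial elements (which coincides with your $l^{-1}(Y)$), defines $U_2$ as the down-closure of $C$ in $G_U$ and then proves it maps into $M$ (rather than defining it by where cores land), and obtains essential uniqueness from these pullback characterisations rather than from Lemma~\ref{lem:posetal} alone, which only gives uniqueness of the filling cells once the decomposition is fixed.
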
 Most axioms have been
proved in previous sections, and we are left with both decomposition
axioms, which are proved in Lemmas~\ref{right decomposition}
and~\ref{lem:decompleft} below, relying on the correctness criterion
of the previous section.

\begin{lem}\label{right decomposition} \ 
  \axref{views:decomp}
  Any double cell as in the center below, where $B$ is a basic
    \action and $M$ is \anaction, decomposes in exactly one of the
    forms on the left and right:
    \begin{mathpar}
      \begin{minipage}[t]{0.18\linewidth}
        \centering \Diag {%
          \twocell[.4][.3]{B}{A}{X}{}{celllr={0.0}{0.0},bend
            right=30,labelbr={\alpha_1}} %
          \twocell[.4][.3]{C}{B}{Y}{}{celllr={0.0}{0.0},bend
            right=20,labelbr={\alpha_2}} %
        }{%
          |(A)| A \& |(X)| X \\
          |(B)| C \& |(Y)| Y \\
          |(C)| D \& |(Z)| Z %
        }{%
          (A) edge[pro] (B) %
          edge (X) %
          (B) edge (Y) %
          edge[pro] (C) %
          (C) edge (Z) %
          (X) edge[pro] (Y) %
          (Y) edge[pro] 
          (Z) %
        }
      \end{minipage}
      \and \xotsdael{} \and
      \begin{minipage}[t]{0.18\linewidth}
        \centering \Diag{%
          \twocellbr{B}{A}{X}{\alpha} %
        }{%
          |(A)| A \& |(X)| X \\
          |(B)| C \& |(Y)| Y \\
          |(C)| D \& |(Z)| Z %
        }{%
          (A) edge[labelu={h}] (X) %
          edge[pro,labell={U}] (B) %
          (B) edge[pro,labell={B}] (C) %
          (X) edge[pro,labelr={V}] (Y) %
          (Y) edge[pro,labelr={M}] (Z) %
          (C) edge[labeld={k}] (Z) %
        }
      \end{minipage}
      \and \xleadsto{} \and
      \begin{minipage}[t]{0.18\linewidth}
        \centering \Diag {%
          \twocell[.4][.3]{B}{A}{X}{}{celllr={0.0}{0.0},bend
            right=30,labelbr={\alpha_1}} %
          \twocell[.3][.4]{C}{Y}{Z}{}{celllr={0.0}{0.0},bend
            right=20,labeld={\alpha_2}} %
        }{%
          |(A)| A \& |(X)| X \\
          |(B)| C \& |(Y)| Y \\
          |(C)| D \& |(Z)| Z. %
        }{%
          (A) edge[pro] (B) %
          edge (X) %
          (B) 
          edge[pro] (C) %
          (C) edge (Z) %
          edge (Y) %
          (X) edge[pro] (Y) %
          (Y) 
          edge[pro] (Z) %
        }
      \end{minipage}
    \end{mathpar}
\end{lem}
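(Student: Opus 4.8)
The plan is to translate the statement into the concrete combinatorics of $\Chat$ and then argue by where the basic action lands. By Lemma~\ref{cartesiandoublecells} and the definition of $\D$, the central cell amounts to a $1$-injective map $l\colon W \to W'$ between the apexes of the two composite cospans $B\vrond U\colon A\proto D$ and $M\vrond V\colon X\proto Z$, compatible with all four legs through $h$ and $k$. Write $\mu_B\in W$ for the unique core of the basic action $B$, and $\mu_M\in W'$ for the core of $M$. Since $M$ is the first action of $M\vrond V$, it is a maximal core of $W'$ in the sense of Lemma~\ref{maxcore}, and its past $\past{\mu_M}$ consists exactly of the elements of $W'$ specific to $M$ (its created agents and channels). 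As $l$ is injective in all dimensions $>0$, hence on cores, the argument splits according to whether $\mu_M$ lies in the image of $l$.

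First I would treat the case $\mu_M\notin\im(l)$, which should yield the right-hand decomposition. Here I claim that $l$ factors through the subpresheaf $W'_V\hookrightarrow W'$ obtained by removing $\past{\mu_M}$ (a genuine subpresheaf by Lemma~\ref{maxcore}), i.e.\ that the image of $l$ avoids $\past{\mu_M}$ altogether. Since $\past{\mu_M}$ is exactly the set of elements produced by $\mu_M$, any such element in the image would force $\mu_M$ itself into the image by source- and target-linearity of $G_{W'}$; this is the crux of this case. Granting it, $W'_V$ with the evident legs satisfies the conditions of Theorem~\ref{thm:completeness} and presents $V$, so the corestriction of $l$ is the middle map of a cell $\alpha_1\colon B\vrond U\to V$. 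Pairing it with the canonical cell $\alpha_2\colon\idv_D\to M$ induced by the cone from the interface of the seed generating $M$, I would conclude $\alpha_2\vrond\alpha_1=\alpha$ using Lemma~\ref{lem:posetal} to avoid any explicit computation.

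In the remaining case $\mu_M=l(\nu)$ for a unique core $\nu$ of $W$, I aim for the left-hand decomposition, for which the essential point is that $\nu=\mu_B$. Because $M$ is first, the causal target of $\mu_M$ is an initial agent of $W'$; by leg compatibility and condition~\ref{cond:X} of Theorem~\ref{thm:completeness}, its $l$-preimage is initial in $W$, so $\nu$ is a first action of $B\vrond U$ performed by an agent already present in $D$. Strict sequentiality of the basic action $B$---that this first action must be $B$ and not some action of $U$ concurrent with it---is exactly what makes this step delicate, and I expect it to be the main obstacle; I would establish it by a case analysis on the shape of $B$ together with the acyclicity and linearity of $G_W$, exploiting that $B$ is basic and so has a single source and a single target. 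Once $\nu=\mu_B$ is known, injectivity of $l$ on cores places every other core of $W$ (i.e.\ every core of $U$) in the $V$-part of $W'$, and tracking the middle position shows that $l$ restricts to maps of the $B$-layer into $M$ and of the $U$-layer into $V$, glued along an induced $C\to Y$; these are the middle maps of the required $\alpha_2\colon B\to M$ and $\alpha_1\colon U\to V$.

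Finally, the two cases are mutually exclusive since $\mu_M$ either does or does not lie in $\im(l)$, and the resulting decomposition is essentially unique by Lemma~\ref{lem:posetal}, which pins down every cell involved by its boundary. The hard part throughout is the sequentiality argument of the third paragraph (forcing $\nu=\mu_B$) and, dually, the linearity argument of the second (that avoiding $\mu_M$ forces avoiding $\past{\mu_M}$); both are where the specific combinatorics of seeds and the acyclic, linear causal graph $G_U$ do the real work.
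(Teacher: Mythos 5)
Your case split is not the one that makes the lemma work, and it fails on precisely the instances the axiom is really about. You branch on whether the core $\mu_M$ of $M$ lies in the literal image of the middle map $l\colon W\to W'$. But when $M$ is a full \action{} whose core has higher dimension than any basic one --- $\forkn$ (dimension $3$) or $\taunimjk$ (dimension $4$) --- the core of the basic \action{} $B$ has dimension $2$ and can at best hit a proper sub-element of $\mu_M$ such as $m\cdot l$ or $m\cdot\rho$, never $\mu_M$ itself, and $U$ need not supply a preimage either. Concretely, take $B=\forkln\colon [n]\proto[n]$, $U=\idv$, $M=\forkn$, $V=\idv_{[n]\para[n]}$, and let $\alpha$ be the cell of Lemma~\ref{lem:views} embedding the left branch into the fork. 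Then $\mu_M\notin\im(l)$, so your analysis routes this to the right-hand decomposition; but no cell $B\vrond U\to V$ can exist ($B\vrond U$ has length $1$ while $V$ has length $0$), and the correct answer is the left-hand one. Your second case is correspondingly too narrow: demanding $l(\nu)=\mu_M$ with $\nu=\mu_B$ forces $B$ and $M$ to lie over the same object of $\C$, which excludes exactly the fork-into-fork and input/output-into-synchronisation situations. The auxiliary claim carrying your first case is also false: an element of $\past{\mu_M}$ being in the image of $l$ does not force $\mu_M$ into the image. Source- and target-linearity constrain how cores of $G_{W'}$ may share agents; they say nothing about preimages under a presheaf morphism, and naturality only propagates images downwards in dimension, never upwards to a completing core.

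The dichotomy the paper uses is whether $l(\mu_B)$ lands in the \emph{neighbourhood} of $\mu_M$ in $M\vrond V$, i.e.\ whether $\core{l(\mu_B)}=\mu_M$ in the sense of Definition~\ref{def:core:ass}; this is well defined for every basic $B$ and every full $M$, and it is the condition that actually discriminates the two diagrams. With that split, each branch is settled by a short causality argument: if some element of $U$ (resp.\ of $B\vrond U$) of dimension $\geq 1$ were sent into $M-Y$, one would obtain, via $\alpha$, a path in $G_{M\vrond V}$ from $M-Y$ to $Y$ (resp.\ to $V-Y$) between elements of dimension $\geq 1$, contradicting the direction of causality. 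Your instinct to exploit the causal graph and to finish uniqueness with Lemma~\ref{lem:posetal} is sound, but the pivot of the case analysis must be the position of $l(\mu_B)$ relative to the $M$/$V$ partition of $W'$, not the existence of a preimage of $\mu_M$.
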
 

\begin{proof} For any element $a$ over $c \in \C$ of
any presheaf $F \in \Chat$, let its \emph{neighbourhood}
consist of all elements in the image of $a \colon c \to F$.

Let $b \in B$ and $m \in M$ be the unique cores of $B$ and $M$,
respectively.  Let $V_m$ be the neighbourhood of $m$ in $M \vrond V$. 

If $\alpha(b) \in V_m$, let us show that the whole of $U$ is mapped to $V$, and
we are in the left-hand case. It is clear for channels. If there
exists an element $x$ of $U$ of dimension $\geq 1$ mapped to $y$ in
$M - V$, i.e., $M - Y$, then we obtain a path $x \to x'$ to an agent
$x'$ of $C$, in $G_{B \vrond U}$. Via $\alpha$, this yields a path $M - Y \to
Y$ in $G_{M \vrond V}$ between elements of dimension $\geq 1$, a
contradiction.

If now $\alpha(b) \notin V_m$, we show similarly that the whole of $B \vrond U$
is mapped to $V$, because the contrary would imply the existence of a
path $M - Y \to V - Y$ in $G_{M \vrond V}$, which also is a
contradiction.  Hence, we are in the right-hand case.  \end{proof}

\begin{lem}\label{lem:decompleft}\ 
\axref{leftdecomposition}     Any double cell
    \begin{center}
      \Diag{%
        \twocellbr{B}{A}{X}{\alpha} %
      }{%
        |(A)| A \& |(X)| X \\
        \& |(Y)| Y \\
        |(B)| B \& |(Z)| Z %
      }{%
        (A) edge[labelu={h}] (X) %
        edge[pro,labell={U}] (B) %
        (X) edge[pro,labelr={W_1}] (Y) %
        (Y) edge[pro,labelr={W_2}] (Z) %
        (B) edge[labeld={k}] (Z) %
      } \hfil decomposes as \hfil \Diag(1,2){%
        \twocellbr[.3]{C}{A}{X}{\alpha_1} %
        \twocellbr[.3]{B}{C}{Y}{\alpha_2} %
        \twocell[.3]{L}{A}{C}{}{celllr={0}{0},bend
          right,fore,labelbl={\scriptscriptstyle \alpha_3}} %
      }{%
        \& |(A)| A \& |(X)| X \\
        |(L)| \& |(C)| C \& |(Y)| Y \\
        \& |(B)| B \& |(Z)| Z %
      }{%
        (A) edge[labelu={h}] (X) %
        edge[pro] node[pos=.5,anchor=north west] {$\scriptscriptstyle
          {U_1}$} (C) %
        edge[pro,bend right=70,labell={U}] (B) %
        (C) edge[pro] node[pos=.5,anchor=north west]
        {$\scriptscriptstyle {U_2}$} (B) %
        edge[labelu={f}] (Y) %
        (X) edge[pro,labelr={W_1}] (Y) %
        (Y) edge[pro,labelr={W_2}] (Z) %
        (B) edge[labeld={k}] (Z) %
      }
    \end{center}
    with $\alpha_3$ an isomorphism, in an essentially unique way.
\end{lem}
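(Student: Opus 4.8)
The plan is to slice the middle object $U$ along the waist $Y$ of the composite $W_2 \vrond W_1$, using the fact that the defining pushout of the composite is stable under pullback, and then to certify the two halves as \traces by the correctness criterion (Theorem~\ref{thm:completeness}). I would write the given cell $\alpha$ as a triple of $1$-injective maps $h \colon A \to X$, $l \colon U \to W$, $k \colon B \to Z$ between the cospans $A \to U \ot B$ and $X \to W \ot Z$, where $W$ is the middle object of $W_2 \vrond W_1$, obtained as the pushout of $W_1 \ot Y \to W_2$ along the two legs, which are monic by Lemma~\ref{lem:s:t:monos}. Since $\Chat$ is adhesive~\cite{DBLP:conf/fossacs/LackS04}, this is a pushout along monos and hence stable under pullback, so pulling it back along $l$ yields again a pushout
\begin{center}
  \diag{
    |(C)| C \& |(U1)| U_1 \\
    |(U2)| U_2 \& |(U)| U
  }{
    (C) edge (U1)
        edge (U2)
    (U1) edge (U)
    (U2) edge (U)
  }
\end{center}
with $C = l^{-1}(Y)$, $U_1 = l^{-1}(W_1)$, $U_2 = l^{-1}(W_2)$ and all four maps monic. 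As $Y$ has dimension at most $1$ and pullbacks are pointwise, $C$ is a \position, and $U \iso U_1 +_C U_2$ is exactly the middle object of $U_2 \vrond U_1$, which will supply the isomorphism $\alpha_3$.

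Next I would record that both inclusions $U_1 \into U$ and $U_2 \into U$ preserve cores. Indeed, if a core of $U_1$ were a face $\nu \cdot g$ of a higher element $\nu$ of $U$, then either $l(\nu) \in W_1$ (so $\nu \in U_1$ already) or $l(\nu)$ lies strictly in $W_2$, in which case its face would lie in $W_1 \cap W_2 = Y$, impossible for an element of dimension $> 1$. Consequently $U_1$ and $U_2$ inherit conditions~\ref{cond:locinj} and~\ref{cond:causality} from $U$, exactly as in the final step of the proof of Theorem~\ref{thm:completeness}, and their causal graphs are full subgraphs of $G_U$. Membership of $A$, $B$, $C$ in the halves is then immediate: the cell equations give $l\rond s_U = s_W \rond h$ with image in $X \subseteq W_1$, whence $A \subseteq U_1$; dually $B \subseteq U_2$; and $C = l^{-1}(Y) \subseteq U_1 \cap U_2$ since $Y \subseteq W_1 \cap W_2$.

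The substance of the argument, and the step I expect to be the main obstacle, is verifying conditions~\ref{cond:X} and~\ref{cond:Y} for the two halves, i.e.\ recognising $A$, $C$, $B$ as the correct initial and final \positions. Here I would use that $l$, being a morphism of presheaves over $\C$, carries sources and targets of a core to sources and targets of its image, together with the fact that membership in $U_1$ versus $U_2$ is detected by $l$ by the very definitions $U_1 = l^{-1}(W_1)$, $U_2 = l^{-1}(W_2)$. For instance, a $C$-agent $x$ has $l(x) \in Y$, which is neither the source of a $W_1$-core nor the target of a $W_2$-core; were $x$ the source of a $U_1$-core or the target of a $U_2$-core, applying $l$ would contradict this, so $x$ is initial in $U_1$ and final in $U_2$. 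Running this bookkeeping in all cases shows that the final agents of $U_1$ are those of $A$, the initial agents of $U_1$ and final agents of $U_2$ are those of $C$, and the initial agents of $U_2$ are those of $B$, with channels handled by Lemma~\ref{lem:s:star:iso}. This is routine but is where all the case analysis lives. By Theorem~\ref{thm:completeness}, $A \to U_1 \ot C$ and $C \to U_2 \ot B$ are then \traces.

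Finally, the restrictions of $l$ give $1$-injective middle components $U_1 \to W_1$ and $U_2 \to W_2$, and $f = l|_C \colon C \to Y$; together with $h$ and $k$ these assemble into double cells $\alpha_1 \colon U_1 \to W_1$ and $\alpha_2 \colon U_2 \to W_2$ in $\D$ whose middle maps glue, by the universal property of the pushout above, back to $l$, so that $\alpha_2 \vrond \alpha_1 = \alpha$. For essential uniqueness I would argue that any competing decomposition must have its two middle components land in $W_1$ and $W_2$ respectively, which forces its waist to map into $Y$ and hence, by stability of the pushout under pullback, to agree with $C$ up to a canonical isomorphism; that this isomorphism is compatible with all the data and unique follows from Lemma~\ref{lem:posetal}, which guarantees at most one cell with a given boundary.
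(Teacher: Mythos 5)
Your overall strategy --- slice $U$ along the waist of $W_2 \vrond W_1$ and certify the two halves by Theorem~\ref{thm:completeness} --- is the right shape, and your $U_1 = l^{-1}(W_1)$ agrees with the paper's $U \times_W W_1$. But the step you yourself flag as ``routine bookkeeping'' is exactly where the argument breaks: setting $C = l^{-1}(Y)$ and $U_2 = l^{-1}(W_2)$ does not in general produce \traces{}, and the failure is caused by channels. The reason is that a core of $U$ need not map to a core of $W$: an input core $\iotani$ of $U$ may land on the input half of a synchronisation $\taunimjk$ of $W$. The channel it receives is \emph{created} in $U$, hence must be absent from the initial \position{} of the half of $U$ containing that input; but its image in $W$ is the channel the sender already held, which is \emph{not} created in $W$ and therefore survives in $Y$ whenever the synchronisation sits in the $W_1$-part. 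Concretely, take $U = \iotani \colon [n+1] \proto [n]$ mapping into $W = \taunimjk$ via the canonical view cell of Lemma~\ref{lem:views}, with $W_1$ the synchronisation and $W_2$ a $\tau_m$ performed by the sender (or of length $0$). Then $l^{-1}(Y)$ contains the image of the received channel $\mu \cdot s \cdot s_{n+1}$, namely the sender's $k$-th channel, so your $C$ strictly contains $B = [n]$, and your $U_2$ is a cospan $C \to U_2 \ot B$ in which the extra channel is initial in $U_2$ but missing from $B$, violating condition~\ref{cond:X} of Theorem~\ref{thm:completeness}. The correct waist here is $B$ itself and the correct $U_2$ is $\idv_B$. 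Your uniqueness argument inherits the same defect, since it pins the waist down to $l^{-1}(Y)$.

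The paper avoids this by never defining $C$ or $U_2$ as preimages. It first proves that the projection $U_1 \to W_1$ \emph{preserves initiality} (a contrapositive argument using source-linearity of $G_W$ and the pullback property of $U_1$), then defines $C$ intrinsically as the subpresheaf of initial agents and channels of $U_1$ --- preservation of initiality is precisely what provides the required factorisation $C \to Y$ for the cell $\alpha_1$ --- and defines $U_2$ as the down-closure of $C$ in the causal graph $G_U$, checking separately that $B$ factors through $U_2$ and that $U_2$ is carried into the $W_2$-part. Your observation that \emph{agent} membership in the two halves can be read off from $l$ is essentially correct (and is the easy half of the bookkeeping); it is the channels created by inputs absorbed into synchronisations that force the intrinsic definitions, and this is the genuinely $\pi$-specific difficulty of the lemma.
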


  \begin{figure}[t]
    \centering
      \Diag (.4,.6) {%
        \pbk[.6cm]{w_1}{w}{w_2} %
        \path[->,draw] %
        (C) edge[labelbat={f_m}{0.8}] (Y) %
        (Y) edge (w_1) %
        edge (w_2) %
        (w_1) edge (w) %
        (w_2) edge (w) %
        (C) edge[fore] (u_1) %
        edge[fore] (u_2) %
        (u_1) edge[fore] (u) %
        (u_2) edge[fore] (u) %
        (u) edge[fore,labelu={f}] (w) %
        (u_1) edge[fore,labelu={f_1}] (w_1) %
        (u_2) edge[fore,labeld={f_2}] (w_2) %
        ; %
        \pbk[.6cm]{u}{u_1}{w_1} %
        \pullback[.6cm]{u_1}{u}{u_2}{draw,-,fore} %
        \path[->,draw] %
        (A) edge[] (u_1) %
        edge[bend left,fore] (u) %
        edge[labelu={h}] (X) %
        (X) edge (w_1) %
        edge[bend left,fore] (w) %
        (B) edge[] (u_2) %
        edge[bend right,fore] (u) %
        edge[labeld={k}] (Z) %
        (Z) edge (w_2) %
        edge[bend right,fore] (w) %
        ; %
      }{%
  \& \& |(A)| A \& \& \& \& \& \& \& |(X)| X \& \\ %
  \& \& \& \& \& \& \& \& \& \& \\ %
  \& \& \& \& \& \& \& \& \& \& \\ %
  \& \& |(u_1)| U_1 \& \& \& \& \& \& \& |(w_1)| W_1 \& \\ %
  |(C)| C \& \& \& \& \& \& \& |(Y)| Y \& \& \& \\ %
  \& \& \& \& \& \& \& \& \& \& \\ %
  \& \& \& |(u)| U \& \& \& \& \& \& \& |(w)| W \\ %
  \& |(u_2)| U_2 \& \& \& \& \& \& \& |(w_2)| M \& \& \\ %
  \& \& \& \& \& \& \& \& \& \& \\ %
  \& \& \& \& \& \& \& \& \& \& \\ %
  \& |(B)| B \& \& \& \& \& \& \& |(Z)| Z, \& %
     }{%
      }
    \caption{Proof of Lemma~\ref{lem:decompleft}}
\label{fig:proof:lem:decompleft}
\end{figure}%

\begin{proof}
  Let $\alpha = (h,f,k)$.  We first treat the case where $W_2$ is
  \anaction $M$, recalling Definition~\ref{def:core:ass}.  We
  construct $U_1$ and $U_2$, as depicted in
  Figure~\ref{fig:proof:lem:decompleft}.
First, let $U_1 = U \times_W W_1$, where $W = W_2 \vrond W_1$, and let
$A \to U_1$ denote the induced arrow. By construction, all of
$A \to U_1 \to U$ are monos and, by Lemma~\ref{upper squares all
  pullbacks} and the pullback lemma, $A = U_1 \times_{W_1} X$.

Let us show that the projection $f_1\colon U_1 \to W_1$ preserves
initiality of channels and agents. We proceed by contrapositive:
consider any channel or agent $x \in U_1$. If $x' = f_1 (x)$ is not
initial in $W_1$, then we have an edge $x' \to \mu'$ for some core
$\mu'$ of $W_1$.  But, since $f$ is a morphism between \traces, it
preserves initiality, so $x$ cannot be initial in $U$, hence we find
$x \to \mu$ in $G_U$. By source linearity of $G_W$, $\core{f (\mu)} =
\mu'$, so the \action $f(\mu) \in W$ has antecedents both in $U$ and
$W_1$. By universal property of pullback, there exists \anaction $y
\in U_1$, respectively mapped to $\mu$ and $f(\mu)$, which by
naturality and injectivity of $U_1 \to U$ entails that $x \to y$ in
$G_{U_1}$.  Therefore, $x$ is not initial in $G_{U_1}$, as required.

Let now $C \into U_1$ denote the subpresheaf of $U_1$ consisting of
initial channels and agents (a subpresheaf because if $x$ is an
initial, $n$-ary \agent, then $x \cdot s_i$ is an initial channel for
any $i \in n$, by Theorem~\ref{thm:completeness} and
Lemma~\ref{lem:final}).  Since $f_1$ preserves initiality, $C \to U_1
\to W_1$ factors through $Y \to W_1$, uniquely since the latter is
mono, say as $f_m$ (see Figure~\ref{fig:proof:lem:decompleft}).  By
Theorem~\ref{thm:completeness}, $A \to U_1 \ot C$ is a \trace and
$(h,f_1,f_m)$ defines a morphism to $X \to W_1 \ot Y$.

Let then $U_2 \into U$ denote the subpresheaf of $U$ consisting of elements 
below $C$ in $G_U$, i.e., 
\begin{center}
  ${x \in U_2} \Leftrightarrow {\exists c \in C. c \tostar_{G_U} \core{x}}$.
  \end{center}

  A first observation is that all initial channels and \agents of $U$
  are in $U_2$, so that $B \to U$ factors through $U_2$. Indeed,
  consider any such initial $x$.  By acyclicity of $G_U$, each initial
  element is reachable from some final element, so $x$ is reachable
  from some final $y$. But by source-linearity the corresponding path
  $y \tostar x$ goes through $C$, so we find a path $c \tostar x$ for
  some $c \in C$, as desired.

  Now, because $U_2 \to U$ and $M \to W$ are monos, showing that $f$
  maps all elements $x$ of $U_2$ to $M$ will imply that $U_2 \to U \to
  W$ uniquely factors through $M \to W$. Let us do this by case
  analysis:
  \begin{itemize}
  \item If $x$ is not a channel, then $f$ preserves paths from \agents
    to $x$, so we find some path $f(c) \tostar_{G_W} \core{f(x)}$ with
    $c \in C$ hence $f(c) \in Y$, which implies that $f(x) \in M$ ($f
    (x) \in W_1 - M$ would contradict initiality of $Y$ in $W_1$).
  \item If $x$ is some channel initial in $U$, then since $f$ preserves initiality
    $x$ is mapped to $Z$ hence to $M$.
  \item If finally $x$ is some non-initial channel, then $x \to \mu$
    for some core $\mu \in U$.  Now $\mu \in U_2$, as witnessed by the
    path $c \tostar x \to \mu$.  But then $x = \mu \cdot u$ for some
    morphism $u$ of $\C$, so since by the above $f(\mu) \in M$, we
    have that $f(x) = f (\mu) \cdot u$ is in $M$ too, as desired.
  \end{itemize}
  We thus get a diagram as in Figure~\ref{fig:proof:lem:decompleft},
  which commutes because $M \to W$ is mono.

By Theorem~\ref{thm:completeness}, $C \to U_2 \ot B$ is a \trace, and
$U = U_2 \vrond U_1$, which shows existence of the desired
decomposition.

For any decomposition as in Figure~\ref{fig:proof:lem:decompleft}, we
have $C = U_2 \times_M Y$ by Lemma~\ref{upper squares all pullbacks},
so by Corollary~\ref{cor:cube pullback}, we also have
$U_1 = U \times_W W_1$. Thus, $U_1$ is uniquely determined up to
canonical isomorphism.  But by Theorem~\ref{thm:completeness},
$C \to U_1$ is so too, as the subobject of initial agents and
channels. But then $U_2$ precisely consists of elements below
$C$. Indeed, by finiteness of $G_{U_2}$ and~\ref{cond:Y} in
Theorem~\ref{thm:completeness}, all of $U_2$ clearly lies below $C$.
Conversely, for any $x \in U_1 - U_2$, by finiteness of
$G_{U_1}$ and~\ref{cond:Y} in Theorem~\ref{thm:completeness}, we have
a path $x \to^+ c$ to some $c \in C$, so $x$ cannot lie below $C$
by~\ref{cond:causality}.  Our decomposition is thus unique up to
canonical isomorphism.
\end{proof}

\section{A sheaf model}\label{sec:model}
In the previous sections, we have constructed a double category $\D$
and equipped it with playground structure.  We now instantiate
constructions from~\cite{HirschoDoubleCats} on $\D$, which lead to the
definition of our sheaf model for $\pi$.  We first recall various
notions of strategy in Section~\ref{subsec:behaviours}: naive
strategies, innocent strategies, and behaviours.  Behaviours are
further studied in Section~\ref{subsec:decomp}, where we introduce a
kind of calculus for them.  Using this calculus, we then define our
intepretation of $\pi$ in Section~\ref{subsec:interp:pi}.  Finally, in
Section~\ref{subsec:semantic:fair}, we state our semantic definition
of fair testing equivalence and our main result.

\subsection{\Stratglobales and \stratlocales}\label{subsec:behaviours}
We first recall notions of strategies.
As announced in the introduction, we define a category $\Plays(X)$
combining prefix ordering and isomorphism of \traces: $\Plays(X)$ has
\traces $\trasse \colon Y \proto X$ as objects, and as morphisms
$\trasse \to \trasse'$ all pairs $(w,\alpha)$ with $w \colon Y' \proto
Y$ and $\alpha$ an isomorphism $\trasse \vrond w \to \trasse'$ in the
hom-category $\Dv(Y',X)$, as in
    \begin{center}
      \Diag(.4,.4){%
      }{%
        \& |(Y')| Y' \\
        |(Y)| Y \&  \\
        \& |(X)| X\rlap{,} %
      }{%
        (Y') edge[pro,labell={w},bend right] (Y) %
        (Y) edge[pro,labelbl={\trasse},bend right] (X) %
        (Y') edge[pro,twor={\trasse'},bend left] (X) %
        (Y) edge[cell={0.1},labela={\iso},labelb={\alpha}] (r) %
      }
    \end{center}
considered up to the smallest equivalence relation identifying
$(w,\alpha)$ and $(w',\alpha \rond (u \vrond \gamma))$, for any
$w'\colon Y' \proto Y$ and special $\gamma\colon w' \to w$). Thus,
$u'$ is an extension of $u$ by $w$.
\begin{definition}
  Let the category of \emph{(naive) \stratglobales} on $X$
  be $\FPsh{\Plays(X)}$.
\end{definition}
 \Stratglobales do not yield a
satisfactory model for $\pi$:
\begin{example}\label{ex:noninnocent}
  Consider the \position $X$ with three \agents $x,y,z$ sharing a
  channel $a$, and the following \traces on it: in $\trasse_{x,y}$, $x$
  sends $a$ on $a$, and $y$ receives it; in $\trasse_{x,z}$, $x$ sends $a$
  on $a$, and $z$ receives it; in $i_z$, $z$ inputs on $a$. One may
  define a \stratglobale $S$ mapping $\trasse_{x,y}$ and $i_z$ to a
  singleton, and $\trasse_{x,z}$ to $\emptyset$. Because $\trasse_{x,y}$ is
  accepted, $x$ accepts to send $a$ on $a$; and because $i_z$ is
  accepted, $z$ accepts to input on $a$. The problem is that $S$
  rejecting $\trasse_{x,z}$ roughly amounts to $x$ refusing to synchronise
  with $z$, or conversely.
%
\end{example}

We want to rule out this kind of \stratglobale from our model, by
adapting the idea of innocence. We start by extending $\T(X)$ with
objects representing \traces on \subpositions of $X$.  For this, we
consider the following category $\Plays_X$.  It has as objects pairs
$(\trasse, h)$ of \atrace $\trasse \colon Z \proto Y$ and a morphism
$h \colon Y \to X$ in $\Dh$.  A morphism $(\trasse,h) \to
(\trasse',h')$ consists of a \trace $w \colon T \proto Z$ and a cell
as below left with $h' r = h$. Morphisms are considered up to the
smallest equivalence relation identifying $(w,\alpha)$ with
$(w',\alpha \rond (u \vrond \gamma))$, for any $w'$ and $\gamma$ as
below right.
\begin{center}
    \Diag(.3,.3){%
      \node[inner sep=2pt] (XX) at ($(m-3-2.south) - (0,.3)$) {$X$}
      ; \path[->,draw] %
      (X) edge[labelbl={h}] (XX) %
      (X') edge[labelbr={h'}] (XX) %
      ; %
    }{%
      |(Z)| T \& \& |(Y')| Z' \\
      |(Y)| Z \& \&  \\
      |(X)| Y \& {\ } \& |(X')| Y' }{%
      (Z) edge[pro,labell={w}] (Y) %
      (Y) edge[pro,labell={\trasse}] (X) %
      (Y') edge[pro,twor={\trasse'}] (X') %
      (X) edge[labela={r}] (X') %
      (Z) edge[labela={s}] (Y') %
      (Y) edge[cell={0.2},labela={\alpha}] (r) %
    }
\hfil
    \Diag(.3,.3){%
      \node[inner sep=2pt] (XX) at ($(m-3-3.south) - (0,.3)$) {$X$}
      ; \path[->,draw] %
      (X) edge[labelbl={h}] (XX) %
      (X') edge[labelbr={h'}] (XX) %
      ; %
    }{%
      |(Z')| T' \& |(Z)| T \& \& |(Y')| Z' \\
      \& |(Y)| Z \& \&  \\
      \& |(X)| Y \& {\ } \& |(X')| Y' }{%
      (Z) edge[pro,labelr={w}] node[pos=.47] (w) {} (Y) %
      (Z') edge[pro,bend right=10,labelbl={w'}] node[pos=.3] (bl) {} (Y) %
      edge (Z) %
      (Y) edge[pro,twol={\trasse}] (X) %
      (Y') edge[pro,twor={\trasse'}] (X') %
      (X) edge[labela={r}] (X') %
      (Z) edge[labela={s}] (Y') %
      (Y) edge[cell={0.2},labela={\alpha}] (r) %
      (bl) edge[cell=0,labela={\gamma\ }] (w)
    }
\end{center}


  \begin{example}
    Recalling the right-hand \trace of Figure~\ref{fig:plays}
    (page~\pageref{fig:plays}), say $\trasse \colon Y \proto X$, $y$'s
    first \action is an input on its unique channel $b$.  This yields
    \atrace $\iotaof{1}{1} \colon [2] \proto [1]$. Here is an examlpe morphism
    $(\iotaof{1}{1},y) \to (\trasse,\id_X)$ in $\Plays_X$:
    \begin{center}
      \Diag|baseline=(m-1-1.base)|(.3,.3){%
      \node[inner sep=2pt] (XX) at ($(m-3-2.south) - (0,.3)$)
      {$X\rlap{.}$} ; \path[->,draw] %
      (X) edge[labelbl={y}] (XX) %
      (X') edge[identity] (XX) %
      ; %
    }{%
      |(Z)| [3] \& \& |(Y')| Y \\
      |(Y)| [2] \& \&  \\
      |(X)| [1] \& {\ } \& |(X')| X\rlap{.} %
    }{%
      (Z) edge[pro,labell={\iotaof{2}{2}}] (Y) %
      (Y) edge[pro,labell={\iotaof{1}{1}}] (X) %
      (Y') edge[pro,twor={\trasse}] (X') %
      (X) edge[labela={y}] (X') %
      (Z) edge[labela={y''}] (Y') %
      (Y) edge[cell={0.2}] (r) %
    }
    \end{center}
    We think of it as an occurrence of the \trace $\iotaof{1}{1}$
    in $\trasse$. Thus, morphisms in $\Plays_X$ account both for
    prefix inclusion and for `spatial' inclusion, i.e., inclusion of
    \atrace into some other \trace on a larger
    \position. \end{example}

We now  define \emph{\threads} within $\T_X$:
\begin{definition}\label{def:basicseeds}
  A \emph{\thread} is a \trace isomorphic to some (possibly empty)
  composite of basic \actions (Definition~\ref{def:playground:data}).
  Let $\Views_X$ denote the full subcategory of $\T_X$ spanning pairs
  $(u,h)$ where $u$ is a \thread.
\end{definition}
Intuitively, basic \actions follow exactly one \agent through \anaction.
An object of $\Views_X$ consists of a view, say $v \colon [n'] \proto
[n]$, plus a morphism $h \colon [n] \to X$ in $\Dh$, which by Yoneda
is just \anagent of $X$. So an object of $\V_X$ is just \anagent of
$X$ and a view from it.  
\begin{definition}
  The inclusion $\jj_X \colon \V_X \into \T_X$ induces a Grothendieck
  topology, for which a sieve $((u_i,h_i) \xto{(w_i,\alpha_i)}
  (u,h))_{i \in I}$ of morphisms to some \trace $u$ is \emph{covering}
  iff it contains all morphisms from \threads into $u$. Let the
  category $\SSX \into \FPsh{\T_X}$ of \emph{innocent} strategies be
  the category of sheaves of finite sets for this topology.  Let the
  category $\BB_{X}$ of \emph{\stratlocales} over $X$ be
  $\FPsh{\Views_X}$.
\end{definition}

As announced in the introduction, letting $\ran_{\op{\jj_X}}$ denote
\emph{right Kan extension}~\cite{MacLane:cwm} along the inclusion
$\op{\jj_X} \colon \op{\views_X} \into \op{\plays_X}$, we have:
\begin{prop}\label{prop:beh:inn}
  The embedding $\ran_{\op{\jj_X}}$ induces an equivalence of
  categories $\Beh{X} \equi \SS_X$.
\end{prop}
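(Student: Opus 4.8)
The plan is to exhibit $\ran_{\op{\jj_X}}$ as one half of an adjoint equivalence, exploiting that $\jj_X \colon \Views_X \into \Plays_X$ is fully faithful and that the chosen topology is generated by the views. First I would record the adjunction $\jj_X^* \dashv \ran_{\op{\jj_X}}$, where $\jj_X^* = (-) \circ \op{\jj_X} \colon \FPsh{\Plays_X} \to \FPsh{\Views_X}$ is restriction and $\ran_{\op{\jj_X}}$ is its right adjoint. Since $\jj_X$ is fully faithful, the counit $\jj_X^* \ran_{\op{\jj_X}} \to \mathrm{Id}$ is an isomorphism (the comma category computing the pointwise limit at a view $(v,h)$ has $\id_{(v,h)}$ as terminal object), so $\ran_{\op{\jj_X}}$ is fully faithful and $\jj_X^* \ran_{\op{\jj_X}} B \iso B$ for every behaviour $B$.

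Next I would identify the essential image of $\ran_{\op{\jj_X}}$ with the sheaves. For a presheaf $B$ on views, the pointwise formula gives
\[(\ran_{\op{\jj_X}} B)(u,h) \;=\; \lim_{(v,k) \to (u,h)} B(v,k),\]
the limit ranging over views $(v,k)$ equipped with a morphism to $(u,h)$ in $\Plays_X$. This limit is exactly the object of matching families for the covering sieve of $(u,h)$, since that sieve is by definition the one generated by all thread morphisms into $u$; hence $\ran_{\op{\jj_X}} B$ satisfies the sheaf condition. Conversely, a sheaf $S$ satisfies precisely $S(u,h) \iso \lim S(v,k)$ over the same index, so the unit $S \to \ran_{\op{\jj_X}} \jj_X^* S$ is an isomorphism and $S$ lies in the essential image. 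Because the topology restricts to the \emph{trivial} one on $\Views_X$ — the only sieve containing all thread morphisms into a view $v$ is the maximal one, as $v$ is itself a thread — sheaves on $\Views_X$ are just presheaves, so $\FPsh{\Views_X} = \Beh{X}$. Combining these, $\ran_{\op{\jj_X}}$ corestricts to an equivalence between presheaves on views and sheaves on $\Plays_X$, with quasi-inverse $\jj_X^*$; this is an instance of the Comparison Lemma~\cite{MM}.

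It remains to cut this set-level equivalence down to presheaves of \emph{finite} sets, which is the content of $\Beh{X}$ and $\SS_X$. Restriction $\jj_X^*$ trivially preserves finiteness. For $\ran_{\op{\jj_X}} B$ with $B$ finite-valued, the value at $(u,h)$ embeds into a finite product: by axiom~\axref{finiteness} the final position of $u$ has finitely many agents, each of which, via the view-extraction of axiom~\axref{views}, determines a maximal view into $(u,h)$ through which every view factors. A matching family is determined by its components at these finitely many maximal views $w$, so the limit embeds into $\prod_w B(w)$, a finite product of finite sets, and is therefore finite. Thus $\ran_{\op{\jj_X}}$ and $\jj_X^*$ restrict to an equivalence $\Beh{X} \equi \SS_X$.

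The main obstacle is the identification of the pointwise Kan-extension limit with the object of matching families for the covering sieve, together with the verification that this limit is finite: both rest on the playground structure — axioms~\axref{views} and~\axref{finiteness} — ensuring that the views into a trace are controlled by its finitely many agents. Everything else, namely full faithfulness from the counit, the unit being an isomorphism exactly on sheaves, and triviality of the induced topology on views, is formal.
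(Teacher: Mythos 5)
Your argument is correct, and it is essentially the argument the paper delegates to the cited reference: Proposition~\ref{prop:beh:inn} is proved there only by pointing to \cite[Lemma~4.34]{HirschoDoubleCats}, which is exactly the dense-subsite/Comparison-Lemma statement you reconstruct (counit iso from full faithfulness of $\jj_X$, identification of the pointwise Kan-extension limit with matching families for the sieve generated by the views, triviality of the induced topology on $\Views_X$, plus the finiteness check). One small correction: the factorisation of every view-occurrence in $(u,h)$ through one of the finitely many maximal views $v^{y,u}$ indexed by the agents $y$ of the final position is the content of Lemma~\ref{lem:views:gen} (the extension of the Views axiom from single \actions{} to arbitrary \traces{}), not of the axiom itself, so that is what your finiteness argument should invoke.
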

\begin{proof}
  See~\cite[Lemma~4.34]{HirschoDoubleCats}.
\end{proof}


We thus obtain the innocent \stratglobale $S_B$ associated to a
\stratlocale $B \in \BB_{X}$ by taking its right Kan extension as in
\begin{center}
        \Diag(.6,.5){%
        }{%
       |(VX)| \op{\views_X} \&|(P_X)| \op{\plays_X} \&|(PX)| \op{\plays(X)} \\
        \&|(set)| \set\rlap{.}
      }{%
        (VX) edge[bend right=20,labelbl={B}] (set) %
        edge[into,labela={\ \op{\jj_X}}] (P_X) %
        (PX) edge[linto,labela={\op{\kk_X}}] (P_X) %
        edge[bend left=20,labelbr={\UUU(S_B)}] (set) %
        (P_X) edge[labelon={S_B}] (set) %
      }

\end{center}
Explicitly, using standard results, we obtain the
\emph{end} $$S_B(\trasse,h) = \int_{(v,x) \in \views_X} B
(v,x)^{\plays_X((v,x),(\trasse,h))},$$ which is a kind of generalised
product.  In the boolean setting (functors to $\two$), this end
reduces to the conjunction $\bigwedge_{\ens{(v,x) \in \views_X \aalt
    \exists \alpha \colon (v,x) \to (\trasse,h)}} B(v,x)$, demanding
precisely that all \threads of $\trasse$ are accepted by $B$.  In the
general case, the intuition is that a way of accepting $\trasse$ for
$S_B$ is a compatible family of ways of accepting the \threads of
$\trasse$ for $B$.
The forgetful functor $\UUU$ to naive strategies is then given by
restricting along the opposite of $\kk_X\colon \plays(X) \into
\plays_X$ as above.  Some local information may be forgotten by
$\UUU$, which is hence neither injective on objects, nor full, nor
faithful.  E.g., if two \stratlocales differ on one \agent, but are both empty on
the \threads of another, then both are mapped to the empty naive
\stratglobale.
\begin{example}
  Recalling $X$ and $S$ from Example~\ref{ex:noninnocent}, let us show
  that for any $B \in \BB_{X}$, the associated \stratglobale
  $\UUU(S_B) \in \NaiveStrats{X}$ cannot be $S$. Indeed,
  if $\UUU(S_B)$ was $S$, then because $S$ accepts $u_{x,y}$ and $i_z$, $B$ has 
  to accept the following views: (1) $i_z$, (2) $o_x$, in which $x$ sends $a$ on $a$
  (without any matching input), (3) $i_y$, in which $y$ inputs on $a$,
  and (4) all identity views on $x$, $y$, and $z$. But then $\UUU (S_B)$
  has to accept both $\trasse_{x,y}$ and $\trasse_{x,z}$, because $B$ accepts all
  \threads mapping into them.
\end{example}

\subsection{Decomposing behaviours}\label{subsec:decomp}

In this section, we study behaviours a bit more in depth, which yields
the calculus announced at the beginning of Section~\ref{sec:model}.
The starting point is that the assignment $X \mapsto \Beh{X}$ may be
equipped with useful structure, describing how a behaviour $B$ on some
given \position restricts to any sub\position, and also what
remains of it after a given \action has been played.  Otherwise said,
morphisms of $\D$ act contravariantly on behaviours:
\begin{itemize}
\item horizontal morphisms $h\colon X \to X'$ induce functors
  $\Beh{h}\colon \Beh{X'} \to \Beh{X}$, and
\item vertical morphisms $u\colon Y \proto X$ induce functors
  $\Beh{u}\colon \Beh{X} \to \Beh{Y}$.
\end{itemize}
Furthermore, any cell as below left induces
a natural isomorphism as below right:
\begin{mathpar}
  \Diag{%
    \twocellbr{m-2-1}{m-1-1}{m-1-2}{\alpha} %
  }{%
      Y \& Y' \\
      X \& X' %
    }{%
      (m-1-1) edge[labela={k}] (m-1-2) %
      edge[pro,labell={u}] (m-2-1) %
      (m-2-1) edge[labelb={h}] (m-2-2) %
      (m-1-2) edge[pro,labelr={u'}] (m-2-2) %
    }
    \and
  \Diag{%
    \twocellbr{m-2-1}{m-1-1}{m-1-2}{\iso} %
  }{%
    \Beh{Y} \& \Beh{Y'} \\
      \Beh{X} \& \Beh{X'}\rlap{,} %
    }{%
      (m-1-1) edge[<-,labela={\Beh{k}}] (m-1-2) %
      edge[<-,labell={\Beh{u}}] (m-2-1) %
      (m-2-1) edge[<-,labelb={\Beh{h}}] (m-2-2) %
      (m-1-2) edge[<-,labelr={\Beh{u'}}] (m-2-2) %
    }
\end{mathpar}
which notably says that $B \cdot u' \cdot k \iso B \cdot h \cdot u$
for any behaviour $B \in \Beh{X'}$.  This is worked out in detail and
in full generality in~\cite{HirschoDoubleCats}, and extended to a
pseudo double functor $\op{\D} \to \QCat$, where $\QCat$ denotes
Ehresmann's double category of \emph{quintets} over $\Cat$.  But let
us explain how both actions look like in the present, concrete case.

The first, horizontal action is really easy: any horizontal morphism
$k \colon X' \to X$ acts on a given behaviour $B \in \Beh{X}$ by
returning the behaviour $B \cdot k$ such that for all
$(v,h) \in \V_{X'}$, $(B \cdot k) (v,h) = B (v, k \rond h)$.

\begin{prop}\label{prop:spatialdecomp}
  The functor $\Beh{X} \to \prod_{n, x \colon [n] \to X} \Beh{[n]}$
  given at $(n,x)$ by horizontal action of $x$, i.e., $B \mapsto B
  \cdot x$, is an isomorphism.
\end{prop}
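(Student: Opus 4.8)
The plan is to show that $\Views_X$ is, as a category, a coproduct in $\Cat$ of copies of $\Views_{[n]}$, one for each agent of $X$, and then to invoke the standard fact that presheaves over a coproduct of categories form the product of the presheaf categories.

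First I would analyse the objects and morphisms of $\Views_X$. An object is a pair $(v,x)$ with $v$ a view and $x \colon [n] \to X$ in $\Dh$; since $[n]$ is representable and morphisms of positions are $1$-injective, $x$ is by Yoneda just an $n$-ary agent of $X$, and the domain $[n]$ of $v$ records $n$. The crucial point is the morphism structure: a morphism $(v,x) \to (v',x')$ in $\Plays_X$ carries a horizontal component $r$ between the initial positions of the views, i.e.\ a morphism $[n] \to [n']$ in $\Dh$, subject to $x' \rond r = x$. By discreteness of $\DI$ (Axiom~\axref{discreteness}), $r$ must be an identity, forcing $[n] = [n']$ and $x = x'$. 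Hence every morphism of $\Views_X$ fixes its underlying agent, and there are no morphisms between objects sitting over distinct agents.

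Consequently I would exhibit, for each agent $x \colon [n] \to X$, an isomorphism between the full subcategory of $\Views_X$ on objects over $x$ and $\Views_{[n]}$: the unique agent of $[n]$ being $\id_{[n]}$, the assignment $(v,\id_{[n]}) \mapsto (v,x)$ is bijective on objects, and by the previous paragraph it is bijective on hom-sets (a morphism on either side being a trace $w$ together with a cell, with trivial horizontal component). This yields $\Views_X \iso \coprod_{n,\, x \colon [n] \to X} \Views_{[n]}$. I would then conclude by the canonical equivalence $\FPsh{\coprod_i \CCC_i} \equi \prod_i \FPsh{\CCC_i}$, sending a presheaf to the tuple of its restrictions to the components. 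Because an element of a presheaf over a coproduct lies in a unique component, this restriction is compatible with the finite-set condition; and since $\DI / X$ is finite (Axiom~\axref{finiteness}) the indexing product is finite, so finiteness of the total presheaf matches finiteness componentwise.

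It then remains only to check that the component at $(n,x)$ of this canonical isomorphism agrees with the horizontal action $B \mapsto B \cdot x$. Unfolding the definition, $(B \cdot x)(v,h) = B(v, x \rond h)$ for $(v,h) \in \Views_{[n]}$; but the sole agent of $[n]$ is $\id_{[n]}$, so $h = \id_{[n]}$ and $(B \cdot x)(v) = B(v,x)$, which is exactly the restriction of $B$ to the $x$-component of $\Views_X$. Thus the functor of the statement is the canonical isomorphism, and is in particular an isomorphism. I expect the only real content to be the morphism analysis of the first step — specifically the use of discreteness of individuals to rule out cross-component morphisms; the remaining steps are formal, the lone subtlety being the finite-set bookkeeping, which is dispatched by the finiteness axiom.
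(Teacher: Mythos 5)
Your proof is correct and follows exactly the paper's route: the paper's entire argument is the observation that $\Views_X \iso \sum_{n,\,x \colon [n] \to X} \Views_{[n]}$, which you establish in detail (via the discreteness of individuals forcing the horizontal component of any morphism of views to be an identity) before applying the standard presheaves-on-a-coproduct equivalence. The extra checks you perform — that the canonical isomorphism agrees with $B \mapsto B \cdot x$ and the finiteness bookkeeping — are left implicit in the paper but are exactly the right things to verify.
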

\begin{proof}
  We have $\Views_X \iso \sum_{n,x \colon [n] \to X} \Views_{[n]}$.
\end{proof}
\begin{notation}\label{not:horizaction}
  If $(B_x)_{x \colon [n] \to X}$ is a family of behaviours indexed by
  the agents of $X$, we accordingly denote its unique antecedent by
  $[B_x]_{x \colon [n] \to X}$.
\end{notation}

Vertical action is a bit harder. Let us start by recalling the
following result from~\cite{HirschoDoubleCats}, which generalises
Lemma~\ref{lem:views}:
\begin{lem}\label{lem:views:gen}
  For any \trace{} $P \colon Y \proto X$ and $y \colon d \to Y$ in
  $\Dh$ with $d\in \DI$, there exists an essentially unique cell
\begin{center}
  \diag{%
    |(d)| d \& |(Y)| Y \\
    |(dMy)| d^{y,P} \& |(X)| X, %
  }{%
    (d) edge[labelu={y}] (Y) %
    edge[pro,dashed,twol={v^{y,P}}] (dMy) %
    (Y) edge[pro,twor={P}] (X) %
    (dMy) edge[dashed,labeld={y^P}] (X) %
    (l) edge[cell=.3,dashed,labelu={\scriptstyle \alpha^{y,P}}] (r) %
  }
\end{center}
with $v^{y,P}$ a view.
\end{lem}
\begin{proof}
  By induction on $P$ using
  Lemma~\ref{lem:views}. See~\cite{HirschoDoubleCats} for the precise
  meaning of essential uniqueness in this case.
\end{proof}

This result allows us to describe what remains of a behaviour
after \atrace:
\begin{defi}\label{def:residual}
  Given any behaviour $B \in \Beh{X}$ and \trace{}
  $P\colon Y \proto X$, the \emph{residual} $B \cdot P$ of $B$ along $P$
  is the behaviour determined (up to isomorphism) by
  $$(B \cdot P \cdot y) (w,\id_{[n_y]}) = B (v^{y,P} \vrond w,y^P)$$
  for all \agents{} $y\colon [n_y] \to Y$.
\end{defi}

So we may consider residuals of behaviours along arbitrary \traces{}.
Conversely, any behaviour is determined by its initial states and
residuals. Let us first consider the following special kind of
behaviour:
\begin{defi}\label{def:definite}
  A behaviour $B$ on $[n]$ is \emph{definite} iff
  $B (\idv,\id_{[n]}) \iso 1$.  A behaviour $B$ on an arbitrary
  \position{} $X$ is \emph{definite} iff for all \agents{} $x$ of
  $X$, $B \cdot x$ is definite.  Let $\DD_X$ denote the full
  subcategory of $\Beh{X}$ spanning definite behaviours.
\end{defi}
This means that $B$ has exactly one initial state.
\begin{defi}
  For any $n \in \Nat$ and family $B_b \in \Beh{[n_b]}$ indexed by all
  basic \actions{} $b\colon [n_b] \proto [n]$, let $\tupling{B_b}_b$
  denote the definite behaviour $B$ determined by $B \cdot b = B_b$
  for all $b$.
\end{defi}

By construction, we have:
\begin{prop}
  For any definite behaviour $D \in \Beh{[n]}$,
  $D \iso \tupling{D \cdot b}_b$.
\end{prop}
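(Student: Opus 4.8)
The plan is to show the counit (or appropriate comparison map) $\tupling{D \cdot b}_b \to D$ is an isomorphism for any definite behaviour $D \in \Beh{[n]}$. First I would recall exactly what $\tupling{D \cdot b}_b$ means: by the defining universal property of $\tupling{-}$, it is the unique definite behaviour $B$ with $B \cdot b = D \cdot b$ for every basic action $b \colon [n_b] \proto [n]$. Since $D$ is itself definite, it also satisfies $D(\idv,\id_{[n]}) \iso 1$, so both $D$ and $\tupling{D\cdot b}_b$ are definite behaviours agreeing on all single-step residuals. The task reduces to showing that a definite behaviour on $[n]$ is determined (up to canonical isomorphism) by its residuals along basic actions.

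The key structural fact I would exploit is that a behaviour on $[n]$ is a presheaf on $\Views_{[n]}$, and that every view is, by Definition~\ref{def:basicseeds}, a (possibly empty) composite of basic \actions. Thus any object $(v,\id_{[n]})$ of $\Views_{[n]}$ with $v$ of positive length factors as $v \iso b \vrond v'$ for some basic action $b \colon [n_b] \proto [n]$ and a shorter view $v'$ from $[n_b]$. The plan is then to compute, for each such $v$, the value $\tupling{D\cdot b}_b(v,\id_{[n]})$ using the residual formula of Definition~\ref{def:residual}, namely $(B \cdot b)(v',\id_{[n_b]}) = B(v^{\,b} \vrond v', b^{\text{(target)}})$, and observe that by the defining equation $B \cdot b = D \cdot b$ this equals $(D \cdot b)(v',\id_{[n_b]}) = D(v,\id_{[n]})$. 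The empty view case is handled by definiteness: both behaviours send $(\idv,\id_{[n]})$ to $1$. Proceeding by induction on the length of $v$, this exhibits a natural bijection between the values of $\tupling{D\cdot b}_b$ and those of $D$ on all objects of $\Views_{[n]}$.

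The remaining point is naturality, i.e.\ that these bijections are compatible with the action of morphisms in $\Views_{[n]}$, so that they assemble into an isomorphism of presheaves rather than merely a pointwise isomorphism. Here I would invoke the essential uniqueness of the view-factorisation cells provided by Lemma~\ref{lem:views} (the axiom~\axref{views} part, strengthened to genuine uniqueness by Lemma~\ref{lem:views} in our playground): each restriction map in $\Views_{[n]}$ is induced by a cell $\alpha^{y,M}$, and the residual construction of Definition~\ref{def:residual} is set up precisely so that these cells transport coherently. The compatibility is already packaged in~\cite{HirschoDoubleCats}, where vertical action and the $\tupling{-}$ construction are defined; so the coherence I need follows from the pseudo double functor structure $\op\D \to \QCat$ together with the equivalence $\Beh{[n]} \equi \SS_{[n]}$ (Proposition~\ref{prop:beh:inn}).

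I expect the main obstacle to be neither the pointwise bijection nor the abstract machinery, but rather pinning down the naturality cleanly: one must check that the induction on view length respects the \emph{morphisms} of $\Views_{[n]}$, which mix prefix extension with isomorphisms of \traces, and that the essential uniqueness of Lemma~\ref{lem:views:gen} is exactly what makes the two ways of restricting agree. Concretely, the subtle step is verifying that for a composite view factoring in two different ways as $b \vrond v'$, the residual formula yields the same element, which is where source-linearity and the uniqueness clause of~\axref{views} are doing the real work. Once that coherence is secured, the isomorphism $D \iso \tupling{D \cdot b}_b$ follows immediately.
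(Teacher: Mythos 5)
Your proposal is correct and is essentially an expanded version of the paper's own argument: the paper dispenses with the proof entirely ("By construction, we have:"), since $\tupling{B_b}_b$ is \emph{defined} as the unique definite behaviour with the prescribed residuals along basic \actions, so $D$ itself witnesses the required isomorphism. Your unpacking — factoring every positive-length view as a basic \action followed by a shorter view, using definiteness for the empty view, and securing naturality via the essential uniqueness of the view cells — is exactly the content implicitly packaged in that definition, so there is no divergence of method.
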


This extends to arbitrary behaviours on individuals using the fact
that any behaviour on an individual is a coproduct of definite
behaviours.
\begin{notation}\label{not:cardrestr}
  For any behaviour $B = \sum_{k \in \gamma} D_k$ on any $[n]$, let
  $\card{B} = \gamma$ and, for any $k \in \gamma$, let the
  \emph{restriction} $\restr{B}{k}$ of $B$ to $k$ be $D_k$.
\end{notation}
\begin{rem}
  In other words, $k \in \gamma$ is just an element of
  $B (\idv_{[n]},\id_{[n]})$ and $\restr{B}{\state}$ is determined by
  $$\restr{B}{\state} (v,\id_{[n]})
  = \ens{\state' \in B (v, \id_{[n]}) \aalt \state' \cdot {!}_v =
    \state}\rlap{,}$$ where ${!}_v$ denotes the unique morphism
  $(\idv_{[n]},\id_{[n]}) \to (v, \id_{[n]})$ in $\Views_{[n]}$.
\end{rem}
We obtain:
\begin{prop}
  For any behaviour $B \in \Beh{[n]}$,
  $$B \iso \sum_{k \in \card{B}} \tupling{(\restr{B}{k}) \cdot b}_b\rlap{.}$$
\end{prop}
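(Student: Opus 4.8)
The plan is to derive the statement by combining two facts that are already in place just before it: the decomposition of an arbitrary behaviour on an individual into a coproduct of definite behaviours, and the preceding proposition expressing each definite behaviour as the tupling $\tupling{D \cdot b}_b$ of its residuals along basic \actions{}. Once both are available, the result is a one-line substitution.

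First I would make the coproduct decomposition precise, and identify its index and summands with $\card{B}$ and the $\restr{B}{k}$ of Notation~\ref{not:cardrestr}. The key observation is that the identity view $(\idv_{[n]},\id_{[n]})$ is initial in $\Views_{[n]}$: by~\axref{discreteness} the only agent of $[n]$ is $\id_{[n]}$, so every object has the form $(v,\id_{[n]})$, and each receives exactly the morphism ${!}_v$ from $(\idv_{[n]},\id_{[n]})$. Hence any $B \in \Beh{[n]} = \FPsh{\Views_{[n]}}$ carries the natural family of restriction maps $B(v,\id_{[n]}) \to B(\idv_{[n]},\id_{[n]})$, $\state' \mapsto \state' \cdot {!}_v$. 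Since coproducts of presheaves of finite sets are computed pointwise, while any map of finite sets is the disjoint union of its fibres, partitioning each set $B(v,\id_{[n]})$ along this map yields $B \iso \sum_{k \in \card{B}} \restr{B}{k}$, with $\card{B} = B(\idv_{[n]},\id_{[n]})$ and $\restr{B}{k}$ the subpresheaf described in the remark following that notation. Each summand is definite, since ${!}_{\idv_{[n]}}$ is the identity and therefore $\restr{B}{k}(\idv_{[n]},\id_{[n]}) = \ens{k} \iso 1$.

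Then I would apply the previous proposition to each definite summand, obtaining $\restr{B}{k} \iso \tupling{(\restr{B}{k}) \cdot b}_b$ for every $k \in \card{B}$. Substituting these isomorphisms into the coproduct — which, being a colimit, respects them — gives $B \iso \sum_{k \in \card{B}} \tupling{(\restr{B}{k}) \cdot b}_b$, as desired. I do not anticipate a genuine obstacle: all the substance sits in the two cited results, and the only point requiring care is matching the abstract ``coproduct of definite behaviours'' with the concrete fibrewise decomposition over the initial view, which is exactly what the initiality of $(\idv_{[n]},\id_{[n]})$ supplies.
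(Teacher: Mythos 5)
Your proposal is correct and follows exactly the route the paper intends: the paper gives no explicit proof, treating the statement as immediate from the coproduct decomposition of $B$ into the definite behaviours $\restr{B}{k}$ indexed by $\card{B} = B(\idv_{[n]},\id_{[n]})$ (as spelled out in Notation~\ref{not:cardrestr} and the remark following it) combined with the preceding proposition $D \iso \tupling{D \cdot b}_b$ for definite $D$. Your filling-in of the details — initiality of the identity view, pointwise coproducts, fibrewise partition along ${!}_v$, and definiteness of each summand — is exactly the justification the paper leaves implicit.
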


Putting this together with spatial decomposition, we obtain:
\begin{cor}
  For any behaviour $B \in \Beh{X}$,
  $$B \iso
  [\sum_{k \in \card{B \cdot x}} \tupling{(\restrat{B}{x}{k}) \cdot b}_{b\colon [n_b] \shortproto [n_x]}]_{x\colon [n_x] \to X}\rlap{.}$$
  
\end{cor}

The fact that both actions of $\D$ yield a pseudo double functor
$\op{\D} \to \QCat$ essentially boils down to:
\begin{lem}
  For any $B \in \Beh{X}$ and cell
  \begin{center}
    \diag{%
      Y' \& Y \\
      X' \& X\rlap{,} %
    }{%
      (m-1-1) edge[labela={k}] (m-1-2) %
      edge[pro,twol={u'}] (m-2-1) %
      (m-2-1) edge[labelb={h}] (m-2-2) %
      (m-1-2) edge[pro,twor={u}] (m-2-2) %
      (l) edge[cell=.4,labela={\alpha}] (r) %
    }
  \end{center}
  we have $D \cdot h \cdot u' \iso D \cdot u \cdot k$.
\end{lem}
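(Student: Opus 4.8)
The plan is to reduce both sides to restrictions of $B$ along explicit functors between the relevant categories of views, and then exhibit a natural isomorphism between those functors; restricting $B$ along it yields the claim.

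First I would record that each of the two actions of $\D$ on behaviours is a restriction functor on presheaf categories (precomposition with an opposite functor of view-categories). Concretely, the horizontal action of a map $h\colon X'\to X$ in $\Dh$ is restriction of $B\in\Beh{X}=\FPsh{\Views_X}$ along the functor $H_h\colon\Views_{X'}\to\Views_X$ sending $(v,g)\mapsto(v,h\rond g)$, by the explicit description $(B\cdot h)(v,g)=B(v,h\rond g)$. Likewise, combining Definition~\ref{def:residual} with spatial decomposition (Proposition~\ref{prop:spatialdecomp}), the vertical action of a \trace $u\colon Y\proto X$ is restriction of $B$ along a functor $G_u\colon\Views_Y\to\Views_X$ which sends an object $(w,\bar y)$ — a view $w$ together with an agent $\bar y\colon[n]\to Y$ — to $(v^{\bar y,u}\vrond w,\;\bar y^u)$, where $\alpha^{\bar y,u}$ is the essentially unique view-cell of Lemma~\ref{lem:views:gen}; functoriality of $G_u$ is exactly the content recalled from~\cite{HirschoDoubleCats}. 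Reading off the definitions, $B\cdot h\cdot u'$ is then the restriction of $B$ along $H_h\rond G_{u'}\colon\Views_{Y'}\to\Views_X$, while $B\cdot u\cdot k$ is the restriction of $B$ along $G_u\rond H_k\colon\Views_{Y'}\to\Views_X$.

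Thus it suffices to produce a natural isomorphism $H_h\rond G_{u'}\iso G_u\rond H_k$. On an object $(w',y')$ of $\Views_{Y'}$ the two functors evaluate to $(v^{y',u'}\vrond w',\,h\rond (y')^{u'})$ and $(v^{k\rond y',u}\vrond w',\,(k\rond y')^u)$ respectively. To compare these I would paste the view-cell $\alpha^{y',u'}\colon v^{y',u'}\to u'$ with the given cell $\alpha\colon u'\to u$ in the category $\DH$: since $\domv,\codv\colon\DH\to\Dh$ are functors, the composite $\alpha\rond\alpha^{y',u'}\colon v^{y',u'}\to u$ has $\domv=k\rond y'$ and $\codv=h\rond(y')^{u'}$, and $v^{y',u'}$ is a view. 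By the essential uniqueness part of Lemma~\ref{lem:views:gen}, applied to the \trace $u$ and the agent $k\rond y'$, this composite is canonically isomorphic to $\alpha^{k\rond y',u}$; in particular there is a canonical isomorphism of views $\gamma_{y'}\colon v^{y',u'}\iso v^{k\rond y',u}$, and its two legs agree, forcing $h\rond(y')^{u'}=(k\rond y')^u$, since the comparison isomorphism provided by Lemma~\ref{lem:views:gen} is special and hence leaves the bottom agent unchanged. Whiskering $\gamma_{y'}$ with $w'$ then supplies the required component at $(w',y')$.

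The remaining, routine work is to check that the components $\gamma_{y'}\vrond w'$ are natural in $(w',y')$ — which follows from functoriality of $G_{u'},G_u,H_h,H_k$ together with the interchange law and the coherence of the view assignment — and that the construction does not depend on chosen representatives. I expect the main obstacle to be precisely this coherence bookkeeping: the essential uniqueness of Lemma~\ref{lem:views:gen} is only up to isomorphism, so some care is needed to see that the pointwise isomorphisms $\gamma_{y'}$ assemble into a single natural isomorphism of functors $\Views_{Y'}\to\Views_X$, rather than a mere family of isomorphisms of objects. Everything else is a direct unwinding of the two actions, after which restricting $B$ along $H_h\rond G_{u'}\iso G_u\rond H_k$ gives the natural isomorphism $B\cdot h\cdot u'\iso B\cdot u\cdot k$.
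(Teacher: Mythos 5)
Your argument is correct in its essentials, but it is worth knowing that the paper does not prove this lemma locally at all: its entire ``proof'' is a citation of the general playground result \cite[Proposition~4.31]{HirschoDoubleCats}, which establishes that the two actions assemble into a pseudo double functor $\op{\D}\to\QCat$. What you have written is, in effect, a reconstruction of the argument behind that cited proposition, specialised to the present setting. Your reduction of both actions to restriction along functors of view-categories is faithful to the definitions in the paper: the horizontal action is literally precomposition with $(v,g)\mapsto(v,h\rond g)$, and Definition~\ref{def:residual} exhibits the vertical action as restriction along $(w,\bar y)\mapsto(v^{\bar y,u}\vrond w,\bar y^u)$. The key step --- pasting $\alpha^{y',u'}$ with the given cell $\alpha$ in $\DH$ and invoking the essential uniqueness of Lemma~\ref{lem:views:gen} for the trace $u$ and the agent $k\rond y'$ --- is exactly right, and correctly yields both the equality of bottom agents (which the uniqueness clause asserts directly, so you need not argue separately that the comparison cell is special) and the canonical isomorphism of views. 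The gap you flag yourself is the genuine one: the functoriality of the view assignment $(\bar y,u)\mapsto(v^{\bar y,u},\alpha^{\bar y,u})$ and the naturality of the pointwise isomorphisms $\gamma_{y'}\vrond w'$ are precisely the coherence content that the paper outsources to the ``full generality'' treatment in \cite{HirschoDoubleCats}, so a self-contained version of your proof would still need to import or redo that bookkeeping. In short: your route buys an explicit, checkable argument where the paper buys brevity by citation; the two are not in conflict.
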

The result is stated (and proved) in full generality
as~\cite[Proposition~4.31]{HirschoDoubleCats}.

\subsection{Interpretation of \texorpdfstring{$\pi$}{π}}\label{subsec:interp:pi}
We now define our interpretation of $\pi$-calculus configurations. We
start with processes and then cover configurations.  Because the
notation for behaviours introduced in the previous section only covers
behaviours on representable \positions{}, while $\pi$-calculus syntax
is name-based, we bridge the gap by keeping track, along the recursive
definition, of a bijection between the set $\gamma$ of free channels
of the considered process and its cardinal $\card{\gamma}$.  So we
define a family of maps in
$$\prod_{\gamma \in \powfin(\Nat)} \DD_{[\card{\gamma}]}^{\picalc_\gamma \times \Bij(\gamma,\card{\gamma})},$$
where $\Bij(A,B)$ denotes the set of all bijections $A \isoto B$.  For
any such $\gamma,P$, and $h$ in the domain, we denote the result by
$\transl{P}_h$, or $\transl{P}_{\gamma,h}$ when needed.
Letting $n = \card{\gamma}$, it is coinductively defined by
$$
\begin{array}{rcll}
  \transl{\sum_i P_i}_h & = & \dsum_i \transl{P_i}_h \\
  \transl{P \para Q}_h & = & \left \langle
    {\begin{array}[c]{rcl}
        \forklof{n} & \mapsto & \transl{P}_h \\
        \forkrof{n} & \mapsto & \transl{Q}_h 
      \end{array}} \right \rangle \\
  \transl{\nu b.P}_h & = & \left \langle
    {\begin{array}[c]{rcl}
        \nuof{n} & \mapsto & \transl{P}_{h'} \\
      \end{array}} \right \rangle \\
  \transl{\tick.P}_h & = & \left \langle
    {\begin{array}[c]{rcl}
        \tickof{n} & \mapsto & \transl{P}_{h} \\
      \end{array}} \right \rangle\\
  \transl{\tau.P}_h & = & \left \langle
    {\begin{array}[c]{rcl}
        \tauuof{n} & \mapsto & \transl{P}_{h} \\
      \end{array}} \right \rangle\\
  \transl{a(b).P}_h & = & \left \langle
    {\begin{array}[c]{rcl}
        \iota_{n, h(a)} & \mapsto & \transl{P}_{h'} \\
      \end{array}} \right \rangle\\
  \transl{\send{a}{b}.P}_h & = & \left \langle
    {\begin{array}[c]{rcl}
        o_{n, h(a),  h(b)} & \mapsto & \transl{P}_{h} \\
      \end{array}} \right \rangle\\
\end{array}$$
where
\begin{itemize}
\item in any list $\tupling{b_1 \mapsto B_1, \ldots, b_m \mapsto
    B_m}$, all unmentioned basic \actions are meant to be mapped to
  the empty behaviour;
\item the \emph{definite sum} $\dsum_i D_i$ of definite behaviours
  $D_i$ is the definite behaviour determined by
  $(\dsum_i D_i) \cdot b = \sum_i (D_i \cdot b)$, for all basic
  \actions $b \colon [n'] \proto [n]$;
\item and $h' \colon \gamma,b \isoto n+1$ maps any $a \in \gamma$ to $h(a)$,
  and $b$ to $n+1$.
\end{itemize}

\begin{example}
  Let us briefly illustrate the translation.  Consider any $h\colon
  \gamma \to \card{\gamma}$ and processes $\gamma,b \vdash P$,
  $\gamma,b \vdash Q$, and $\gamma \vdash R$, with $a \in \gamma$. We
  can form $a(b).P + a(b).Q + \send{a}{a}.R$, which is mapped to
$$\langle \iotaneg{\card{\gamma},h(a)} \mapsto (\transl{P}_{h'} + \transl{Q}_{h'}),
\iotapos{\gamma,h(a),h(a)} \mapsto \transl{R}_h \rangle.$$ To emphasise the
difference, using coproduct of behaviours instead of definite sum in
the translation would yield a behaviour with three distinct initial
states, closer to the internal choice
$a(b).P \oplus a(b).Q \oplus \send{a}{a}.R$ than to the original process.
\end{example}

Generalising this to configurations should really be intuitive: we map
any $\conf{\gamma}{P_1,\ldots,P_n}$ to some behaviour on the
\position{} $X$ with $X(\star) = \gamma$ and $n$ \agents{} of
arity $\card{\gamma}$, given for each \agent{} $i \in n$ by
$\transl{P_i}$.  In order to fully define such \aposition{}, we
need to specify maps $f_i\colon \card{\gamma} \to \gamma$.  We use for
all of them the inverse of the canonical bijection $h_\gamma$ defined by:
\begin{defi}\label{def:h_gamma}
  Let $h_\gamma \colon \gamma \isoto \card{\gamma}$ map each
  $a \in \gamma$ to its position in the ordering induced by the one on
  natural numbers.
\end{defi}
  We call the obtained position $X (\gamma,n)$.
\begin{defi}\label{def:transl}
  Let $\translfun \colon \ob(\Conf) \to \sum_X \Beh{X}$ map
  any configuration $C = \conf{\gamma}{P_1,\ldots,P_n}$
  to the pair $(X(\gamma,n), \transl{C})$, where
  $\transl{C}$ is defined through
  Proposition~\ref{prop:spatialdecomp} by
  $$\transl{C}(\card{\gamma})(i) = \transl{P_i}_{h_\gamma}\rlap{,}$$
  for all $i \in n$.  We implicitly consider processes $P$ over
  $\gamma$ as configurations $\conf{\gamma}{P}$, and hence allow
  ourselves to write $\transl{P}$ for $\transl{\conf{\gamma}{P}}$.
\end{defi}

\subsection{Semantic fair testing}\label{subsec:semantic:fair}
In order to state our main result, it remains to define our semantic
analogue of fair testing equivalence.  It rests on two main
ingredients: a notion of \emph{closed-world} \trace, and an analogue
of \emph{parallel composition} in game semantics.

The intuitive purpose of parallel composition is to let behaviours
interact.  If we partition the \agents of \aposition $X$ into two
teams, we obtain two \subpositions $X_1 \into X \otni X_2$, each
\agent of $X$ belonging to $X_1$ or $X_2$ according to its team. The
crucial fact is that the category $\views_X$ of \threads on $X$ is
isomorphic to the coproduct category $\views_{X_1} +
\views_{X_2}$. Parallel composition of any
$B_1 \in \FPsh{\Views_{X_1}}$ and $B_2 \in \FPsh{\Views_{X_2}}$ is
then simply given by copairing $[B_1,B_2]$ (following
Notation~\ref{not:horizaction}), as in
\begin{center}
  \diag(.6,.6){%
    \op{\views_{X_1}} \&     \op{\views_{X}} \&     \op{\views_{X_2}} \\
    \& \set\rlap{.} %
  }{%
    (m-1-1) edge[into] (m-1-2) %
    edge[labelbl={B_1}] (m-2-2) %
    (m-1-3) edge[linto] (m-1-2) %
    edge[labelbr={B_2}] (m-2-2) %
    (m-1-2) edge[labelon={[B_1,B_2]},dashed] (m-2-2) %
  }
\end{center}

We now describe closed-world \traces, which are then used as a
criterion for success of tests.  Closed-world \actions were
defined (Definition~\ref{def:playground:data}) as those not involving
any interaction with the environment, i.e., formally, pushouts of a
seed of any shape among $\nun$,$\taun$,$\tickn$,$\paran$, and
$\taunamcd$. \Atrace is \emph{closed-world} when it is a composite of
closed-world \actions.  Let $\W(X) \xinto{\ii_X} \Plays(X)$ denote the
full subcategory of $\Plays(X)$ consisting of closed-world \traces,
and let the category of \emph{closed-world \stratglobales} be
$\FPsh{\W(X)}$.
\begin{notation}\label{not:exta}
Recalling the discussion below Proposition~\ref{prop:beh:inn},
we denote by $B \mapsto \exta{X}{B}$ the composite functor
$$\FPsh{\Views_X} \xto{\ran{\op{\jj_X}}}
\FPsh{\Plays_X} \xto{\cob{{\op{\kk_X}}}} \FPsh{\Plays(X)}
\xto{\cob{{\op{\ii_X}}}} \FPsh{\W(X)}\rlap{,}$$ where $\cob{f}$
denotes restriction along $f$.
\end{notation}
A closed-world \trace is \emph{successful} when it contains a $\tick$
\action, and \emph{unsuccessful} otherwise. A state $\state \in
S(\trasse)$ of any $S \in \FPsh{\W(Z)}$ over a
closed-world \trace $\trasse \colon Z' \proto Z$ is successful iff
$\trasse$ is.  Define $\bbot_Z$ as the set of closed-world
\stratglobales $S \in \FPsh{\W(Z)}$ such that any unsuccessful
closed-world state admits a successful extension, i.e. $S \in \bbot_Z$
iff for all unsuccessful $u \in \W(Z)$ and $\state \in S (u)$, there
exists a successful $u'\in \W(Z)$, a morphism $f \colon u \to u'$, and
a state $\state' \in S (u')$ such that $\state' \cdot f = \state$.
Finally, in order to compare \stratlocales for semantic fair testing
equivalence, we specify what a test is for a given \stratlocale $B \in
\BB_X$. A \emph{test} consists of a \position $Y$ and a
\stratlocale $T \in \BB_Y$.  Recalling Definition~\ref{def:interface},
we say that the pair $(X,B)$, with $B \in \Beh{X}$, \emph{should pass}
the test $(Y,T)$ iff $I_X = I_Y$ and $\exta{Z}{[B,T]} \in \bbot{}_Z$,
where $Z$ is the pushout $X +_{I_X} Y$ ($X$ and $Y$ thus form two
teams on $Z$).
At last, we define \emph{semantic fair testing equivalence}, for any
$B \in \BB_X$ and $B' \in \BB_{X'}$:
\begin{definition}\label{def:fair:semantic}
  Let $(X,B) \faireq (X',B')$ iff they should pass the same tests.
\end{definition}

We may at last state:
\begin{theorem}\label{thm:main}
  The translation $\translfun \colon \ob(\Conf) \to \sum_X \Beh{X}$ is
  intensionally fully abstract for $\faireq$, i.e.,
  \begin{itemize}
  \item For all configurations $C_1$ and $C_2$, $C_1
    \faireqof{\picalc} C_2$ iff $\transl{C_1} \faireq \transl{C_2};$
  \item Furthermore, for all \positions{} $X$ and behaviours $B
    \in \Beh{X}$, there exists $C \in \picalc_{X(\star)}$ such that
    $\transl{C} \faireq B$.
  \end{itemize}
\end{theorem}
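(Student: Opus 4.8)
**The plan is to prove the theorem in two independent stages, each reducing the statement to a framework already developed in the excerpt.**

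The overall strategy is to \emph{not} attack semantic fair testing equivalence $\faireq$ directly, but to transport both directions of the theorem across a chain of graphs-with-testing, using the machinery of Lemma~\ref{lem:fairness} and Corollary~\ref{cor:fairness:weak}. Concretely, I would introduce an LTS $\SSS$ whose vertices are behaviours (or states of behaviours) and whose transitions are generated by the playground's action structure (residuals along closed-world actions, following Definition~\ref{def:residual} and the calculus of Section~\ref{subsec:decomp}), equipped with the testing relation coming from parallel composition $[B,T]$ and the success predicate given by $\tick$-containing closed-world traces. The key point is that semantic fair testing $\faireq$ should be recoverable as fair testing equivalence $\faireqof{\SSS}$ in this graph-with-testing: the pole $\bbot_Z$ plays the role of $\bot$, and the ``should pass the same tests'' condition matches Definition~\ref{def:faireq}. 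This identification is essentially bookkeeping, relating $\exta{Z}{[B,T]} \in \bbot_Z$ to the fairness condition on silent and $\tick$-labelled paths.

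First I would establish the \textbf{first bullet} (preservation and reflection). The plan is to exhibit a relation $R$ between $\ob(\Conf)$ (with its graph-with-testing structure $\freecat{\pp^\Conf}$) and the behaviour-based LTS $\SSS$, sending each configuration $C$ to its translation $\transl{C}$, and to verify that $R$ is \emph{weakly fair} in the sense of the definition preceding Corollary~\ref{cor:fairness:weak}. This requires checking the four conditions: compatibility with the coherence relations $\coh$ (both are governed by equality of interfaces $I_X = I_Y$, hence match under the translation); totality (immediate, since $\translfun$ is defined on all configurations) and surjectivity-up-to-equivalence (this is where the second bullet feeds in, or is proved in tandem); weak bisimilarity $C \wbisimsierp \transl{C}$ over $\Sierp$, which is the heart of an \emph{adequacy} argument matching $\pi$-calculus reductions (Figure~\ref{fig:redpi}) with closed-world actions in $\D$; and compatibility of the two parallel compositions, i.e.\ that $@$ on configurations corresponds to $[-,-]$ on behaviours, which follows from the spatial decomposition isomorphism $\views_X \iso \views_{X_1} + \views_{X_2}$ of Section~\ref{subsec:semantic:fair}. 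Corollary~\ref{cor:fairness:weak} then yields $C_1 \faireqof{\picalc} C_2 \iff \transl{C_1} \faireq \transl{C_2}$.

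Second, for the \textbf{definability bullet}, given an arbitrary $B \in \Beh{X}$ I would construct a configuration $C$ with $\transl{C} \faireq B$. The plan is to read off a (possibly infinite) process from $B$ by corecursion, inverting the translation clauses of Section~\ref{subsec:interp:pi}: using the decomposition $B \iso [\sum_k \tupling{(\restrat{B}{x}{k})\cdot b}_b]_x$ to peel off, for each agent $x$ of $X$ and each initial state $k$, a guarded summand matching the unique basic action $b$ on which the residual is non-empty (input, output, fork, $\nu$, $\tau$, or $\tick$), and assembling the agents of $X$ into a configuration over the channel set $X(\star)$. The translation of this $C$ need not be \emph{isomorphic} to $B$ — innocent strategies may branch more finely than name-based syntax — but it should be fair-testing equivalent, which is all that is required; the verification again reduces to the weak bisimilarity and testing-compatibility already used above.

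\textbf{The hard part will be the weak bisimilarity / adequacy step}, i.e.\ showing $C \wbisimsierp \transl{C}$ relating $\pi$-calculus reductions to closed-world playground moves. The subtlety flagged repeatedly in the introduction — the interplay of synchronisation and private-channel creation in $\pi$ — surfaces precisely here: a single closed-world synchronisation action $\taunamcd$ in $\D$ must be matched against the chemical-abstract-machine reduction together with its heating steps, and the fact that traces carry strictly more information than the collection of their views (the ``non-modularity'' noted in Section~\ref{sec:contrib}) means the bisimulation cannot be built view-by-view in a naive way. I expect this is exactly why the authors route the argument through intermediate LTSs $\SSS$ and $\MMM$ (as announced in the Plan), reducing the genuinely combinatorial content to a statement living entirely in the realm of LTSs, where Lemma~\ref{lem:fairness} can be applied; the definability and the bookkeeping around $\bbot_Z$ are comparatively routine once that bridge is in place.
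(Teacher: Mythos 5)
Your plan matches the paper's proof essentially step for step: the reduction of $\faireq$ to fair testing in a graph with testing $\SSS$ of definite behaviours, the further reduction through a more syntactic LTS $\MMM$, the weakly fair relation (an expansion $\myexp$) between configurations and their translations verified via Corollary~\ref{cor:fairness:weak}, and definability by corecursively reading back a process $\zeta(D)$ from each definite behaviour (with a $\tau$-guard inserted for forking, since $\para$ is not a guard). The only place you underestimate the work is the identification of $\faireq$ with $\faireqof{\SSS}$, which you call bookkeeping: because the map $\psi_P$ from global states of $\exta{X}{D}(P)$ to families of local residuals is injective but not surjective (Example~\ref{exa:psi:not:surjective}), the paper needs the intermediate non-free graph with testing $\CCC$ indexed by closed-world \emph{traces} and a weak Conduché-style decomposition argument to equate the semantic pole $\bbot_X$ with $\bot^\SSS$.
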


The proof is the subject of the next section.

\section{Intensional full abstraction}\label{sec:translation}
In the previous section, exploiting the playground structure of $\D$
established in Sections~\ref{sec:fib} and~\ref{sec:playground:pi}, we
have defined and studied the notion of behaviour, into which we have
translated $\pi$-calculus processes and configurations. We have then
defined our semantic analogue of fair testing equivalence and stated
our main result. We now work towards proving it.  In
Section~\ref{subsec:SSS}, we define a graph with testing $\SSS$ whose
vertices are pairs $(X,B)$ with $B$ a definite behaviour on the
\position{} $X$ (Definition~\ref{def:definite}), such that fair
testing equivalence in $\SSS$ coincides with fair testing equivalence
in the model. In order to prove that this is the case, we introduce an
intermediate graph with testing $\CCC$ which is in fact quite
intricate.  We are then in a position where our main result is reduced
to intensional full abstraction of a translation between two graphs
with testing, $\Conf$ and $\SSS$, for which we may hope to apply the
results of Section~\ref{subsec:fair}. In fact, in
Section~\ref{subsec:MMM}, we further reduce to a translation to a more
syntactic graph with testing, $\MMM$.  In
Section~\ref{subsec:fullabs}, we prove intensional full abstraction of
the translation to $\MMM$, from which we deduce
Theorem~\ref{thm:main}. Finally, we generalise to a large class of
testing equivalences in Section~\ref{subsec:gen}.


\subsection{A first graph with testing for behaviours}\label{subsec:SSS}

\begin{defi}\label{defi:SSS}
  Let $\SSS$ denote the graph with vertices in $\sum_X \DD_X$, where
  we recall (Definition~\ref{def:definite}) that $\DD_X$ denotes the
  category of definite behaviours on $X$, and with non-identity edges
  $(X,D)\ot (Y,D')$ all closed-world \actions{} $M \colon Y \proto X$
  such that for all \agents{} $y$ in $Y$, there exists $\state_y \in
  \card{D \cdot M \cdot y}$ such that
$$D' \cdot y \iso \restr{(D \cdot M \cdot y)}{\state_y}\rlap{.}$$

Moreover, let $\pp^\SSS: \SSS\rightarrow\Sierp$ denote the map sending
$(X,D)\xot{M}(X',D')$ to the $\tick$ edge in $\Sierp$ if $M$ is a tick
action and to $\tau$ otherwise.
\end{defi}

Let us now equip $\SSS$ with testing structure.
\begin{defi}\label{def:paraS}
  We define the relation $\para_\SSS$ by $(Z,D) \in
  ((X_1,D_1) \para_{\SSS} (X_2,D_2))$ iff $X_1 (\star) = X_2 (\star)$ and
  there is a pushout square
  \begin{center}
    \Diag{%
      \pbk{m-2-1}{m-2-2}{m-1-2} %
    }{%
      I_{X_1} \& X_2 \\
      X_1 \& Z %
    }{%
      (m-1-1) edge[labela={}] (m-1-2) %
      edge[labell={}] (m-2-1) %
      (m-2-1) edge[labelb={\injl}] (m-2-2) %
      (m-1-2) edge[labelr={\injr}] (m-2-2) %
    }
  \end{center}
  such that $D_1 \iso D \cdot \injl$ and $D_2 \iso D \cdot \injr$,
  or otherwise said $D \iso [D_1,D_2]$.
\end{defi}
Lemma~\ref{lem:freetesting} entails:
\begin{prop}\label{prop:testingS}
  The morphism $\freecat{\pp^\SSS} \colon \freecat{\SSS} \to
  \fcSierp$, with $\para_\SSS$ as testing relation, forms a free graph
  with testing.
\end{prop}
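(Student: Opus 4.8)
The statement to prove is Proposition~\ref{prop:testingS}: that $\freecat{\pp^\SSS} \colon \freecat{\SSS} \to \fcSierp$, with $\para_\SSS$ as testing relation, forms a free graph with testing.

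The plan is to apply Lemma~\ref{lem:freetesting} directly. That lemma says that given a morphism $p^G \colon G \to \Sierp$ and a relation $R \colon (\ob(G))^2 \modto \ob(G)$ whose domain is an equivalence relation, the relation $R$ equips $\freecat{p^G}$ with testing structure precisely when $R$ is partially functional up to weak bisimilarity over $\Sierp$. Here I take $G = \SSS$, $p^G = \pp^\SSS$, and $R = \para_\SSS$. So the proof reduces to verifying exactly two hypotheses: first, that the domain of $\para_\SSS$ is an equivalence relation on $\ob(\SSS)$; and second, that $\para_\SSS$ is partially functional up to weak bisimilarity over $\Sierp$.

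For the first point, the domain of $\para_\SSS$ is the set of pairs $((X_1,D_1),(X_2,D_2))$ for which a pushout $Z = X_1 +_{I_{X_1}} X_2$ with the required decomposition exists. By Definition~\ref{def:paraS} this is exactly the condition $X_1(\star) = X_2(\star)$, i.e.\ that the two positions have the same interface. Since $[D_1,D_2]$ always exists as a behaviour on the pushout once the interfaces agree (by Proposition~\ref{prop:spatialdecomp} and the copairing description of parallel composition), the domain coincides with $\{((X_1,D_1),(X_2,D_2)) \mid I_{X_1} = I_{X_2}\}$. Equality of interfaces is manifestly reflexive, symmetric, and transitive, so the domain is an equivalence relation.

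For the second point, I must show that if $(Z,D)$ and $(Z',D')$ both lie in $(X_1,D_1)\para_\SSS(X_2,D_2)$, then they are weakly bisimilar over $\Sierp$. Both arise from a pushout of the same span $X_1 \ot I_{X_1} \to X_2$, and pushouts are unique up to canonical isomorphism; the induced isomorphism $Z \iso Z'$ carries $[D_1,D_2]$ on one to $[D_1,D_2]$ on the other, so $(Z,D)$ and $(Z',D')$ are isomorphic as vertices of $\SSS$. Isomorphic vertices are certainly strongly bisimilar over $\Sierp$, hence weakly bisimilar, giving partial functionality up to weak bisimilarity. I expect this isomorphism-invariance argument to be the only substantive step, and even it is light, since the paper has already noted (in the notation discussion for $\para$) that partial functionality holds up to bisimilarity by the universal property of the pushout. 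With both hypotheses verified, Lemma~\ref{lem:freetesting} immediately yields that $\freecat{\pp^\SSS}$ with $\para_\SSS$ is a graph with testing, and it is free by construction since it is of the form $\freecat{p^G}$.
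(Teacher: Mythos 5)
Your proof is correct and follows essentially the same route as the paper: the paper likewise invokes Lemma~\ref{lem:freetesting} and establishes partial functionality up to weak bisimilarity by noting that any two pushouts $(Z,D)$ and $(Z',D')$ are related by a horizontal isomorphism $h$ with $D \iso D' \cdot h$, and that the relation of all such isomorphic pairs is a weak bisimulation. Your additional explicit check that the domain is an equivalence relation is a hypothesis the paper leaves implicit but is verified exactly as you do it.
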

\begin{proof}
  We exhibit a weak bisimulation relating any two such pushouts
  $(Z,D)$ and $(Z',D')$.  The relation containing two such pairs as
  soon as there exists a horizontal isomorphism $h \colon Z \to Z'$
  such that $D \iso D' \cdot h$ does the job.
\end{proof}

The crucial result for proving that fair testing equivalence in $\SSS$
coincides with semantic fair testing equivalence is:
\begin{lem}\label{lem:bbot:bot}
  For any definite behaviour $D \in \DD_X$, we have $D \in \bbot_X$ iff
  $(X,D) \in \bot^\SSS$.
\end{lem}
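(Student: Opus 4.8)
The plan is to read both sides of the equivalence as conditions on the same underlying data --- closed-world traces out of $X$ together with states of the extended strategy $\exta{X}{D}$ --- and to set up a dictionary translating the dynamics of $\SSS$ into this data. Throughout I read ``$D \in \bbot_X$'' as ``$\exta{X}{D} \in \bbot_X$'', following Notation~\ref{not:exta}. Recall from Definition~\ref{def:faireq} and the Remark following it that, since the testing structure on $\SSS$ lives over $\fcSierp$ via $\freecat{\SSS}$ (Proposition~\ref{prop:testingS}), the statement $(X,D) \in \bot^\SSS$ unfolds to: for every path of silent (non-tick) closed-world transitions $(X,D) \xOt{} (Y,D')$ in $\SSS$, there is a continuation reaching a $\tick$ action after finitely many further silent steps.

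First I would establish the dictionary on silent paths. By induction on path length, I claim that silent paths $(X,D) \xOt{} (Y,D')$ in $\SSS$ correspond to pairs $(u,\state)$ of a tick-free closed-world trace $u \colon Y \proto X$ and a state $\state \in \exta{X}{D}(u)$, in such a way that $D'$ is, up to isomorphism, the definite residual of $D$ determined by $\state$. The base case is the empty path: $u = \idv_X$, and definiteness of $D$ gives $\exta{X}{D}(\idv_X) \iso 1$ (Definition~\ref{def:definite}), matching the single starting vertex $(X,D)$. For the inductive step one composes the action $M$ of the last transition onto the trace $u_0$ already built, yielding $u = M \vrond u_0$ (still tick-free and closed-world, since $M$ is), and assembles the per-agent choices $\state_y \in \card{D_0 \cdot M \cdot y}$ of Definition~\ref{defi:SSS} into a state over $u$.

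The heart of the argument --- and the step I expect to be the main obstacle --- is precisely this assembly: showing that states of $\exta{X}{D}$ over $M \vrond u_0$ restricting to a given state $\state_0$ over $u_0$ are in natural bijection with the families $(\state_y)_y$ appearing in an $\SSS$-transition out of the residual $D_0 = \restr{(D \cdot u_0)}{\state_0}$, and that under this bijection the endpoint $D'$ is again the residual at the assembled state. This will require combining the end/sheaf formula for $\exta{X}{-}$ (so that a state over a trace is a compatible family of states over its views), the functoriality of residuals along composites of actions (Definition~\ref{def:residual}), and the spatial decomposition of behaviours over the agents of $Y$ (Proposition~\ref{prop:spatialdecomp}), so that a global state over $u$ decomposes exactly into the per-agent data recorded by the transition. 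Care is needed because $D'$ and the transitions are determined only up to isomorphism, resp.\ strong bisimilarity over $\fcSierp$, but this is harmless as everything in sight is invariant under these.

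With the dictionary in hand the proof concludes formally. A $\tick$ continuation from $(Y,D')$ is, by the same dictionary applied to the continuation path, exactly a successful closed-world trace $u'$ with a morphism $f \colon u \to u'$ in $\W(X)$ and a state $\state' \in \exta{X}{D}(u')$ satisfying $\state' \cdot f = \state$; this is the notion of successful extension occurring in the definition of $\bbot_X$. Since silent paths out of $(X,D)$ enumerate, up to the relevant equivalence, all pairs $(u,\state)$ with $u$ tick-free, and since the reachable endpoint $(Y,D')$ depends only on the pair, the quantifier ``for all $(X,D) \xOt{} (Y,D')$ there exists a $\tick$ continuation'' matches verbatim ``for all unsuccessful $u$ and $\state \in \exta{X}{D}(u)$ there exists a successful extension''. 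Hence $(X,D) \in \bot^\SSS$ iff $\exta{X}{D} \in \bbot_X$, as desired.
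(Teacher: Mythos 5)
Your proof is correct in outline and follows essentially the same route as the paper: your pairs $(u,\state)$ of a closed-world trace together with a state of $\exta{X}{D}$ are exactly the edges of the intermediate graph with testing $\CCC$ (Definition~\ref{def:CCC}) that the paper introduces for this purpose, and your ``dictionary'' is the conjunction of the paper's Corollary~\ref{cor:botSC} ($\bot^\SSS = \bot^\CCC$, established action by action) and Lemma~\ref{lem:botsemS} (matching $\bbot_X$ against $\bot^\CCC$). The step you flag as the main obstacle --- assembling the per-agent data $(\state_y)_y$ of a single $\SSS$-transition into a state of the extension over the appended action, and conversely --- is genuinely the crux and is not actually proved in your text: in the paper it is Lemma~\ref{lem:fibres}, discharged by citation to the playgrounds paper rather than by the end/residual computation you sketch, combined with the composition and weak-Conduch\'e lemmas (Lemma~\ref{lem:compo} and Corollary~\ref{cor:corresp}) for threading states through composites in both directions. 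You are right, though, about why the correspondence must be built one action at a time through residuals rather than set up directly on arbitrary traces: Example~\ref{exa:psi:not:surjective} shows that $\psi$ fails to be surjective on composite traces, so the na\"ive global dictionary would be false, and your residual-by-residual induction is exactly the paper's way around this.
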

Our strategy for proving this is to introduce and study an
intermediate graph with testing $\CCC$, which is closer to semantic
fair testing in that its transitions may comprise several \actions.
However, defining $\CCC$ just as $\SSS$ with arbitrary \traces{}
instead of \actions{} would be wrong:
\begin{example}\label{exa:psi:not:surjective}
  Consider the \trace $P = (\tau_0 \vrond \paraof{0})$ consisting of a
  nullary \agent performing a $\tau$ action and then forking. Consider
  now any definite behaviour $D$ such that $D(\tau_0) = D(\tau_0
  \vrond \paralof{0}) = D(\tau_0 \vrond \pararof{0}) = 2$, which maps
  both inclusions $\tau_0 \vrond \paralof{0} \otni \tau_0 \into \tau_0
  \vrond \pararof{0}$ to the identity.  Then $\exta{X}{D}(P) = 2$: it
  consists of pairs $(\state^l, \state^r)$ in $D(\tau_0
  \vrond \paralof{0}) \times D(\tau_0 \vrond \pararof{0})$ whose
  restrictions to $D(\tau_0)$ coincide, which leaves just $(1,1)$ and
  $(2,2)$.  Using the decompositions of Section~\ref{subsec:decomp},
  another way to say this is that $D = \tupling{\tau_0 \mapsto D_1 +
    D_2}$, with $D_i = \tupling{\paralof{0} \mapsto D^l_i, \pararof{0}
    \mapsto D^r_i}$, for $i = 1,2$.  In order for $\CCC$ to correspond
  to the model, assuming $D^l_1 \niso D^l_2$ and $D^r_1 \niso D^r_2$,
  there should be two transitions from $([0],D)$, one to $([0]\para
  [0], [D^l_1,D^r_1])$ and the other to $([0]\para [0],
  [D^l_2,D^r_2])$.  But if we naively generalise
  Definition~\ref{defi:SSS} to arbitrary \traces{}, we obtain
  additional, `incoherent' transitions, to $([0]\para [0],
  [D^l_1,D^r_2])$ and $([0]\para [0], [D^l_2,D^r_1])$.
\end{example}
Instead of relying on $\prod_{y \in \Pl(Y)} \card{D \cdot P
  \cdot y}$ as in Definition~\ref{defi:SSS}, we would like to rely on
$\exta{X}{D}(P)$. This may be done by constructing a map
\begin{equation}
\psi^D_P \colon \exta{X}{D} (P) \to \prod_{y \colon [n_{y}] \to Y} \card{D \cdot P \cdot y},\label{eq:psi}
\end{equation}
whose image will consist precisely of all `coherent' elements.  (From
now on, we omit the superscript $D$ when clear from context.)
Intuitively, this map associates to each global state the
corresponding family of local states.  Furthermore, $\psi_P$ is always
injective, but Example~\ref{exa:psi:not:surjective} shows that it is
not surjective in general. 
In order to construct $\psi_P$, recalling
Definition~\ref{def:residual}, we have by definition: $\prod_{y \colon
  [n_{y}] \to Y} \card{D \cdot P \cdot y} = \prod_{y \colon [n_{y}]
  \to Y} D (v^{y,P},y^P)$.  Furthermore, we also have
$$
\exta{X}{D} (P) \iso \int_{(v,x) \in \V_X} D (v,x)^{\Plays_X
  ((v,x),(P,\id_X))}\rlap{,}$$ which is a subset of $\prod_{(v,x) \in \V_X} D (v,x)^{\Plays_X ((v,x),(P,\id_X))}$.
We may thus define:
$$\psi_P (\state)(y) = \state (v^{y,P},y^P)(\idv_{[n_{y}]},\alpha^{y,P}).$$
Here, $\state (v^{y,P},y^P)$ is in $D
(v^{y,P},y^P)^{\Plays_X ((v^{y,P},y^P),(P,\id_X))}$.  So by
applying it to $\alpha^{y,P}$ viewed as a morphism in
$(v^{y,P},y^P) \to (P,\id_X)$ in $\Plays_X$, we obtain an element
of $D (v^{y,P},y^P) = \card{D \cdot P \cdot y}$, as desired.

We may now define the intermediate graph with testing $\CCC$.  We
first extend the notion of restriction:
\begin{notation}
  We extend Notation~\ref{not:cardrestr}: if $B \in \Beh{X}$ and $\state \in
  \prod_{n, x \colon [n] \to X} \card{B \cdot x}$, let $\restr{B}{\state}$ be
  defined up to isomorphism by
$$\restr{B}{\state} \cdot x = \restrat{B}{x}{\state(x)}.$$
\end{notation}

\begin{defi}\label{def:CCC}
  Let $\CCC$ denote the graph with $\ob (\CCC) = \ob (\SSS)$, and
  where $\CCC ((X',D'),(X,D))$ is the set of closed-world \traces $W
  \colon X' \proto X$ such that there exists a state $\state \in
  \exta{X}{D} (W)$ satisfying $\restrat{D}{W}{\psi_{W}(\state)} \iso
  D'$.
\end{defi}
Thus, $\CCC$ is a generalisation of $\SSS$ from closed-world \actions
to closed-world \traces.  Let us turn it into a graph with testing.
\begin{defi}\label{def:DvW}
  Let $\DvW$ denote the smallest locally full subbicategory of $\Dv$
  containing all closed-world \traces.  The graph morphism
  $\W \to \Sierp$, where we recall that $\W$ denotes the graph of
  closed-world \actions{} (Definition~\ref{def:playground:data}),
  extends to a pseudo functor
  $\pp^\W \colon \DvW \to \freecat{\Sierp}$, which essentially counts
  the number of ticks.  Let $\pp^\CCC\colon \CCC \to \DvW$ denote the
  obvious projection.
\end{defi}

\begin{prop}
  The composite projection
  $\CCC \xto{\pp^\CCC} \DvW \xto{\pp^\W} \fcSierp$, with $\para_\SSS$
  as testing relation, makes $\CCC$ into a graph with testing.
\end{prop}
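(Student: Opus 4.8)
The plan is to verify directly the three clauses in the definition of a graph with testing: that the composite $p = \pp^\W \circ \pp^\CCC$ is a graph morphism into $\fcSierp$, that the domain of $\para_\SSS$ is an equivalence relation, and that $\para_\SSS$ is partially functional up to strong bisimilarity over $\fcSierp$. The first clause is routine: $\pp^\CCC \colon \CCC \to \DvW$ is the projection forgetting the chosen state (Definition~\ref{def:CCC}), and $\pp^\W \colon \DvW \to \fcSierp$ is the pseudofunctor of Definition~\ref{def:DvW} counting ticks, coerced to a graph morphism; their composite is thus a graph morphism. For the second clause I would observe that $\ob(\CCC) = \ob(\SSS)$ and that $\para_\SSS$ (Definition~\ref{def:paraS}) is a relation on objects only, so its domain as a relation on $\ob(\CCC)$ is literally the same as on $\ob(\SSS)$. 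That domain is $\{((X_1,D_1),(X_2,D_2)) \mid X_1(\star) = X_2(\star)\}$, since the pushout $X_1 +_{I_{X_1}} X_2$ always exists when the interfaces agree; this is manifestly an equivalence relation on $\ob(\CCC)$, exactly as already used in Proposition~\ref{prop:testingS}.

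The substantive clause is partial functionality up to $\bisimfcsierp$. Suppose $(Z,D)$ and $(Z',D')$ both lie in $(X_1,D_1) \para_\SSS (X_2,D_2)$. By Definition~\ref{def:paraS} each of $Z$, $Z'$ is a pushout of the span $X_1 \hookleftarrow I_{X_1} \to X_2$, so by the universal property there is a horizontal isomorphism $h \colon Z \isoto Z'$ compatible with the injections, and copairing (Notation~\ref{not:horizaction}) gives $D \iso [D_1,D_2] \iso D' \cdot h$. I would then take exactly the relation used in Proposition~\ref{prop:testingS},
\[
  \mathcal{R} = \{((Z,D),(Z',D')) \mid \exists\, h \colon Z \isoto Z' \text{ in } \Dh,\ D \iso D' \cdot h\},
\]
and argue that it is a strong bisimulation over $\fcSierp$, now on the graph $\CCC$ rather than on $\freecat{\SSS}$. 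This immediately yields $(Z,D) \bisimfcsierp (Z',D')$, establishing partial functionality.

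To see that $\mathcal{R}$ is a strong bisimulation, consider $(Z,D) \mathrel{\mathcal{R}} (Z',D')$ via $h$ and any edge of $\CCC$ out of $(Z,D)$, that is, a closed-world \trace $W \colon Z'' \proto Z$ together with a state $\state \in \exta{Z}{D}(W)$ with $\restrat{D}{W}{\psi_W(\state)} \iso D''$ (Definition~\ref{def:CCC}). The horizontal isomorphism $h$ induces an isomorphism of the categories $\Plays_Z \simeq \Plays_{Z'}$ and $\views_Z \simeq \views_{Z'}$, under which $W$ transports to a closed-world \trace $h_*W \colon Z'' \proto Z'$ with the same number of ticks, hence the same $\fcSierp$-label under $\pp^\W$. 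Because $D \iso D' \cdot h$ and both $\exta{-}{-}$ and the comparison map $\psi$ of~\eqref{eq:psi} are natural in the \position argument, $\state$ transports to a state over $h_*W$ whose residual is again $\iso D''$; the resulting target $(Z'',D'')$ is then related to itself by the identity, so the matched transition lands back in $\mathcal{R}$, and symmetrically for transitions out of $(Z',D')$. The only point requiring care—and the main obstacle—is this transport argument: it lifts Proposition~\ref{prop:testingS} from closed-world \emph{\actions} to arbitrary closed-world \emph{\traces}, and so depends on the naturality of the end-formula for $\exta{Z}{D}$ and of $\psi_W$ under horizontal isomorphisms, together with the uniqueness of the pushout $Z$ up to canonical isomorphism. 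Granting this naturality (which is routine from the constructions of Section~\ref{subsec:decomp} and the definition of $\psi$), all three clauses hold and $\CCC$ is a graph with testing.
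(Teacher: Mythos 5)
Your proof is correct and follows essentially the same route as the paper: the paper's proof is literally ``Just as Lemma~\ref{prop:testingS}'', i.e.\ exhibit the relation consisting of pairs $((Z,D),(Z',D'))$ for which there is a horizontal isomorphism $h\colon Z \to Z'$ with $D \iso D'\cdot h$, and check it is a bisimulation — which is exactly your $\mathcal{R}$, with the transport-along-$h$ argument spelled out in more detail than the paper bothers to.
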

\begin{proof}
  Just as Lemma~\ref{prop:testingS}.
\end{proof}

As announced, $\CCC$ is an example of a non-free graph with testing.
The rest of this section is devoted proving Lemma~\ref{lem:bbot:bot},
and reducing Theorem~\ref{thm:main} to a statement about $\SSS$.
Lemma~\ref{lem:bbot:bot} follows from the fact that both poles are
equivalent to $\bot^\CCC$, as we now set out to prove.  We start with
a lemma saying that $\CCC$ has essentially the same transitions over
two (specially) isomorphic \traces.

\begin{notation}\label{not:component}
  For any morphism $p \colon G \to H$ in $\Gph$, we denote by $p_{A,B}
  \colon G (A,B) \to H (p (A),p (B))$ the component of $p$ at $A$ and
  $B$.
\end{notation}
\begin{lem}\label{lem:2d}
  For any $(X,D),(X',D') \in \CCC$, if there exists any
  special isomorphism $W_1 \iso W_2$ in $\DvW(X',X)$, we
  have $$\inv{(\pp^\CCC)_{(X',D'),(X,D)}} (W_1) \iso
  \inv{(\pp^\CCC)_{(X',D'),(X,D)}} (W_2).$$
\end{lem}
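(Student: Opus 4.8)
The plan is to exhibit an explicit bijection between the two fibres and show that it respects all the relevant structure, using the special isomorphism as the transport map. Let $\varphi \colon W_1 \to W_2$ be the given special isomorphism in $\DvW(X',X)$, i.e.\ an isomorphism of closed-world \traces{} whose vertical domain and codomain are identities (so $\varphi$ fixes both $X$ and $X'$). The fibres in question are the sets of states witnessing that $W_1$, resp.\ $W_2$, is a transition $(X,D) \ot (X',D')$ in $\CCC$; by Definition~\ref{def:CCC} a point of $\inv{(\pp^\CCC)_{(X',D'),(X,D)}}(W_i)$ is a state $\state \in \exta{X}{D}(W_i)$ such that $\restrat{D}{W_i}{\psi_{W_i}(\state)} \iso D'$.

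First I would observe that $\exta{X}{D}$ is a presheaf on $\W(X)$ (a closed-world \stratglobale, in the terminology of Notation~\ref{not:exta}), so the isomorphism $\varphi \colon W_1 \iso W_2$, viewed as a morphism in $\Plays(X)$ and hence in $\W(X)$, acts to give a bijection $\exta{X}{D}(\varphi) \colon \exta{X}{D}(W_2) \isoto \exta{X}{D}(W_1)$. This is the transport map; write $\Phi = \exta{X}{D}(\varphi)$. The content of the lemma is then that $\Phi$ restricts to a bijection between the two fibres, which amounts to checking that $\Phi$ is compatible with the condition $\restrat{D}{W_i}{\psi_{W_i}(\state)} \iso D'$. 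The key naturality fact I would establish is that the two local-state maps agree through $\varphi$: for every \agent{} $y$ of $X'$ the diagram relating $\psi_{W_1}$ and $\psi_{W_2}$ commutes, so that $\psi_{W_1}(\Phi(\state))$ and $\psi_{W_2}(\state)$ determine isomorphic residual behaviours. Concretely, because $\varphi$ is special it induces, for each such $y$, a canonical iso between the views $v^{y,W_1}$ and $v^{y,W_2}$ compatible with the cells $\alpha^{y,W_1}$ and $\alpha^{y,W_2}$ of Lemma~\ref{lem:views:gen}; feeding this through the defining formula $\psi_P(\state)(y) = \state(v^{y,P},y^P)(\idv_{[n_y]},\alpha^{y,P})$ yields the desired identification of local states, and hence an isomorphism $\restrat{D}{W_1}{\psi_{W_1}(\Phi(\state))} \iso \restrat{D}{W_2}{\psi_{W_2}(\state)}$ by Definition~\ref{def:residual}.

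The main obstacle, and the step deserving real care, is exactly this compatibility of $\psi$ with the special isomorphism: the maps $\psi_{W_i}$ are defined through the essentially-unique cells $\alpha^{y,W_i}$ furnished by Lemma~\ref{lem:views:gen}, and essential uniqueness is only \emph{up to canonical isomorphism}, so I must track these canonical isomorphisms carefully and check they fit together coherently rather than merely existing. Since $\varphi$ is a \emph{special} iso, composing it with $\alpha^{y,W_1}$ yields a cell with the same boundary data as $\alpha^{y,W_2}$, and essential uniqueness then supplies the comparison iso of views; the delicate point is verifying that this comparison is the identity on the relevant $y^{W_i}$ component so that the residuals are genuinely identified and not merely abstractly isomorphic. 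Once this coherence is pinned down, the rest is bookkeeping: $\Phi$ is a bijection because $\varphi$ is invertible and $\exta{X}{D}$ is functorial, it preserves the fibre condition by the residual identification just described, and applying the same argument to $\varphi^{-1}$ gives the inverse map, establishing $\inv{(\pp^\CCC)_{(X',D'),(X,D)}}(W_1) \iso \inv{(\pp^\CCC)_{(X',D'),(X,D)}}(W_2)$ as claimed.
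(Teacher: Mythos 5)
Your proof is correct and follows essentially the same route as the paper's: transport the witnessing state across the special isomorphism and check that the restricted residual $\restrat{D}{W_i}{\psi_{W_i}(-)}$ is preserved up to isomorphism, so that the existence condition of Definition~\ref{def:CCC} holds for $W_1$ iff it holds for $W_2$. The only cosmetic difference is that the paper obtains the bijection of state sets from the pseudo double functoriality of $\Behfun$ (via an isomorphism of residuals $D \cdot W_1 \iso D \cdot W_2$) whereas you obtain it from the presheaf action of $\exta{X}{D}$ and then verify compatibility with $\psi$ by hand via Lemma~\ref{lem:views:gen}; note also that your concern about the comparison of views being an identity rather than merely a canonical isomorphism is moot, since the fibre condition only asks for $\restrat{D}{W}{\psi_W(\state)} \iso D'$ up to isomorphism.
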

\begin{proof}
  The given special isomorphism induces by pseudo double functoriality
  of $\Behfun$ an isomorphism $\varphi \colon D \cdot W_1 \isoto D \cdot
  W_2$, hence an isomorphism
 $$\exta{}{\varphi}_{\id_{X'}} \colon \exta{}{D}(W_1) \isoto \exta{}{D}(W_2).$$
 This isomorphism is such that for any $\state$, 
 $$\restrat{D}{W_1}{\psi_{W_1} (\state)}
 \iso \restrat{D}{W_2}{\psi_{W_2}
   (\exta{}{\varphi}_{\id_{X'}}(\state))}.$$ Thus,
 $\restrat{D}{W_1}{\psi_{W_1} (\state)} \iso D'$ iff
 $\restrat{D}{W_2}{\psi_{W_2} (\exta{}{\varphi}_{\id_{X'}}(\state))} \iso D'$.
\end{proof}

In order to relate $\CCC$ to $\SSS$, let us now show that transitions
in $\CCC$ behave well w.r.t.\ composition of \traces.  First,
transitions compose, and second, transitions over any composite
(closed-world) \trace{} $W_1 \vrond W_2$ always decompose into a
transition over $W_1$ followed by one over $W_2$.

\begin{lem}\label{lem:compo}
  For all edges $(X,D) \xot{W} (X',D') \xot{W'} (X'',D'')$ in $\CCC$,
  there is an edge $(X,D) \xot{W \vrond W'} (X'',D'')$.
\end{lem}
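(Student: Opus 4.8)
The plan is to reduce the claim to a single \emph{gluing principle} for the extension functor $\exta{X}{-}$ along composition of closed-world \traces, the rest being a bookkeeping computation with residuals and local states.

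First I would unfold the two hypotheses through Definition~\ref{def:CCC}. The edge $(X,D)\xot{W}(X',D')$ provides a state $\state\in\exta{X}{D}(W)$ with $\restrat{D}{W}{\psi_W(\state)}\iso D'$, and the edge $(X',D')\xot{W'}(X'',D'')$ provides $\state'\in\exta{X'}{D'}(W')$ with $\restrat{D'}{W'}{\psi_{W'}(\state')}\iso D''$. Since $\DvW$ is closed under composition, $W\vrond W'$ is again closed-world, so all that remains is to exhibit a state $\state''\in\exta{X}{D}(W\vrond W')$ such that $\restrat{D}{W\vrond W'}{\psi_{W\vrond W'}(\state'')}\iso D''$, which is exactly the required edge.

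The computational heart is a local identification of the relevant sets of states. Fix an \agent{} $z$ of $X''$, write $y=z^{W'}$ for its image in $X'$ and $\state_y=\psi_W(\state)(y)$. Combining functoriality of the vertical action $D\cdot(W\vrond W')\iso(D\cdot W)\cdot W'$ (\cite{HirschoDoubleCats}), the compositional structure of views $v^{z,W\vrond W'}\iso v^{y,W}\vrond v^{z,W'}$ with $z^{W\vrond W'}=y^W$ (Lemma~\ref{lem:views:gen}), Definition~\ref{def:residual}, the restriction formula following Notation~\ref{not:cardrestr}, and $D'\iso\restrat{D}{W}{\psi_W(\state)}$, I would obtain a natural inclusion of behaviours on $[n_z]$ exhibiting
$$D'\cdot W'\cdot z \iso \ens{\sigma \aalt \sigma\cdot{!}=\state_y}$$
as the sub-behaviour of $D\cdot(W\vrond W')\cdot z$ cut out by the condition that the restriction $\sigma\cdot{!}$ of $\sigma$ to the $W$-prefix view equals $\state_y$. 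In particular $\card{D'\cdot W'\cdot z}$ sits inside $\card{D\cdot(W\vrond W')\cdot z}$ via an inclusion $\iota_z$, compatibly with further restriction, so that for $c_z:=\iota_z(\psi_{W'}(\state')(z))$ one has $\restr{(D\cdot(W\vrond W')\cdot z)}{c_z}\iso\restr{(D'\cdot W'\cdot z)}{\psi_{W'}(\state')(z)}=D''\cdot z$.

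It then remains to realise the family $(c_z)_{z\in\Pl(X'')}$ as $\psi_{W\vrond W'}(\state'')$ for a genuine global state $\state''\in\exta{X}{D}(W\vrond W')$; since $\psi_{W\vrond W'}$ is injective such $\state''$ is unique, and the isomorphisms above then give $\restrat{D}{W\vrond W'}{\psi_{W\vrond W'}(\state'')}\iso D''$, concluding the proof. This last step is the main obstacle: it is precisely the \emph{gluing principle} asserting that the family $(c_z)_z$ is \emph{coherent}, i.e.\ lies in the image of $\psi_{W\vrond W'}$. I would prove it by expanding $\exta{X}{D}(W\vrond W')$ as the end recalled in the construction of $\psi_{W\vrond W'}$ (before~\eqref{eq:psi}), splitting the view-occurrences in $W\vrond W'$ into those factoring through the prefix $W$ and those genuinely reaching into $W'$, and gluing $\state$ (for the $W$-part) with $\state'$ reinterpreted in $D$ via the inclusions $\iota_z$ (for the $W'$-part). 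Coherence of the result follows from coherence of $\state$ and of $\state'$ together with the boundary matching $\sigma\cdot{!}=\state_y$ at the intermediate \position $X'$, which forces the two pieces to agree on overlapping occurrences; checking naturality of this assignment over all views is the only delicate point.
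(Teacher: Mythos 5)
Your proof is correct in substance and, once unwound, rests on exactly the same fact as the paper's, but you reach it by a noticeably more roundabout route. The paper's proof never descends to families of local states: it transports $\state'$ along the isomorphism $\varphi\colon\restrat{D}{W}{\psi_W(\state)}\isoto D'$ to get $\state_1\in\exta{X'}{\restrat{D}{W}{\psi_W(\state)}}(W')$, and then invokes, in one stroke, the identification
$$\exta{X'}{\restrat{D}{W}{\psi_{W}(\state)}} (W') \iso \ens{\state'' \in \exta{X}{D}(W \vrond W') \aalt \restr{\state''}{W} = \state},$$
under which $\state_1$ \emph{is} already the desired global state $\state''$ and the residual computation is immediate. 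Your ``gluing principle'' is precisely this identity, so your final paragraph (splitting view-occurrences of $W\vrond W'$ into those factoring through the prefix $W$ and those reaching into $W'$, and matching on the boundary at $X'$) amounts to supplying a proof of something the paper simply asserts --- more work, but not wrong. What your organisation costs you is the intermediate passage through the family $(c_z)_z$ of local states and the injectivity of $\psi_{W\vrond W'}$: since $\psi$ is not surjective (Example~\ref{exa:psi:not:surjective}), exhibiting $(c_z)_z$ with the right restrictions proves nothing by itself, and you are forced to construct the global state anyway --- at which point the local states become redundant, because once $\state''$ is in hand the computation $\restrat{D}{(W\vrond W')}{\psi_{W\vrond W'}(\state'')}\iso D''$ falls out of the same identity. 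You correctly flag this yourself, so there is no gap; but if you reorganise to construct $\state''$ first (as the image of $\state'$ under the residual-state identification) and only then compute its $\psi$-image, the agent-by-agent analysis and the inclusions $\iota_z$ disappear entirely and you recover the paper's four-line argument.
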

\begin{proof}
  Consider $\state \in \exta{X}{D} (W)$ such that
  $\restrat{D}{W}{\psi_{W}(\state)} \iso D'$ and $\state' \in
  \exta{X'}{D'} (W')$ such that $\restrat{D'}{W'}{\psi_{W'}(\state')}
  \iso D''$.  We want to construct an edge $W \vrond W' \colon
  (X'',D'') \to (X,D)$, i.e., find $\state'' \in \exta{X}{D} (W \vrond
  W')$ such that $\restrat{D}{(W \vrond W')}{\psi_{W \vrond W'}
    (\state'')} \iso D''$.  Now, the isomorphism $\varphi \colon
  \restrat{D}{W}{\psi_{W}(\state)} \isoto D'$ yields a state $\state_1
  = \inv{\exta{}{\varphi}_{W'}} (\state') \in
  \exta{X'}{\restrat{D}{W}{\psi_{W}(\state)}} (W')$ such that
  \begin{equation}
\restrat{\restrat{D}{W}{\psi_{W}(\state)}}{W'}{\psi_{W'}(\state_1)}
\iso \restrat{D'}{W'}{\psi_{W'}(\state')}.\label{eq:DWsigmaW'}
\end{equation}
Now we have $$\exta{X'}{\restrat{D}{W}{\psi_{W}(\state)}} (W') \iso
\ens{\state'' \in \exta{X}{D}(W \vrond W') \aalt \restr{\state''}{W} =
  \state},$$ where $\restr{\state''}{W}$ denotes restriction of
$\state''$ along the prefix inclusion $W \into W \vrond W'$.  So the
left-hand side in~\eqref{eq:DWsigmaW'} is just $\restrat{D}{(W \vrond
  W')}{\psi_{W \vrond W'} (\state_1)}$, which yields the desired
transition.
\end{proof}



\begin{lem}
  The projection $\pp^\CCC \colon \CCC \to \DvW$ satisfies the
  following weak Conduché condition: for all $X'' \xproto{W_2} X'
  \xproto{W_1} X$, if there is an edge $(X'',D'') \xto{W_1 \vrond W_2}
  (X,D)$ in $\CCC$, then there exists $D' \in \DD_{X'}$ and edges $(X'',D'') \xto{W_2} (X',D')
  \xto{W_1} (X,D)$.
\end{lem}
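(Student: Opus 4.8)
The plan is to run the argument of Lemma~\ref{lem:compo} in reverse: rather than gluing two transitions of $\CCC$ into one over a composite, I would carve a transition over $W_1 \vrond W_2$ into a transition over $W_1$ followed by one over $W_2$, reusing the very same two structural facts about residuals and extensions that are established inside that proof. Throughout, $W_1$ and $W_2$ are morphisms of $\DvW$, hence closed-world traces, and so is $W_1 \into W_1 \vrond W_2$ a morphism of $\W(X)$.

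First I would fix a witness $\state \in \exta{X}{D}(W_1 \vrond W_2)$ for the given edge, so that $\restrat{D}{(W_1 \vrond W_2)}{\psi_{W_1 \vrond W_2}(\state)} \iso D''$. Since $\exta{X}{D} \in \FPsh{\W(X)}$ is a presheaf and $W_1 \into W_1 \vrond W_2$ is the prefix inclusion in $\W(X)$, it restricts $\state$ contravariantly to $\state_1 := \restr{\state}{W_1} \in \exta{X}{D}(W_1)$. I would then set
$$D' := \restrat{D}{W_1}{\psi_{W_1}(\state_1)}.$$
This $D'$ is definite: at every agent $y$ of $X'$ one has $D' \cdot y = \restrat{(D \cdot W_1)}{y}{\psi_{W_1}(\state_1)(y)}$, which is a single summand of $(D \cdot W_1) \cdot y$ and hence definite. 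Thus $(X',D') \in \ob(\CCC)$, and by Definition~\ref{def:CCC} the state $\state_1$ witnesses the edge $(X,D) \xot{W_1} (X',D')$.

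Next I would produce the second transition. Here I reuse the isomorphism from the proof of Lemma~\ref{lem:compo},
$$\exta{X'}{D'}(W_2) \iso \ens{\state'' \in \exta{X}{D}(W_1 \vrond W_2) \aalt \restr{\state''}{W_1} = \state_1},$$
recalling $D' = \restrat{D}{W_1}{\psi_{W_1}(\state_1)}$. By construction $\restr{\state}{W_1} = \state_1$, so $\state$ lies in the right-hand set and corresponds to a state $\state_2 \in \exta{X'}{D'}(W_2)$. It then remains to check $\restrat{D'}{W_2}{\psi_{W_2}(\state_2)} \iso D''$. For this I would invoke the compatibility of restriction with composition isolated in the same proof (the step turning the left-hand side of~\eqref{eq:DWsigmaW'} into a restriction along the composite), which gives
$$\restrat{\restrat{D}{W_1}{\psi_{W_1}(\state_1)}}{W_2}{\psi_{W_2}(\state_2)} \iso \restrat{D}{(W_1 \vrond W_2)}{\psi_{W_1 \vrond W_2}(\state)} \iso D''.$$
Hence $\state_2$ witnesses the edge $(X',D') \xot{W_2} (X'',D'')$, completing the decomposition $(X,D) \xot{W_1} (X',D') \xot{W_2} (X'',D'')$.

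The hard part is not the bookkeeping above but making sure the two imported facts apply verbatim in this direction: that $\psi$ commutes with prefix restriction of states (so that $\psi_{W_1 \vrond W_2}(\state)$ restricts to $\psi_{W_1}(\state_1)$ and extends $\psi_{W_2}(\state_2)$), and that the extension functor $\exta{X}{-}$ together with the residual operation $D \mapsto D \cdot W$ satisfies the associativity law displayed above. Both are exactly the computations performed inside Lemma~\ref{lem:compo}; I would therefore factor them out as standalone sublemmas phrased symmetrically in $\state$ and $(\state_1,\state_2)$, after which the present statement follows by reading them from right to left rather than left to right.
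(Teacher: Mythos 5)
Your proposal is correct and follows essentially the same route as the paper's own proof: fix a witnessing state over $W_1 \vrond W_2$, restrict it along the prefix inclusion to define $D'$ as the residual restricted at the resulting local states, then use the identification $\exta{X'}{D'}(W_2) \iso \ens{\state'' \aalt \restr{\state''}{W_1} = \state_1}$ and the compatibility of restriction with composition (both already present in the proof of Lemma~\ref{lem:compo}) to obtain the second edge. The only cosmetic differences are that you name the transported state $\state_2$ where the paper reuses $\state$ directly, and you add the (harmless, correct) explicit check that $D'$ is definite.
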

\begin{proof}
  Consider any $(X,D) \in \CCC$ and $\state \in \exta{X}{D} (W_1 \vrond
  W_2)$ witnessing the given edge.  Consider also the morphism $u \colon
  W_1 \to (W_1 \vrond W_2)$ given by $(W_2,\id)$, and let $\state_1 =
  \state \cdot u \in \exta{X}{D} (W_1)$.  Let $D_1 =
  \restrat{D}{W_1}{\psi_{W_1} (\state_1)}$.  We have $\state \in
  \ens{\state' \in \exta{X}{D} (W_1 \vrond W_2) \aalt \state' \cdot u
    = \state_1}$, hence $\state \in \exta{X'}{D_1} (W_2)$.
  Furthermore, 
  $$\restrat{D_1}{W_2}{\psi_{W_2} (\state)} \iso
  \restrat{D}{(W_1 \vrond W_2)}{\psi_{W_1 \vrond W_2} (\state)} \iso
  D'',$$
  so we have two edges
  $$(X,D) \xot{(W_1,\state_1)} (X',D_1) \xot{(W_2,\state)} (X'',D'')$$
  as desired.
\end{proof}

The previous result generalises by induction to $n$-ary composites:
\begin{notation}\label{not:right:assoc}
  By default, composition in $\DvW$ associates to the right, i.e., $W
  \vrond W' \vrond W''$ denotes $W \vrond (W' \vrond W'')$.
\end{notation}
\begin{cor}\label{cor:corresp}
  For any path $p$, say
  \begin{equation}
    X = X_0 \xotorp{M_1} X_1 \xotorp{M_2} \ldots X_n = X',\label{eq:path}
  \end{equation}
  in $\W$ and edge $(X',D') \xto{W} (X,D)$ in $\CCC$
  over its right-associated, $n$-ary composition $W = (M_1 \vrond
  (\ldots \vrond M_n))$, there is a path $e = (e_1, \ldots, e_n)$ in
  $\CCC^\star ((X',D'),(X,D))$ such that $$(\pp^\CCC)^\star_{(X',D'),(X,D)} (e) = p.$$
\end{cor}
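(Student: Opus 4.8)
The plan is to proceed by induction on the length $n$ of the path $p$, using the weak Conduché lemma just proved as the engine of the induction and keeping the composite $W$ right-associated throughout in the sense of Notation~\ref{not:right:assoc}. The whole point is that the corollary is nothing but the $n$-ary iteration of that binary statement, so the argument should be short.

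For the base case $n = 0$ the path is empty, $W$ is an identity, and the empty path $e$ trivially satisfies $(\pp^\CCC)^\star(e) = p$; for $n = 1$ I simply take $e = (e_1)$ where $e_1$ is the given edge $(X',D') \xto{W} (X,D)$ itself, and $\pp^\CCC(e_1) = W = M_1$ by definition of the projection. For the inductive step, I write $W = M_1 \vrond W'$ with $W' = M_2 \vrond (\ldots \vrond M_n)$; by the right-association convention this is an equality on the nose, and $W'$ is again closed-world, being a composite of the closed-world actions $M_2, \ldots, M_n$. Applying the weak Conduché lemma to the edge $(X', D') \xto{M_1 \vrond W'} (X, D)$ yields a behaviour $D_1 \in \DD_{X_1}$ together with edges $(X', D') \xto{W'} (X_1, D_1) \xto{M_1} (X, D)$ in $\CCC$. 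I take $e_1$ to be the second of these, so that $\pp^\CCC(e_1) = M_1$, and then apply the induction hypothesis to the sub-path $X_1 \xotorp{M_2} \ldots \xotorp{M_n} X_n = X'$ together with the edge $(X', D') \xto{W'} (X_1, D_1)$. Since $W'$ is precisely the right-associated $(n-1)$-ary composite of $M_2, \ldots, M_n$, this produces a path $(e_2, \ldots, e_n)$ in $\CCC^\star((X',D'),(X_1,D_1))$ with $(\pp^\CCC)^\star(e_2, \ldots, e_n) = (M_2, \ldots, M_n)$. Concatenating gives $e = (e_1, e_2, \ldots, e_n)$ from $(X',D')$ to $(X,D)$, and because projection of paths is componentwise, $(\pp^\CCC)^\star(e) = (M_1, M_2, \ldots, M_n) = p$, as required.

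I do not expect a genuine obstacle: all the mathematical content is already in the weak Conduché lemma, and the remaining work is organizational. The one point needing care is that the decomposition $W = M_1 \vrond W'$ must literally match the hypothesis of that lemma, which the right-association convention guarantees without introducing any mediating isomorphism. Were the composite instead presented up to a special isomorphism — for example after reassociating the bracketing — Lemma~\ref{lem:2d} would let me transport the fibres of $\pp^\CCC$ along it, so that the choice of bracketing is immaterial; with the fixed convention of Notation~\ref{not:right:assoc} this is unnecessary, and the induction goes through directly.
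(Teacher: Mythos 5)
Your proof is correct and follows exactly the route the paper intends: the paper's proof is simply ``by induction on $n$'', iterating the weak Conduché lemma, which is precisely your argument. Your extra remarks on right-association and the role of Lemma~\ref{lem:2d} are accurate but not needed, since the decomposition $W = M_1 \vrond W'$ holds on the nose under Notation~\ref{not:right:assoc}.
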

\begin{proof}
  By induction on $n$.
\end{proof}

Our next goal is to relate transitions in $\CCC$ to sequences of transitions in $\SSS$.
First of all, $\CCC$ and $\SSS$ coincide on \actions{}:
\begin{lem}\label{lem:fibres}
  If $W \colon X' \proto X$ is a closed-world \action{}
  (i.e., has length 1), then for all $D$ and $D'$ both fibres of $\CCC
  ((X',D'),(X,D))$ and $\SSS((X',D'),(X,D))$ over $W$ are equal.
\end{lem}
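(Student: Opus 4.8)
The plan is to present both fibres as sets of witnessing data and then to show that the comparison map $\psi_W$ of~\eqref{eq:psi} identifies them. First I would unpack the two sides. By Definition~\ref{def:CCC}, an element of the fibre of $\CCC((X',D'),(X,D))$ over $W$ is a state $\state \in \exta{X}{D}(W)$ with $\restr{(D \cdot W)}{\psi_W(\state)} \iso D'$. On the other side, since restriction acts agentwise and spatial decomposition (Proposition~\ref{prop:spatialdecomp}) is an isomorphism of categories, the $\SSS$-condition of Definition~\ref{defi:SSS} — that $D' \cdot y \iso \restr{(D \cdot W \cdot y)}{\state_y}$ for every $y$ — is equivalent to $\restr{(D \cdot W)}{(\state_y)_y} \iso D'$; thus an element of the fibre of $\SSS((X',D'),(X,D))$ over the \action $W$ is exactly a family $(\state_y)_{y \in \Pl(Y)} \in \prod_{y}\card{D \cdot W \cdot y}$ satisfying this condition. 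Because $\psi_W$ is always injective (as noted after~\eqref{eq:psi}) and the term $\restr{(D \cdot W)}{\psi_W(\state)}$ is literally the restriction occurring in the $\SSS$-condition, $\psi_W$ preserves and reflects the $\iso D'$ requirement, hence restricts to a bijection from the $\CCC$-fibre onto those members of the $\SSS$-fibre lying in its image. Everything therefore reduces to showing that $\psi_W$ is surjective, equivalently bijective, when $W$ has length $1$.

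For surjectivity I would argue via the sheaf description. By Proposition~\ref{prop:beh:inn}, $S_D = \ran_{\op{\jj_X}}(D)$ is a sheaf for the view topology, so $\exta{X}{D}(W) \iso S_D(W,\id_X)$ is the set of matching families of states over the sieve of \threads covering $(W,\id_X)$. By Lemma~\ref{lem:views} the maximal such \threads are precisely the canonical views $(v^{y,W},y^W)$, one per agent $y \in \Pl(Y)$, and every \thread into $(W,\id_X)$ factors through one of them. A short case analysis over the closed-world length-$1$ shapes $\taun$, $\forkn$, $\nun$, $\tickn$, $\taunamcd$ then shows that two distinct maximal views either sit over distinct agents of $X$ — hence lie in different summands of $\Views_X \iso \sum_{n,\,x\colon [n]\to X}\Views_{[n]}$ and do not interact — or overlap only in their common length-$0$ prefix. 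In either situation the gluing constraint defining a matching family is imposed only at identity views, where $D$ is terminal because $D$ is definite (Definition~\ref{def:definite}). The matching condition is thus vacuous, so $S_D(W,\id_X) \iso \prod_{y \in \Pl(Y)} D(v^{y,W},y^W)$; by Definition~\ref{def:residual} the right-hand factor $D(v^{y,W},y^W)$ is exactly $\card{D \cdot W \cdot y}$, and under this identification $\psi_W$ is the canonical projection-isomorphism. This yields bijectivity of $\psi_W$, and combined with the first paragraph it establishes equality of the two fibres.

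The main obstacle is this surjectivity step, and precisely isolating why definiteness makes coherence collapse. For a \trace of positive length the maximal views may share a prefix of positive length on which $D$ carries several states, and then $\psi_W$ genuinely fails to be surjective — this is the phenomenon of Example~\ref{exa:psi:not:surjective}. The whole force of the hypothesis $\length{W}=1$ is that the only shared sub\threads of the canonical views of a closed-world \action have length $0$, so that definiteness (a single initial state per agent) trivialises the gluing. I would make this rigorous by the case analysis above, checking in each of the five shapes that the pullback in $\Plays_X$ of two distinct maximal views is covered by identity views. The residual bookkeeping — that $\psi_W$ preserves and reflects the $\iso D'$ condition, and that the agentwise and global forms of the restriction condition agree — is routine given Proposition~\ref{prop:spatialdecomp} and Definition~\ref{def:residual}.
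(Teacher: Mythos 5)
Your argument is correct, but it takes a different route from the paper: the paper dispatches this lemma in one line by citing the general playground-level result \cite[Proposition~5.23]{HirschoDoubleCats}, which relates one-step transitions of strategies to residuals along a single \action{} in any playground, whereas you give a direct, self-contained computation in this particular playground. The heart of your proof --- that for a length-$1$ closed-world \action{} the maximal \threads{} $(v^{y,W},y^W)$ either live over distinct \agents{} of $X$ or overlap only in a length-$0$ prefix, so that definiteness of $D$ trivialises the matching-family condition and makes $\psi_W$ a bijection onto $\prod_{y}\card{D\cdot W\cdot y}$ --- is exactly the right mechanism, and the reduction of the two fibre conditions to the image of $\psi_W$ via Proposition~\ref{prop:spatialdecomp} is sound. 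What your approach buys is transparency: it isolates precisely where the hypothesis $\length{W}=1$ is used and thereby explains why Example~\ref{exa:psi:not:surjective} fails at length $2$ (there the maximal views share a positive-length prefix carrying several states). What the paper's approach buys is brevity and reuse of the abstract framework. The only part of your proposal I would insist you actually write out is the five-case analysis over $\taun$, $\forkn$, $\nun$, $\tickn$, $\taunamcd$ confirming that every \thread{} into $(W,\id_X)$ factors through some $v^{y,W}$ and that distinct maximal views meet only in identity views; it is routine, but it is the load-bearing step.
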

\begin{proof}
  By~\cite[Proposition 5.23]{HirschoDoubleCats}.
\end{proof}
Now, let us show that for any sequence of closed-world \actions{},
sequences of transitions in $\SSS$ correspond to transitions over the
composite in $\CCC$.  \renewcommand{\truncate}[1]{#1}
\begin{cor}\label{cor:SW}
  For all closed-world paths as in~\eqref{eq:path}, and $(X,D),(X',D') \in \SSS$, 
  we have
  \begin{center}
    $\truncate{\inv{((\pp^\SSS)^\star_{(X',D'),(X,D)})} (p)} \neq
    \emptyset$ \hfil iff \hfil $\inv{(\pp^\CCC_{(X',D'),(X,D)})} (P) \neq
    \emptyset,$
  \end{center}
  for any special isomorphism $P \iso (M_1 \vrond (\ldots \vrond M_n))$.
\end{cor}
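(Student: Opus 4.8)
The plan is to prove Corollary~\ref{cor:SW} by induction on the length $n$ of the closed-world path $p$, using the results just established about how transitions in $\CCC$ compose (Lemma~\ref{lem:compo}) and decompose (the weak Conduché condition and its $n$-ary generalisation Corollary~\ref{cor:corresp}), together with the fact that $\CCC$ and $\SSS$ agree fibrewise on single \actions{} (Lemma~\ref{lem:fibres}). The statement quantifies over special isomorphisms $P \iso (M_1 \vrond (\ldots \vrond M_n))$, so at the outset I would invoke Lemma~\ref{lem:2d} to reduce to the case where $P$ is literally the right-associated composite $M_1 \vrond (\ldots \vrond M_n)$: that lemma says nonemptiness of the $\CCC$-fibre is invariant under special isomorphism of the underlying \trace{}, which is exactly what is needed to discharge the $P \iso (\ldots)$ freedom.

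\emph{Forward direction} ($\SSS$-path nonempty $\Rightarrow$ $\CCC$-fibre nonempty). Suppose $\truncate{\inv{((\pp^\SSS)^\star_{(X',D'),(X,D)})}(p)}$ contains a path $e = (e_1,\ldots,e_n)$. Each $e_i$ is an edge of $\SSS$ lying over the \action{} $M_i$, hence by Lemma~\ref{lem:fibres} it is equally an edge of $\CCC$ over $M_i$. Now I would apply Lemma~\ref{lem:compo} repeatedly (formally, by a sub-induction collapsing the path from the right): the edges $(X,D) \xot{M_1} (X_1,D_1) \xot{M_2} \cdots \xot{M_n} (X',D')$ compose in $\CCC$ to a single edge over $M_1 \vrond (\ldots \vrond M_n)$, which is the desired nonempty fibre element over $P$.

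\emph{Backward direction} ($\CCC$-fibre nonempty $\Rightarrow$ $\SSS$-path nonempty). Conversely, given an edge of $\CCC$ over the composite $W = M_1 \vrond (\ldots \vrond M_n)$, I would feed the path $p$ and this edge to Corollary~\ref{cor:corresp}, which produces a path $e = (e_1,\ldots,e_n)$ in $\CCC^\star((X',D'),(X,D))$ projecting to $p$. Each component $e_i$ lies over the single closed-world \action{} $M_i$, so by Lemma~\ref{lem:fibres} again each $e_i$ is an edge of $\SSS$, and $e$ is therefore a path in $\SSS^\star$ projecting to $p$. This witnesses nonemptiness of the $\SSS$-fibre, completing the equivalence.

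\emph{Where the real work sits.} The genuinely delicate inputs are already proved, so this corollary is essentially a bookkeeping argument: the crux is that the two directions hinge entirely on Lemma~\ref{lem:fibres} to pass freely between $\CCC$ and $\SSS$ on length-one \actions{}, and on the composition/decomposition pair (Lemma~\ref{lem:compo} and Corollary~\ref{cor:corresp}) to bridge single \actions{} and their composite. The only point demanding care is the handling of the special-isomorphism quantifier and of associativity: since composition in $\DvW$ is only associative up to coherent isomorphism, I would keep all composites right-associated (Notation~\ref{not:right:assoc}) and lean on Lemma~\ref{lem:2d} whenever a reassociation or a choice of composite representative is needed, so that no coherence obligation leaks into the main argument. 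I expect the induction itself to be routine; the main obstacle, such as it is, is simply ensuring that the $\CCC$/$\SSS$ identification of Lemma~\ref{lem:fibres} is applied at exactly the right (length-one) granularity and that the base case $n = 0$, where $p$ is empty and both sides reduce to identity data, is recorded cleanly.
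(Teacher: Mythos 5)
Your proof is correct and follows essentially the same route as the paper's: the paper also chains Lemma~\ref{lem:fibres} (to identify $\SSS$- and $\CCC$-edges over single \actions{}), Lemma~\ref{lem:compo} together with Corollary~\ref{cor:corresp} (to pass between paths over $p$ and a single edge over the right-associated composite), and Lemma~\ref{lem:2d} (to absorb the special isomorphism $P \iso M_1 \vrond (\ldots \vrond M_n)$). The only difference is presentational — you handle the special-isomorphism reduction first and spell out the induction, whereas the paper writes the argument as a chain of three equivalences with the isomorphism step last.
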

\begin{proof}
  Consider any special isomorphism $\alpha \colon P \isoto (M_1 \vrond (\ldots \vrond M_n))$.
  We have 
  \begin{equation*}
  \begin{array}[b]{ll}
    \truncate{\inv{((\pp^\SSS)^\star_{(X',D'),(X,D)})} (p) \neq \emptyset} \\
    \text{iff}\  \inv{((\pp^\CCC)^\star_{(X',D'),(X,D)})} (p)  \neq \emptyset
    & \mbox{(by Lemma~\ref{lem:fibres})} \\
    \text{iff}\  \inv{(\pp^\CCC_{(X',D'),(X,D)})} (M_1 \vrond (\ldots \vrond M_n))  \neq \emptyset
    & \mbox{(by Lemma~\ref{lem:compo} and Corollary~\ref{cor:corresp})}\\
    \text{iff}\  
    \inv{(\pp^\CCC_{(X',D'),(X,D)})} (P)  \neq \emptyset & \mbox{(by Lemma~\ref{lem:2d})}.
\end{array}\tag*{\qEd}
\end{equation*}
\def\popQED{}
\end{proof}
As a corollary, we get that the identity relation on objects is a
strong bisimulation between $\freecat{\SSS}$ and $\CCC$:
\begin{cor}\label{cor:SSS:CCC}
  For all $w \in \Sierp^\star (\star,\star)$ and $(X,D),(X',D') \in
  \SSS$, we have $(X,D) \xOt{w} (X',D')$ in $\SSS$ iff $(X,D)
  \xot{\idfree{w}} (X',D')$ in $\CCC$.
\end{cor}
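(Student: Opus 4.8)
The plan is to read off both sides of the claimed equivalence from their definitions and then reduce each to a single instance of Corollary~\ref{cor:SW}, which is the only non-trivial ingredient. Unfolding the left-hand side, $(X,D) \xOt{w} (X',D')$ in $\SSS$ means there is a path $r \colon (X',D') \to (X,D)$ in $\SSS^\star$ with $\idfree{(\pp^\SSS)^\star(r)} = \idfree{w}$. Since $\tau$ is the identity edge of $\Sierp$, the morphism $\idfree{w} \in \fcSierp(\star,\star) \iso \Nat$ is just the number of $\tick$'s occurring in $w$, so the condition says that $r$ is a sequence of closed-world-\action transitions of $\SSS$ whose total number of $\tick$-\actions equals that of $w$. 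Unfolding the right-hand side, and viewing $\CCC$ as a graph over $\fcSierp$ via $\pp^\W \rond \pp^\CCC$, the statement $(X,D) \xot{\idfree{w}} (X',D')$ in $\CCC$ means there is a single edge of $\CCC$, i.e.\ a closed-world \trace $W \colon X' \proto X$ together with a witnessing state, such that $\pp^\W(W) = \idfree{w}$, i.e.\ $W$ contains exactly that many $\tick$-\actions (recall that $\pp^\W$ counts ticks, additively along $\vrond$).

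For the forward direction, I would take such a path $r = (e_1,\dots,e_n)$ and let $p$ be its image $X = X_0 \xotorp{M_1} \dots \xotorp{M_n} X_n = X'$ in $\W$, so that $r \in \inv{((\pp^\SSS)^\star_{(X',D'),(X,D)})}(p)$. Corollary~\ref{cor:SW} then gives an edge of $\CCC$ in the fibre over any $P \iso M_1 \vrond \dots \vrond M_n$; taking $P$ to be the composite itself yields a $\CCC$-edge $W \colon (X',D') \to (X,D)$, and additivity of $\pp^\W$ gives $\pp^\W(W) = \sum_i \pp^\W(M_i) = \idfree{(\pp^\SSS)^\star(r)} = \idfree{w}$, so $(X,D) \xot{\idfree{w}} (X',D')$ in $\CCC$. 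Conversely, given a $\CCC$-edge over a closed-world \trace $W$ with $\pp^\W(W) = \idfree{w}$, I would use that $W$, being closed-world, decomposes as $W \iso M_1 \vrond \dots \vrond M_n$ into closed-world \actions, producing a closed-world path $p$ in $\W$. Applying Corollary~\ref{cor:SW} with $P = W$ and this decomposition makes $\inv{((\pp^\SSS)^\star_{(X',D'),(X,D)})}(p)$ nonempty, i.e.\ yields a path $r$ in $\SSS^\star$ over $p$; again by additivity its total tick-count is $\pp^\W(W) = \idfree{w}$, whence $\idfree{(\pp^\SSS)^\star(r)} = \idfree{w}$ and $(X,D) \xOt{w} (X',D')$ in $\SSS$.

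The whole argument is essentially bookkeeping around Corollary~\ref{cor:SW}, so there is no genuinely hard step; the points to watch, and the only place care is needed, are threefold: (i) that $\idfree{-}$ collapses exactly the $\tau$ (identity) edges, so matching morphisms in $\fcSierp$ reduces to matching $\tick$-counts and nothing else; (ii) that $\pp^\W$ sends a composite to the sum of the tick-counts of its factors and is invariant under the special isomorphisms relating a closed-world \trace to a chosen decomposition into \actions, which is precisely what Lemma~\ref{lem:2d} secures on the $\CCC$ side; and (iii) that a closed-world \trace admits a decomposition into closed-world \actions, which holds by the very definition of closed-world \traces. Granting these, both implications fall out directly from Corollary~\ref{cor:SW}.
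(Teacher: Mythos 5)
Your proof is correct and follows essentially the same route as the paper's: both directions are reduced to Corollary~\ref{cor:SW} by composing the $\SSS$-path into a closed-world \trace{} (forward) and decomposing the \trace{} into a path of closed-world \actions{} (backward), with the remaining work being the tick-count bookkeeping you describe. The only cosmetic difference is that you spell out the role of $\idfree{-}$ and of Lemma~\ref{lem:2d} explicitly, whereas the paper leaves these implicit in the statement of Corollary~\ref{cor:SW}.
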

The last statement is slightly subtle, in that $(X,D) \xot{\idfree{w}}
(X',D')$ denotes a single edge in $\CCC$, lying over the composite
$\idfree{w}$ in $\fcSierp$.
\begin{proof}
  If $(X,D) \xOt{w} (X',D')$ in $\SSS$, then there exists $p \in
  \W^\star$ such that $\freecat{\pp^\W} (p) = \idfree{w}$ and there is
  a path $e \colon (X',D') \to (X,D)$ over $p$ in $\SSS$.  Let $W
  \colon X' \proto X$ denote the composition of $p$.  By
  Corollary~\ref{cor:SW}, we get an edge $(X',D') \to (X,D)$ over $W$
  in $\CCC$. So since $\pp^\W (W) = \idfree{w}$, this gives us the
  expected transition.

  Conversely, if $(X,D) \xot{\idfree{w}} (X',D')$ in $\CCC$, then let
  $W \colon X' \proto X$ denote the corresponding edge in $\DvW$.  In
  particular, we have $\pp^\W (W) = \idfree{w}$.  Decomposing $W$ as
  some path $p$ in $\W$, we obtain by Corollary~\ref{cor:SW} a
  transition sequence $(X,D) \xOt{(\pp^\W)^\star(p)} (X',D')$ in
  $\SSS$. But $\idfree{(\pp^\W)^\star(p)} = \pp^\W (W) = \idfree{w}$,
  as desired.
\end{proof}
As promised, we readily obtain:
\begin{cor}\label{cor:botSC}
We have $\bot^\SSS = \bot^\CCC$.  
\end{cor}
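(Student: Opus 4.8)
The plan is to read off $\bot^\SSS=\bot^\CCC$ directly from Corollary~\ref{cor:SSS:CCC}, which is exactly the statement that the identity relation on objects is a strong bisimulation between $\freecat{\SSS}$ and $\CCC$ over $\fcSierp$, together with the invariance of the pole under strong bisimilarity (the unlabelled lemma used to prove Lemma~\ref{lem:fairness}: if $x \bisimfcsierp y$ then $x \in \bot^G \Leftrightarrow y \in \bot^H$). First I would record that $\ob(\freecat{\SSS}) = \ob(\SSS) = \ob(\CCC) = \sum_X \DD_X$ by Definition~\ref{def:CCC}, so $\bot^\SSS$ and $\bot^\CCC$ are subsets of one and the same set and it suffices to prove, for each object $(X,D)$, the equivalence $(X,D)\in\bot^\SSS \Leftrightarrow (X,D)\in\bot^\CCC$.

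Next I would unpack why Corollary~\ref{cor:SSS:CCC} yields a strong bisimulation. Coercing $\freecat{\SSS}$ into a reflexive graph, an edge $(X,D)\xot{a}(X',D')$ over $a=\idfree{w}\in\fcSierp(\star,\star)$ is by definition a morphism of $\freecat{\SSS}$, i.e.\ a path $(X,D)\xOt{w}(X',D')$ in $\SSS$ whose tick-count is $a$; the corollary pairs each such path bijectively with the single edge $(X,D)\xot{\idfree{w}}(X',D')$ of $\CCC$ lying over the same $a$, and conversely. Taking $R$ to be the identity on objects, both transfer conditions of a strong bisimulation over $\fcSierp$ hold at once (matching endpoints and matching label), and reflexive identity edges match trivially. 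Hence for every object, $(X,D)$ viewed in $\freecat{\SSS}$ is strongly bisimilar over $\fcSierp$ to $(X,D)$ viewed in $\CCC$. Applying the invariance lemma with $G=\freecat{\SSS}$, $H=\CCC$, and $x=y=(X,D)$ gives $(X,D)\in\bot^\SSS \Leftrightarrow (X,D)\in\bot^\CCC$, and therefore $\bot^\SSS=\bot^\CCC$.

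I do not expect any real obstacle at this stage, since all the substance has already been absorbed into Corollary~\ref{cor:SSS:CCC} (and through it into Corollaries~\ref{cor:SW} and~\ref{cor:corresp}, and Lemmas~\ref{lem:compo}, \ref{lem:2d}, and~\ref{lem:fibres}). The only point requiring care is the bookkeeping of what ``$\xot{}$'' and ``$\xot{\tick}$'' mean on the two sides: by the remark after Definition~\ref{def:faireq}, in the \emph{free} graph with testing $\freecat{\SSS}$ these denote arbitrary silent paths, resp.\ paths with exactly one tick, whereas in $\CCC$ they denote single edges lying over $0$, resp.\ $1$, tick in $\fcSierp$. Checking that the bisimulation of Corollary~\ref{cor:SSS:CCC} respects this labelling is precisely the content of that corollary, so nothing beyond the two invocations above is needed.
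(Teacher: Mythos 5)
Your argument is exactly the one the paper intends: Corollary~\ref{cor:botSC} is stated as an immediate consequence of Corollary~\ref{cor:SSS:CCC} (the identity relation is a strong bisimulation between $\freecat{\SSS}$ and $\CCC$ over $\fcSierp$) combined with the invariance of the pole under strong bisimilarity, and your write-up just makes this explicit, including the correct bookkeeping of what $\xot{}$ and $\xot{\tick}$ mean on each side. The only cosmetic overstatement is calling the correspondence of Corollary~\ref{cor:SSS:CCC} a bijection of transitions, when it is merely an equivalence of existence — but that is all a bisimulation requires, so the proof stands as written.
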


Let us also prove the analogous result with the semantic pole $\bbot$.
\begin{lem}\label{lem:botsemS}
  We have $D \in \bbot_X$ iff $(X,D) \in \bot^\CCC$.
\end{lem}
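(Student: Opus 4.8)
The plan is to prove the two implications directly, exploiting the fact that $\CCC$ (unlike $\SSS$) already has arbitrary closed-world \traces as edges, so that its transition relation matches the quantification in the definition of $\bbot$. (Here, following the convention used for Lemma~\ref{lem:bbot:bot}, ``$D \in \bbot_X$'' abbreviates $\exta{X}{D} \in \bbot_X$.) First I would record the dictionary between the two sides. By Definition~\ref{def:CCC}, and since $\pp^\W$ counts ticks, an edge $(X,D) \xot{} (X',D')$ of $\CCC$ lying over $0 \in \fcSierp$ (resp.\ over $\tick$, i.e.\ over $1$) is exactly an \emph{unsuccessful} (resp.\ one-tick) closed-world \trace $W \in \W(X)$ for which there exists $\state \in \exta{X}{D}(W)$ with $\restrat{D}{W}{\psi_W(\state)} \iso D'$; such a restriction to a single coherent family of states is always definite, so $(X',D')$ is a legitimate vertex. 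Thus edges over $0$ from $(X,D)$ correspond precisely to the pairs $(u,\state)$ — with $u \in \W(X)$ unsuccessful and $\state \in \exta{X}{D}(u)$ — ranged over in the definition of $\bbot_X$. The one technical tool I would isolate and reuse is the natural isomorphism already exploited in the proof of Lemma~\ref{lem:compo}: for closed-world $W\colon X'\proto X$, $W'\colon X''\proto X'$ and $\state\in\exta{X}{D}(W)$ with $D':=\restrat{D}{W}{\psi_W(\state)}$, one has $\exta{X'}{D'}(W') \iso \ens{\sigma \in \exta{X}{D}(W \vrond W') \aalt \restr{\sigma}{W} = \state}$, the iso being compatible with further restriction, i.e.\ matching $\sigma \leftrightarrow \tau$ satisfy $\restrat{D'}{W'}{\psi_{W'}(\tau)} \iso \restrat{D}{(W \vrond W')}{\psi_{W \vrond W'}(\sigma)}$.

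For $D \in \bbot_X \Rightarrow (X,D) \in \bot^\CCC$, I would take any edge $(X,D) \xot{W} (X',D')$ over $0$, fix a witnessing $\state$, and feed the pair $(W,\state)$ into the hypothesis. This yields a successful $u'\in\W(X)$, a morphism $f\colon W \to u'$ in $\W(X)$, and $\state'\in\exta{X}{D}(u')$ with $\state'\cdot f = \state$. Writing $f$ as $u'\iso W\vrond w$, the extension $w$ is closed-world (being a factor of the closed-world $u'$) and contains at least one tick (since $u'$ does but $W$ does not). Decomposing $w = w_1 \vrond w_2$ at its \emph{first} tick gives a closed-world prefix $W\vrond w_1$ of $u'$ with exactly one tick. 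Restricting $\state'$ along the prefix inclusion $g\colon W\vrond w_1 \to u'$ produces $\state_1 = \state'\cdot g$ with $\restr{\state_1}{W} = \state$; then the key isomorphism (with $W' = w_1$) transports $\state_1$ to $\tau_1 \in \exta{X'}{D'}(w_1)$ witnessing an edge $(X',D') \xot{w_1} (X'',D'')$ over $1$, as required. The same conclusion can be reached by invoking the weak Conduché property of $\pp^\CCC$ to peel the $W$-part off the composite edge $(X,D)\xot{W\vrond w_1}(X'',D'')$. I expect this truncation-and-peeling step, together with the bookkeeping showing that the truncated state still restricts to $\state$ over $W$, to be the main obstacle.

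For the converse $(X,D)\in\bot^\CCC \Rightarrow D\in\bbot_X$, I would take any unsuccessful $u \in \W(X)$ and $\state\in\exta{X}{D}(u)$, set $D' := \restrat{D}{u}{\psi_u(\state)}$ (definite), and observe that $(X,D)\xot{u}(X',D')$ is an edge over $0$. The hypothesis then supplies an edge $(X',D') \xot{w} (X'',D'')$ over $1$, witnessed by some $\state''\in\exta{X'}{D'}(w)$. Applying Lemma~\ref{lem:compo} to these two edges yields the composite edge $(X,D)\xot{u\vrond w}(X'',D'')$, and its proof produces, via the key isomorphism above, a state $\sigma\in\exta{X}{D}(u\vrond w)$ with $\restr{\sigma}{u}=\state$. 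Since $\pp^\W(u\vrond w)=0+1=1$, the \trace $u' := u\vrond w$ is successful; taking $f\colon u\to u'$ the prefix inclusion and $\state'=\sigma$ gives $\state'\cdot f = \state$ with $u'$ successful, which is exactly the condition defining $\bbot_X$. Combining both implications proves the lemma.
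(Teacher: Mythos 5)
Your proof is correct and follows essentially the same route as the paper's: both directions rest on the dictionary between $\CCC$-edges over $0$ (resp.\ $\tick$) and unsuccessful (resp.\ successful) closed-world \traces{} equipped with a state, together with the isomorphism $\exta{X'}{D'}(W') \iso \ens{\sigma \in \exta{X}{D}(W \vrond W') \aalt \restr{\sigma}{W} = \state}$ already used for Lemma~\ref{lem:compo}. Your extra step of truncating the successful extension at its first tick is a welcome refinement: the paper's forward direction only exhibits an edge over $\tick^n$ and leaves the reduction to a single-$\tick$ edge implicit.
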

\begin{proof}
  Assume $D \in \bbot_X$, and consider any $(X,D) \ot (X',D')$.  The
  latter is witnessed by some unsuccessful, closed-world \trace $W
  \colon X' \proto X$, state $\state \in \exta{X}{D}(W)$, and
  isomorphism $h \colon \restrat{D}{W}{\psi_W(\state)} \isoto D'$.

  By hypothesis, $\state$ admits an extension $\state' \in
  \exta{X}{D}(W \vrond W')$ for some successful $W' \colon X'' \proto
  X'$.  Letting $D'' = \restrat{D}{(W \vrond W')}{\psi_{W \vrond W'}
    (\state')}$, we have
  $$D''
  \iso \restrat{(\restrat{D}{W}{\psi_W(\state)})}{W'}{\psi_{W'}
    (\state')} \iso
  \restrat{D'}{W'}{\psi_{W'}(\exta{}{h}_{W'}(\state'))},$$ and hence
  $(X',D') \xot{\tick^n} (X'',D'')$ for some $n > 0$.  This shows that
  $(X,D) \in \bot^\CCC$.

  Conversely, assume $(X,D) \in \bot^\CCC$ and consider any
  unsuccessful, closed-world \trace $W \colon X' \proto X$ and state
  $\state \in \exta{X}{D}(W)$.  Letting $D' =
  \restrat{D}{W}{\psi_W(\state)}$, we have $(X,D) \ot (X',D')$.  By
  hypothesis, we find some transition $(X',D') \xot{\tick} (X'',D'')$,
  witnessed by some successful $W' \colon X'' \proto X'$. Hence, $D''
  \iso \restrat{D'}{W'}{\psi_{W'}(\state')}$ for a certain $\state'
  \in \exta{X'}{D'}(W')$. By definition of $D'$, $\state'$ is a state
  in $\exta{X}{D} (W \vrond W')$ such that $\state' \cdot u = \state$,
  where $u \colon W \to (W \vrond W')$ is $(W',\id)$.  This gives the
  desired successful extension of $\state$, which shows that $D \in
  \bbot_X$.
\end{proof}

Combining the last two results, we may now prove that the semantic
pole coincides with that of $\SSS$:
\begin{proof}[Proof of Lemma~\ref{lem:bbot:bot}]
  We have $D \in \bbot_X$ iff $(X,D) \in \bot^\CCC$ iff $(X,D) \in
  \bot^\SSS$ by Corollary~\ref{cor:botSC} and Lemma~\ref{lem:botsemS}.  
\end{proof}

As expected, this entails preservation and reflection of 
semantic fair testing equivalence:
\begin{cor}\label{cor:bbot:bot}
  For all $D \in \DD_X$ and $D' \in \DD_{X'}$, we have $(X,D) \faireq (X',D')$
  iff $(X,D) \faireqof{\SSS} (X',D')$
\end{cor}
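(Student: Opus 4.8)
The plan is to reduce the statement about semantic fair testing equivalence $\faireq$ to fair testing equivalence $\faireqof{\SSS}$ in the graph with testing $\SSS$, by exhibiting a suitable relation between the two settings and invoking the machinery of Section~\ref{subsec:fair}. The key observation is that $\faireq$ (Definition~\ref{def:fair:semantic}) is defined directly in terms of the pole $\bbot$ via the notion of \enquote{should pass a test}, while $\faireqof{\SSS}$ is defined (Definition~\ref{def:faireq}) in terms of the pole $\bot^\SSS$. Since Lemma~\ref{lem:bbot:bot} has just established that $D \in \bbot_X$ iff $(X,D) \in \bot^\SSS$, the two notions of success agree on the nose. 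It therefore remains only to check that the two notions of \emph{testing} also agree, i.e.\ that the parallel composition $\para_\SSS$ of Definition~\ref{def:paraS} correctly captures the \enquote{should pass} relation defined via pushouts $X +_{I_X} Y$ and the copairing $[B,T]$.

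First I would unfold both definitions. For $(X,D)$ and $(X',D')$ definite behaviours, $(X,D) \faireq (X',D')$ means they should pass the same tests, where should-passing $(Y,T)$ requires $I_X = I_Y$ and $\exta{Z}{[D,T]} \in \bbot_Z$ with $Z = X +_{I_X} Y$. On the $\SSS$ side, $(X,D) \faireqof{\SSS} (X',D')$ unfolds (Definition~\ref{def:faireq}, using the testing relation $\para_\SSS$) to: $(X,D) \coh_\SSS (X',D')$ and for all $(Y,T) \coh_\SSS (X,D)$, $((X,D) \para_\SSS (Y,T)) \in \bot^\SSS$ iff $((X',D') \para_\SSS (Y,T)) \in \bot^\SSS$. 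The coherence relation $\coh_\SSS$ is the domain of $\para_\SSS$, namely agreement on the interface (the condition $X(\star) = Y(\star)$, equivalently $I_X = I_Y$). The plan is to verify that these two unfoldings coincide term by term: the tests match (both are pairs $(Y,T)$ with matching interface), and the success criteria match via Lemma~\ref{lem:bbot:bot} applied to the composite behaviour on $Z$.

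The crucial bridge is that, by Definition~\ref{def:paraS}, $(Z,E) \in ((X,D) \para_\SSS (Y,T))$ exactly when $Z$ is the pushout $X +_{I_X} Y$ and $E \iso [D,T]$. Combined with the fact that $\exta{Z}{-}$ on the copairing $[D,T]$ computes the interaction behaviour that appears in should-passing, the condition $\exta{Z}{[D,T]} \in \bbot_Z$ is equivalent, by Lemma~\ref{lem:bbot:bot}, to $(Z,[D,T]) \in \bot^\SSS$. Here I must be slightly careful: Lemma~\ref{lem:bbot:bot} is stated for the pole $\bbot_Z$ of closed-world strategies versus $\bot^\SSS$, and the should-pass condition phrases success in terms of $\exta{Z}{[D,T]} \in \bbot{}_Z$, so the two genuinely line up. Thus should-passing $(Y,T)$ is equivalent to $((X,D) \para_\SSS (Y,T)) \in \bot^\SSS$, and the equivalence of $\faireq$ with $\faireqof{\SSS}$ follows by matching the universally quantified test conditions.

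The main obstacle I anticipate is bookkeeping around the partial functionality of $\para_\SSS$ and the role of $\coh_\SSS$: one must confirm that the domain of $\para_\SSS$ (encoding $I_X = I_Y$) is precisely the interface-matching condition used in should-passing, and that $\para_\SSS$ is, up to strong bisimilarity, functional so that the choice of representative $(Z,[D,T])$ does not matter for membership in $\bot^\SSS$ (this invariance is guaranteed since $\bot^\SSS$ is bisimulation-closed, as in the proof of Proposition~\ref{prop:testingS}). Once these compatibility checks are in place, the proof is a direct substitution: for every test $(Y,T)$ with $I_X = I_Y$, $(X,D)$ should pass $(Y,T)$ iff $((X,D)\para_\SSS(Y,T)) \in \bot^\SSS$, and symmetrically for $(X',D')$, so the two equivalences have identical defining conditions and therefore coincide. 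I would close by noting that both $\faireq$ and $\faireqof{\SSS}$ additionally require the coherence $I_X = I_{X'}$ to even be comparable, and that these agree trivially.
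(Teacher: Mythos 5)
Your overall strategy is the same as the paper's: use Lemma~\ref{lem:bbot:bot} to identify the semantic pole $\bbot_Z$ with $\bot^\SSS$ on the composite $(Z,[D,T])$, observe that $\para_\SSS$ (Definition~\ref{def:paraS}) produces exactly the pushout $Z = X +_{I_X} Y$ with behaviour $[D,T]$ appearing in the definition of ``should pass'', and then match the quantifications over tests. The bisimilarity-invariance point you raise is also handled the same way in the paper (via the proof of Proposition~\ref{prop:testingS}).

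However, there is a genuine gap in your claim that ``the tests match (both are pairs $(Y,T)$ with matching interface)''. They do not: a semantic test (Section~\ref{subsec:semantic:fair}) is an \emph{arbitrary} behaviour $T \in \BB_Y$, whereas the vertices of $\SSS$ are \emph{definite} behaviours, so $\faireqof{\SSS}$ only quantifies over definite tests. Consequently the direction $(X,D) \faireqof{\SSS} (X',D') \Rightarrow (X,D) \faireq (X',D')$ does not follow by mere substitution: you must rule out the possibility that some non-definite test $B$ distinguishes $D$ from $D'$ even though no definite test does. The paper's proof devotes its entire first paragraph to exactly this: writing $B = \sum_{i \in \gamma} D_i$ as a coproduct of definite behaviours and arguing that if $\exta{Z}{[D,B]} \in \bbot_Z$ while $\exta{Z'}{[D',B]} \notin \bbot_{Z'}$, then already some summand $D_i$ separates them. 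Without this reduction (or some substitute for it), your argument only establishes that $\faireq$ restricted to definite tests coincides with $\faireqof{\SSS}$, which is strictly weaker than the statement. The rest of your proof is correct and aligns with the paper once this reduction is inserted.
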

\begin{proof}
  Let us first show that semantic fair testing equivalence may as well
  be defined only with definite tests. Indeed, if $I_X \neq I_{X'}$
  then the result holds trivially. So assuming $I_X = I_{X'}$,
  consider any test $B \in \Beh{Y}$ with $I_X = I_Y$, and, w.l.o.g.,
  $\exta{Z}{[D,B]} \in \bbot_Z$ and $\exta{Z'}{[D',B]} \notin
  \bbot_{Z'}$ with $Z = X +_{I_X} Y$ and $Z' = X' +_{I_{X'}} Y$. Then,
  letting $B = \sum_{i \in \gamma} D_i$ with each $D_i$ definite,
  there exists $i$ such that $\exta{Z}{[D , D_i ]} \in \bbot_Z$ and
  $\exta{Z'}{[D' , D_i ]} \notin \bbot_{Z'}$, hence $D_i$ also
  distinguishes $D$ from $D'$.  

  Returning to our main goal, for any definite test $T \in
  \DD_Y$ with $I_Y = I_X$, by Lemma~\ref{lem:bbot:bot}, $\exta{Z}{[D,
    T]} \in \bbot_Z$ iff $(Z, \exta{Z}{[D, T ]}) = ((X, D) \para (Y, T
  )) \in \bot^\SSS$, which easily entails the result.
\end{proof}

From this we may reduce our main theorem to a result on $\SSS$:
\begin{cor}\label{cor:main:SSS}
  If the translation $\translfun \colon \ob(\Conf) \to \sum_X \DD_{X}$
  is intensionally fully abstract for $\faireqof{\SSS}$, then
  Theorem~\ref{thm:main} holds, i.e., $\translfun$ is also
  intensionally fully abstract for $\faireq$.
\end{cor}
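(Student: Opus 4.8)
The plan is to unfold \textbf{intensional full abstraction} into its two constituent parts---preservation and reflection of the equivalence (the first bullet of Theorem~\ref{thm:main}) and surjectivity up to equivalence (the second bullet)---and to discharge each by transferring between $\faireqof{\SSS}$ and $\faireq$ along Corollary~\ref{cor:bbot:bot}, which asserts that the two equivalences \emph{coincide on definite behaviours}. The one subtlety is that the translation lands in definite behaviours (Definition~\ref{def:transl}), whereas the second bullet of Theorem~\ref{thm:main} quantifies over arbitrary behaviours; bridging this gap is where the real work lies.

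For the first bullet, note that for any configurations $C_1,C_2$ both $\transl{C_1}$ and $\transl{C_2}$ lie in $\sum_X\DD_X$. Hence Corollary~\ref{cor:bbot:bot} gives $\transl{C_1}\faireq\transl{C_2}$ iff $\transl{C_1}\faireqof{\SSS}\transl{C_2}$, and by the assumed full abstraction for $\faireqof{\SSS}$ the latter is equivalent to $C_1\faireqof{\picalc}C_2$. This settles the equivalence part outright.

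For the second bullet, the hypothesis supplies, for every \emph{definite} $D\in\DD_X$, a configuration $C$ with $\transl{C}\faireqof{\SSS}D$, hence $\transl{C}\faireq D$ by Corollary~\ref{cor:bbot:bot}. It thus suffices to show that every $B\in\Beh{X}$ is $\faireq$-equivalent to a definite behaviour, after which transitivity of $\faireq$ closes the gap. Here I would exploit the decompositions of Section~\ref{subsec:decomp}: by Proposition~\ref{prop:spatialdecomp} and the fact that any behaviour on an individual is a coproduct of definite behaviours, each $B\cdot x=\sum_{k_x\in\card{B\cdot x}}\restr{(B\cdot x)}{k_x}$, so $B$ is recovered from the family of definite behaviours $\left[\restr{(B\cdot x)}{k_x}\right]_{x}$ indexed by the tuples $(k_x)_x$ of initial-state choices, one per agent. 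The key claim is the \emph{collapse} $B\faireq\bigoplus_{(k_x)}\left[\restr{(B\cdot x)}{k_x}\right]_{x}$, an internal choice of definite behaviours. Its justification: a state of $\exta{Z}{[B,T]}$ selects, at the outset and beyond the control of the test $T$, one summand $\left[\restr{(B\cdot x)}{k_x}\right]_{x}$, so the universally-quantified pole condition defining $\bbot_Z$ (Definition~\ref{def:fair:semantic}) holds for $[B,T]$ exactly when it holds for each $\left[\left[\restr{(B\cdot x)}{k_x}\right]_{x},T\right]$. Consequently $(X,B)$ should pass precisely the tests passed by \emph{all} the definite summands, which is also the should-pass set of their internal choice. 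Realising that internal choice---e.g. by combining, via the standard $\pi$-calculus encoding of $\oplus$, the configurations the hypothesis provides for each definite summand---yields a configuration $C$ with $\transl{C}\faireq B$.

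The main obstacle is this collapse lemma. Everything else is a routine consequence of Corollary~\ref{cor:bbot:bot} and the assumed full abstraction, but proving $B\faireq\bigoplus_{(k_x)}\left[\restr{(B\cdot x)}{k_x}\right]_{x}$ requires a careful analysis of how the extension functor $\exta{Z}{-}$ and the pole $\bbot_Z$ interact with coproducts of behaviours: one must check that the states of $\exta{Z}{[B,T]}$ genuinely factor through a choice of summand, that restriction commutes with this factoring, and that the $\pi$-level internal-choice encoding indeed realises the internal choice of behaviours up to $\faireq$. I expect this verification, rather than the two reduction steps themselves, to be the delicate point.
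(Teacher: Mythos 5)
Your reduction of the first bullet of Theorem~\ref{thm:main} is exactly the paper's: both translates are definite, so Corollary~\ref{cor:bbot:bot} transfers the equivalence, and the hypothesis does the rest. The divergence is in the surjectivity part, and there your route both overcomplicates matters and has a genuine hole. The paper does not pass through an internal choice indexed by tuples of initial states at all: it simply replaces $B$ by the definite behaviour $D$ with $D \cdot x = \tupling{\tauuof{n_x} \mapsto \sum_i D^x_i}$ for each \agent{} $x$, i.e.\ it $\tau$-prefixes each agent's coproduct of definite components \emph{semantically}, producing a single definite behaviour $\faireq$-equivalent to $B$, to which the hypothesis applies once. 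This avoids everything you flag as delicate: no collapse lemma over tuples (note that $\bigoplus_{(k_x)}\left[\restr{(B\cdot x)}{k_x}\right]_{x}$ is not even isomorphic to $B$ as a presheaf -- componentwise it is many copies of $B\cdot x'$ -- so your collapse really is a separate equivalence to prove), no need to invoke the hypothesis once per tuple, and no need for the unproved lemma that the $\nu$-based $\pi$-encoding of $\oplus$ realises the coproduct of behaviours up to $\faireq$ (its translation is a structurally quite different behaviour, involving $\nuof{n}$, forking and synchronisation, so this is real work, not bookkeeping).

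The concrete gap is the degenerate case $B \cdot x = \emptyset$ for some \agent{} $x$. Then $\exta{Z}{[B,T]}$ is empty over every trace (the end contains the factor $B(\idv,x)^{\hom}=\emptyset$ with nonempty exponent), so the condition defining $\bbot_Z$ holds vacuously and $B$ passes \emph{every} test. In your scheme the index set of tuples $(k_x)_x$ is then empty, and the empty internal choice -- whether realised as $\mathbf{0}$ or as $\nu a.\abar$ -- does \emph{not} pass every test: against $T = a(b).\tick$ it can never tick. So the configuration your construction would produce is not $\faireq$-equivalent to $B$ in this case. The paper handles it explicitly by sending such $B$ to the definite behaviour on one nullary \agent{} that merely ticks, which also vacuously-in-the-other-direction passes all tests. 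Any repair of your argument must special-case this situation; as written, the surjectivity half does not go through.
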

\begin{proof}
  For all configurations $C_1$ and $C_2$, by
  Corollary~\ref{cor:bbot:bot}, $\transl{C_1} \faireq \transl{C_2}$
  iff $\transl{C_1} \faireqof{\SSS} \transl{C_2},$ which holds iff
  $C_1 \faireqof{\Conf} C_2$ by hypothesis.
  
  It remains to prove that surjectivity up to $\faireq$ reduces to
  surjectivity up to $\faireqof{\SSS}$. For this, let us first show
  that any behaviour is $\faireq$-equivalent to some definite one.
  Indeed, consider any $B \in \Beh{X}$.  Letting $B \cdot x = \sum_{i
    \in n_x} D^x_i$ for all \agents{} $x$ in $X$, $B$ is fair testing
  equivalent to the definite behaviour $D$ such that
  $$D \cdot x = \tupling{ \tauuof{n_x} \mapsto \sum_{i \in n_x}
    D^x_i},$$ except if $B \cdot x = \emptyset$ for some $x$.  But in
  the latter case, $B$ is fair testing equivalent to the definite
  behaviour on one nullary player with the same interface which merely
  ticks.

  Thus, we may restrict attention to definite behaviours.  So consider
  any definite $D \in \DD_X$. By hypothesis, there exists $C$ such
  that $\transl{C} \faireqof{\SSS} (X,D)$, hence $\transl{C} \faireq
  (X,D)$ by Corollary~\ref{cor:bbot:bot}, which concludes the proof.
\end{proof}


\subsection{A further graph with testing for behaviours}
\label{subsec:MMM}
In the previous section, we have characterised semantic fair testing
equivalence using the graph with testing $\SSS$, and reduced
intensional full abstraction of $\translfun$ w.r.t.\ $\faireq$ to
intensional full abstraction w.r.t.\ $\faireqof{\SSS}$.  We now define
a further graph with testing, $\MMM$, which will help us bridge the
gap between $\SSS$ and $\pi$-calculus configurations. Indeed, we
define a surjective morphism $\mm \colon \SSS \to \MMM$ over $\Sierp$,
and we then prove that $\mm$ is intensionally fully abstract
(Proposition~\ref{prop:am}), from which we deduce that our main
result follows from intensional full abstractness of the composite
translation $\TT = \mm \rond \translfun$ (Lemma~\ref{lem:MMM}).

Recall from Definition~\ref{def:multisets} that $\Multiset{-}$ denotes
the finite multiset monad on sets.
\begin{defi}
  Let the set $\MMM_0$ of \emph{mixed behaviours} be
$$\sum_{\gamma \in \powfin(\Nat)} \Multiset{(\sum_{n \in \Nat} \DDn \times \gamma^n)}.$$
The graph $\MMM$ over $\Sierp$ is inductively defined by the rules in
Figure~\ref{fig:mmm}, where $S[\gamma_1 \subseteq \gamma_2]$ denotes
pointwise composition of the substitution component with the inclusion
$h\colon \gamma_1 \subseteq \gamma_2$, i.e., each $D[\sigma]$ is
replaced by $D[h \rond \sigma]$.

Let $\pp^\MMM \colon \MMM \to \Sierp$ denote the projection.
\end{defi}
\begin{figure}[ht]
  \begin{mathpar}
    \inferrule{%
      i \in \card{D \cdot \paralof{n}} \\
      j \in \card{D \cdot \pararof{n}} %
    }{%
      \conf{\gamma}{D[\sigma]} \xtrans{}{\id} %
      \conf{\gamma}{\restrat{D}{\paralof{n}}{i}[\sigma],\restrat{D}{\pararof{n}}{j}[\sigma]} %
    }
\and %
    \inferrule{
      i \in \card{D \cdot \tauuof{n}} 
    }{\conf{\gamma}{D[\sigma]} \xtrans{}{\id}
      \conf{\gamma}{\restrat{D}{\tauuof{n}}{i}[\sigma]}} 
\and %
    \inferrule{
      i \in \card{D \cdot \tickof{n}} 
    }{\conf{\gamma}{D[\sigma]} \xtrans{}{\tick}
      \conf{\gamma}{\restrat{D}{\tickof{n}}{i}[\sigma]}} 
\and %
    \inferrule{
      i \in \card{D \cdot \nuof{n}} \\
      a \notin \gamma
    }{%
      \conf{\gamma}{D[\sigma]} \xtrans{}{\id}
      \conf{\gamma,a}{\restrat{D}{\nuof{n}}{i}[n+1 \xto{\sigma+\name{a}} \gamma,a]}} 
\and %
 \inferrule{
      i \in \card{D_1 \cdot \iotaof{n_1}{a_1}} \\
      j \in \card{D_2 \cdot \outof{n_2}{a_2}{b_2}} \\
      \sigma_1 (a_1) = \sigma_2 (a_2) %
    }{ %
      \conf{\gamma}{D_1[\sigma_1],D_2[\sigma_2]} \xtrans{}{\id}
      \conf{\gamma}{ %
        \restrat{D_1}{\iotaof{n_1}{a_1}}{i}[n_1 + 1 \xto{[\sigma_1,\name{\sigma_2(b_2)}]} \gamma], %
        \restrat{D_2}{\outof{n_2}{a_2}{b_2}}{j}[\sigma_2] %
      }} 
\and %
    \inferrule{\conf{\gamma_1}{S_1} \xtrans{}{\alpha}
      \conf{\gamma_2}{S_2}}{ \conf{\gamma_1}{S \cup S_1} \xtrans{}{\alpha}
      \conf{\gamma_2}{S[\gamma_1 \subseteq \gamma_2] \cup S_2}}
\and %
    \inferrule{ }{\conf{\gamma}{S} \xtrans{}{\id} \conf{\gamma}{S}}
  \end{mathpar}
  \caption{Transitions in $\MMM$}
  \label{fig:mmm}
\end{figure}

\begin{notation}\label{not:MMM}
  Similarly to the notation for configurations, we denote 
  \begin{center}
  $(\gamma, [(n_1,D_1,\sigma_1), \ldots,(n_p,D_p,\sigma_p)])$ \hfil by \hfil
  $\conf{\gamma}{D_1[\sigma_1], \ldots, D_p[\sigma_p]}.$
\end{center}
\end{notation}

For the testing structure of $\MMM$, we mimick Definition~\ref{def:atpi} and put:
\begin{defi}\label{def:atm}
  For any $\conf{\gamma}{S}, \conf{\gamma'}{S'} \in \MMM$, let
  $\conf{\gamma}{S} @ \conf{\gamma'}{S'}$ denote $\conf{\gamma}{S \cup
    S'}$ if $\gamma = \gamma'$ and be undefined otherwise.
  Let furthermore  $\epsilon_\gamma = \conf{\gamma}{}$.
\end{defi}
By Lemma~\ref{lem:freetesting}, we have:
\begin{prop}
  The morphism $\freecat{\pp^\MMM} \colon \freecat{\MMM} \to
  \fcSierp$, with the graph of $@$ as testing relation, forms a free
  graph with testing.
\end{prop}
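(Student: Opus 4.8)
The plan is to invoke Lemma~\ref{lem:freetesting}, which for a morphism $\pp^\MMM \colon \MMM \to \Sierp$ and a relation on $\ob(\MMM)$ whose domain is an equivalence relation reduces the claim to a single condition: that the relation be partially functional up to weak bisimilarity over $\Sierp$. So the whole proof amounts to checking the two hypotheses of that lemma for the relation induced by the partial operation $@$ of Definition~\ref{def:atm}, after which the conclusion is immediate, and freeness holds by construction since we are applying $\freecatfun$ to a morphism into $\Sierp$ (rather than into $\fcSierp$).

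First I would observe that $@$ is genuinely a partial function. By Definition~\ref{def:atm}, $\conf{\gamma}{S} @ \conf{\gamma'}{S'}$ is defined exactly when $\gamma = \gamma'$, and when defined its value is the single configuration $\conf{\gamma}{S \cup S'}$. Hence for any fixed pair there is at most one value $\conf{\gamma}{S} @ \conf{\gamma'}{S'}$, so the induced relation is partially functional in the strict sense, and therefore \emph{a fortiori} partially functional up to weak bisimilarity over $\Sierp$ (a strict partial function sends any admissible pair to a unique point, which is in particular unique up to weak bisimilarity).

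Next I would verify that the domain of the induced relation is an equivalence relation. This domain is precisely the set of pairs $(\conf{\gamma}{S},\conf{\gamma'}{S'})$ with $\gamma = \gamma'$, i.e.\ the relation of carrying the same channel set; reflexivity, symmetry, and transitivity are then immediate. This mirrors the lemma stated just before Definition~\ref{def:atpi} for $\Conf$, and the argument is literally the same.

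Since these are exactly the hypotheses of Lemma~\ref{lem:freetesting}, that lemma yields that $@$ equips $\freecat{\pp^\MMM}$ with testing structure; and as this graph with testing is obtained by applying $\freecatfun$ to $\pp^\MMM \colon \MMM \to \Sierp$, it is free by definition. I expect no genuine obstacle here: the only substantive observation is that $@$ is a partial function with equivalence-relation domain, both of which are read off directly from Definition~\ref{def:atm}, and the rest is a verbatim reuse of the already-proved analogue for $\freecat{\pp^\Conf}$.
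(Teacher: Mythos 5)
Your proof is correct and follows exactly the route the paper intends: the paper states this proposition with the single justification ``By Lemma~\ref{lem:freetesting}'', and the content you supply --- that $@$ on $\MMM$ is a genuine partial function (hence partially functional up to weak bisimilarity) whose domain, equality of channel sets, is an equivalence relation --- is precisely the same observation the paper makes explicitly for the analogous proposition about $\Conf$. Nothing is missing; unlike the $\SSS$ case, where parallel composition is only a relation determined up to isomorphism and a bisimulation must actually be exhibited, here no further work is needed.
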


Let us now reduce Theorem~\ref{thm:main} to a statement about $\MMM$.
In order to do this, we will use Lemma~\ref{lem:fairness} and hence
need to exhibit a fair relation $\ob(\SSS) \proto \MMM_0$. We use the
graph of the following map:
\begin{defi}
  Let $\mm \colon \ob (\SSS) \to \MMM_0$ map any $(X,D)$ to the mixed
  behaviour
$$\conf{X (\star)}{[(D \cdot x)[\sigma_x] \aalt (n,x) \in \Pl (X)]},$$
where $\sigma_x$ is the map $n \xto{[\name{x \cdot s_i}]_{i \in n}}
X(\star).$
\end{defi}
We now need to show that $\mm$ yields a fair relation. 
Most points are direct, like totality or 
the fact that $(X,D) \coh (X',D')$ iff $\mm(X,D) \coh \mm(X',D')$.
Furthermore, we have:
\begin{prop}\label{prop:mm:surj}
  The map $\mm \colon \ob(\SSS) \to \MMM_0$ is surjective. Let $\aa$
  denote any section of $\mm$.
\end{prop}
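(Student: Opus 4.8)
The plan is to prove surjectivity directly, by reconstructing a preimage from the combinatorial data of an arbitrary mixed behaviour. So fix a mixed behaviour, which by Notation~\ref{not:MMM} I write as $\conf{\gamma}{D_1[\sigma_1], \ldots, D_p[\sigma_p]}$, where $\gamma \in \powfin(\Nat)$, each $D_k \in \DD_{[n_k]}$ is a definite behaviour on an individual $[n_k]$, and each $\sigma_k$ is a tuple $n_k \to \gamma$ recording the channels known to the $k$-th component. The idea is to read this as a description of a position together with a spatially-decomposed definite behaviour, and to check that $\mm$ sends the latter back to the former.

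First I would reconstruct the position. I take $X(\star) = \gamma$ and, for each arity $n$, set $X[n] = \ens{k \in p \aalt n_k = n}$, with connection maps determined by $x_k \cdot s_i = \sigma_k(i)$ for all $i \in n_k$. This is manifestly a finite presheaf of dimension at most $1$, hence a position, and by construction $\Pl(X) = \ens{x_k \aalt k \in p}$, with exactly one agent per element of the multiset (so that repeated triples faithfully yield distinct agents, rather than being collapsed). By Yoneda, $x_k$ is precisely the morphism $[n_k] \to X$ whose $i$-th channel is $\sigma_k(i)$. I then define $D$ using the spatial decomposition isomorphism $\Beh{X} \iso \prod_{n, x \colon [n] \to X} \Beh{[n]}$ of Proposition~\ref{prop:spatialdecomp}, putting $D = [D_x]_{x \colon [n] \to X}$ with $D_{x_k} = D_k$ (Notation~\ref{not:horizaction}). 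Since each $D_k$ is definite, $D \cdot x_k = D_k$ is definite for every agent, so $D \in \DD_X$ by Definition~\ref{def:definite}, and $(X,D) \in \ob(\SSS)$.

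It then remains to verify $\mm(X,D) = \conf{\gamma}{D_1[\sigma_1], \ldots, D_p[\sigma_p]}$. By definition $\mm(X,D) = \conf{X(\star)}{[(D \cdot x)[\sigma_x] \aalt (n,x) \in \Pl(X)]}$, where $X(\star) = \gamma$. For each agent $x_k$ we have $D \cdot x_k = D_k$, while $\sigma_{x_k}$, being the map $n_k \to X(\star)$ sending $i$ to $\name{x_k \cdot s_i}$, equals $\sigma_k$ because $x_k \cdot s_i = \sigma_k(i)$. Hence the multiset indexed by $\Pl(X)$ is exactly $[D_k[\sigma_k] \aalt k \in p]$, as required. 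Surjectivity follows, and a section $\aa$ is obtained by any uniform choice of such a preimage.

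The argument is essentially bookkeeping, so there is no deep obstacle; the only points demanding care are that the construction of $X$ yields $X(\star) = \gamma$ \emph{exactly} (including channels in $\gamma$ touched by no agent, which survive as isolated channels) and that one agent is created per element of the multiset, so that multisets with repetitions are matched rather than identified.
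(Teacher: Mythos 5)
Your proof is correct: the paper states this proposition without proof, treating it as routine, and your reconstruction of the position $X$ from $\gamma$ and the multiset (one agent per multiset element, channels via the $\sigma_k$, behaviour assembled by Proposition~\ref{prop:spatialdecomp}) is exactly the intended argument. The two points you flag at the end — preserving isolated channels of $\gamma$ and not collapsing repeated multiset entries — are indeed the only places where care is needed, and you handle both.
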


The point about bisimilarity is trickier:
\begin{prop}\label{prop:mm:bisim}
  We have $(X,D) \bisimsierp \mm(X,D)$ for all $(X,D) \in \ob(\SSS)$.
\end{prop}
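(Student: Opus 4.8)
The plan is to exhibit a strong bisimulation over $\fcSierp$ relating
$(X,D)$ to $\mm(X,D)$. Since strong bisimilarity over $\fcSierp$ is the
same as weak bisimilarity over $\Sierp$ (the observation underlying
Lemma~\ref{lem:freetesting}), it suffices to build a \emph{weak}
bisimulation over $\Sierp$ between the underlying graphs of $\SSS$ and
$\MMM$ that relates $(X,D)$ to $\mm(X,D)$. The natural candidate is the
graph of $\mm$ itself, i.e.\ the relation
$\RRR = \ens{((X,D),\mm(X,D)) \aalt (X,D) \in \ob(\SSS)}$. The core of
the argument is then to check the two simulation conditions, matching
single transitions in $\SSS$ (which are closed-world \actions) against
the inductively-defined transitions of $\MMM$ in
Figure~\ref{fig:mmm}, and conversely.

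First I would treat the forward direction: given a transition
$(X,D) \xot{M} (X',D')$ in $\SSS$ over some closed-world \action{} $M$,
I must produce a matching transition (over the same label in $\Sierp$)
from $\mm(X,D)$ in $\MMM$ landing in $\RRR$-relation with $(X',D')$. The
proof proceeds by case analysis on the shape of the generating seed of
$M$, which by Definition~\ref{def:playground:data} is one of
$\taun$, $\forkn$, $\nun$, $\tickn$, or $\taunamcd$. Each case matches
exactly one rule of Figure~\ref{fig:mmm}: a fork \action uses the first
rule, $\taun$ the second, $\tickn$ the third (this is the only case
producing a $\tick$ label, so labels are preserved), $\nun$ the fourth,
and a synchronisation $\taunamcd$ the fifth. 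In each case the key
computation is that $\mm$ commutes with the relevant restriction: the
agents of $X'$ are obtained from those of $X$ by replacing the agent
performing $M$ with its residual agent(s), and Definition~\ref{def:residual}
together with Proposition~\ref{prop:spatialdecomp} ensures that
$\mm(X',D')$ is precisely the multiset appearing in the conclusion of the
corresponding rule. The substitution bookkeeping (the $\sigma_x$ maps,
and the freshly created channel in the $\nun$ and input cases) is
exactly what the rules track via $[\sigma]$ and $h\rond\sigma$, so the
target mixed behaviours coincide. The passive agents, untouched by $M$,
are carried along by the frame rule (sixth rule), matching the fact that
the closed-world \action{} leaves the rest of $X$ fixed.

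For the converse direction I would take a transition
$\conf{\gamma}{S} \xot{\alpha} \conf{\gamma'}{S'}$ in $\MMM$ out of
$\mm(X,D)$ and reconstruct a closed-world \action{} in $\SSS$ with the
same $\Sierp$-label whose image under $\mm$ is the given $\MMM$-target.
Here one inducts on the derivation in Figure~\ref{fig:mmm}, peeling off
the frame rule to isolate the active component(s), and then, depending
on which base rule fires, assembling the corresponding seed
($\forkn,\taun,\tickn,\nun$, or $\taunamcd$) and pushing it into $X$
along the agent(s) indicated by the multiset element(s). That this
reconstruction yields a genuine closed-world \trace{} of length one is
where the correctness criterion (Theorem~\ref{thm:completeness}) and the
structure of residuals would be invoked; the synchronisation rule, with
its side condition $\sigma_1(a_1)=\sigma_2(a_2)$ matching a shared
channel, is the delicate case because it fuses two multiset elements and
one must check the resulting pushout really is a $\taunamcd$ \action on
$X$.

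I expect the main obstacle to be the synchronisation case in both
directions, together with the channel-substitution coherence. On the
$\MMM$ side a synchronisation identifies two separately-indexed agents
via the condition on $\sigma_1,\sigma_2$ and introduces the transmitted
channel via $[\sigma_1,\name{\sigma_2(b_2)}]$; matching this against the
single seed $\taunamcd$ in $\SSS$ requires unwinding how $\mm$ encodes an
agent's channels as the map $\sigma_x = [\name{x\cdot s_i}]_i$ and
checking that the pushout computing $\para_\SSS$-style gluing reproduces
exactly the $\MMM$-substitution. The residual computation of
Definition~\ref{def:residual} for the input side, where $y'$ gains
knowledge of a new channel, must be shown to agree with the arity
increase $n_1 \mapsto n_1+1$ in the rule. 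Once these substitution
bookkeepings are verified to commute with $\mm$, the remaining cases are
routine, and the bisimulation $\RRR$ closes, giving
$(X,D)\bisimsierp\mm(X,D)$ and hence, by the equivalence of the two
notions of bisimilarity, the stated strong bisimilarity over $\fcSierp$.
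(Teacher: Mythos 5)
Your overall strategy---a case analysis matching each kind of closed-world \action{} against the corresponding rule of Figure~\ref{fig:mmm}, with the backward direction reconstructing a seed and pushing it into $X$---is the same as the paper's, and your backward direction is essentially sound (the paper's proof shows it even holds on the nose: every transition of $\mm(X,D)$ is matched by a single transition of $(X,D)$ whose target maps under $\mm$ to exactly the given target). The genuine gap is in the forward direction, where you claim that for a transition $(X,D) \xot{} (Y,D')$ in $\SSS$ ``the target mixed behaviours coincide'', so that the graph of $\mm$ is itself a bisimulation. It is not. An edge of $\SSS$ is any closed-world \action{} $M \colon Y \proto X$ together with any $(Y,D')$ satisfying $D' \cdot y \iso \restr{(D \cdot M \cdot y)}{\state_y}$; the final \position{} $Y$ is therefore determined only up to isomorphism, and in particular $Y(\star)$ need not contain $X(\star)$ as a subset of $\Nat$---even a $\tick$ or $\nu$ \action{} may land on a \position{} whose channels and \agents{} have been arbitrarily renamed. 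The rules of Figure~\ref{fig:mmm}, by contrast, are rigid: from $\mm(X,D)$, whose channel set is literally $X(\star)$, every transition lands on a mixed behaviour over $X(\star)$ or $X(\star)\cup\ens{a}$. So in general there is \emph{no} $\MMM$-transition from $\mm(X,D)$ to $\mm(Y,D')$, and the forward simulation clause fails for the relation you chose.

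The missing idea is to work up to isomorphism of \positions: the paper first records that transporting a definite behaviour along an isomorphism of \positions{} yields a strongly bisimilar vertex of $\SSS$ (the relation $\III$ in its proof), and then shows that the graph of $\mm$ is a strong bisimulation \emph{up to} $\III$---for each $(X,D) \xot{\alpha} (Y,D')$ one must exhibit a suitably renamed representative $(Z,D'')$ with $(Y,D') \mathrel{\III} (Z,D'')$ and $\mm(X,D) \xot{\alpha} \mm(Z,D'')$. With that adjustment your case analysis goes through. A minor further point: the statement asserts \emph{strong} bisimilarity over $\Sierp$, whereas you reduce at the outset to a weak bisimulation; since your cases in fact match one step by one step this is only a matter of phrasing, but you should not weaken the target of the proof before starting.
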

\begin{proof}
  It is enough to prove that (the graph of) $\mm$ is a strong
  bisimilarity up to strong bisimilarity. For this, let us record that
  clearly for any isomorphism $h\colon X \to Y$ of positions and $D
  \in \DD_Y$, we have $(Y, D) \bisimsierp (X, D \cdot h)$ in $\SSS$.
  Let us call $\III$ (for isomorphism) the relation given by all pairs
  $((Y,D),(X, D \cdot h))$, so that we have ${\III} \subseteq
  {\bisimsierp}$.  By case analysis, we can show that $\mm$ is a
  strong bisimulation up to $\III$, i.e., for all $(X,D) \xot{\alpha}
  (Y,D')$, there exists $(Y,D') \mathrel{\III} (Z,D'')$ such that
  $\mm(X,D) \xot{\alpha} \mm (Z,D'')$, as below left. And more
  tightly, for all $\mm(X,D) \xot{\alpha} M'$, there exists $(X,D)
  \xot{\alpha} (X',D')$ such that $\mm(X',D') = M'$, as below right.
  \begin{center}
    \Diag(.5,0){%
      \path (m-1-1.base) -- node[anchor=base] {$\xmapsto{\mm}$} (m-1-7.base) ;
    }{%
      (X,D) \& \& \& \& \& \& \mm (X,D) \\
      (Y,D') \& \III \&  (Z,D'') \& \xmapsto{\mm} \& \& \& \mm (Z,D'') %
    }{%
      (m-1-1)
      edge[<-,labell={\alpha}] (m-2-1) %
      (m-1-7) edge[<-,dashed,labelr={\alpha}] (m-2-7) %
    }
    \hfil
    \diag(.5,0){%
      (X,D) \& \xmapsto{\mm} \& \mm (X,D) \\
      (Y,D') \& \xmapsto{\mm} \& M' %
    }{%
      (m-1-1)
      edge[<-,dashed,labell={\alpha}] (m-2-1) %
      (m-1-3) edge[<-,labelr={\alpha}] (m-2-3) %
    } %
  \end{center}
  The right-hand diagram is a tedious yet straightforward case
  analysis. The left-hand one is also a tedious case analysis, whose
  main point is that for all transitions $(X,D) \xot{\alpha} (Y,D')$,
  some renaming of elements of $X$ may take place, which cannot happen
  in any transition from $\mm(X,D)$.  So in each case we need to find
  the $Z$ and corresponding $D''$ which avoids such renaming.  In
  fact, this goes by indentifying the right transition from $\mm(X,D)$
  and showing that the obtained $M'$ is of the form $\mm(Z,D'')$ for
  $(Y,D') \mathrel{\III} (Z,D'')$.  Let us do one case.  If $(X,D)
  \xot{\tick} (Y,D')$, then we have an iso $h \colon X
  \isoto Y$, and there is some agent $(n_0,x_0)$ in $X$ and $i$ such
  that $D' \cdot h(x_0) = \restrat{D \cdot x_0}{\tickof{n_0}}{i}$
  and furthermore for all $(n,x) \neq (n_0,x_0)$ in $\Pl (X)$, we
  have $D' \cdot h(x) = D \cdot x$. So, letting $y = h(x)$ for all
  such $x$, we indeed have
  $$M = \conf{\gamma}{(D \cdot x_0)[\sigma_{x_0}] \cons [(D \cdot x)[\sigma_x] \aalt x \neq x_0]} %
  \xtranszo{}{\tick} \conf{\gamma}{(D' \cdot y_0)[\sigma_{x_0}] \cons
    [(D \cdot x)[\sigma_x] \aalt x \neq x_0]} = M'$$%
  with $M = \mm(X,D)$, and furthermore letting $D''$ be determined by
  $D'' \cdot x = D' \cdot h(x) = D \cdot x$ for $x \neq x_0$ and $D''
  \cdot x = D' \cdot h(x_0)$, we have $D'' = D' \cdot h$ and hence
  $(Y,D') \mathrel{\III} (X,D'')$ with $\mm(X,D'') = M'$, as desired.
\end{proof}

We are now ready to show:
\begin{prop}\label{prop:am}
  The map $\mm\colon \ob(\SSS) \to \ob (\MMM)$ is intensionally fully
  abstract for fair testing equivalence.
\end{prop}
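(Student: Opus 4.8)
The plan is to derive the statement from the general machinery of Section~\ref{subsec:fair}. Recall that intensional full abstraction of $\mm$ for fair testing equivalence amounts to two things: first, that $\mm$ both preserves and reflects fair testing equivalence, i.e., $(X,D) \faireqof{\SSS} (X',D')$ iff $\mm(X,D) \faireqof{\MMM} \mm(X',D')$; and second, surjectivity up to fair testing equivalence, i.e., that every mixed behaviour is $\faireqof{\MMM}$-equivalent to some $\mm(X,D)$. The second point is immediate, since $\mm$ is already surjective on the nose by Proposition~\ref{prop:mm:surj}. For the first point, since both $\freecat{\SSS}$ and $\freecat{\MMM}$ are free graphs with testing, generated by $\pp^\SSS$ and $\pp^\MMM$ respectively (Proposition~\ref{prop:testingS} and the analogous statement for $\MMM$), I would invoke Corollary~\ref{cor:fairness:weak}: it suffices to show that the graph of $\mm$, viewed as a relation $\ob(\SSS) \modto \MMM_0$, is weakly fair.

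So the bulk of the work is to check the four conditions of weak fairness, i.e., the conditions of Definition~\ref{def:fair} with the third relaxed to weak bisimilarity, writing $(X,D) \relR M$ for $M = \mm(X,D)$. For the coherence condition, by Definition~\ref{def:paraS} the domain $\coh_\SSS$ of $\para_\SSS$ holds exactly when $X(\star) = X'(\star)$, and by Definition~\ref{def:atm} the domain $\coh_\MMM$ of $@$ holds exactly when the free-channel components agree; since $\mm(X,D)$ has $X(\star)$ as its set of free channels, the two coincide. Totality holds by definition of $\mm$ and surjectivity is Proposition~\ref{prop:mm:surj}. The weak bisimilarity condition is exactly Proposition~\ref{prop:mm:bisim}, which even gives the stronger $(X,D) \bisimsierp \mm(X,D)$ (hence $(X,D) \wbisimsierp \mm(X,D)$); this is the only delicate condition, and it is already established.

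The remaining condition, compatibility with parallel composition, is the one step I would verify with care. Given $(X_1,D_1) \relR M_1$, $(X_2,D_2) \relR M_2$, and $(X_1,D_1) \coh_\SSS (X_2,D_2)$, one takes $u = (Z,[D_1,D_2])$ for the pushout $Z = X_1 +_{I_{X_1}} X_2$ of Definition~\ref{def:paraS} and $v = M_1 @ M_2$, and checks that $\mm(u) = v$. The point is that the agents of $Z$ are the disjoint union of those of $X_1$ and $X_2$ over the shared interface of channels, that $[D_1,D_2]$ restricts to $D_1$ and $D_2$ on the respective teams, and that each substitution $\sigma_x$ is unaffected by the gluing of channels; hence the multiset collected by $\mm(u)$ is exactly the union of those collected by $M_1$ and $M_2$, which is $M_1 @ M_2$ by Definition~\ref{def:atm}. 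Once weak fairness is in hand, Corollary~\ref{cor:fairness:weak} yields preservation and reflection, and combined with the surjectivity of $\mm$ this gives the claimed intensional full abstraction. The genuinely hard content (bisimilarity and surjectivity) being already discharged, the main obstacle is this last bookkeeping step, namely confirming that $\mm$ transports the pushout-based composition $\para_\SSS$ to the multiset union $@$.
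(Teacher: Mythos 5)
Your proposal is correct and follows essentially the same route as the paper: the paper likewise reduces the statement to Lemma~\ref{lem:fairness} (equivalently, Corollary~\ref{cor:fairness:weak} for the free graphs with testing at hand), noting that all fairness conditions are already established except compatibility with parallel composition, which it discharges exactly as you do — by choosing the pushout with the same set of channels as its summands so that $\mm$ carries $\para_\SSS$ to $@$ — and it handles surjectivity up to $\faireqof{\MMM}$ via the section $\aa$ of $\mm$, i.e., by on-the-nose surjectivity as you observe.
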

\begin{proof}
  As announced, for preservation and reflection of fair testing
  equivalence we apply Lemma~\ref{lem:fairness}: we have established
  all necessary hypotheses, except the last one, which follows by
  choosing a pushout with the same set of channels as its summands.

  For surjectivity up to $\faireqof{\MMM}$, consider any
  $\conf{\gamma}{S} \in \MMM_0$. Because $\aa$ is a section of
  $\mm$, we have $\mm (\aa (\conf{\gamma}{S})) = \conf{\gamma}{S}$,
  hence $\mm (\aa (\conf{\gamma}{S})) \faireqof{\MMM}
  \conf{\gamma}{S}$, thus providing the desired antecedent.
\end{proof}

Let us now reduce Theorem~\ref{thm:main} to its analogue about
$\MMM$. 
\begin{defi}
  Let $\TT\colon \ob(\Conf) \to \ob(\MMM)$ denote the composite
  $$\ob (\Conf) \xto{\translfun} \ob (\SSS) \xto{\mm} \ob(\MMM)\rlap{.}$$
\end{defi}
Concretely, we have
$$\TT\conf{\gamma}{P_1,\ldots,P_n} = \conf{\gamma}{\transl{P_1}_{h_\gamma}[\inv{h_\gamma}], \ldots,
  \transl{P_n}_{h_\gamma}[\inv{h_\gamma}]}\rlap{.}$$ 
\begin{lem}\label{lem:MMM}
  The translation $\translfun$ from Definition~\ref{def:transl} is
  intensionally fully abstract for $\faireqof{\SSS}$ if $\TT$ is for
  $\faireqof{\MMM}$.
\end{lem}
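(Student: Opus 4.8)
The plan is to obtain intensional full abstraction of $\translfun$ for $\faireqof{\SSS}$ by composing, in a two-out-of-three fashion, the properties already established for $\mm$ (Proposition~\ref{prop:am}) with those assumed for $\TT = \mm \rond \translfun$. Recall that Proposition~\ref{prop:am} makes $\mm$ both preserve and reflect fair testing equivalence, so that $(X,D) \faireqof{\SSS} (X',D')$ holds iff $\mm(X,D) \faireqof{\MMM} \mm(X',D')$; and by hypothesis $\TT$ enjoys the analogous preservation/reflection property together with surjectivity up to $\faireqof{\MMM}$. Since the vertices of $\SSS$ are exactly the definite behaviours, no reduction to definite behaviours is needed at this stage.

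For the preservation-and-reflection clause I would simply chain biconditionals. For any configurations $C_1,C_2$, the reflection/preservation clause of Proposition~\ref{prop:am} gives $\transl{C_1} \faireqof{\SSS} \transl{C_2}$ iff $\mm(\transl{C_1}) \faireqof{\MMM} \mm(\transl{C_2})$; the right-hand side is by definition $\TT C_1 \faireqof{\MMM} \TT C_2$, which by the assumed full abstraction of $\TT$ holds iff $C_1 \faireqof{\Conf} C_2$. Reading the chain from end to end yields $C_1 \faireqof{\Conf} C_2$ iff $\transl{C_1} \faireqof{\SSS} \transl{C_2}$, as wanted.

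For surjectivity up to $\faireqof{\SSS}$ I would start from an arbitrary vertex $(X,D) \in \ob(\SSS)$ and apply the assumed surjectivity of $\TT$ to the mixed behaviour $\mm(X,D) \in \MMM_0$. This produces a configuration $C$ with $\TT C \faireqof{\MMM} \mm(X,D)$, that is, $\mm(\transl{C}) \faireqof{\MMM} \mm(X,D)$; the reflection half of Proposition~\ref{prop:am} then descends this equivalence to $\transl{C} \faireqof{\SSS} (X,D)$, which is exactly the required antecedent.

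I do not expect any genuine obstacle, since all the substance is concentrated in Proposition~\ref{prop:am} and the present lemma is a purely formal recombination. The one point deserving care is the surjectivity step, which crucially invokes the \emph{reflection} direction of $\mm$'s full abstraction (rather than merely the surjectivity of $\mm$) in order to transport the $\faireqof{\MMM}$-equivalence furnished by $\TT$ back down along $\mm$ into an $\faireqof{\SSS}$-equivalence.
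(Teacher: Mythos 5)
Your proposal is correct and follows essentially the same route as the paper's own proof: chain the biconditionals through Proposition~\ref{prop:am} and the assumed full abstraction of $\TT$ for preservation/reflection, and for surjectivity apply the surjectivity of $\TT$ to $\mm(X,D)$ and then transport the resulting $\faireqof{\MMM}$-equivalence back along $\mm$ using Proposition~\ref{prop:am}. Your remark that the surjectivity step relies on the reflection direction of $\mm$'s full abstraction (rather than merely its surjectivity) is exactly the point the paper makes with its ``by Proposition~\ref{prop:am} again''.
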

\begin{proof}
  Assuming $\TT$ is intensionally fully abstract, then for all
  configurations $C$ and $C'$, we have that $C \faireq C'$ iff $\TT(C)
  \faireqof{\MMM} \TT(C')$.  But by Proposition~\ref{prop:am}, we
  have
  \begin{center}
    $\TT(C) = \mm \transl{C} \faireqof{\MMM} \mm \transl{C'} = \TT(C')$ 
    \hfil iff \hfil
    $\transl{C} \faireqof{\SSS} \transl{C'}$, 
  \end{center}
  hence  $C \faireq C'$ iff $\transl{C} \faireq{\SSS} \transl{C'}$.

  Finally, for any $(X,D) \in \SSS$, by intensional full abstractness
  of $\TT$, we find a configuration $C$ such that $\TT(C)
  \faireqof{\MMM} \mm(X,D)$, hence by Proposition~\ref{prop:am} again
  $\transl{C} \faireq (X,D)$.
\end{proof}

\subsection{Intensional full abstraction}
\label{subsec:fullabs}
We at last prove our main result, by proving intensional full
abstractness of $\TT$.  Our strategy is to define a relation $\myexp
\colon \Conf \modto \MMM$ over $\Sierp$ which
\begin{itemize}
\item relates any configuration to its image under $\TT$, and
\item is surjective, i.e., relates any mixed behaviour to some configuration.
\end{itemize}
We will then show that this relation is a weak bisimulation over $\Sierp$, 
which will entail the result.

Let us start with the second point, and define a map
$\ZZ \in \ob(\MMM) \to \ob(\Conf)$, which associates a configuration
to each mixed behaviour.  We first coinductively define $\zeta$ for
definite behaviours on representable \positions by:
$$\begin{array}{rcll}
  \zeta (n \vdashdefinite D) & = & \left (
  \hspace*{-2em} \begin{array}{rcrll} && & \sum_{
              i \in \card{D \cdot \paralof{n}},
              j \in \card{D \cdot \pararof{n}} %
            }
          \tau.
            ( \zeta(n \vdashdefinite \restr{(D \cdot \paralof{n})}{i}) 
            \para  \zeta(n \vdashdefinite \restr{(D \cdot \pararof{n})}{j}) )
          \\
          &&+ & \sum_{i \in \card{D \cdot \tauuof{n}}} 
          \tau.
          \zeta(n \vdashdefinite \derivrest{D}{\tauuof{n}}{i})  \\
          &&+ & \sum_{i \in \card{D \cdot \tickof{n}}} 
          \tick.
          \zeta(n \vdashdefinite \derivrest{D}{\tickof{n}}{i})  \\
          &&+ & \sum_{i \in \card{D \cdot \nuof{n}}} 
          \nu (n+1). \zeta(n+1 \vdashdefinite \restr{(D \cdot \nuof{n})}{i}) \\
          &&+ & \sum_{a \in n, i \in \card{D \cdot \iotaof{n}{a}}} 
          a (n+1). \zeta(n+1 \vdashdefinite \derivrest{D}{\iotaof{n}{a}}{i}) \\ 
          &&+ & \sum_{a,b \in n, i \in \card{D \cdot \outof{n}{a}{b}}} 
          \send{a}{b}.
          \zeta(n \vdashdefinite \derivrest{D}{\outof{n}{a}{b}}{i}) 
        \end{array} \right )\rlap{,}
    \end{array}$$
    where $n \vdashdefinite D$ means $D \in \DD_{[n]}$.

    Except perhaps for the first term of the sum, this should be
    rather natural: each definite behaviour on a representable
    \position{} corresponds to a guarded sum, with one term for
    each state over each basic \action{} -- the translation is direct.
    The twist in the first term is due to the fact that forking is not
    a guard in $\pi$, so $P \para Q$ cannot appear in any guarded
    sum. But $\tau.(P \para Q)$ can, and it is clearly weakly
    bisimilar to $P \para Q$, so this is precisely what $\zeta$ does.

Let us now extend $\zeta$ to arbitrary configurations:
\begin{defi}
  Let $\ZZ \colon \ob(\MMM) \to \ob (\Conf)$ be defined by
$$\ZZ\conf{\gamma}{D_1[\sigma_1],\ldots,D_n[\sigma_n]} 
=
\conf{\gamma}{\zeta(D_1)[\sigma_1],\ldots,\zeta(D_n)[\sigma_n]}\rlap{.}$$
\end{defi}
\begin{rem}
  Let us emphasise that brackets on the right denote proper
  substitutions, while on the left $D_i[\sigma_i]$ is just syntactic
  sugar for $(n_i,D_i,\sigma_i)$ by Notation~\ref{not:MMM}.
\end{rem}
In order for $\ZZ$ to return an antecedent up to $\faireqof{\MMM}$, we
immediately observe that for any $D$ and $i \in \card{D
  \cdot \paralof{n}}$ and $j \in \card{D \cdot \pararof{n}}$,
\begin{itemize}
\item on the one hand $D$ has a silent transition to $\deriv_{i,j} D$,
  the behaviour on $[n] \para [n]$ such that $(\deriv_{i,j} D) \cdot
  x_1 = \restr{(D \cdot \paralof{n})}{i}$ and $(\deriv_{i,j} D) \cdot
  x_2 = \restr{(D \cdot \pararof{n})}{j}$ (where $x_1$ and $x_2$
  denote the two \agents of $[n] \para [n]$);
\item on the other hand $\zeta (D)$ has a silent transition 
  to $$\zeta'_{i,j} (D) = ( \zeta(\restr{(D \cdot \paralof{n})}{i}) 
            \para  \zeta(\restr{(D \cdot \pararof{n})}{j}) ),$$
            which then has a further silent transition to 
            the two-process configuration consisting of 
            $\zeta(\restr{(D \cdot \paralof{n})}{i})$ and 
            $\zeta(\restr{(D \cdot \pararof{n})}{j})$.
\end{itemize}
Thus, when we try to relate $D$ and $\zeta (D)$, the
transition $\zeta (D) \xtrans{}{\id} \zeta'_{i,j} (D)$ has to be
matched by the former transition $D \xtrans{}{\id} \deriv_{i,j} D$. So
our relation $\myexp$ should somehow include pairs $(\zeta'_{i,j}
(D),\deriv_{i,j} D)$.

\begin{defi}
  Let the relation $\myexp \colon \Conf \modto \MMM$ over $\Sierp$ be
  defined inductively by the rules in Figure~\ref{fig:myexp}.
\end{defi}
(In the last rule, $\epsilon_\gamma$ is understood in $\Conf$ on the
left, and in $\MMM$ on the right.)
\begin{figure}[ht]
  \begin{mathpar}
    \inferrule{\gamma' \vdash P \\ h \colon \gamma' \isoto n \\ \sigma
      \colon n \to \gamma %
    }{ %
      P[\sigma \rond h] \relmyexp \transl{P}_h[\sigma] } \and
    \inferrule{n \vdashdefinite D \\ \sigma \colon n \to \gamma %
    }{ %
      \zeta(D)[\sigma] \relmyexp D[\sigma] } \and \inferrule{n
      \vdashdefinite D^1,D^2 \\ \sigma \colon n \to \gamma %
    }{ %
      (\zeta(D^1) \para \zeta(D^2))[\sigma] \relmyexp
      D^1[\sigma],D^2[\sigma] %
    } %
    \and
    \inferrule{ %
      C \relmyexp M \\
      D \relmyexp N %
    }{ %
      C @ D \relmyexp M @ N %
    }
    \and
    \inferrule{ }{ %
      \epsilon_\gamma \relmyexp \epsilon_\gamma
    }~\cdot
  \end{mathpar}
  \caption{The relation $\myexp$}
\label{fig:myexp}
\end{figure}

\begin{lem}\label{lem:myexptotsurj}
  We have $C \relmyexp \TT(C)$ for all $C$ and $\ZZ(M) \relmyexp M$
  for all $M$, and so $\myexp$ is total and surjective.
\end{lem}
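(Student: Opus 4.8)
The plan is to prove the two statements $C \relmyexp \TT(C)$ and $\ZZ(M) \relmyexp M$ separately, and then observe that totality and surjectivity of $\myexp$ follow immediately as corollaries.

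First I would prove that $C \relmyexp \TT(C)$ holds for every configuration $C$. Writing $C = \conf{\gamma}{P_1,\ldots,P_n}$, recall that
$$\TT(C) = \conf{\gamma}{\transl{P_1}_{h_\gamma}[\inv{h_\gamma}], \ldots, \transl{P_n}_{h_\gamma}[\inv{h_\gamma}]}.$$
The first rule of Figure~\ref{fig:myexp}, instantiated with $\gamma' = \gamma$, the canonical bijection $h = h_\gamma \colon \gamma \isoto \card{\gamma}$ (Definition~\ref{def:h_gamma}), and the substitution $\sigma = \inv{h_\gamma}$, gives $P_i[\inv{h_\gamma} \rond h_\gamma] \relmyexp \transl{P_i}_{h_\gamma}[\inv{h_\gamma}]$. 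Since $\inv{h_\gamma} \rond h_\gamma = \id_\gamma$, the left-hand side is just $P_i$, so each individual process is related to its component in $\TT(C)$. I would then build up the whole configuration from its singletons using the fourth (@-compatibility) rule together with the base case $\epsilon_\gamma \relmyexp \epsilon_\gamma$, using that $\conf{\gamma}{P_1,\ldots,P_n} = \conf{\gamma}{P_1} @ \cdots @ \conf{\gamma}{P_n} @ \epsilon_\gamma$ and that $@$ is defined componentwise on multisets over the same $\gamma$.

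Next I would prove $\ZZ(M) \relmyexp M$ for every mixed behaviour $M$. Writing $M = \conf{\gamma}{D_1[\sigma_1],\ldots,D_n[\sigma_n]}$, by definition $\ZZ(M) = \conf{\gamma}{\zeta(D_1)[\sigma_1],\ldots,\zeta(D_n)[\sigma_n]}$. The second rule of Figure~\ref{fig:myexp}, applied to each definite behaviour $D_i \in \DD_{[n_i]}$ with its substitution $\sigma_i \colon n_i \to \gamma$, yields $\zeta(D_i)[\sigma_i] \relmyexp D_i[\sigma_i]$. As before, I would assemble these componentwise using the @-rule and the empty base case, relating $\ZZ(M)$ to $M$. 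Totality of $\myexp$ is then immediate (every $C$ has the related element $\TT(C)$), and surjectivity is immediate as well (every $M$ has the related antecedent $\ZZ(M)$, and $\mm$/$\ZZ$ hit all of $\MMM_0$).

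I do not expect any real obstacle here: the statement is essentially a direct read-off from the defining rules of $\myexp$ in Figure~\ref{fig:myexp}, which were evidently designed to make exactly these two facts hold. The only points requiring mild care are bookkeeping: checking that $\inv{h_\gamma} \rond h_\gamma = \id$ so that the first rule genuinely produces $P_i$ rather than a renamed variant, and verifying that the @-compatibility rule correctly decomposes a multiset configuration into its singleton summands over a common $\gamma$ (so that the partial operation $@$ is defined at each step). The third rule of the figure, concerning the forking pairs $(\zeta(D^1) \para \zeta(D^2))$, is not needed for this particular lemma but is present to make $\myexp$ a weak bisimulation later; I would simply not invoke it. The substantive work—showing $\myexp$ is a weak bisimulation over $\Sierp$—is deferred to subsequent results and is not part of this statement.
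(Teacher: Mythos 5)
Your proof is correct and matches the paper's, which simply says ``By construction'': you have just spelled out the construction, instantiating the first rule of Figure~\ref{fig:myexp} with $h_\gamma$ and $\inv{h_\gamma}$, the second rule with each $D_i[\sigma_i]$, and assembling componentwise via the $@$-rule and the $\epsilon_\gamma$ base case. The bookkeeping points you flag ($\inv{h_\gamma}\rond h_\gamma = \id$ and definedness of $@$ over a common $\gamma$) are exactly the only things to check.
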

\begin{proof}
  By construction.
\end{proof}

\begin{lem}\label{lem:expansion}
  The relation $\myexp$ is a weak bisimulation over $\Sierp$.
\end{lem}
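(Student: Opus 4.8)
The plan is to verify directly that $\myexp$ satisfies the two transfer conditions of a weak bisimulation over $\Sierp$: for every $C \relmyexp M$, each silent ($\tau$) transition of $C$ in $\Conf$ must be matched by a possibly empty silent sequence $M \xOt{} M'$ in $\MMM$ with $C' \relmyexp M'$, each $\tick$ transition by a sequence $M \xOt{\tick} M'$, and symmetrically for transitions of $M$. I would prove this by induction on the derivation of $C \relmyexp M$ from the rules of Figure~\ref{fig:myexp}, the inductive step being the $@$-congruence rule and the base cases the three `atomic' rules together with $\epsilon_\gamma \relmyexp \epsilon_\gamma$ (trivially a bisimulation, both sides admitting only the reflexive identity edge).

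For the atomic rules $P[\sigma \rond h] \relmyexp \transl{P}_h[\sigma]$ and $\zeta(D)[\sigma] \relmyexp D[\sigma]$, the argument is a case analysis on the top-level constructor of the process (reading off the guarded-sum shape of $\zeta(D)$ from its coinductive definition). In each case the transitions of the single $\pi$-process enumerated in Figure~\ref{fig:redpi} are put in correspondence with the transitions of the single mixed behaviour from Figure~\ref{fig:mmm}: a $\tau$-, $\tick$-, $\nu$-, input- or output-guard of the sum corresponds to a state $i \in \card{D \cdot b}$ over the matching basic action $b$, and the residual process is by construction $\transl{\cdot}$ (resp.\ $\zeta$) of the residual behaviour $\restrat{D}{b}{i}$, so the resulting pair again lies in $\myexp$ by the same rule. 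The one place where the step counts differ is forking: the single $\MMM$ fork step $D[\sigma] \xtrans{}{\id} \restrat{D}{\paralof{n}}{i}[\sigma], \restrat{D}{\pararof{n}}{j}[\sigma]$ is matched on the $\Conf$ side by firing the corresponding fork guard to reach $(\zeta(D^1) \para \zeta(D^2))[\sigma]$, related to the two-element target by the third atomic rule; conversely the subsequent silent heating of this parallel composition into two multiset elements is matched by the \emph{empty} transition of $\MMM$, the resulting pair lying in $\myexp$ by two applications of the second atomic rule under $@$-congruence. For $\nu$ and synchronisation one also checks that name bookkeeping agrees, namely that extending $h$ by $b \mapsto n+1$ matches the $\MMM$ substitutions $\sigma + \name{a}$ and $[\sigma_1, \name{\sigma_2(b_2)}]$, which is routine modulo identification of processes up to renaming of bound channels.

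The inductive step treats $C @ D \relmyexp M @ N$, obtained from $C \relmyexp M$ and $D \relmyexp N$; I analyse a transition of the multiset $C \cup D$. By the frame rule, every such transition is either internal to one summand or a synchronisation straddling an element of $C$ and an element of $D$. In the internal case one concludes immediately from the induction hypothesis and the frame rules of both systems (using that $\myexp$ is stable under the weakening of the channel context that those frame rules propagate). The genuinely delicate case is the cross-synchronisation, which cannot be reduced to the separate bisimulation properties of the two summands. To handle it I would isolate an auxiliary `exposure' property of the atomic rules: a process related to $D[\sigma]$ admits an input (resp.\ output) transition along a name $a$ exactly when $D$ has a state over a basic action $\iotaof{n}{a'}$ (resp.\ $\outof{n}{a'}{b'}$) with $\sigma(a') = a$, the residuals remaining related. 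Combining the input exposure of one atom with the output exposure of the other yields on the $\MMM$ side precisely an instance of the synchronisation rule of Figure~\ref{fig:mmm}, whose side-condition $\sigma_1(a_1) = \sigma_2(a_2)$ is the $\sigma$-level shadow of the name equality forcing the $\pi$-synchronisation, and whose received-name substitution $[\sigma_1, \name{\sigma_2(b_2)}]$ matches the capture-avoiding substitution $[b \mapsto c]$ of the $\Conf$ rule. Both successors are then again related by $@$-congruence over the residual atoms.

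The main obstacle is exactly this cross-synchronisation: it forces the correspondence between the name-based representation of $\pi$-calculus communication and the number-plus-substitution representation of mixed behaviours to be made completely precise, including the freshly received channel. Everything else — the enumeration of guards, the weak matching of silent heating, and name freshness for $\nu$ — is a lengthy but direct case analysis, so I would push all the substitution bookkeeping into the auxiliary exposure property and invoke it uniformly in both transfer directions.
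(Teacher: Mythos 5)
Your proposal is correct in substance and follows the same overall strategy as the paper's proof: a case analysis matching the transitions of Figure~\ref{fig:redpi} against those of Figure~\ref{fig:mmm}, with the heating of $(\zeta(D^1) \para \zeta(D^2))[\sigma]$ matched by an identity edge in one direction and, in the other, with the $\MMM$-side transitions touching a pair $D^1[\sigma],D^2[\sigma]$ matched only after first heating the corresponding parallel composition. The one genuine organisational difference is how the $@$-congruence is handled. You do induction on the (binary) derivation of $C \relmyexp M$ and isolate an ``exposure'' lemma for cross-synchronisations; the paper instead observes that $\myexp$ can be re-presented with a single $n$-ary rule $C_1 @ \cdots @ C_n \relmyexp M_1 @ \cdots @ M_n$ over a depth-$0$ relation $\myexpo$, so every derivation has depth at most one and the two synchronising components are immediately visible as atoms. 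This matters for your argument: with the binary rule, the two processes participating in a synchronisation may live in \emph{nested} sub-derivations on opposite sides of different $@$'s, so ``combining the input exposure of one atom with the output exposure of the other'' is not directly licensed by the induction hypothesis --- you would need either the paper's flattening observation or an exposure property stated and proved for arbitrary related pairs, not just for the atomic rules. Relatedly, your exposure property must be phrased syntactically (existence of a guarded summand, resp.\ of a state over $\iotaof{n}{a'}$ or $\outof{n}{a'}{b'}$ with the right image under $\sigma$), possibly \emph{after} silent heating steps, since the \ltss{} here are over $\Sierp$ and carry no input/output labels; in particular a component related by the third atomic rule exposes its capabilities only after heating. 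Both points are easily repaired and are exactly the bookkeeping the paper's flattening is designed to avoid. Finally, note that the paper proves the stronger statement that $\myexp$ is an expansion (left-hand transitions are matched by exactly one transition on the right); your weak-bisimulation formulation suffices for the lemma as stated but gives up this refinement.
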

\begin{proof}
  First, we observe that $\myexp$ may equivalently be defined by first
  letting $\myexpo$ be generated by all rules except the last two,
  and then adding the rule
  \begin{mathpar}
    \inferrule{C_1 \relmyexpo M_1 \\ \ldots \\ C_n \relmyexpo M_n}{
      C_1 @ \ldots @ C_n \relmyexp M_1 @ \ldots @ M_n}~(n \in \Nat)\cdot
  \end{mathpar}
  The advantage of this presentation is that proofs of $C \relmyexp M$
  all have depth at most 1.

  The rest is then a tedious case analysis, which we defer to
  Appendix~\ref{app:proof}. In summary, we easily observe that the
  `forwards' clause of weak bisimulation is satisfied, the only
  subtlety being that heating $(\zeta(D^1) \para \zeta(D^2))[\sigma]$
  should be matched by the identity edge on the corresponding
  behaviour.  Furthermore, the `backwards' clause is also easily
  satisfied, the only subtlety being that if the considered behaviour
  performs a transition involving one or several
  $[D^1[\sigma],D^2[\sigma]]$'s, related on the left to $(\zeta
  (D^1) \para \zeta (D^2))[\sigma]$'s, then all of the latter first
  have to heat to $[\zeta (D^1)[\sigma], \zeta (D^2)[\sigma]]$, and
  only then perform the matching transition.
\end{proof}

This easily entails:
\begin{lem}\label{lem:main:M}
  For all $C_1,C_2,M_1,M_2$, if $C_1 \relmyexp M_1$ and $C_2 \relmyexp
  M_2$, then $C_1 \faireq C_2$ iff $M_1 \faireq M_2$.
\end{lem}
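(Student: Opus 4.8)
The plan is to prove Lemma~\ref{lem:main:M} as a direct consequence of the preceding results on fair relations, by viewing $\myexp$ as a fair relation between the graphs with testing $\freecat{\Conf}$ and $\freecat{\MMM}$ and then invoking Corollary~\ref{cor:fairness:weak}. Concretely, I would recall that by Lemma~\ref{lem:myexptotsurj} the relation $\myexp$ is total and surjective, and by Lemma~\ref{lem:expansion} it is a weak bisimulation over $\Sierp$; in particular $C \relmyexp M$ implies $C \wbisimsierp M$. These are exactly the totality/surjectivity condition and the (weakened) third condition in the definition of a weakly fair relation.

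What remains is to check the two conditions of Definition~\ref{def:fair} that concern the testing structure, namely the coherence condition ($C \relmyexp M$ and $C' \relmyexp M'$ imply $(C \coh_\Conf C') \Leftrightarrow (M \coh_\MMM M')$) and the compatibility of testing with the relation (if $C \relmyexp M$, $C' \relmyexp M'$ and $C \coh_\Conf C'$, then there exist $u \in (C \para_\Conf C')$ and $v \in (M \para_\MMM M')$ with $u \relmyexp v$). For the coherence condition, I would observe that both $\coh_\Conf$ and $\coh_\MMM$ are defined by equality of the free-channel set $\gamma$ (Definitions~\ref{def:atpi} and~\ref{def:atm}), and that by construction every rule generating $\myexp$ (Figure~\ref{fig:myexp}) relates a configuration over $\gamma$ to a mixed behaviour over the same $\gamma$; hence $C \relmyexp M$ forces $C$ and $M$ to share their channel set, so the two coherence relations match up. For the compatibility condition, I would use precisely the fourth rule of Figure~\ref{fig:myexp}: given $C \relmyexp M$ and $C' \relmyexp M'$ with $C \coh_\Conf C'$ (so both live over the same $\gamma$), the partial operations $@$ on $\Conf$ and on $\MMM$ are defined (they are multiset unions over the common $\gamma$), and the rule
\[
  \inferrule{C \relmyexp M \\ C' \relmyexp M'}{C @ C' \relmyexp M @ M'}
\]
yields directly that $u = C @ C'$ and $v = M @ M'$ satisfy $u \relmyexp v$, with $u \in (C \para_\Conf C')$ and $v \in (M \para_\MMM M')$ as required.

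Having verified all four conditions, $\myexp$ is a weakly fair relation $\ob(\Conf) \modto \ob(\MMM)$ in the sense of the definition preceding Corollary~\ref{cor:fairness:weak}. I would then simply apply that corollary: since $C_1 \relmyexp M_1$ and $C_2 \relmyexp M_2$, it gives $(C_1 \faireqof{\freecat{\Conf}} C_2) \Leftrightarrow (M_1 \faireqof{\freecat{\MMM}} M_2)$, which is exactly the desired statement $C_1 \faireq C_2 \Leftrightarrow M_1 \faireq M_2$ once one recalls that fair testing equivalence on these free graphs with testing is what $\faireq$ denotes.

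The main obstacle is the bookkeeping in the coherence and compatibility checks, rather than any deep argument: one must be careful that the inductive presentation of $\myexp$ (and in particular the equivalent depth-one presentation via $\myexpo$ introduced in the proof of Lemma~\ref{lem:expansion}) genuinely guarantees channel-set preservation and closure under $@$ on both sides. The substantive content — that $\myexp$ respects transitions — has already been discharged in Lemma~\ref{lem:expansion}, so this final lemma is essentially an application of the abstract machinery of Section~\ref{subsec:fair}, and the only real work is confirming that the four clauses of fairness hold for the concrete relation $\myexp$.
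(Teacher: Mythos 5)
Your proof is correct and follows exactly the paper's route: the paper likewise establishes that $\myexp$ is weakly fair (citing Lemmas~\ref{lem:myexptotsurj} and~\ref{lem:expansion} for the two nontrivial clauses) and then concludes by Corollary~\ref{cor:fairness:weak}. You merely spell out the two easy clauses (channel-set coherence and compatibility with $@$ via the fourth rule of Figure~\ref{fig:myexp}) that the paper leaves implicit.
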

\begin{proof}
  The relation $\myexp$ is weakly fair, the only difficult points being 
  proved by Lemmas~\ref{lem:myexptotsurj} and~\ref{lem:expansion} above.
  We thus conclude by Corollary~\ref{cor:fairness:weak}.
\end{proof}

\begin{thm}\label{thm:main:M}
  The map $\TT\colon \ob (\Conf) \to \ob (\MMM)$ is intensionally fully abstract.
\end{thm}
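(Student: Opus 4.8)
The plan is to observe that Theorem~\ref{thm:main:M} is now an assembly of the three preceding lemmas about the relation $\myexp$, with no genuinely new content. Recall that intensional full abstraction of $\TT = \mm \rond \translfun$ unfolds into two claims: (a) \emph{preservation and reflection}, namely that $C_1 \faireqof{\Conf} C_2$ iff $\TT(C_1) \faireqof{\MMM} \TT(C_2)$ for all configurations $C_1, C_2$; and (b) \emph{surjectivity up to fair testing equivalence}, namely that every mixed behaviour $M \in \ob(\MMM)$ admits some $C \in \ob(\Conf)$ with $\TT(C) \faireqof{\MMM} M$. Both follow by feeding $\myexp$ into Lemma~\ref{lem:main:M}.

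For (a), I would first invoke Lemma~\ref{lem:myexptotsurj}, which supplies $C_1 \relmyexp \TT(C_1)$ and $C_2 \relmyexp \TT(C_2)$. Applying Lemma~\ref{lem:main:M} to these two pairs immediately yields $C_1 \faireqof{\Conf} C_2$ iff $\TT(C_1) \faireqof{\MMM} \TT(C_2)$, which is exactly (a).

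For (b), given any $M \in \ob(\MMM)$, I would take $C = \ZZ(M)$. By Lemma~\ref{lem:myexptotsurj} we have both $\ZZ(M) \relmyexp M$ and $\ZZ(M) \relmyexp \TT(\ZZ(M))$. Instantiating Lemma~\ref{lem:main:M} at $C_1 = C_2 = \ZZ(M)$, $M_1 = M$, and $M_2 = \TT(\ZZ(M))$, and using reflexivity of $\faireqof{\Conf}$ (valid since the coherence relation $\coh_\Conf$ is an equivalence, hence reflexive), we obtain $M \faireqof{\MMM} \TT(\ZZ(M))$, i.e.\ $\TT(\ZZ(M)) \faireqof{\MMM} M$ by symmetry. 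Thus $\ZZ(M)$ is the required antecedent.

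The theorem itself therefore presents no further obstacle; the genuine work has already been discharged upstream. The only nontrivial ingredient is Lemma~\ref{lem:main:M}, which rests on $\myexp$ being weakly fair, and whose hard part is Lemma~\ref{lem:expansion}: the verification that $\myexp$ is a weak bisimulation over $\Sierp$ (case analysis deferred to Appendix~\ref{app:proof}). There the delicate points are matching the heating transition of $(\zeta(D^1) \para \zeta(D^2))[\sigma]$ against the identity edge on the corresponding mixed behaviour, and, in the backwards direction, matching a synchronisation transition of $\MMM$ against a sequence in $\Conf$ in which all relevant parallel compositions must first be heated. Granting those lemmas, the present statement is a direct corollary.
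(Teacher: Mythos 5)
Your proof is correct and follows the paper's route: part (a) is exactly the paper's appeal to Lemma~\ref{lem:main:M}, and for surjectivity the paper likewise takes $\ZZ(M)$ as the antecedent. The only (harmless) difference is that you obtain $M \faireqof{\MMM} \TT(\ZZ(M))$ by instantiating Lemma~\ref{lem:main:M} at $C_1 = C_2 = \ZZ(M)$ and using reflexivity of $\faireqof{\Conf}$, whereas the paper unfolds the pole and re-runs the bisimulation argument by hand ($M \mathbin{@} M' \in \bot$ iff $\ZZ(M) \mathbin{@} \ZZ(M') \in \bot$ iff $\TT(\ZZ(M)) \mathbin{@} M' \in \bot$); both rest on the same facts, namely Lemmas~\ref{lem:myexptotsurj} and~\ref{lem:expansion}.
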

\begin{proof}
  Lemma~\ref{lem:main:M} directly entails preservation and reflection
  of fair testing equivalence.  Regarding surjectivity up to
  $\faireqof{\MMM}$, for any $M \in \MMM_0$, we have $M
  \faireqof{\MMM} \TT(\ZZ((M)))$.  Indeed, for any $M'$, we have
  \begin{center}
    \begin{minipage}[c]{0.9\linewidth}
      \raggedright
      \hphantom{iff} $M @ M' \in \bot$
     \\ 
    iff $\ZZ(M) @ \ZZ(M') \in \bot$ 
    \\ \hfil
    (because $\ZZ(M) @ \ZZ(M') \relmyexp M @ M'$ and by Lemma~\ref{lem:expansion}) \\ 
    iff $\TT (\ZZ (M)) @ M' \in \bot$ 
    \\ \hfil
    (because $\ZZ(M) @ \ZZ(M') \relmyexp \TT(\ZZ(M)) @ M'$ and by Lemma~\ref{lem:expansion}),
    \end{minipage}
  \end{center}
  as desired.
\end{proof}

We are now able to prove our main result:
\begin{proof}[Proof of Theorem~\ref{thm:main}] By
  Theorem~\ref{thm:main:M}, $\TT$ is intensionally fully abstract for
  $\faireqof{\MMM}$, so by Lemma~\ref{lem:MMM} $\translfun$ is
  intensionally fully abstract for $\faireqof{\SSS}$.  We thus
  conclude by Corollary~\ref{cor:main:SSS}.
\end{proof}

\subsection{Generalisation}\label{subsec:gen}
We now show that our main results generalise beyond fair testing
equivalence.  Indeed, let us put:
\begin{defi}\label{def:pole}
  A \emph{pole} is a property of states over $\fcSierp$ which is
  stable under strong bisimilarity.
\end{defi}
There is a slight size issue in this definition, as it quantifies over
elements of all graphs over $\fcSierp$. The reader may understand this
using whatever fix they prefer, e.g., using a universe or some modal
logic.
\begin{example}
  Consider any $x \in G$ over $\fcSierp$.
  We have
  \begin{itemize}
  \item $x$ is in the pole for fair testing equivalence 
    iff for all $x \ot x'$ there exists $x' \xot{\tick} x''$;
  \item $x$ is in the
    pole for may testing equivalence iff there exists $x
    \xot{\tick} x'$.
  \end{itemize}
  Must testing equivalence is less easy to capture, for reasons
  explained in~\cite{2011arXiv1109.4356H}.  Here is an exotic, yet
  perhaps relevant pole: $x$ is in it iff for all finite,
  not-necessarily silent transition sequences $x \xotstar{} x'$, there
  exists $x' \xot{\tick} x''$. In other words, $x$ never loses the
  ability to tick. The induced equivalence is clearly at least as fine
  as fair testing equivalence, but we leave open the question of
  whether or not it is strictly finer.
\end{example}

\begin{defi}
  For any such pole $\bot$, let $\sim_\bot$ denote the testing
  equivalence induced by replacing $\bot^G$ by $\bot$ in the
  definition of fair testing equivalence
  (Definition~\ref{def:faireq}). 
\end{defi}
Semantic testing equivalence may then be taken to be testing
equivalence on $\CCC$ (Definition~\ref{def:CCC}), and we get the exact
analogue of Theorem~\ref{thm:main} (without changing the model in any
way).

\section{Conclusion and future work}\label{sec:conclu}
We have described our playground for $\pi$ and the induced sheaf
model, which we have proved intensionally fully abstract for a wide
range of testing equivalences.

Regarding future work: our proof that \traces form a playground uses a
new technique based on factorisation systems.  Since submission of
this paper, we have designed~\cite{TOtoEH,TOtoEHCALCO} a general
setting where this technique applies, and used it to bridge the gap
between our notion of plays based on string diagrams and the standard
one based on justified sequences~\cite{OngTsukada}.  We also consider
applying our notion of \trace to error
diagnostics~\cite{DBLP:conf/rv/GosslerMR10} or efficient machine
representation of reversible $\pi$-calculus
processes~\cite{DBLP:conf/lics/CristescuKV13}.  Longer-term directions
include applying the approach to more complex calculi, e.g., calculi
with passivation~\cite{DBLP:conf/fossacs/LengletSS09} or functional
calculi, and eventually consider some full-fledged functional language
with concurrency primitives.  Finally, deriving the complex notion of
trace evoked in Section~\ref{sec:contrib} from the one exposed here is
akin to deriving \ltss{} from reduction
rules~\cite{DBLP:conf/concur/LeiferM00,modularLTS}.  Since the issue
still seems easier on \traces than on a full operational semantics
specification, this might be worthwile to investigate further.  In the
same vein, the emphasis we put on traces suggests that we might be
able to deduce properties of type systems (soundness, progress, etc)
or compilers (correctness) from corresponding properties on traces.

\bibliographystyle{plainnat}
\bibliography{../common/bib}

\appendix

\section{Proof of Lemma~\ref{lem:expansion}}\label{app:proof}
\newcommand{\hs}{\hspace{0.5cm}}
\newcommand{\newcase}[1]{\hspace{\parskip}\newline\noindent\textbf{#1}}

One wants to check two properties:
\begin{itemize}
\item[(LH)] for all transitions $C' \xsnart{A}{a} C$ with $C\relmyexp
  M$, there exists $M' \xsnart{A}{a} M$ with $C' \relmyexp M'$;
\item[(RH)] for all transitions $M \xtrans{A}{a} M'$ with $C \relmyexp
  M$, there exists $C \xTrans{A}{a} C'$ with $C' \relmyexp M'$.
    \end{itemize}
    The attentive reader will have noticed that (LH) imposes $y$ to
    answer with a single transition. This means we actually prove that
    $\myexp$ is an \emph{expansion}~\cite[Chapter
    6]{SangioRutten}. Any expansion being in particular a weak
    bisimulation, this suffices.

    \begin{notation}
      We sometimes cast processes $P$ (resp.\ pairs $D[\sigma]$) over
      $\gamma$ into configurations $\conf{\gamma}{P}$ (resp.\ mixed
      behaviours $\conf{\gamma}{D[\sigma]}$). We proceed similarly for
      multisets of processes.
    \end{notation}

We start by proving (LH) for all cases, before proving that (RH) holds as well.

\newcase{Synchro, (LH).} We begin by the case of a synchronisation, i.e., when one has a transition
$$C = \conf{\gamma}{a(b).P+_{k_{1}}R_{1}, \send{a}{c}.Q+_{k_{2}}R_{2}} @ C_{0} %
\xtrans{}{id} %
\conf{\gamma}{P[b\mapsto c],Q} @ C_{0} = C'.$$
We want to show that there exists a transition $M\xtrans{}{id} M'$
with $C' \relmyexp M'$.

We write $P_{1}=a(b).P+_{k_{1}}R_{1}$ and
$P_{2}=\send{a}{c}.Q+_{k_{2}}R_{2}$. Neither of them are of the form
$(\cdot \para \cdot)$ so they can only be related to mixed behaviours
using the first two rules. Therefore, four sub-cases should be
considered, as detailed in Figure \ref{figsynchrolh}.  If we are in
case $i_1$ for $P_1$ and $i_2$ for $P_2$, then we have two mixed
behaviours $D_1[\sigma_1]$ and $D_2[\sigma_2]$ such that $n_i
\vdashdefinite D_i$, $\sigma_i \colon n_i \to \gamma$, and $P_i
\relmyexp D_i[\sigma_i]$ for $i = 1,2$, plus $M =
\conf{\gamma}{D_1[\sigma_1],D_2[\sigma_2]} @ M_0$ with $C_0 \relmyexp
M_0$.

\begin{itemize}
\item[$\bullet$]\textbf{Case 1 for both $P_{1}$ and $P_{2}$.}
We have 
$M=\transl{P'_{1}}_{h_{1}}[\sigma_{1}]@\transl{P'_{2}}_{h_{2}}[\sigma_{2}]@ M_{0}$, and 
there is a transition
$$M \xtrans{}{id} \transl{P'}_{h'_{1}}[\sigma'_{1}]@\transl{Q'}_{h_{2}}[\sigma_{2}]@ M_{0},$$
where $h'_{1}$ is $\gamma'_{1},b'\xto{h_{1}+!}n_{1}+1$ and $\sigma'_{1}$ is 
$n_{1}+1\xto{[\sigma_{1},c]}\gamma$.\\
Since $\sigma'_{1}\rond h'_{1}$ equals
$\gamma'_{1}\xto{h_{1}+!}n_{1}+1\xto{\sigma_{1}+\ceil{b}} (\gamma,b) \xto{b\mapsto
  c}\gamma$, we have that $P'[\sigma'_{1}\rond
h'_{1}]=P'[(\sigma_{1}+\ceil{b})\rond(h_{1}+!)][b\mapsto c]=P[b\mapsto
c]$, and therefore $P[b\mapsto c]\relmyexp
\transl{P'}_{h'_{1}}[\sigma'_{1}]$. Moreover, it is clear that
$Q=Q'[\sigma_{2}\rond h_{2}]\relmyexp
\transl{Q'}_{h_{2}}[\sigma_{2}]$, and finally $[P[b\mapsto c],Q]@
C_{0}\relmyexp M'$.

\item[$\bullet$]\textbf{Case 1 for $P_{1}$, Case 2 for $P_{2}$.}
We have a transition
$$M \xtrans{}{id} \transl{P'}_{h'_{1}}[\sigma'_{1}]@\restr{(D_{2}\cdot\outof{n_{2}}{a_{2}}{c_{2}})}{j}[\sigma_{2}]@ M_{0} = M',$$
where $h'_{1}$ is $\gamma'_{1},b'\xto{h_{1}+!}n_{1}+1$ and $\sigma'_{1}$ is $n_{1}+1\xto{[\sigma_{1},c]}\gamma$.\\
As is the previous case, one can check that $P[b\mapsto c]\relmyexp \transl{P'}_{h'_{1}}[\sigma'_{1}]$. Moreover, $Q\relmyexp \restr{(D_{2}\cdot\outof{n_{2}}{a_{2}}{c_{2}})}{j}[\sigma_{2}]$ and therefore $[P[b\mapsto c],Q]@ C_{0} \relmyexp M'$.

\item[$\bullet$]\textbf{Case 2 for $P_{1}$, Case 1 for $P_{2}$.}
We have a transition
$$M \xtrans{}{id} \restr{(D_{1}\cdot \iotaof{n_{1}}{a_{1}})}{i}[n_{1}+1\xto{[\sigma_{1},\ceil{c}]}\gamma]@\transl{Q'}_{h'_{2}}[\sigma'_{2}]@ M_{0} = M'.$$
As in the first case, we have $Q=Q'[\sigma_{2}\rond h_{2}]\relmyexp \transl{Q'}_{h_{2}}[\sigma_{2}]$. 
Furthermore, since 
$$\begin{array}{l}
\zeta(\restr{(D_{1}\cdot \iotaof{n_{1}}{a_{1}})}{i})[n_{1}+1\xto{[\sigma_{1},\ceil{c}]}\gamma]\\
\hs\hs=\zeta(\restr{(D_{1}\cdot \iotaof{n_{1}}{a_{1}})}{i})[n_{1}+1\xto{\sigma_{1}+\ceil{b}}\gamma,b][b\mapsto c]\\
\hs\hs=P[b\mapsto c],
\end{array}$$ 
we have that $[P[b\mapsto c],Q]@C_{0}\relmyexp M'$.

\item[$\bullet$]\textbf{Case 2 for both $P_{1}$ and $P_{2}$.}
We have a transition
$$M \xtrans{}{id} \restr{(D_{1}\cdot \iotaof{n_{1}}{a_{1}})}{i}[n_{1}+1\xto{[\sigma_{1},\ceil{c}]}\gamma]@\restr{(D_{2}\cdot\outof{n_{2}}{a_{2}}{c_{2}})}{j}[\sigma_{2}]@ M_{0} = M'.$$
As in the previous cases, one can show that $[P[b\mapsto c],Q]@C_{0}\relmyexp M'$.
\end{itemize}

\begin{figure}
\begin{tabular}{|l|p{.4\linewidth}|p{.45\linewidth}|}
\hline
& Case 1 & Case 2\\\hline
$P_{1}$ &
\begin{minipage}[t]{1.0\linewidth}
  There exist $\gamma'_{1} \vdash P'_{1} = a'_1(b').P'
  +_{k_1} R'_1$ and $h_{1} \colon \gamma'_{1} \isoto n_{1}$ such that
  \begin{mathpar}
    P_1 = P'_1[\sigma_1 \rond h_1]
    \and
    \sigma_1 (h_1 (a'_1)) = a
    \and
    P = P'[\gamma'_{1},b'\xto{h_{1}+!}n_{1}+1\xto{\sigma_{1}+\name{b}}\gamma,b]
    \and
    D_1 = \transl{P'_1}_{h_1}
    \and
    R_1 = R'_{1}[\sigma_{1}\rond h_{1}].
  \end{mathpar}\\[-2em]
\end{minipage}
&
\begin{minipage}[t]{1.0\linewidth}
  There exist $n_1 \vdashdefinite D_1$, $a_1 \in n_1$, and $i \in
  \card{D_1 \cdot \iotaof{n_1}{a_1}}$ such that
  \begin{mathpar}
    \sigma_1(a_1) = a 
    \and
    P_1 = \zeta (D_1)[\sigma_1]
    \and
    {P=\zeta (\restrat{D_1}{\iotaof{n_1}{a_1}}{i})[n_1+1 \xto{\sigma_1 + \name{b}} \gamma,b]}.
  \end{mathpar}
\end{minipage}
\\\hline
$P_{2}$ &
\begin{minipage}[t]{1.0\linewidth}
  \vspace*{-.7em} %
  There exist $\gamma'_{2} \vdash P'_{2} = \send{a'_{2}}{c'}.Q' +_{k_2} R'_2$ 
  and $h_{2} \colon \gamma'_{2} \isoto n_{2}$
  such that
  \begin{mathpar}
    P_2 = P'_2[\sigma_2 \rond h_2]
    \and
    Q = Q'[\gamma'_{2},b\xto{h_{2}+1}n_{2}+1\xto{\sigma_{2}+\name{b}}\gamma,b]
    \and
    \sigma_2 (h_2 (a'_2)) = a
    \and
    \sigma_2 (h_2 (c'_2)) = c
    \and
    D_2 = \transl{P'_2}_{h_2}
    \and
    R_2 = R'_{2}[\sigma_{2}\rond h_{2}].
  \end{mathpar}\\[-2em]
\end{minipage}
&
\begin{minipage}[t]{1.0\linewidth}
  \raggedright
  \vspace*{-.7em} %
  There exist $n_2 \vdashdefinite D_2$, $a_2,c_2 \in n_2$, and \linebreak $j \in \card{D_2 \cdot \outof{n_2}{a_2}{c_2}}$ such that
  \begin{mathpar}
    \sigma_2(a_2) = a 
    \and
    \sigma_2(c_2) = c
    \and
    P_2 = \zeta (D_2)[\sigma_2]
    \and
    {Q=\zeta (\restrat{D_2}{\outof{n_2}{a_2}{c_2}}{j})[\sigma_2]}.
  \end{mathpar}
\end{minipage}
\\\hline
\end{tabular}
\caption{Synchro, (LH) cases}\label{figsynchrolh}
\end{figure}

\noindent\textbf{Heating, (LH).} We now consider the case of heating, i.e., when one has a transition
$$C = (P\para Q)@C_{0} \xtrans{}{\tau}[P,Q]@ C_{0} = C'.$$ 
We want to show that there exists a transition $M\xtrans{}{\tau} M'$ with $[P,Q]@ C_{0} \relmyexp M'$.

We now have to consider a few cases, depending on which rule is applied
for $P\para Q$ in the proof of $(P\para Q)@C_{0} \relmyexp
M$. Notice that $P\para Q$ cannot be of the form
$\zeta(D)[\sigma]$. We are therefore left with two cases, depending
on whether the first or third rule is applied.
\begin{itemize}
\item[$\bullet$] If the first rule is applied, we find
$M_0$, $\gamma' \vdash P'\para Q'$, $h \colon \gamma' \isoto n$, and $\sigma
      \colon n \to \gamma$
      such that
      $P\para Q = (P'\para Q')[\sigma \rond h]$ and $M = \transl{P'\para Q'}_h[\sigma] @ M_0$,
      with $C_0 \relmyexp M_0$.
Letting $D=\transl{P'\para Q'}_h$, we notice that
$$D=\tupling{ \forklof{n}  \mapsto  \transl{P'}_h,
        \forkrof{n}  \mapsto  \transl{Q'}_h}.$$
Thus, there is a transition
$$M\xtrans{}{\tau}M'=[\transl{P'}_h[\sigma],\transl{Q'}_h[\sigma]]@M_{0}$$
with $[P, Q]@C_{0} \relmyexp M'$, as desired.

\item[$\bullet$] If the third rule is applied,
we find $M_0$, 
$n \vdashdefinite D^1,D^2$ and $\sigma \colon n \to \gamma$ such that %
$$P\para Q=(\zeta(D^1) \para \zeta(D^2))[\sigma]$$ 
and $M = [D^1[\sigma],D^2[\sigma]] @ M_0$.
We notice that $[P, Q]=[\zeta(D^1)[\sigma],\zeta(D^2)[\sigma]]$, so
$[P, Q]\relmyexp [D^1[\sigma],D^2[\sigma]]$, hence $[P, Q] @ C_{0}
\relmyexp M$.  The identity transition $M \xtrans{}{\tau} M$ thus fits
our needs.
\end{itemize}

\newcase{Nu, (LH).} We now consider the case of a $\nu$ rule, i.e., when one has a transition
$$C = \conf{\gamma}{\nu a.P +_{k} R}@C_{0} \xtrans{}{\tau}\conf{\gamma,a}{P} @ C_{0}[\gamma\subset \gamma,a] = C'.$$
We want to show that there exists a transition $M \xtrans{}{\tau} M'$
with $\conf{\gamma,a}{P} @ C_{0}[\gamma\subset \gamma,a] \relmyexp
M'$.  We notice that $\nu a.P+_{k} R$ cannot be obtained from the
third rule. We thus consider two cases corresponding to the first and
second rules.
\begin{itemize}
\item[$\bullet$] If the first rule is applied, there exist $M_0$, 
$\gamma' \vdash \nu a. P' +_{k} R'$, $h \colon \gamma' \isoto n$, and $\sigma
      \colon n \to \gamma$ such that
      $M = \transl{\nu a. P' +_{k} R'}_h[\sigma] @ M_0$ and
      $$\nu a. P +_{k} R %
      =(\nu a. P' +_{k} R')[\sigma \rond h].$$ We write
      $D=\transl{\nu a. P' +_{k} R'}_h$ as $\tupling{\nuof{n} \mapsto
        \transl{P'}_{h'}} \dplus \transl{R'}_{h}$, where $h'$ is
      $\gamma',a \xto{h+1} n+1$.
Thus, there is a transition
$$M\xtrans{}{\tau} M'=\conf{\gamma,a}{\transl{P'}_{h'}[n+1\xto{\sigma+\name{a}}\gamma,a]}@
M_{0}[\gamma\subset\gamma,a]$$
and, modulo the fact that $C\relmyexp M$ implies $C[\sigma]\relmyexp M[\sigma]$, we have $P@C_{0}[\gamma\subset \gamma,a]\relmyexp\transl{P'}_{h'}[\sigma+\name{a}]
@M_{0}[\gamma\subset\gamma,a]$ since $P=P'[\gamma',a\xto{h'}n+1\xto{\sigma+1}\gamma,a]$, as desired.

\item[$\bullet$] If the second rule is applied, there exist $M_0$,
  $n\vdashdefinite D$, and $\sigma \colon n \to \gamma$ such that 
$M = \zeta(D)[\sigma] @ M_0$ and %
$$\nu a. P +_{k} R=\zeta(D)[\sigma].$$
Thus, there exists $i\in\card{D\cdot \nuof{n}}$ such that $\zeta(\restr{(D\cdot\nuof{n})}{i})[n+1\xto{\sigma+\ceil{a}}\gamma,a]=P$. There is thus a transition
$$M\xtrans{}{\tau} M'=\conf{\gamma,a}{\restr{(D\cdot\nuof{n})}{i}[\sigma+\name{a}]} @ 
M_{0}[\gamma\subset\gamma,a]$$
with  $P@C_{0}[\gamma\subset\gamma,a]\relmyexp M'$, as desired.
\end{itemize}

\newcase{Tick and Tau, (LH).} We now consider the cases $\tick$ and
$\tau$, i.e., when one has a transition
$$C = (\xi.P+_{k} R)@C_{0} \xtrans{}{\xi} P @ C_{0},$$
where $\xi \in \ens{\tick,\tau}$.  We want to show that there exists a
transition $M\xtrans{}{\xi} M'$ with $P@ C_{0} \relmyexp M'$.

Once again, the third rule could not have been applied, and we are
left with two cases corresponding to the first and second rules.
\begin{itemize}
\item[$\bullet$] If the first rule is applied, then we find $M_0$,
$\gamma' \vdash \xi.P'+_{k} R'$, $h \colon \gamma' \isoto n$, and $\sigma
      \colon n \to \gamma$ such that
      $M = \transl{\xi. P' +_{k} R'}_h[\sigma] @ M_0$, $C_0 \relmyexp M_0$, and
      $$\xi.P+_{k} R=(\xi. P' +_{k} R')[\sigma \rond h].$$
      We write $D=\transl{\xi.P'+_{k} R'}_{h}$ as $\tupling{\xi_{n}
        \mapsto \transl{P'}_{h}} \dplus \transl{R'}_{h}$.  Thus, there
      is a transition
$$M=D[\sigma] @ M_0 \xtrans{}{\xi} \transl{P'}_{h}[\sigma]@M_{0} = M',$$
with $P=P'[\sigma\rond h]$ and thus $P@C_{0}\relmyexp M'$, as desired.

\item[$\bullet$] If the second rule is applied, then we find $M_0$, $n
  \vdashdefinite D$, and $\sigma \colon n \to \gamma$ such that %
  $C_0 \relmyexp M_0$,
  $M = D[\sigma] @ M_0$, and 
  $$\xi. P +_{k} R=\zeta(D)[\sigma].$$
  Then, there exists $i\in \card{D\cdot \xi_{n}}$ such that
  $\zeta(\restr{(D\cdot \xi_{n})}{i})[\sigma]=P$. Hence, there is a
  transition
$$M\xtrans{}{\xi} \restr{(D\cdot\xi_{n})}{i}[\sigma]@M_{0} = M'$$
with $P@C_{0}\relmyexp M'$, as desired.
\end{itemize}

We have thus proved that (LH) holds. We now proceed with the case analysis for (RH).

\newcase{Synchro, (RH).}
We start, once again, with the case of the synchronisation, i.e., when $M$ has the shape
$\conf{\gamma}{D_1[\sigma_1],D_2[\sigma_2]} @ M_0$ and we consider a silent transition to
$$M' = \conf{\gamma}{ %
  \restrat{D_1}{\iotaof{n_1}{a_1}}{i}[n_1 + 1
  \xto{[\sigma_1,\name{\sigma_2(b_2)}]} \gamma], %
  \restrat{D_2}{\outof{n_2}{a_2}{b_2}}{j}[\sigma_2] %
} @ M_0,$$ where $n_i \vdashdefinite D_i$ and $\sigma_i \colon n_i \to \gamma$ for $i = 1,2$, 
$\sigma_1(a_1)=a=\sigma_2(a_{2})$, and
$\sigma_2(c_2)=c$.  We want to show that there exists a transition
$C\xtrans{}{\id} C'$ with $C'\relmyexp M'$.  There are exactly 10
cases here. Firstly, we have the case where the third rule is
applied to $D_{1}[\sigma_1], D_2[\sigma_2]$.  Otherwise, one could
have used each of the three rules for $\relmyexp_{0}$ for each of
$D_{1}$ and $D_{2}$, yielding nine cases. We start with the first
case, and then treat the nine others.
\begin{itemize}
\item[$\bullet$] If the third rule is applied on
  $D_{1}[\sigma_1],D_2[\sigma_2]$ (hence $n_1=n_2=n$ and
  $\sigma_1=\sigma_2=\sigma$), then we find $C_0$ such that  
$C_0 \relmyexp M_0$ and
$C = (\zeta(D_1) \para \zeta(D_2))[\sigma] @ C_0$.
Then, we have a transition 
$$(\zeta(D_1) \para \zeta(D_2))[\sigma] @ C_0 \xtrans{}{\id}
[\zeta(D_1)[\sigma_{1}],\zeta(D_2)[\sigma_2]] @ C_0.$$ Since the
latter configuration is again related to $M$, this reduces to the case
where the second rule is applied for both $D_{1}[\sigma_1]$ and $D_{2}[\sigma_2]$.

\item[$\bullet$] If the third rule is applied for $D_{1}[\sigma_1]$
  and any of the three rules is applied for $D_{2}[\sigma_2]$, then we
  find $P$, $M_1$, $M_2$, $C_0$, and $n_1 \vdashdefinite D_3$ such
  that $M_1$ has length $1$ if the third rule is also used for $D_2$
  and $0$ otherwise, $M_0 = D_3[\sigma_3] @ M_1 @ M_2$, $P \relmyexp
  D_2[\sigma_2] @ M_1$, $C_0 \relmyexp M_2$, and $C = (\zeta(D_1) \para
  \zeta(D_3))[\sigma_1] @ P @ C_0$.

  Thus, there is a transition $$(\zeta(D_1) \para \zeta(D_3))[\sigma_1]
  @ P @ C_0
  \xtrans{}{\id}[\zeta(D_1)[\sigma_{1}],\zeta(D_3)[\sigma_1]] @ P @
  C_0$$ which reduces this case to the one where the second rule is
  applied to $D_1$.

\item[$\bullet$] If the third rule is applied for $D_{2}[\sigma_2]$
  and any of the first two rules is applied for $D_{1}$, then we find
  $M_1$, $P$, $C_0$, and $n_2 \vdashdefinite D_3$ such that $M_0 =
  D_3[\sigma_3] @ M_1$, $P \relmyexp D_1[\sigma_1]$, $C_0 \relmyexp
  M_1$, and $C = (\zeta(D_2) \para \zeta(D_3))[\sigma_2] @ P @ C_0$.
  Again, we are reduced to the case where the second rule is applied
  for $D_2$, using the transition
  $$(\zeta(D_2) \para
  \zeta(D_3))[\sigma_2] @ P @ C_0 
  \xtrans{}{\id}
  [\zeta(D_2)[\sigma_{2}],\zeta(D_3)[\sigma_2]] @ P @ C_0.$$
\end{itemize}

In the remaining cases, we have $C = P^0_1 @ P^0_2 @ C_0$ with $P^0_i
\relmyexp D_i[\sigma_i]$ for $i = 1,2$, and $C_0 \relmyexp M_0$.  We
thus have four cases, described in Figure~\ref{figsynchrorh} just as
we did in Figure~\ref{figsynchrolh}.
\begin{figure}
\begin{tabular}{|l|p{.4\linewidth}|p{.45\linewidth}|}
\hline
& Case 1 & Case 2\\\hline
$D_{1}$ &
\begin{minipage}[t]{1.0\linewidth}
  There exist $\gamma'_{1} \vdash P_{1} = a'_1(b).P'_1
  +_{k_1} R_1$ and $h_{1} \colon \gamma'_{1} \isoto n_{1}$ such that
  \begin{mathpar}
    P^0_1 = P_1[\sigma_1 \rond h_1]
    \and
    h_1 (a'_1) = a_1
    \and
    D_1 = \transl{P_1}_{h_1}
    \and
    i \in \card{D_1 \cdot \iotaof{n_1}{a_1}}
    \and 
    \restrat{D_1}{\iotaof{n_1}{a_1}}{i} = \transl{P'_1}_{h'_1}
  \end{mathpar}
  where $h'_1$ is $\gamma'_1, b \xto{h_1 + 1} n_1 + 1$.\\[-.5em]
\end{minipage}
&
\begin{minipage}[t]{1.0\linewidth}
$P^0_1 = \zeta(D_1)[\sigma_1]$.
\end{minipage}
\\\hline
$P_{2}$ &
\begin{minipage}[t]{1.0\linewidth}
  There exist $\gamma'_{2} \vdash P_{2} = \send{a'_{2}}{c'}.P'_2 +_{k_2} R_2$ 
  and $h_{2} \colon \gamma'_{2} \isoto n_{2}$
  such that
  \begin{mathpar}
    P^0_2 = P_2[\sigma_2 \rond h_2]
    \and
    h_2 (a'_2) = a_2
    \and
    h_2 (c'_2) = c_2
    \and
    D_2 = \transl{P_2}_{h_2}
    \and
    j \in \card{D_2 \cdot \outof{n_2}{a_2}{c_2}} 
    \and
    \restrat{D_2}{\outof{n_2}{a_2}{c_2}}{j} = \transl{P'_2}_{h_2}.
  \end{mathpar}\\[-2em]
\end{minipage}
&
\begin{minipage}[t]{1.0\linewidth}
$P^0_2 = \zeta(D_2)[\sigma_2]$.
\end{minipage}
\\\hline
\end{tabular}
\caption{Synchro, (RH) cases}\label{figsynchrorh}
\end{figure}

\begin{itemize}
\item[$\bullet$] If the second rule is applied for both $D_1$ and
  $D_{2}$, then there is a transition
$$C\xtrans{}{\id}C'=\zeta(\restr{(D_1\cdot\iotaof{n_1}{a_1})}{i})[[\sigma_1,\ceil{c}]]
@\zeta(\restr{(D_2\cdot\outof{n_2}{a_2}{c_2})}{j})[\sigma_2]@M_0$$
with $C'\relmyexp M'$ as desired.

\item[$\bullet$] If we apply the second rule for $D_1$ and the first
  rule for $D_{2}$, then there is a transition
$$C\xtrans{}{\id}C'=
\zeta(\restr{(D_1\cdot\iotaof{n_1}{a_1})}{i})[[\sigma_1,\ceil{c}]]@P'_2[\sigma_2\rond h_2]@C_0$$
with $C'\relmyexp M'$ as desired.

\item[$\bullet$] If the first rule is applied for both $D_{1}$ and
  $D_{2}$, then we have a synchronisation between $P_1[\sigma_1 \rond
  h_1]$ and $P_2[\sigma_2 \rond h_2]$.  In order to determine the
  result of this synchronisation we need to choose a representative
  for $P_1[\sigma_1 \rond h_1]$, i.e., pick a channel for what $b$
  becomes after substitution. A reasonable choice here is $\gamma+1$,
  we choose
$$P_1[\sigma_1 \rond h_1] = a(\gamma+1).(P'_1[(\sigma_1+1) \rond h'_1])+_{k_1} R_1[\sigma_1 \rond h_1].$$
We thus have a transition to
$$C' = P'_1[(\sigma_1+1) \rond h'_1][\gamma + 1 \mapsto c] @ 
P'_2[\sigma_2 \rond h_2] @ C_0.$$ 
But the diagram
\begin{center}
  \diag{%
    n_1+1 \& \& \gamma+1 \\
    \& \& \gamma %
  }{%
    (m-1-1) edge[labela={\sigma_1 + 1}] (m-1-3) %
    edge[labelbl={[\sigma_1,\name{c}]}] (m-2-3) %
    (m-1-3) edge[labelr={[\gamma+1 \mapsto c]}] (m-2-3) %
  }
\end{center}
commutes, so 
$C' = P'_1[[\sigma_1,\name{c}] \rond h'_1] @ 
P'_2[\sigma_2 \rond h_2] @ C_0.$
Finally, we also know: $$M'
\begin{array}[t]{l}
=
(\restr{(D_1\cdot\iotaof{n_1}{a_1})}{i})[[\sigma_1,\name{c}]] @
\restr{(D_2\cdot\outof{n_2}{a_2}{c_2})}{j}[\sigma_2] @ M_0 \\
=
\transl{P'_1}_{h'_1}[[\sigma_1,\name{c}]] @
\transl{P'_2}_{h_2}[\sigma_2] @ M_0,
\end{array}
$$
which entails $C' \relmyexp M'$ as desired.

\item[$\bullet$] If we apply the first rule for $D_1$ and the second
  rule for $D_2$, then, choosing the same representative as before for
  $P_1[\sigma_1 \rond h_1]$, there is a transition
$$C\xtrans{}{\id} P'_1[(\sigma_1+1) \rond h'_1][\gamma+1 \mapsto c] 
@ \zeta(\restr{(D_2 \cdot \outof{n_2}{a_2}{c_2})}{j})[\sigma_2] @ C_0
= C'$$ satisfying $C'\relmyexp M'$ (for the same reason as in the last
case).
\end{itemize}

\newcase{Fork, (RH).}  We consider the case of a forking \action,
i.e., $M = \conf{\gamma}{D[\sigma]} @ M_0$ with $n \vdashdefinite D$
and $\sigma \colon n \to \gamma$, and we have a transition
$$\conf{\gamma}{D[\sigma]} @ M_0 \xtrans{}{\id} %
\conf{\gamma}{\restrat{D}{\paralof{n}}{i}[\sigma],\restrat{D}{\pararof{n}}{j}[\sigma]}
@ M_0$$
for some $i$ and $j$. We want to show that there exists a transition $C
\xtrans{}{\id} C'$ with $C'\relmyexp M'$. We proceed by case analysis
on the rule applied for $D[\sigma]$ in the proof of $C \relmyexp M$.
\begin{itemize}
\item[$\bullet$] If the first rule is applied, then we find $C_0 \relmyexp M_0$,
$\gamma' \vdash P=P_{1}\para P_{2}$, and $h \colon \gamma' \isoto n$
such that $D = \transl{P}_{h}$, $C = P[\sigma \rond h] @ C_0$,
$\restr{(D\cdot\forklof{n})}{i}=\transl{P_1}_{h}$, and  $\restr{(D\cdot\forkrof{n})}{j}=\transl{P_2}_{h}$.

Thus, there is a transition 
$$P[\sigma\rond h] @ C_0 \xtrans{}{\id} P_{1}[\sigma\rond h]@P_{2}[\sigma\rond h] @ C_0 = C'$$
with $C' \relmyexp M'$ as desired.

\item[$\bullet$] If the second rule is applied, then 
we find $C_0 \relmyexp M_0$ such that $C = \zeta(D)[\sigma] @ C_0$.
Thus, $\zeta(D)[\sigma]$ has the 
shape 
\begin{equation}
\tau.(\zeta (\restr{(D\cdot\forklof{n})}{i} \para \zeta (\restr{(D\cdot\forkrof{n})}{j})))[\sigma] 
+_k R\label{eq:shape:zeta}
\end{equation}

so we have
$$\begin{array}{rcl}
  \zeta(D)[\sigma] @ C_0 & 
  \xtrans{}{\id}&
  (\zeta(\restr{(D\cdot\forklof{n})}{i})\para\zeta(\restr{(D\cdot\forkrof{n})}{j}))[\sigma] @ C_0 = C' 
\end{array}$$
with $C' \relmyexp M'$ (using the third rule) as desired.

\item[$\bullet$] If the third rule is applied, then we find $n
  \vdashdefinite D'$, $M_1$, and $C_0$ such that $C_0 \relmyexp M_1$, $C =
  (\zeta(D) \para \zeta(D'))[\sigma] @ C_0$, and
  $M_{0}=D'[\sigma]@M_{1}$.  But then, as in the previous case,
  $\zeta(D)[\sigma]$ has the shape~\eqref{eq:shape:zeta} and we have
  transitions:
$$
\begin{array}{rcl}
(\zeta(D) \para \zeta(D'))[\sigma] @ C_0 
& \xtrans{}{\id} &
\zeta(D)[\sigma]@ \zeta(D')[\sigma] @ C_0 \\
& \xtrans{}{\id}^{2} & 
  \zeta(\restr{(D\cdot\forklof{n})}{i})[\sigma]
  @ \zeta(\restr{(D\cdot\forkrof{n})}{j})[\sigma]
  @ \zeta(D')[\sigma]
  @ C_0 \\
  & = & C'
\end{array}
$$
with $C' \relmyexp M'$ as desired.
\end{itemize}

\newcase{Nu, (RH).}
We consider the case of a $\nu$ rule, i.e., one has a transition
$$C = \conf{\gamma}{D[\sigma]} @ M_{0} \xtrans{}{\id} %
\conf{\gamma}{\restrat{D}{\nuof{n}}{i}[n+1\xto{\sigma+\ceil{a}}\gamma,a]}
@ M_{0}[\gamma\subset\gamma,a]$$ with $n \vdashdefinite D$ and $\sigma
\colon n \to \gamma$. We want to show that there exists a transition
$C\xtrans{}{\id} C'$ with $C'\relmyexp M'$. We again proceed by case
analysis on the rule applied for $D[\sigma]$.

\begin{itemize}
\item[$\bullet$] If the first rule is applied, then we find
$C_0 \relmyexp M_0$, $\gamma' \vdash P=\nu a.P'+_{k} R$, and $h \colon \gamma' \isoto n$ such that
$C = P@C_0$,
$D = \transl{P}_{h}$
and $\restr{(D\cdot\nuof{n})}{i}=\transl{P'}_{h+1}$.  There are thus transitions
$$C=P[\sigma\rond h] @ C_0 
\xtrans{}{\id} P'[\gamma',a\xto{h+1}n+1\xto{\sigma+\ceil{a}}\gamma,a] @ C_{0}[\gamma\subset\gamma,a] = C'$$ 
with $C' \relmyexp M'$ as desired.

\item[$\bullet$] If the second rule is applied, then we find
$C_0 \relmyexp M_0$ such that 
$C = \zeta(D)[\sigma] @ C_0$, and
there is a transition
$$\zeta(D)[\sigma] @ C_0 
\xtrans{}{\id}
\zeta(\restr{(D\cdot\nuof{n})}{i})[n+1\xto{\sigma+\ceil{a}}\gamma,a]@C_{0}[\gamma\subset\gamma,a] = C'$$
with $C' \relmyexp M'$ as desired.

\item[$\bullet$] If the third rule is applied, then we find $M_1$,
  $C_0$, and $n \vdashdefinite D'$ such that $M_0 = D'[\sigma] @ M_1$,
  $C_0 \relmyexp M_1$, and $C = (\zeta(D) \para \zeta(D'))[\sigma] @
  C_0$.  But then we have
$$\begin{array}{l}
 (\zeta(D) \para \zeta(D'))[\sigma] @ C_0 \\
 {} \xtrans{}{\id} 
 \zeta(D)[\sigma]@\zeta(D')[\sigma] @ C_0 \\
 {} \xtrans{}{\id} \zeta(\restr{(D\cdot\nuof{n})}{i})[n+1\xto{\sigma+\ceil{a}}\gamma,a]@\zeta(D')[\sigma][\gamma\subset\gamma,a]@C_{0}[\gamma\subset\gamma,a] = C'
 \end{array}$$
 and $C' \relmyexp M'$ as desired.
\end{itemize}

\newcase{Tick and Tau, (RH).}
We now consider the cases $\tick$ and $\tau$, i.e., when one has a transition
$$ \conf{\gamma}{D[\sigma]} @M_{0} \xtrans{}{\xi} %
\conf{\gamma}{\restrat{D}{\xi_{n}}{i}[\sigma]} @ M_{0}$$ where
$\xi\in\{\tick,\tau\}$, with $n \vdashdefinite D$ and $\sigma \colon n
\to \gamma$.  We want to show that there exists a transition
$C\xtrans{}{\xi} C'$ with $C'\relmyexp M'$. Again, we proceed by case
analysis on the rule applied for $D[\sigma]$.

\begin{itemize}
\item[$\bullet$] If the first rule is applied, we find
$C_0 \relmyexp M_0$,
$\gamma' \vdash P=\xi.P'+_{k} R$, and $h \colon \gamma' \isoto n$
such that $C = P[\sigma \rond h] @ C_0$, $D = \transl{P}_{h}[\sigma]$,
and $\restr{(D\cdot\xi_{n})}{i}=\transl{P'}_{h}$.
But then we have 
$$P[\sigma\rond h] @ C_0 \xtrans{}{\xi} P'[\sigma\rond h] @ C_0 \relmyexp \transl{P'}_{h}[\sigma] @ M_0,$$
as expected.
\item[$\bullet$] If the second rule is applied, then 
$C = \zeta(D)[\sigma] @ C_0$ for some $C_0 \relmyexp M_0$
and we have
$$\zeta(D)[\sigma] @ C_0 
\xtrans{}{\xi}
\zeta(\restr{(D\cdot \xi_{n})}{i})[\sigma] @ C_0 
\relmyexp \restrat{D}{\xi_{n}}{i}[\sigma] @ M_0,$$
as desired.
\item[$\bullet$] Finally, if the third rule is applied, then
we find $C_0$, $M_1$, and $n \vdashdefinite D'$ such that
$C = (\zeta(D) \para \zeta(D'))[\sigma] @ C_0$,
$M_0 = D'[\sigma] @ M_1$, and $C_0 \relmyexp M_1$.
But then, we have
$$(\zeta(D) \para \zeta(D'))[\sigma] @ C_0
\xtrans{}{\id}
\zeta(D)[\sigma] @ \zeta(D')[\sigma] @ C_0 
\xtrans{}{\xi}
\zeta(\restr{(D\cdot \xi_{n})}{i})[\sigma] @ \zeta(D')[\sigma] @ C_0 = C'$$
with $C' \relmyexp M'$ as desired.
\end{itemize}


\end{document}